\title[When Can We Answer Queries Using Result-Bounded Data Interfaces?]{When Can We Answer Queries\texorpdfstring{\\}{ }Using Result-Bounded Data Interfaces?}
\author{Antoine Amarilli}
\affiliation{LTCI, T\'el\'ecom ParisTech, Universit\'e Paris--Saclay}
\author{ Michael Benedikt}
\affiliation{University of Oxford}
\newcommand{\smpr}{\kw{SMPR}}
\newcommand{\lift}{\kw{Lin}}
\newcommand{\gammasep}{\Gamma^{\kw{Sep}}}
\newcommand{\amd}{\kw{AMonDet}}
\newcommand{\canondb}{\kw{CanonDB}}
\newcommand{\ptime}{\kw{PTIME}}
\newcommand{\exptime}{\kw{EXPTIME}}
\newcommand{\expspace}{\kw{EXPSPACE}}
\newcommand{\arity}{\kw{Arity}}
\newcommand{\adom}{\kw{Adom}}
\newcommand{\dom}{\adom}
\newcommand{\gctwo}{\text{GC}^{2}}
\newcommand{\elimnd}{\kw{AxiomRB}}
\newcommand{\checkview}{R}
\newcommand{\detby}{\kw{DetBy}}
\newcommand{\relaxs}{\kw{ElimUB}}
\newcommand{\twoexp}{\kw{2EXPTIME}}
\newcommand{\return}{\kw{Return}}
\newcommand{\fds}{\Sigma_{\mathrm{FD}}}
\newcommand{\ids}{\Sigma_{\mathrm{ID}}}
\newcommand{\incd}{\text{ID}}
\newcommand{\uincd}{\text{UID}}
\newcommand{\wrt}{w.r.t.}
\newcommand{\determines}{\rightarrow}
\newcommand{\aschema}{\kw{Sch}}
\newcommand{\inmap}{\kw{InMap}}
\newcommand{\outmap}{\kw{OutMap}}
\newcommand{\accbind}{\kw{AccBind}}
\newcommand{\abind}{\accbind}
\newcommand{\accpart}{\kw{AccPart}}
\newcommand{\kw}[1]{{\mathsf{#1}}\xspace}
\newcommand{\profinfo}{ \kw{Prof}}
\newcommand{\univdirect}{\udirectory}
\newcommand{\udirectory}{\kw{Udirectory}}
\newcommand{\accessible}{\kw{accessible}}
\newcommand{\acc}[1]{\kw{\scriptscriptstyle Accessed} #1}
\newcommand{\mt}{\kw{mt}}
\newcommand{\aplan}{\kw{PL}}
\newtheorem{theorem}{Theorem}[section]
\newtheorem{claim}[theorem]{Claim}
\newtheorem{proposition}[theorem]{Proposition}
\newtheorem{corollary}[theorem]{Corollary}
\newtheorem{definition}[theorem]{Definition}
\newtheorem{example}[theorem]{Example}
\newtheorem{lemma}[theorem]{Lemma}
\newcommand{\myeat}[1]{}
\def\@Opargbegintheorem#1#2#3#4{#4\trivlist
      \item[\hskip\labelsep{#3#1}]{#3#2\@thmcounterend\ }}
\newcommand{\definerep}[2]{%
\spnewtheorem*{#1rp}{#2}{\bf}{\itshape}
\newenvironment{#1rep}[2]{%
  \ifthenelse{\equal{##1}{*}}
  {\begin{#1rp}[\ref{##2}]}
  {\begin{#1}\label{##2}}}
{\ifthenelse{\equal{\@currenvir}{#1}}{\end{#1}}{\end{#1rp}}}
}
\newcommand{\card}[1]{\left|#1\right|}
\newcommand{\NN}{\mathbb{N}}
\newcommand{\gnf}{\kw{GNF}}
\newcommand{\np}{\kw{NP}}
\newcommand{\wclo}[2]{\widehat{#1}^#2}
\newcommand{\sign}{{\mathcal{S}}}
\newcommand{\sidesign}{\sign'}
\newcommand{\aselect}{\sigma}
\newcommand{\dep}{\delta}
\newcommand{\gdep}{\gamma}
\newcommand{\trig}{\tau}
\newcommand{\bounded}{\mathit{Bounded}}
\newcommand{\acyclic}{\mathit{Acyclic}}
\newcommand{\praccess}{\mathsf{pr}}
\newcommand{\udaccess}{\mathsf{ud}}
\newcommand{\attrfmt}[1]{\mathit{#1}}
\newcommand{\myparaskip}{}
\newcommand{\mylistskip}{}
\newcommand{\myparagraph}[1]{\paragraph*{#1.}}
\newcommand{\sectionarxiv}[2]{\section{#2}}
\newcommand{\myproof}{Proof }
\renewenvironment{compactitem}{\begin{itemize}}{%
    \end{itemize}\ignorespacesafterend
}
\renewenvironment{compactenum}{\begin{enumerate}}{%
    \end{enumerate}\ignorespacesafterend
}
\begin{abstract}
We consider answering queries on data 
available through \emph{access methods}, 
that provide lookup access to the tuples matching a given binding.
Such interfaces are common on the Web; further, they often have \emph{bounds} on
how many results they can return, e.g., because of pagination or rate limits.
We thus study \emph{result-bounded methods}, which 
may return only a limited number of tuples.
We study how to decide if a query is \emph{answerable} using result-bounded
  methods, i.e., how to compute a \emph{plan} that returns all answers to
  the query using the methods, assuming that the underlying data satisfies some
   integrity constraints.
  We first show how to reduce answerability to a query containment problem with
  constraints. Second, we show 
  ``schema simplification'' theorems describing when and how result bounded services can be
  used. Finally, we use these theorems to give decidability and complexity results about  answerability for
  common constraint classes.
\end{abstract}
\begin{document}
\copyrightyear{2018}
\acmYear{2018}
\setcopyright{rightsretained}
\acmConference[PODS'18]{35th ACM SIGMOD-SIGACT-SIGAI Symposium on Principles of Database Systems}{Extended version}{Appendices included}
\maketitle
\section{Introduction}

Web services 
expose programmatic interfaces to data.
Many of these services can be modeled as an
\emph{access method}:
given a set of arguments for some attributes of a relation,
the method returns all matching
tuples for the relation.

\begin{example} \label{ex:simple}
  Consider a Web service that exposes university employee information.
  The schema has a relation $\profinfo(\attrfmt{id}, \attrfmt{name}, \attrfmt{salary})$
  and an access method $\praccess$ on this relation: the input to $\praccess$ is
  the $\attrfmt{id}$
  of a professor, and an access to this method
  returns the $\attrfmt{name}$ and $\attrfmt{salary}$ of
  the professor. The schema also has a relation $\univdirect(\attrfmt{id},
  \attrfmt{address}, \allowbreak \attrfmt{phone})$, and an access method $\udaccess$: it
  has no input and returns the $\attrfmt{id}$, $\attrfmt{address}$, and
  phone number of all university employees.
\end{example}

Our goal is to answer queries using such services.
In the setting of Example~\ref{ex:simple}, the user queries are posed
on the relations $\profinfo$ and $\univdirect$, and we wish to answer
them using the methods $\praccess$ and $\udaccess$. 
To do so, we can exploit \emph{integrity constraints} that the data is known to satisfy:
for instance, the referential constraint $\tau$ that says that the
$\attrfmt{id}$ of every tuple in~$\profinfo$ is also in~$\univdirect$.

\begin{example}
  \label{exa:plani}
  Consider $Q_1(n): \exists i ~ \profinfo(i, n, 10000)$, the query
  that asks for the names of professors with salary~$10000$. 
  If we assume the integrity constraint~$\tau$,
  we can implement
  $Q_1$ as the following \emph{plan}: first access $\udaccess$ to get the set of all ids, and then
  access $\praccess$ with each id
to obtain the salary, filtering the results to return only the names 
with salary~$10000$.
This plan \emph{reformulates} $Q_1$ over the access methods:
it is equivalent to~$Q_1$ on all instances satisfying~$\tau$,
  and it only uses $\praccess$ and $\udaccess$ to access
  $\profinfo$ and $\univdirect$.
\end{example}

Prior work  (e.g.,~\cite{dln,ustods}) has 
formalized this reformulation task as an \emph{answerability} problem: given a schema
with access methods and integrity constraints, and given a query, determine if
we can answer the query using the methods. The query has to be answered in a
\emph{complete} way, i.e., without missing any results.
This prior work has led to implementations (e.g. ~\cite{usvldb14,usvldb15,bioint})
that can
determine how to evaluate a conjunctive query using
a collection of Web services, by generating a plan that makes calls to the services.

However, all these works assume that whenever we access a Web service, we will
always obtain \emph{all} tuples that match the access.
This is not a realistic assumption: to avoid wasting
resources and bandwidth, virtually all Web services impose a \emph{limit} on how
many results they will return.
For instance, the ChEBI service (chemical entities of
biological interest, see~\cite{bioint}) limits the output of lookup methods to 5000 entries,
while IMDb's web interfaces impose a limit of~10000~\cite{imdb}. 
Some services make it possible to request more
results beyond the limit, e.g., using pagination or continuation tokens,
but there is often a
\emph{rate limitation} on how many requests can be
made~\cite{fbapi,githubapi,twitterapi}, which also limits the total number of
obtainable results.
Thus, for many Web services, beyond a certain number
of results, we cannot assume that all matching tuples are returned.
In this work, we introduce \emph{result-bounded} methods to reason on these
services.

\begin{example}
  \label{exa:rbound}
The~$\udaccess$ method in 
Example~\ref{ex:simple} may have a result bound, e.g., it may return
  at most 100 entries. If this is the case, then the plan of
  Example~\ref{exa:plani} is not equivalent to~$Q_1$ as it may miss some result
  tuples.
\end{example}

Result-bounded methods make it very challenging to reformulate queries. Indeed,
they are \emph{nondeterministic}: if the number of results is more than
the result bound, then the Web service only returns a subset of results, usually according to
unknown criteria. 
For this reason, it is not even clear whether result-bounded methods can be
useful at all to answer queries in a complete way.
However, this may be the case:

\begin{example} \label{ex:existencecheck} Consider the schema of Example~\ref{ex:simple}
  and assume that $\udaccess$ has a result bound of~$100$ as in
  Example~\ref{exa:rbound}.
  Consider the query $Q_2: \exists i\, a\, p ~ \univdirect(i, a, p)$ asking if
there  is \emph{some} university employee. We can answer $Q_2$ with a plan  that
accesses
the  $\udaccess$ method and returns true if the output is non-empty.
It is not a problem that $\udaccess$ may omit
some result tuples, because we only want to know if it returns something.
This gives a first intuition: 
  result-bounded methods are useful to check for the existence of
  matching tuples.
\end{example}

Further, 
result-bounded methods can also help under integrity constraints such as keys or
functional dependencies:

\begin{example} \label{ex:fd}
  Consider the schema
 of Example~\ref{ex:simple} and the access method 
  $\udaccess_2$  on $\univdirect$
  that takes an $\attrfmt{id}$ as input and returns the 
  $\attrfmt{address}$ and phone number of tuples with this $\attrfmt{id}$.
  Assume that $\udaccess_2$ has a result bound of~$1$, i.e., returns at most one
  answer when given an $\attrfmt{id}$. Further assume 
   the functional dependency $\phi$: 
  each employee id 
  has exactly one $\attrfmt{address}$ (but possibly
  many phone numbers).
Consider the query~$Q_3$ asking for the address of 
  the employee with id~12345.
We can answer~$Q_3$ by calling $\udaccess_2$ with 12345 and projecting onto the
  $\attrfmt{address}$ field.
  Thanks to~$\phi$, we know that 
  the result will contain the 
  employee's address,
  even though 
  only one of the phone numbers will be returned.
This gives a second intuition:
  result-bounded methods are useful when there is a functional dependency that guarantees that
  some projection of the output is complete.
\end{example}

In this paper, we study how and when we can use result-bounded methods to
reformulate queries and obtain complete answers, formalizing in particular the
intuition of Examples~\ref{ex:existencecheck} and~\ref{ex:fd}.
We then show decidability and complexity results for the answerability problem.
We focus on two common classes of integrity constraints on
databases: \emph{inclusion dependencies} ($\incd$s), as in
Example~\ref{ex:existencecheck}, and \emph{functional dependencies} (FDs), as in
Example~\ref{ex:fd}. But we also show results for more expressive constraints:
see Table~\ref{tab:results} for a summary.

The first step of our study (Section~\ref{sec:reduce}) is to reduce the answerability 
problem to \emph{query containment under constraints}. Such a reduction is well-known
in the context of reformulation of queries over views \cite{NSV}, and in answering queries
with access methods without result bounds~\cite{thebook}. However, the
nondeterminism of result-bounded methods means that we cannot apply these
results directly. We nevertheless show that 
this reduction technique can still be applied in the presence of result
bounds. However, the resulting query containment problem involves complex cardinality constraints, 
so it does not immediately lead to decidability results.

Our second step (Section~\ref{sec:simplify})
is to show \emph{schema simplification results}, which explain why some
of the result bounds can be ignored for the answerability problem.
These results characterize how result-bounded methods are
useful: they capture and generalize the examples above.
For instance, 
we show that for constraints given as $\incd$s,
result-bounded methods are only useful
as an \emph{existence check} as in Example~\ref{ex:existencecheck}.
We also show that, for FD constraints,
result-bounded methods are only useful to access 
the \emph{functionally-determined part of the
output}, as in
Example~\ref{ex:fd}.
The proofs
introduce  a technique of \emph{blowing up models}, i.e., we enlarge them to increase
the number of outputs of an access, without violating constraints or changing query answers.

Third, in Section~\ref{sec:complexity}, we use the simplification results to
deduce that answerability is decidable for these constraint classes, and
show tight complexity bounds: we show that the problem is $\np$-complete for FDs,
and $\exptime$-complete for $\incd$s. We refine the latter result to show that
answerability is $\np$-complete for \emph{bounded-width} $\incd$s,
which export only a constant number of variables. This refinement is proved using
ideas of Johnson and Klug~\cite{johnsonklug}, along with a
\emph{linearization} technique of potentially independent interest: we show how
the constraints used to reason about answerability can be ``simulated'' with
 restricted  inclusion dependencies.

In Section~\ref{sec:simplifychoice}, we study more expressive constraint
classes, beyond $\incd$s and FDs. We do so using a weaker form of simplification, called
\emph{choice simplification}, which replaces all result bounds by~$1$:
this intuitively
implies that the number of results does not matter.  
We show that it suffices to consider the choice
simplification
for a huge class of  constraints, including all
TGDs, and also constraints consisting of FDs and
U$\incd$s. 
In Section~\ref{sec:complexitychoice}, we use this 
technique to show that decidability of  answerability holds much more broadly:
in particular it holds for a wide range of classes where query containment is decidable.
We conclude the paper by giving some limits to schema simplification and decidability of answerability
(Section~\ref{sec:general}), followed by conclusions (Section~\ref{sec:conc}).

This is the full version of the conference paper~\cite{confpaper}. Most proofs
are deferred to the appendix.

\myparaskip
\myparagraph{Related work}
Our paper relates
to a line of work about finding plans to answer
queries using access methods. The initial line of work considered finding
equivalent ``executable rewritings''~--- conjunctive queries
where the atoms are ordered in a way compatible with the access patterns. This was studied
first without integrity
constraints~\cite{access1,access2},
and then
with disjunctive TGD constraints \cite{dln}.
Later \cite{ustods,thebook} formulated the problem of finding a \emph{plan}
that answers the query over the access patterns, distinguishing two notions of plans with access methods:
one with arbitrary relational operators in middleware and another without the difference
operator. They
studied the problem of getting plans of both types in the presence of integrity
constraints:
following~\cite{dln},
they reduced the search for executable rewritings to 
query containment under constraints.
Further, \cite{ustods,thebook} also related the reduction to a semantic notion of 
determinacy, originating
from the work of Nash, Segoufin, and Vianu \cite{NSV} in the context of views.
Our paper extends the reduction to query containment 
in the presence of result bounds,
relying heavily on the techniques of \cite{dln,NSV,ustods,thebook}.

Non-determinism in query
 languages has been studied in other contexts \cite{av,asv}. 
However, the topic of this work, namely, using non-deterministic Web services to
implement deterministic queries, has not been studied.
Result bounds are reminiscent of \emph{cardinality constraints}, for
which
the answerability problem has been studied~\cite{wenfeifloris1}. However, the
two are different: whereas cardinality constraints restrict the 
\emph{underlying data},
result bounds concern the \emph{access methods} to the data, and makes them
\emph{non-deterministic}:
this has not been studied in the past.
In fact, surprisingly, 
our schema simplification results (in Sections~\ref{sec:simplify}
and~\ref{sec:simplifychoice}) imply that answerability with result bounds can be
decided \emph{without} reasoning about cardinality constraints at all.

To study our new setting with result-bounded methods, we introduce several 
specific techniques
to reduce to a decidable query containment problem, e.g., 
we give determinacy notions for non-deterministic services and present the technique of
``blowing up models''.
The additional technical  tools needed for the complexity analysis revolve around analysis of the chase.
While many components of this analysis are specific to the constraints produced by our problem,
the analysis includes 
a \emph{linearization} method, which we believe may be more generally applicable.
This method relates to the Datalog$^{\pm}$ agenda of
getting bounds for query answering with restricted classes of constraints
\cite{taming,shy,sticky}, because
our own method deals with guarded rules as in \cite{taming}.
Linearization can
thus be understood as a refinement of a technique
from~\cite{gmp}: we  isolate classes that can be reduced
to well-behaved classes of linear TGDs, where more specialized bounds \cite{johnsonklug} can be applied.

\section{Preliminaries} \label{sec:prelims}

\myparagraph{Data and queries}
We consider a \emph{relational signature} $\sign$ that consists of
a set of \emph{relations} with an associated
\emph{arity} (a positive integer).
The \emph{positions} of a relation~$R$ of~$\sign$ are $1, \ldots, n$ where
$n$ is the arity of~$R$.
An \emph{instance} of~$R$ is a
set of~$n$-tuples (finite or infinite), and an \emph{instance} $I$ of~$\sign$
consists of instances
for each relation of~$\sign$. We equivalently see~$I$ as a set
of \emph{facts} $R(a_1 \ldots a_n)$ for each tuple 
$(a_1 \ldots a_n)$ in the instance of each relation~$R$.
A \emph{subinstance} $I'$ of~$I$ is an instance that contains a subset of the
facts of~$I$.
The \emph{active domain} of~$I$, denoted $\adom(I)$, is the set
of all the values that occur in facts of~$I$.

We study \emph{conjunctive queries} (CQs) which are expressions of the form
$\exists x_1 \ldots x_k  ~ (A_1 \wedge \cdots \wedge A_m)$, where
the~$A_i$ are \emph{relational atoms} of the form
$R(x_1 \ldots x_n)$, with~$R$ being a relation of arity~$n$ and $x_1 \ldots x_n$
being variables or constants.
A CQ is \emph{Boolean} if it has no free variables.
A Boolean CQ $Q$ \emph{holds} in an instance $I$ exactly when there
is a \emph{homomorphism} of~$Q$ to~$I$: a mapping~$h$ from the variables and
constants of~$Q$ to~$\adom(I)$ which is the identity on constants and which
ensures that,
for every atom $R(x_1 \ldots x_n)$ in~$Q$, the atom $R(h(x_1) \ldots h(x_n))$ is a fact of~$I$.
We let $Q(I)$ be the \emph{output} of $Q$ on $I$, defined in the usual way: if $Q$ is Boolean,
the output is true if the query holds and false
otherwise.
A \emph{union of conjunctive queries} (UCQ) is a disjunction of CQs.

\vspace{-2pt}
\myparaskip
\myparagraph{Integrity constraints}
To express restrictions on instances, we will use fragments of first-order logic
(FO), with the active-domain semantics, and where we disallow constants.
We will focus on \emph{dependencies}, especially
on \emph{tuple-generating dependencies} (TGDs) and on
\emph{functional dependencies} (FDs).

A \emph{tuple-generating dependency} (TGD) is an FO sentence~$\tau$ of the form:
$\forall \vec x ~ (\phi(\vec x) \rightarrow \exists \vec y ~ \psi(\vec x, \vec
y))$
where $\phi$ and $\psi$ are conjunctions of relational atoms:
$\phi$ is the \emph{body} of~$\tau$ while $\psi$ is the \emph{head}.
For brevity, in the sequel, we will omit outermost universal quantifications in TGDs.
The
\emph{exported variables} of~$\tau$ are the variables of~$\vec x$ which occur in the head.
A \emph{full TGD} is one with no existential quantifiers in the head.
A \emph{guarded TGD} (GTGD) is a TGD where $\phi$ is of the form
$A(\vec x) \wedge \phi'(\vec x)$ where $A$ is a relational atom containing 
all free variables of~$\phi'$. 
An \emph{inclusion dependency} ($\incd$) is a GTGD where both $\phi$ and
$\psi$
consist of a single
atom  with no
repeated variables.
The \emph{width} of an $\incd$ is the number
of exported variables, and an ID is  \emph{unary}
(written $\uincd$) if it has width $1$.
For example,
$R(x, y) \rightarrow \exists z \, w ~ S(z, y, w)$ is a $\uincd$.

A \emph{functional dependency} (FD) is an FO sentence $\phi$ written as
$\forall \vec x \, \vec{x}' ~ (R(x_1 \ldots x_n) \wedge R(x'_1 \ldots x'_n) \wedge  \left(\bigwedge_{i \in D}
x_i=x'_i\right) \rightarrow  x_j=x'_j)$,
with $D\subseteq\{1 \ldots n\}$ and
$j \in \{1 \ldots n\}$,
Intuitively, $\phi$ asserts that
position~$j$ is \emph{determined} by the positions of~$D$, i.e., when two
$R$-facts match on the positions of~$D$, they must match on 
position~$j$ as well. We write $\phi$ as $D \determines j$ for brevity.

\myparaskip
\myparagraph{Query and access model}
We model a collection of Web services as a service schema $\aschema$,
which we simply call a 
\emph{schema}.
It consists of  
\begin{inparaenum}[(1.)]
  \item a relational signature $\sign$;
  \item a set of integrity constraints $\Sigma$ given as
    FO sentences; and 
  \item a set of
    \emph{access methods} (or simply \emph{methods}).
\end{inparaenum}
Each access method $\mt$ is associated with a relation~$R$ and
a subset of positions of~$R$ called 
the \emph{input positions} of~$\mt$.
The other positions of~$R$ are called
\emph{output positions} of~$\mt$.

In this work, we allow each access method to have an optional \emph{result bound}. 
If $\mt$ has a result bound, then $\mt$ is further associated to a positive
integer $k\in\NN$; we call $\mt$ a
\emph{result-bounded method}. Informally, the result bound on~$\mt$
asserts
two things: (i) $\mt$ returns at most $k$ matching tuples; (ii)
if there are no more than $k$ matching tuples, then $\mt$ returns all of them,
otherwise it returns some subset of $k$ matching tuples.
We also allow access methods to have a \emph{result lower bound}, which only
imposes point (ii).

An \emph{access} on an instance $I$ consists of a method $\mt$ on some relation
$R$ and of a \emph{binding} $\accbind$ for~$I$: the binding
is a mapping from the input positions of
$\mt$ to values in~$\adom(I)$.
The \emph{matching tuples}~$M$ of the access $(\mt, \accbind)$ are the tuples
for relation~$R$ in~$I$ 
that match $\accbind$ on the input positions
of~$R$, and an \emph{output} of the access is a subset $J \subseteq M$.
If there is no result bound or result lower bound on~$\mt$, then
there is only one \emph{valid output} to the access, namely, the 
output $J \colonequals M$
that contains all matching
tuples of~$I$.
If there is a  result bound~$k$ on~$\mt$, then a
\emph{valid output} to the access
is any subset $J \subseteq M$ such that:
\mylistskip
\begin{compactenum}[(i)]
\item $J$ has size at most $k$
\item for any $j \leq k$, if $I$ has
$\geq j$ matching  tuples, then
  $J$ has size~$\geq j$. Formally, if $\card{M} \geq j$ then $\card{J} \geq j$.
\end{compactenum}
\mylistskip
If there is a result lower bound of~$k$ on~$\mt$, then a \emph{valid output}
is any subset $J \subseteq M$ satisfying point (ii) above.

We give specific names to two kinds of methods.
First, a method is \emph{input-free} if it has no input positions.
Second, a method is \emph{Boolean} if
all positions are input positions. Note that accessing a Boolean method
with a binding $\accbind$ just checks if $\accbind$
is in the relation associated to the method
(and result bounds have no effect).

\myparaskip
\myparagraph{Plans}
We use \emph{plans} to describe programs that use the access methods, formalizing
them using the terminology of ~\cite{ustods,thebook}. 
A \emph{monotone plan}  $\aplan$
is a 
sequence of \emph{commands} that produce \emph{temporary tables}. There are two
types of commands:

\mylistskip
\begin{compactitem}
\item \emph{Query middleware commands}, of the form $T~\colonequals~E$, with~$T$ a temporary table
  and $E$ a monotone relational algebra expression over the temporary tables
  produced by previous commands. By \emph{monotone}, we
  mean that $E$ does not use the relational
  difference operator; equivalently, it is expressed in monotone first-order
  logic.
\item \emph{Access commands}, of the form $T \Leftarrow_\outmap \mt \Leftarrow_\inmap E$,
  where $E$ is a monotone relational algebra expression over
  previously-produced temporary tables, $\inmap$ is an \emph{input
  mapping} from the output attributes of~$E$ to the input positions of~$\mt$,
  $\mt$ is a \emph{method} on some relation~$R$, $\outmap$ is an \emph{output
  mapping} from the positions of~$R$ to those of~$T$, and $T$ is a temporary
  table.
We often omit the mappings for brevity.
\end{compactitem}
\mylistskip
The \emph{output table} $T_0$ of~$\aplan$ is indicated by a special command
$\return~T_0$ at the end, with $T_0$ being a temporary table.

We must now define the semantics of~$\aplan$ on an instance~$I$.
Because of the non-determinism of result-bounded methods,
we will do so relative to an
\emph{access selection} for~$\aschema$ on~$I$, i.e.,
a function $\aselect$ mapping each access $(\mt, \accbind)$ on~$I$ to a
set of facts $J \colonequals \aselect(\mt, \accbind)$ that match the access.
We say that the access selection is \emph{valid} if it
maps every access to a valid output: intuitively, the access selection
describes which
valid output is chosen when
an access to a result-bounded method matches more tuples than
the bound. Note that the definition implies that performing the same access
twice must return the same result;
however, \emph{all our
results still hold without this assumption} (see Appendix~\ref{app:idempotent}
for details).

For every valid access selection~$\aselect$, we can now define
the semantics of each command of $\aplan$ for~$\aselect$ by considering them
in order.
For an access command $T \Leftarrow_\outmap \mt \Leftarrow_\inmap E$ in~$\aplan$,
we evaluate $E$ to get a collection $C$ of tuples. For each tuple $\vec t$
of~$C$, we use
$\inmap$ to turn it into a binding $\accbind$, and we perform the access
on~$\mt$ to obtain $J_{\vec t} \colonequals \aselect(\mt, \accbind)$. We then
take the union $\bigcup_{\vec t \in C} J_{\vec t}$ of all outputs, rename it
according to~$\outmap$, and write it in~$T$.
For a middleware query command $T \colonequals E$, we evaluate $E$ 
and write the result in~$T$.
The \emph{output} of~$\aplan$ on~$\aselect$ is then the set of tuples that are
written to
the output table~$T_0$.

The \emph{possible outputs} of~$\aplan$ on~$I$ are the outputs that can be
obtained with some valid access selection~$\aselect$. Intuitively, when we evaluate
$\aplan$, we can obtain any of these outputs, depending on which valid access
selection $\aselect$ is used.

\begin{example}
  \label{ex:existencecheckplan}
The plan $\aplan$ of Example~\ref{ex:existencecheck}
is as follows:\\[.3em]
\null\hfill$
T \Leftarrow \udaccess \Leftarrow  \emptyset; \qquad
T_0 \colonequals \pi_\emptyset T;
\qquad
\return~ T_0;
$\hfill\null\\[.2em]
The first command runs the relational algebra expression $E = \emptyset$
returning the empty set, giving a trivial binding for~$\udaccess$. The result of
accessing $\udaccess$ is stored in a temporary table~$T$. The second command 
projects~$T$ to the empty set of attributes, and the third command
  returns the result. For every instance~$I$, 
  the plan $\aplan$ has only one possible output (no matter the access
  selection), describing if $\udirectory$ is empty. We will say that $\aplan$ \emph{answers} the query~$Q_2$ of 
  Example~\ref{ex:existencecheck}.
\end{example}

\myparaskip
\myparagraph{Answerability}
Let $\aschema$ be a schema consisting of a relational signature, integrity
constraints, and access
methods, 
and let $Q$  be a CQ over the relational signature  of~$\aschema$.
A monotone plan $\aplan$ 
\emph{answers}~$Q$ 
under~$\aschema$ if the following holds: for all instances $I$ 
satisfying the constraints, $\aplan$ on~$I$ has exactly one possible output,
which is the query output $Q(I)$. 
In other words, no matter which
valid access selection~$\aselect$ is used to return tuples, the output of~$\aplan$ 
evaluated under~$\aselect$ on~$I$ is equal to~$Q(I)$.
Of course, $\aplan$ can have a single possible output (and answer~$Q$) even if
some intermediate command of~$\aplan$ has multiple possible outputs. 

We say that~$Q$ is
\emph{monotone answerable} under schema $\aschema$ 
if there is a monotone plan that answers it.
Monotone answerability generalizes notions of reformulation that have been
previously studied.
In particular, in the absence of constraints and result bounds, it reduces
to the notion of a query having an \emph{executable rewriting with respect to access methods}, 
studied in 
work on access-restricted querying~\cite{access1,access2}.  In the setting where the limited
interfaces simply expose views, monotone answerability
corresponds to the well-known notion of \emph{UCQ rewriting} with respect to views~\cite{lmss}.

\myparaskip
\myparagraph{Query containment and chase proofs}
We will reduce answerability to
\emph{query containment under constraints}, i.e.,
checking whether a Boolean CQ~$Q'$ follows from another Boolean CQ~$Q$ 
and some constraints~$\Sigma$. Formally, the problem asks if
any instance that satisfies~$Q$ and $\Sigma$ also satisfies $Q'$,
which we denote as~$Q
\subseteq_\Sigma Q'$.
There are well-known reductions between query containment with TGDs and
 the  problem of 
\emph{certain answers}~\cite{fagindataex,taming}
 under TGDs.
We will not
need the definition of certain answers, but we will use some
existing upper and lower bounds from this line of work (e.g., from
\cite{taming,baget2010walking}), rephrased to query containment under
constraints.

When $\Sigma$ consists of dependencies,
query containment under constraints
can be solved
by searching for a 
\emph{chase proof}~\cite{fagindataex}. Such a proof starts with
an instance called the \emph{canonical
database of~$Q$} and denoted $\canondb(Q)$: it consists of facts for each atom
of~$Q$, and its elements are the variables and constants of~$Q$.
The proof then proceeds by \emph{firing dependencies}, as we explain next.

A homomorphism $\trig$ from the body of a dependency $\dep$ into an instance
$I$ is called a \emph{trigger} for~$\dep$.
We say that~$\trig$ is an \emph{active trigger} if
$\trig$ cannot be extended to a homomorphism from the
head of~$\dep$ to~$I$. In other words, an active trigger $\trig$ witnesses the fact
that~$\dep$ does
not hold in~$I$. We can solve this by \emph{firing} the dependency $\dep$ on
the active trigger $\trig$, which we also call performing a \emph{chase step},
in the following way. If $\dep$ is a TGD, the result of the chase step on~$\trig$ 
for~$\dep$ in~$I$ 
is the superinstance
$I'$ of~$I$ obtained by adding new facts
corresponding to an extension of~$\trig$ to the head of~$\dep$, using fresh elements  to instantiate the
existentially quantified variables of the head: we call these elements \emph{nulls}.
If $\dep$ is an FD with~$x_i=x_j$ in the head, then a chase step yields $I'$
which is the result of identifying $\trig(x_i)$ and $\trig(x_j)$ in~$I$.
A  \emph{chase sequence}  is a sequence of chase steps, and it is a \emph{chase proof}
of~$Q \subseteq_\Sigma Q'$ 
if it produces an instance where $Q'$ holds.  

It can be shown~\cite{fagindataex} that whenever
$Q \subseteq_\Sigma Q'$ there is a chase proof that witnesses this.
If all chase sequences are finite we say the \emph{chase with~$\Sigma$ on
$Q$ terminates}. In this case, we can use the chase to decide containment
under constraints.

\myparaskip
\myparagraph{Variations of answerability}
So far, we have defined monotone answerability. An alternative 
notion
is \emph{RA-answerability}, defined using \emph{RA-plans} that
allow arbitrary relational algebra expressions in commands.
In the body of the paper we focus on monotone
answerability, because we think it is the more natural notion 
for CQs and for the class of constraints that we consider.
Indeed, CQs are monotone: if facts are added to an instance, the output of a
CQ cannot decrease.
Thus the bulk of prior work on implementing CQs over restricted limited interfaces, both in theory
\cite{lmss,dln,access1,access2} and in practice \cite{candbsigmod14,candbsigrecord}, has focused on monotone 
implementations.
However, \emph{many of our results extend to answerability with RA-plans} (see
Appendix~\ref{apx:ra}). Indeed, we can sometimes show that monotone answerability
and RA-answerability coincide.

As a second variation, note that we have defined monotone
answerability by requiring that the query and plan agree on all instances,
finite and infinite. An alternative is to consider equivalence over finite instances only. We say that
a plan $\aplan$ \emph{finitely answers} $Q$,
if for any finite instance $I$ satisfying the integrity constraints of~$\aplan$,
the only possible output of~$\aplan$s is~$Q(I)$; the notion of a query being 
\emph{finitely monotone answerable}
is  defined in the obvious way. Both finite and unrestricted answerability
have been studied
in past work on access methods \cite{ustods,thebook}, just as finite and unrestricted variants
of other static analysis problems (e.g.,
query containment)
have long been investigated in database theory  (e.g., \cite{johnsonklug}).  
The unrestricted variants usually
provide a cleaner theory, while the finite
variants can be more precise. In this work our goal is to  investigate both
variants, 
leaving a discussion of  the trade-off between finite and unrestricted answerability for
future work. 
As it turns out,
for the database-style dependencies that we consider, 
the finite variant can be reduced to the unrestricted one.
In particular, this reduction holds
for constraints $\Sigma$ that are \emph{finitely controllable}, by which we mean
that for all Boolean UCQs $Q$ and~$Q'$,
the containment $Q \subseteq_\Sigma Q'$ holds if and only if,
whenever a
finite instance $I$ satisfies $Q$, then it
also satisfies $Q'$. For such constraints~$\Sigma$, there is no distinction between the finite and unrestricted
versions:
\begin{proposition} \label{prop:finitecontrol}
  If $\aschema$ is a schema whose constraints 
are finitely controllable,
then any CQ $Q$ that is finitely monotone answerable with respect to $\aschema$ is monotone
answerable with respect to $\aschema$.
\end{proposition}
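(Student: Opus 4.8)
The plan is to show that the very same monotone plan $\aplan$ that finitely answers $Q$ already answers $Q$ over all instances; finite controllability is used only to bridge the gap between finite and arbitrary models of~$\Sigma$. Recall that $\aplan$ answers $Q$ on an instance $I$ exactly when every possible output equals $Q(I)$, which splits into \emph{soundness} (every possible output is contained in $Q(I)$) and \emph{completeness} (every possible output contains $Q(I)$). I would establish each half by rephrasing it as a family of Boolean UCQ containments under $\Sigma$ and then invoking finite controllability, which by definition equates the finite and the unrestricted version of each such containment.

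For soundness the argument is elementary. If some valid access selection $\aselect$ makes $\aplan$ output a tuple $\vec a$ on $I$, then, since $\aplan$ is a finite sequence of monotone commands, $\vec a$ is already produced from a finite set $F \subseteq I$ of accessed facts. Writing $D_F$ for the canonical Boolean CQ of $F$ (nulls existentially quantified, the values of $\vec a$ kept as constants) and $Q_{\vec a}$ for $Q$ with its answer variables fixed to $\vec a$, the key observation is that any instance $J$ with $J \models D_F \wedge \Sigma$ again admits a valid access selection under which $\aplan$ outputs $\vec a$: a valid output of a result-bounded access is exactly a size-$\min(|M|,k)$ subset of the matching tuples $M$, so the finitely many facts of $F$ needed at each access --- at most $k$ in number, as witnessed on $I$ --- can be returned on $J$, after which monotonicity of $\aplan$ yields $\vec a$. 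Consequently finite soundness of $\aplan$ means precisely that $D_F \subseteq_\Sigma^{\mathrm{fin}} Q_{\vec a}$ holds for every such pair $F,\vec a$; finite controllability upgrades each to the unrestricted containment $D_F \subseteq_\Sigma Q_{\vec a}$, and running the observation in reverse gives soundness of $\aplan$ on every model of $\Sigma$.

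Completeness is the crux, because here the non-determinism of result bounds works against us: an answer $\vec a \in Q(I)$ must be forced into the output under \emph{every} valid access selection, a universally quantified condition that does not localize to a finite set of facts the way soundness does. My plan is to capture completeness as a determinacy-style statement and reduce it, in the spirit of the query-containment reduction underlying answerability, to a single Boolean UCQ containment --- intuitively ``$Q$ is entailed by what the accesses are guaranteed to expose'' --- and then apply finite controllability to pull a would-be infinite completeness failure down to a finite model of $\Sigma$.

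The main obstacle, and the point I would spend the most care on, is that the natural containment characterizing completeness is cleanest when written over $\Sigma$ \emph{augmented} with accessibility (and, for result bounds, cardinality) axioms, whereas the hypothesis only supplies finite controllability of $\Sigma$ itself. I would therefore argue that these auxiliary axioms are benign: they are full, element-preserving rules over fresh predicates that merely annotate which values and facts are reachable, so adding them neither introduces new elements nor lets finite and infinite models diverge on the original signature. Showing that such axioms preserve the coincidence of finite and unrestricted containment --- so that finite controllability of $\Sigma$ alone suffices to lift finite completeness to completeness over all models --- is where the real work lies; once it is in place, the soundness and completeness halves combine to prove that $\aplan$ answers $Q$, as required.
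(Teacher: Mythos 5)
There is a genuine gap: your completeness half is never actually carried out. You correctly identify that forcing $Q(I)$ into \emph{every} possible output is a universally quantified condition over valid access selections, you propose to encode it as a containment under $\Sigma$ augmented with accessibility and cardinality axioms, and you then note that the hypothesis only gives finite controllability of $\Sigma$ itself --- but you leave the transfer of finite controllability to the augmented constraint set as ``where the real work lies'' without doing that work. This deferred step is not a routine check: the augmented axioms live on an extended signature (the $\accessible$ predicate and the primed and accessed copies of each relation) and, for result-bounded methods, involve counting quantifiers $\exists^{\geq j}$, whereas finite controllability as defined in the paper is a property of Boolean UCQ containment under $\Sigma$ over the original signature. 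There is no general reason such a property should survive the augmentation, and you offer no argument that it does. As written, the proposal therefore establishes only the soundness direction.

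The idea you are missing is the one the paper's one-line proof rests on: a monotone plan is a finite sequence of USPJ middleware commands and access commands, and can be rewritten as a UCQ $Q_\aplan$ over the original signature. Since $\aplan$ finitely answers $Q$, the two containments $Q \subseteq_\Sigma Q_\aplan$ and $Q_\aplan \subseteq_\Sigma Q$ hold over all finite instances; finite controllability --- which is stated exactly for Boolean UCQ containments under $\Sigma$ --- upgrades both to unrestricted containments, so $\aplan$ is equivalent to $Q$ over all instances satisfying $\Sigma$ and hence answers it. This collapses both of your halves into precisely the form the hypothesis applies to, and makes the accessibility and cardinality axioms (and hence your unresolved obstacle) unnecessary. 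Your soundness argument via the finite witnesses $D_F$ is essentially a hand-rolled special case of this rewriting; redirecting that effort to the rewriting of the whole plan is what closes the proof.
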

\begin{proof}
If $Q$ is finitely monotone answerable there is a monotone plan $\aplan$ that is equivalent to $Q$ over
all finite instances. $\aplan$ can be rewritten as a UCQ. Thus finite controllability 
  implies 
that $\aplan$  is equivalent to $Q$ over all instances, and thus $Q$ is monotone answerable.
\end{proof}

Many of the well-studied classes of dependencies with decidable static analysis problems are finitely controllable.
An exception are dependencies consisting of a mix of $\uincd$s and FDs. However, these are known to be finitely
controllable once certain dependencies are added, and thus the finite controllability
technique can also be applied in this case (see Section~\ref{sec:complexitychoice}).

Finally, for simplicity we also look \emph{only at Boolean CQs from here on}. But our results extend straightforwardly
to the non-Boolean case.

\sectionarxiv{$\!\!\!$Reducing to Query Containment}{Reducing to Query Containment} \label{sec:reduce}

We start our study of the monotone answerability problem by
reducing it to \emph{query
containment under constraints},  defined in the previous section.
We explain in this section how this reduction is done. It 
extends the approach
of~\cite{dln,ustods,thebook} to result bounds, and follows the connection
between answerability and determinacy notions of~\cite{NSV,thebook}.

The 
query containment problem corresponding to monotone answerability will capture
the idea that \emph{if 
an instance $I_1$
satisfies a query $Q$ and another instance $I_2$ has more ``accessible data'' than~$I_1$, then $I_2$
should satisfy\/ $Q$ as well}.
We will first define accessible data via the notion of
\emph{accessible part}. We
use this to 
formalize the previous idea as the property of \emph{access monotonic-determinacy},
and show it to be
equivalent to monotone answerability. 
Using access monotonic-determinacy we
show that we can simplify the result bounds of arbitrary schemas, and
restrict to \emph{result lower bounds} throughout this work.
Last, we close the section by showing how to rephrase
access monotonic-determinacy with result lower bounds to
query containment under constraints.

\myparaskip
\myparagraph{Accessible parts}
We first formalize the notion of ``accessible
data''. 
Given a schema $\aschema$ with result-bounded methods and an instance~$I$, an \emph{accessible part} of $I$
is any subinstance obtained by
iteratively making accesses until we
reach a fixpoint.
Formally, we define an accessible part
by choosing a valid access selection $\aselect$ and
 inductively
defining sets of facts $\accpart_i(\aselect, I)$
 and sets of values $\accessible_i(\aselect, I)$ by:
\begin{align*}
  \accpart_0(\aselect, I) & \colonequals \emptyset \mathrm{~~and~~}
  \accessible_0(\aselect,I) \colonequals \emptyset\\
  \accpart_{i+1}(\aselect, I)& \colonequals \!\!\!\!\!\!\!\!\!\!\!\!\!\!\!\!\!\!\!\!\!\!\!\!\!\!\!\!\!\!\!\!\!\!\!\bigcup_{\substack{\mt \mbox{ method},\\[.2em] \accbind
  \mbox{ binding with values in~} \accessible_i(\aselect,I)}} \!\!\!\!\!\!\!\!\!\!\!\!\!\!\!\!\!\!\!\!\!\!\!\!\!\!\!\!\!\!\!\!\!\!\!\!\!\!\!\!\aselect(\mt, \accbind)\qquad \\
   \accessible_{i+1}(\aselect,I) & \colonequals  \adom(\accpart_{i+1}(\aselect, I))
\end{align*}
Above we abuse notation by considering $\aselect(\mt, \accbind)$ as a set of facts, rather
than a set of tuples. These equations define by mutual induction the
set of values
($\accessible$)
that we can retrieve by iterating accesses 
and the set of facts
($\accpart$)
that we can retrieve using those values.

The \emph{accessible part} under~$\sigma$, written $\accpart(\aselect,I)$,
is then defined as~$\bigcup_i \accpart_i(\aselect,I)$.
As the equations are monotone, this fixpoint is reached
after finitely many iterations if $I$ is finite,
or as the union of all finite iterations if $I$ is infinite.
When there are no result bounds, there is only one valid access selection~$\aselect$, so
only one accessible part: it intuitively
corresponds to the data that can be accessed using the methods.
In the presence of result bounds, 
there can be many 
accessible parts, depending on~$\aselect$.

\myparaskip
\myparagraph{Access monotonic-determinacy}
We now formalize the idea that a query~$Q$ is ``monotone under accessible
parts''.
Let $\Sigma$ be the integrity constraints of~$\aschema$.
We call~$Q$ \emph{access monotonically-determined} in~$\aschema$
(or $\amd$, for short),
 if for any two instances $I_1$, $I_2$ satisfying~$\Sigma$,
 if there is an accessible part of $I_1$
 that is a subset of an accessible part of  $I_2$, then
$Q(I_1) \subseteq Q(I_2)$.
Note that when there are no result bounds, there is a unique accessible part of $I_1$ and of $I_2$, and
$\amd$ says that when the accessible part grows, then $Q$ grows.
The definition of~$\amd$ is justified by the following result:

\newcommand{\thmequiv}{
$Q$ is monotone answerable \wrt\ $\aschema$ if and only if
$Q$ is $\amd$ over~$\aschema$.
}
\begin{theorem}
  \label{thm:equiv}
  \thmequiv
\end{theorem}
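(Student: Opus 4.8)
The plan is to prove the two implications separately, using as the main workhorse a monotonicity property of monotone plans: the output of a monotone plan run under a valid access selection $\aselect$ is a monotone function of the facts it retrieves, and every such fact lies inside the accessible part $\accpart(\aselect,I)$. Both directions amount to turning a containment of accessible parts into a containment of plan outputs (forward) or query outputs (converse).

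For the forward direction (monotone answerable $\Rightarrow \amd$), let $\aplan$ be a monotone plan answering $Q$, and let $I_1,I_2$ satisfy $\Sigma$ with $\accpart(\aselect_1,I_1)\subseteq\accpart(\aselect_2,I_2)$ for valid selections $\aselect_1,\aselect_2$; write $A_1,A_2$ for these accessible parts. First I would record the routine fact that every binding the plan issues when run on $I_1$ under $\aselect_1$ uses values from $\accessible(\aselect_1,I_1)$, so every retrieved set $\aselect_1(\mt,\accbind)$ lies in $A_1$. Since $A_1\subseteq A_2\subseteq I_2$ and these facts match $\accbind$ on the input positions, each $\aselect_1(\mt,\accbind)$ is a subset of the tuples matching $(\mt,\accbind)$ in $I_2$; a short computation with the result-bound conditions (i)--(ii) shows any such subset extends to a valid output of $(\mt,\accbind)$ on $I_2$. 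I would therefore fix a single valid access selection $\aselect_2'$ on $I_2$ that, for every binding accessible in $I_1$, returns a valid output containing $\aselect_1(\mt,\accbind)$, and an arbitrary valid output otherwise; defining it globally by binding (rather than adaptively) keeps it a genuine function. A straightforward induction over the commands of $\aplan$ — using monotonicity of the relational-algebra expressions for middleware and query commands, and the choice of $\aselect_2'$ for access commands — shows that each temporary table in the $I_1$-run is contained in the corresponding table of the $I_2$-run under $\aselect_2'$. Hence the output of $\aplan$ on $I_2$ under $\aselect_2'$ contains its output on $I_1$ under $\aselect_1$, which is $Q(I_1)$ because $\aplan$ answers $Q$; and since $\aplan$ also answers $Q$ on $I_2$, that output equals $Q(I_2)$, giving $Q(I_1)\subseteq Q(I_2)$.

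For the converse ($\amd \Rightarrow$ monotone answerable), the idea is that $\amd$ forces the answer to $Q$ to be recoverable from any accessible part. Soundness is immediate from monotonicity of CQs: any accessible part is a subinstance of $I$, so whenever $Q$ holds on it, $Q$ holds on $I$. For completeness I would show that if $Q(I)$ holds then $Q$ already holds on the $\Sigma$-completion of an accessible part: given a valid $\aselect$ with $A=\accpart(\aselect,I)$, build $I_2\models\Sigma$ by chasing $A$ with $\Sigma$ in such a way that $A$ is itself an accessible part of $I_2$, keeping the freshly created elements inaccessible (in the spirit of the later ``blowing up'' constructions). Applying $\amd$ with $I_1=I$ then yields $Q(I_2)$, i.e.\ $Q$ is a certain answer of $A$ under $\Sigma$. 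Thus the ``ideal'' behaviour answering $Q$ is: gather an accessible part and test whether $Q$ is entailed from it under $\Sigma$.

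The main obstacle is turning this semantic recipe into a genuine \emph{finite} monotone plan, since the accessible part may require unboundedly many rounds of accesses and ``entailment under $\Sigma$'' is not itself a plan command. I would address this exactly as in the plans-from-proofs methodology: reduce ``$Q$ is entailed from the accessible part under $\Sigma$'' to a query-containment problem under constraints, and extract a finite monotone plan from a finite chase proof of that containment, the depth of the proof bounding the number of access rounds. This is essentially the reduction carried out in the remainder of the section, so strictly speaking the converse is completed there. The delicate points are verifying that the chase-completion $I_2$ can be built with $A$ as a genuine accessible part, and that the extracted plan is sound and complete under \emph{every} valid access selection — the forward argument above is what certifies soundness of the extracted plan, while $\amd$ certifies its completeness.
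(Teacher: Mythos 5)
Your forward direction is essentially the paper's own argument (Proposition~\ref{prp:plantoproof}): fix an access selection on $I_2$ whose outputs extend those chosen on $I_1$, verify that these extensions remain valid outputs under the result-bound conditions, and induct over the commands of the monotone plan. That part is sound and matches the paper.

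The converse contains a genuine gap. Your semantic analysis---that $\amd$ forces $Q(I)$ to coincide with ``$Q$ is certain over an accessible part under $\Sigma$''---is the right intuition, but it is not a proof that a \emph{finite monotone plan} exists, and that existence claim is the entire content of this direction. You explicitly defer the plan construction to ``the plans-from-proofs methodology'' and to ``the remainder of the section,'' which is circular as written: Proposition~\ref{prop:reduce} \emph{uses} Theorem~\ref{thm:equiv} to pass from monotone answerability to the containment, and the plan-extraction results you would invoke are established in prior work only for schemas \emph{without} result bounds. The containment you would have to extract a plan from here involves the cardinality axioms of Example~\ref{ex:reduce} (counting quantifiers, inequalities, disjunction), for which neither a chase-proof extraction nor the interpolation argument is available off the shelf; moreover the theorem is stated for constraints in active-domain first-order logic, where a chase need not even exist. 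The paper closes exactly this gap by a different route: it axiomatizes the result bounds away, producing a schema $\elimnd(\aschema)$ with fresh relations $R_\mt$, cardinality constraints, and \emph{no} result bounds, proves that both monotone answerability and $\amd$ are invariant under this translation (Propositions~\ref{prop:elimresultboundplan} and~\ref{prop:elimresultbounddet}), and only then invokes the known interpolation-based equivalence for result-bound-free schemas (Theorem~\ref{thm:mdetermandplansclassic}). Without some such reduction your converse does not go through. A secondary, fixable issue: your chase of the accessible part $A$ cannot in general ``keep the freshly created elements inaccessible'' (non-result-bounded methods must return all matching tuples, including chased ones); what you actually need is only that some accessible part of the chased instance \emph{contains} $A$, which requires the same validity-of-extended-outputs check as your forward direction rather than the inaccessibility you assert.
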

Without result bounds, this equivalence of monotone answerability and
access monotone determinacy is proven in \cite{ustods,thebook}, using a variant of Craig's interpolation theorem.
Theorem~\ref{thm:equiv} shows that the equivalence extends to schemas with
result bounds (see Appendix~\ref{app:equiv} for the proof).

In the sequel, it will be more convenient to use an alternative definition
of~$\amd$, based on the notion of \emph{access-valid} subinstances.
A subinstance $I_\acc$ of~$I_1$ is \emph{access-valid in~$I_1$} for~$\aschema$ if,
for any access $(\mt, \accbind)$ performed with a method $\mt$ of~$\aschema$
and with a binding $\accbind$ whose values are in~$I_\acc$,
there is a set $J$ of matching
tuples in $I_\acc$ such that $J$ is a valid output to the access $(\mt,
\accbind)$ in~$I_1$.
In other words, for any access performed on $I_\acc$, we can choose an output
in~$I_\acc$ which is also valid in~$I_1$. We can use this notion to rephrase the
definition of~$\amd$ to talk about 
a common subinstance of $I_1$ and $I_2$ that is access-valid:

\newcommand{\propaltdef}{
For any schema $\aschema$ with constraints~$\Sigma$  and result-bounded methods,
a CQ $Q$ is $\amd$
 if and only if the following implication holds: for any two instances $I_1$, $I_2$ satisfying $\Sigma$, if $I_1$
 and $I_2$ have a common subinstance $I_\acc$ that
is access-valid in~$I_1$, then
$Q(I_1) \subseteq Q(I_2)$.
}

\begin{proposition}  \label{prop:altdef}
  \propaltdef
\end{proposition}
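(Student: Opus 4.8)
The plan is to derive the proposition from an equivalence between the two \emph{hypotheses}: I will show that, for any $I_1,I_2$ satisfying $\Sigma$, there is an accessible part of $I_1$ contained in an accessible part of $I_2$ \emph{if and only if} $I_1$ and $I_2$ share a common subinstance $I_\acc$ that is access-valid in $I_1$. Once the hypotheses are shown equivalent, the two universally quantified implications (the original definition of $\amd$ and the rephrased one) coincide verbatim, and the proposition follows. Everything thus reduces to two facts linking accessible parts to access-valid subinstances.

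The first fact is that every accessible part is access-valid. If $A\colonequals\accpart(\aselect,I)$ for a valid access selection $\aselect$, then for any access $(\mt,\accbind)$ whose binding takes values in $\adom(A)=\accessible(\aselect,I)$, the chosen output $\aselect(\mt,\accbind)$ is a valid output in $I$ and, because $A$ is a fixpoint, already lies inside $A$; this witnesses access-validity. This yields the easy direction of the equivalence: given accessible parts $A_1\subseteq A_2$ of $I_1$ and $I_2$ respectively, I take the common subinstance to be $A_1$ itself, which satisfies $A_1\subseteq A_2\subseteq I_2$ and is access-valid in $I_1$ by the first fact.

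For the converse direction I must recover an accessible-part containment from an access-valid common subinstance $I_\acc$. First I would use access-validity of $I_\acc$ in $I_1$ to build a valid access selection $\aselect_1$ on $I_1$ whose output, for every binding with values in $\adom(I_\acc)$, is chosen inside $I_\acc$; an induction on the stages $\accpart_i(\aselect_1,I_1)$ then gives $A_1\colonequals\accpart(\aselect_1,I_1)\subseteq I_\acc\subseteq I_2$. The goal is to exhibit an accessible part $A_2$ of $I_2$ with $A_1\subseteq A_2$. I would define a valid access selection $\aselect_2$ on $I_2$ that, for every binding with values in $\adom(A_1)$, returns a valid output of $I_2$ that \emph{extends} $\aselect_1(\mt,\accbind)$, and returns an arbitrary valid output otherwise. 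An induction then shows $\accpart_i(\aselect_1,I_1)\subseteq A_2$ for all $i$: the domain values of stage $i$ lie in $\adom(A_1)$ and, inductively, in $\adom(A_2)=\accessible(\aselect_2,I_2)$, so the relevant accesses are performed in the construction of $A_2$, and on exactly these bindings $\aselect_2$ dominates $\aselect_1$.

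The main obstacle is the construction of $\aselect_2$, because a valid output in $I_1$ need not remain valid in $I_2$: if $I_2$ has strictly more matching tuples for some binding, the lower-bound condition (ii) forces a strictly larger output, so $\aselect_1(\mt,\accbind)$ cannot simply be reused. The key observation that unblocks this is that any subset of matching tuples of size at most the result bound $k$ can be \emph{extended} to a valid output in $I_2$: padding $\aselect_1(\mt,\accbind)$ with further matching tuples up to size $\min(k,\card{M_2})$, where $M_2$ is the set of matching tuples of the access in $I_2$, preserves both conditions (i) and (ii) while retaining all of $\aselect_1(\mt,\accbind)$. A secondary subtlety is the apparent circularity of defining $\aselect_2$ via $A_2$ while $A_2$ depends on $\aselect_2$; I resolve this by keying the definition of $\aselect_2$ on the already-fixed set $\adom(A_1)$ rather than on $A_2$, and then driving the containment $A_1\subseteq A_2$ by the stagewise induction above.
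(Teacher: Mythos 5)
Your proof is correct and follows essentially the same route as the paper's: both directions hinge on the observations that an accessible part is itself access-valid (giving the common subinstance $A_1$) and that a valid access selection confined to $I_\acc$ can be extended to one on $I_2$ returning supersets of its outputs. You are in fact more explicit than the paper on the one delicate point, namely why a valid output in $I_1$ can be padded with further matching tuples of $I_2$ to remain valid there, which the paper asserts without elaboration.
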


The proof, given in Appendix \ref{app:determinacy}, follows from the definitions.
The alternative definition of~$\amd$ is more convenient, because it only deals
with a subinstance of~$I_1$ and not with accessible parts. Thus, 
\emph{we will use this characterization of monotone answerability in  the rest of
this paper}.

\myparaskip
\myparagraph{Elimination of result upper bounds}
The characterization of monotone answerability in terms of~$\amd$ allows
us to prove a key simplification in the analysis of result bounds.
Recall that a result bound of~$k$
declares both an \emph{upper bound} of~$k$ on the number of returned results, and
a \emph{lower bound} on them: for all $j \leq k$,
if there are $j$ matches, then
$j$ must be returned. 
We can show that the upper bound makes no difference for monotone answerability.
Formally, for a schema $\aschema$  with
integrity constraints and access methods, some of which
may be result-bounded, we define
the schema $\relaxs(\aschema)$. It has the same vocabulary, constraints,
and access methods as in~$\aschema$.  For each access method
$\mt$ in~$\aschema$ with
result bound of~$k$, $\mt$ has instead a \emph{result lower bound} of~$k$ in
$\relaxs(\aschema)$, i.e., $\mt$ does not impose the upper bound.
We can then show:

\newcommand{\elimupper}{
 Let $\aschema$ be a schema with
arbitrary constraints and with access methods which may be result-bounded.
A CQ~$Q$ is monotone answerable in~$\aschema$ if and only if it is monotone answerable
in~$\relaxs(\aschema)$.
}
\begin{proposition} \label{prop:elimupper}
  \elimupper
\end{proposition}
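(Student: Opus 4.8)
The plan is to route everything through access monotonic-determinacy. By Theorem~\ref{thm:equiv}, $Q$ is monotone answerable in a schema if and only if it is $\amd$ there, and $\aschema$ and $\relaxs(\aschema)$ have identical signature, identical constraints $\Sigma$, and the same set of methods---only the bound attached to each result-bounded method differs. So it suffices to show that $Q$ is $\amd$ in $\aschema$ if and only if $Q$ is $\amd$ in $\relaxs(\aschema)$. I would prove this through the alternative characterization of $\amd$ given in Proposition~\ref{prop:altdef}, which quantifies over pairs of models $I_1, I_2 \models \Sigma$ sharing a common subinstance that is access-valid in $I_1$. Since the pairs $(I_1, I_2)$ and the query $Q$ range over the same objects in both schemas, and the conclusion $Q(I_1) \subseteq Q(I_2)$ is schema-independent, the two $\amd$ conditions will coincide as soon as the notion of ``access-valid in $I_1$'' is the same for $\aschema$ and for $\relaxs(\aschema)$.

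Thus the heart of the argument is the claim that a subinstance $I_0$ of $I_1$ is access-valid in $I_1$ for $\aschema$ if and only if it is access-valid in $I_1$ for $\relaxs(\aschema)$. For methods with no result bound the two schemas impose the same valid outputs, so only the result-bounded methods matter. Fix such a method $\mt$ (result bound $k$ in $\aschema$, result lower bound $k$ in $\relaxs(\aschema)$) and an access $(\mt, \accbind)$ whose binding uses values in $I_0$; write $M$ for its matching tuples in $I_1$. The forward direction is immediate: any output that is valid for $\aschema$ satisfies conditions (i) and (ii), hence in particular satisfies (ii), so it is valid for $\relaxs(\aschema)$; thus any witness of access-validity for $\aschema$ is already a witness for $\relaxs(\aschema)$.

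For the converse, suppose $I_0$ is access-valid for $\relaxs(\aschema)$, and let $J'$ be a set of matching tuples of the access in $I_0$ that is a valid output for $\relaxs(\aschema)$ in $I_1$, i.e., $J'$ satisfies the lower-bound condition (ii). I want to extract from $J'$ an $\aschema$-valid output still contained in $I_0$. I take $J \subseteq J'$ to be any subset of size $\min(\card{J'}, k)$. Then $J \subseteq J' \subseteq I_0$, so $J$ consists of matching tuples in $I_0$; condition (i) holds since $\card{J} \le k$ by construction; and (ii) is preserved because for every $j \le k$ with $\card{M} \ge j$ we have $\card{J'} \ge j$ (as $J'$ satisfies (ii)), whence $\card{J} = \min(\card{J'}, k) \ge \min(j, k) = j$. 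Thus $J$ witnesses access-validity of $I_0$ for $\aschema$, which establishes the claim; combining it with the reduction above through Proposition~\ref{prop:altdef} and Theorem~\ref{thm:equiv} finishes the proof.

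The only non-formal step is the converse of the claim, where one must turn a possibly large $\relaxs(\aschema)$-valid output into a bounded $\aschema$-valid one without leaving $I_0$; I expect this ``trimming while preserving the lower bound'' to be the point to get right, but as sketched it is an elementary counting argument, so I anticipate no genuine obstacle.
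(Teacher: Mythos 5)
Your proposal is correct and follows essentially the same route as the paper: reduce to $\amd$ via Theorem~\ref{thm:equiv} and Proposition~\ref{prop:altdef}, observe that the forward direction is immediate, and for the converse trim a $\relaxs(\aschema)$-valid output down to at most $k$ tuples while checking that the lower-bound condition survives. The only cosmetic difference is that you handle all cases uniformly with the $\min(\card{J'},k)$ argument, whereas the paper splits into cases depending on whether the number of matching tuples exceeds $k$; both are sound.
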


\begin{proof}
  We show the result for $\amd$ instead of monotone answerability,
  thanks to Theorem~\ref{thm:equiv}, and use Proposition~\ref{prop:altdef}.
  Consider arbitrary instances $I_1$ and
  $I_2$ that satisfy the constraints, and let us show that any common subinstance
  $I_\acc$ of~$I_1$ and $I_2$ is access-valid in~$I_1$ for~$\aschema$ iff
  it is access-valid in~$I_1$ for~$\relaxs(\aschema)$: this implies the claimed
  result.

  In the forward direction, if $I_\acc$ is access-valid in~$I_1$ for
  $\aschema$, then clearly it is access-valid in~$I_1$
  for~$\relaxs(\aschema)$, as any output of an access on~$I_\acc$ which is
  valid in~$I_1$ for~$\aschema$ is also valid for~$\relaxs(\aschema)$.

  In the backward direction, assume  $I_\acc$ is access-valid in~$I_1$
  for~$\relaxs(\aschema)$, and consider an access $(\mt, \allowbreak\accbind)$
  with values of~$I_\acc$.
  If $\mt$ has no result lower bound, then there is only one possible output for the
  access, and it is also valid for~$\aschema$.
  Likewise, if $\mt$ has a result lower bound of~$k$ and there are $\leq k$ matching
  tuples for the access, then the definition of a result lower bound ensures
  that there is only one possible output, which is again valid for~$\aschema$.
  Last, if there are $>k$ matching tuples for the access,
  we let $J$ be a set of tuples in~$I_\acc$ which is is a valid output
  to the access in~$\relax(\aschema)$, and take any subset $J'$ of~$J$ with
  $k$ tuples; it is clearly a valid output to the access for~$\aschema$. This
  establishes the backward direction, concluding the proof.
\end{proof}

Thanks to this, in our study of monotone answerability in the rest of the paper, 
\emph{we only consider result lower bounds}.

\myparaskip
\myparagraph{Reducing to query containment}
Now that we have reduced our monotone answerability problem to~$\amd$, and
eliminated result upper bounds, we explain how to
restate $\amd$ as a query containment
problem.
To do so, we will expand the relational signature:
we let $\accessible$ be a new unary predicate, and for
each relation~$R$ of the original signature,
we introduce two copies $R_\acc$ and $R'$ with the same
arity as~$R$.
Letting $\Sigma$  be the integrity constraints
in the original schema, we let $\Sigma'$ 
be formed
by replacing every relation~$R$ with~$R'$. For any CQ $Q$, we define
$Q'$ from~$Q$ in the same way.
The \emph{$\amd$ containment for~$Q$ and $\aschema$} is
then the CQ containment $Q \subseteq_\Gamma Q'$ where
the constraints $\Gamma$ are defined as follows: they include the original constraints~$\Sigma$, the
constraints $\Sigma'$ on the relations $R'$, and the following
\emph{accessibility axioms} (with implicit universal quantification):
\mylistskip
\begin{compactitem}
\item For each method $\mt$ that is not result-bounded, letting $R$ be
  the relation accessed by~$\mt$:
  \[\Big(\bigwedge_i \accessible(x_i)\Big) \wedge
  R(\vec x, \vec y) \rightarrow R_\acc(\vec x, \vec y)\]
where $\vec x$ denotes the input positions of~$\mt$ in~$R$.
\item For each method $\mt$ with a result lower bound of~$k$, 
  letting $R$ be the relation accessed by~$\mt$, 
  for all $j \leq k$:
  \[\Big(\bigwedge_i \accessible(x_i)\Big) \wedge \exists^{\geq j} \vec y ~ 
  R(\vec x, \vec y)
  \rightarrow \exists^{\geq j} \vec z ~ R_\acc(\vec x, \vec z)\]
where $\vec x$ denotes the input positions of~$\mt$ in~$R$.
Note that we write
  $\exists^{\geq j} \vec y ~ \phi(\vec x, \vec y)$ for a subformula $\phi$
  to mean that there exist at least $j$ different values of~$\vec y$
  such that $\phi(\vec x, \vec y)$ holds.
\item For every relation $R$ of the original signature:
  \[R_\acc(\vec w) \rightarrow R(\vec w) \wedge R'(\vec w) \wedge \bigwedge_i
  \accessible(w_i)\]
\end{compactitem}
\mylistskip
The $\amd$ containment above simply formalizes the definition of~$\amd$, via
Proposition~\ref{prop:altdef}.
Intuitively, $R$ and $R'$ represent the interpretations of the relation~$R$ in~$I_1$ and $I_2$;
$R_\acc$ represents the interpretation of~$R$ in~$I_\acc$; and $\accessible$ represents the active  domain
of~$I_\acc$.
The constraints $\Gamma$ include $\Sigma$ and $\Sigma'$, which means that~$I_1$ and $I_2$ both satisfy $\Sigma$.
The first two accessibility axioms enforce that $I_\acc$ is access-valid in~$I_1$: for non-result-bounded
methods, accesses to a method $\mt$ on a relation~$R$ return all the results, while for result-bounded methods it respects the lower bounds.
The last accessibility axiom enforces that~$I_\acc$ is a common subinstance
of~$I_1$ and $I_2$ and
that~$\accessible$ includes the active domain of~$I_\acc$.
Hence, from the definitions, we have:

\begin{proposition} \label{prop:reduce} $Q$ is monotone answerable with respect
  to a schema $\aschema$
iff the $\amd$ containment for~$Q$ and $\aschema$ holds.
\end{proposition}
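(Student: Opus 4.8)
The plan is to unwind both sides into the same statement about triples of instances, using the characterizations already in hand. By Theorem~\ref{thm:equiv} monotone answerability coincides with $\amd$, and by Proposition~\ref{prop:elimupper} we may assume the schema carries only result lower bounds — precisely the regime whose access-validity condition the accessibility axioms of $\Gamma$ are designed to encode (the $\exists^{\geq j}$ axioms capturing point (ii) of a valid output). It then suffices to prove that the implication characterizing $\amd$ in Proposition~\ref{prop:altdef} holds exactly when $Q \subseteq_\Gamma Q'$. The bridge is a correspondence between instances $\calI$ of the expanded signature and triples $(I_1, I_2, I_\acc)$: read the unprimed relations of $\calI$ as $I_1$, the primed relations $R'$ as $I_2$, and the relations $R_\acc$ as $I_\acc$. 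The whole argument amounts to checking that $\calI \models \Gamma$ translates clause-by-clause into the hypotheses of Proposition~\ref{prop:altdef}.

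First I would prove that the containment implies $\amd$. Given $I_1, I_2 \models \Sigma$ sharing a subinstance $I_\acc$ that is access-valid in $I_1$, with $Q(I_1)$ true, I assemble $\calI$ as above and set $\accessible \colonequals \adom(I_\acc)$. I then verify $\calI \models \Gamma$ one family of constraints at a time: $\Sigma$ and $\Sigma'$ hold since $I_1, I_2 \models \Sigma$; the third accessibility axiom holds because $I_\acc$ is a common subinstance of $I_1$ and $I_2$ with domain inside $\accessible$; and the first two axioms are exactly the two cases of access-validity. For a non-result-bounded method the unique valid output is the full set of matches, so access-validity says all matching tuples over bindings in $\adom(I_\acc)$ lie in $I_\acc$, which is the first axiom; for a method with lower bound $k$, access-validity says $I_\acc$ retains at least $j$ matches whenever $I_1$ has at least $j$ (for $j \le k$), which is the $\exists^{\geq j}$ axiom. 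Since $\calI$ satisfies $Q$ (evaluated on the unprimed copy, which is $I_1$), the containment yields $Q'$, that is $Q(I_2)$, as required.

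For the converse I start from an arbitrary $\calI \models \Gamma$ with $Q$ true and read off $I_1, I_2, I_\acc$. The constraints $\Sigma, \Sigma'$ give $I_1, I_2 \models \Sigma$, and the third axiom gives $R_\acc \subseteq R$ and $R_\acc \subseteq R'$, so $I_\acc$ is a common subinstance with $\adom(I_\acc) \subseteq \accessible$. To see $I_\acc$ is access-valid in $I_1$, I take any binding with values in $\adom(I_\acc)$; since $\adom(I_\acc) \subseteq \accessible$ the relevant accessibility axiom fires and, as above, supplies a valid output lying in $I_\acc$ (all matches for an unbounded method; for a lower-bounded method the matching $R_\acc$-tuples, which are genuine $R$-matches by the third axiom, form a set meeting every lower bound $j \le k$). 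Applying the Proposition~\ref{prop:altdef} implication to $(I_1, I_2, I_\acc)$ — with $Q(I_1)$ true because $Q$ holds on the unprimed copy — gives $Q(I_2)$, hence $\calI \models Q'$, establishing $Q \subseteq_\Gamma Q'$.

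The one point I would track carefully, rather than read straight off the definitions, is that in the converse direction $\accessible$ may be strictly larger than $\adom(I_\acc)$. This is harmless: access-validity only quantifies over bindings drawn from $\adom(I_\acc)$, and since $\adom(I_\acc) \subseteq \accessible$ the axioms apply to exactly those accesses, so no spurious obligation is created. The accompanying bookkeeping is the lower-bound counting — converting ``there are $\ge j$ witnesses $\vec z$ with $R_\acc(\vec x, \vec z)$'' into ``the $R_\acc$-matches of the binding form a valid output in $I_1$'' — which relies on the third axiom to know these witnesses are genuine $I_1$-matches; everything else is a direct transcription of the definitions.
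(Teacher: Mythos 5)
Your proof is correct and follows exactly the route the paper intends: the paper states that the $\amd$ containment ``simply formalizes the definition of~$\amd$, via Proposition~\ref{prop:altdef}'' and leaves the verification as immediate from the definitions, whereas you write out the clause-by-clause correspondence between models of~$\Gamma$ and triples $(I_1,I_2,I_\acc)$ in both directions. The two points you single out for care --- that $\accessible$ may strictly contain $\adom(I_\acc)$ in the converse direction, and the counting argument matching the $\exists^{\geq j}$ axioms to the lower-bound condition on valid outputs --- are precisely the details the paper elides, and you handle both correctly.
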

Note that, for a schema without result bounds,  the accessibility axioms above
can be rewritten
as follows (as in~\cite{ustods,thebook}): for each method $\mt$, letting $R$
be the relation accessed by~$\mt$ and $\vec x$ be the input positions
of~$\mt$ in~$R$, we have the axiom:
\[
\Big(\bigwedge_i \accessible(x_i)\Big) \wedge
  R(\vec x, \vec y) \rightarrow  R'(\vec x, \vec y) \wedge \bigwedge_i
  \accessible(y_i)\]

\begin{example} \label{ex:reduce} Let us apply the reduction above to the schema of Example~\ref{ex:simple}
  with the result bound of 100 from Example~\ref{exa:rbound}.
We  see that monotone answerability of a CQ $Q$  is equivalent to $Q
  \subseteq_\Gamma Q'$, for~$\Gamma$ containing:
\begin{compactitem}
\item  the referential constraint from $\univdirect$ into $\profinfo$  and from
$\univdirect'$ into $\profinfo'$

\item  $\accessible(i) \wedge \profinfo(i, n, s) \rightarrow
  \profinfo_\acc(i, n, s)$,
\item for all $1 \leq j \leq 100$:\\
$\exists \vec y_1 \cdots \vec y_j
    (\bigwedge_{1 \leq p < q \leq j} \vec y_p \neq \vec y_q \wedge
    \univdirect(\vec y_p))$ \\
    $\rightarrow \exists \vec y_1' \cdots \vec y_j'
    (\bigwedge_{1 \leq p < q \leq j} ~  \vec y_p' \neq \vec y_q' \wedge
    \univdirect_\acc(\vec y_p'))$
\item 
$\profinfo\!_\acc(\vec w) \!\rightarrow\! \profinfo(\vec w) \wedge \profinfo'(\vec w) \wedge 
  \!\bigwedge_i \accessible(w_i)$
and similarly for $\univdirect$.
\end{compactitem}
Note that the  constraint in the third item is  quite complex; it contains
  inequalities and also disjunction, since we write
$\vec y \neq \vec z$ to abbreviate a disjunction $\bigvee_{i \leq |\vec y|} y_i
  \neq z_i$. This makes it challenging to decide if $Q
  \subseteq_\Gamma Q'$ holds.
Hence, our goal in the next section will be to simplify result bounds to avoid
  such complex constraints.
\end{example}

\begin{table*}
  \centering
  {
\caption{Summary of results on simplifiability and complexity of monotone answerability}
\label{tab:results}
  \begin{tabular}{lll}
\toprule
{\bfseries Fragment} &
{\bfseries Simplification} & 
    {\bfseries Complexity } \\
\midrule
  IDs & Existence-check  (Theorem~\ref{thm:simplifyidsexistence}) & $\exptime$-complete  (Theorem~\ref{thm:decidids})\\
  Bounded-width IDs & Existence-check (see above) & $\np$-complete  (Theorem~\ref{thm:npidsbounds}) \\
  FDs & FD  (Theorem~\ref{thm:fdsimplify})
    & $\np$-complete  (Theorem~\ref{thm:decidfd}) \\
  FDs and UIDs & Choice  (Theorem~\ref{thm:simplifychoiceuidfd})
    &  $\np$-hard (see above) and in~$\exptime$  (Theorem~\ref{thm:deciduidfd}) \\
  Equality-free FO & Choice  (Theorem~\ref{thm:simplifychoice})
    & Undecidable  (Proposition~\ref{prp:undec}) \\
  Frontier-guarded TGDs & Choice  (see above)
    & $\twoexp$-complete  (Theorem~\ref{thm:decidegf}) \\
\bottomrule
\end{tabular}
  }
\end{table*}

\sectionarxiv{Simplifying result bounds}{Simplifying result bounds}
\label{sec:simplify}
The results in Section~\ref{sec:reduce} allow us to reduce the monotone answerability problem to
a query containment problem. However, for result bounds greater than~$1$, the
containment problem involves complex cardinality constraints, as illustrated in Example \ref{ex:reduce},
and thus we cannot apply standard results or algorithms on query containment under
 constraints to get decidability ``out of the box''.
To address this difficulty, we must \emph{simplify} result-bounded schemas,
i.e., change or remove the result bounds.
We do so in this section, with \emph{simplification} results of the following
form: if we can find a plan for a query on a result-bounded schema,
then we can find a plan in a \emph{simplification} of the schema, i.e.,
a schema with simpler result bounds or
no result bounds at all.

These simplification results have two benefits.
First, they give insight about the use of result bounds, following the examples
in the introduction. For instance, our
results will
show that for most common classes of  constraints used in databases, the actual
numbers in the result bounds never matter for answerability.
Secondly, they help us to show the decidability of monotone answerability.

\myparaskip
\myparagraph{Existence-check simplification} \label{subsec:exist}
The simplest way to use result-bounded methods is to 
check if some tuples exist, as in Example~\ref{ex:existencecheck}.
We will formalize this as the \emph{existence-check simplification}, where we
replace result-bounded methods by Boolean methods that can only do such
existence checks.

Given a schema $\aschema$ with result-bounded methods,
its \emph{existence-check simplification}
$\aschema'$ is formed as follows:
\mylistskip
\begin{compactitem}
\item The signature of~$\aschema'$ is that of~$\aschema$ plus some new relations:
  for each result-bounded method $\mt$, letting $R$ be the relation accessed
  by~$\mt$, we add 
a relation~$\checkview_\mt$ whose arity is the number of input positions
  of~$\mt$.
\item The integrity constraints of~$\aschema'$ are those of~$\aschema$ plus, for
  each result-bounded method $\mt$ of~$\aschema$,
  a new constraint (expressible as two~$\incd$s):
$
\checkview_\mt(\vec x) ~ \leftrightarrow ~ \exists
\vec y ~ R(\vec x, \vec y)
$,\\
where $\vec x$  denotes the input positions of~$\mt$ in~$R$.
\item The methods of~$\aschema'$ are the  methods of~$\aschema$ that have
  no result bounds, plus one new Boolean method $\mt'$ on each new relation
  $\checkview_\mt$, that has no result bounds either.
\end{compactitem}
\begin{example}
\label{ex:existssimplify}
Recall the schema $\aschema$ of Example~\ref{ex:simple} having the method
  $\praccess$ and the result-bounded method $\udaccess_2$ of
  Example~\ref{ex:fd}.
  The existence-check simplification of~$\aschema$ 
  has a signature with relations $\udirectory$, $\profinfo$, and a new relation
  $\udirectory_{\udaccess_2}$ of arity~$1$.
It has two access methods without result bounds: the method $\praccess$
  on~$\profinfo$ like in~$\aschema$, and a Boolean method $\udaccess_2'$ on
  $\udirectory_{\udaccess_2}$. Its constraints are those of~$\aschema$, plus the
  following IDs:\mylistskip\begin{compactitem}
  \item $\udirectory(i, a, p)
  \rightarrow \udirectory_{\udaccess_2}(i)$; and
\item $\udirectory_{\udaccess_2}(i) \rightarrow \exists a\,p ~ \udirectory(i, a, p)$.
  \end{compactitem}
\end{example}

Clearly, every plan that uses the existence-check simplification
$\aschema'$ of a schema
$\aschema$ can be converted into a plan using $\aschema$, by simply replacing the
accesses on the Boolean method of~$\checkview_\mt$ to non-deterministic accesses
with~$\mt$, and only checking whether the result of these accesses is empty.
We want to understand when the converse is true. That is, when a plan on~$\aschema$ can
be converted to a plan on~$\aschema'$. For instance,
recalling the plan of Example~\ref{ex:existencecheck} that tests
whether~$\udirectory$ is empty, we could implement it in the existence-check
simplification of this schema.
More generally, we want to identify schemas $\aschema$ for which \emph{any} 
CQ having a monotone plan over~$\aschema$ has a plan 
on the existence-check
simplification~$\aschema'$.
We say that 
$\aschema$
is \emph{existence-check simplifiable} when this holds:
this intuitively means that 
``result bounded methods of~$\aschema$ are only useful for existence checks''.

\myparaskip
\myparagraph{Showing existence-check simplifiability}
We first show that this notion of existence-check simplifiability holds for schemas like Example~\ref{exa:plani} whose constraints consist of
inclusion dependencies:

\newcommand{\thmsimplifyidsexistence}{
  Let $\aschema$ be a schema whose constraints are
$\incd$s, and let $Q$ be a CQ that is
monotone answerable 
  in~$\aschema$.
Then $Q$ is monotone answerable in the existence-check simplification
   of~$\aschema$.
}
\begin{theorem} \label{thm:simplifyidsexistence}
  \thmsimplifyidsexistence
\end{theorem}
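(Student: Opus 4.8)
The plan is to reason about access monotonic-determinacy ($\amd$) instead of monotone answerability, invoking Theorem~\ref{thm:equiv}, and to use the subinstance characterization of Proposition~\ref{prop:altdef}; by Proposition~\ref{prop:elimupper} I may assume every result bound is a result \emph{lower} bound. I argue by contraposition: from a failure of $\amd$ in the existence-check simplification $\aschema'$ I construct a failure of $\amd$ in $\aschema$. Proposition~\ref{prop:altdef} applied to $\aschema'$ yields instances $I_1',I_2'$ of the signature of $\aschema'$ satisfying its constraints, with a common subinstance $K'$ access-valid in $I_1'$ and with $Q(I_1')\not\subseteq Q(I_2')$; as $Q$ is Boolean this means $I_1'\models Q$ and $I_2'\not\models Q$. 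Let $I_1,I_2,K$ be the restrictions of $I_1',I_2',K'$ to the original signature (dropping the check relations $\checkview_\mt$). Then $I_1,I_2$ still satisfy the $\incd$s $\Sigma$, the query is unchanged so $I_1\models Q$ and $I_2\not\models Q$, and $K$ is a common subinstance of $I_1,I_2$. The only thing that can fail is access-validity of $K$ in $I_1$ for $\aschema$: for the non-result-bounded methods it transfers verbatim from $\aschema'$, but for a result-bounded method $\mt$ on $R$ with lower bound $k$, a binding $\accbind$ with values in $K$ that matches in $I_1$ now demands a valid output of $k$ matching $R$-tuples inside $K$ (or all of them if there are fewer), which the restriction $K$ need not contain.

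The bridge provided by the existence-check simplification is the following observation, which is exactly the information an existence check conveys. If $\accbind\subseteq\adom(K)$ matches some $R$-tuple in $I_1$, then the constraint $\checkview_\mt(\vec x)\leftrightarrow\exists\vec y\,R(\vec x,\vec y)$ read in $I_1'$ gives $\checkview_\mt(\accbind)\in I_1'$; access-validity of $K'$ for the Boolean method on $\checkview_\mt$ then forces $\checkview_\mt(\accbind)\in K'\subseteq I_2'$, and the same constraint read in $I_2'$ shows $\accbind$ also matches some $R$-tuple in $I_2$. Thus every ``accessible and nonempty'' binding in $I_1$ is guaranteed nonempty in $I_2$. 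I now \emph{blow up} to turn ``nonempty'' into ``at least $k$ matches'': for each such method/binding pair I add to $K$ a block of $k$ copies $R(\accbind,\vec y^{(1)}),\dots,R(\accbind,\vec y^{(k)})$ using pairwise distinct fresh values in the output positions, I add the same facts to $I_1$ and to $I_2$, and I chase each of $I_1,I_2$ with $\Sigma$ to obtain $\tilde I_1,\tilde I_2\models\Sigma$ (possibly infinite, which is allowed). Writing $\tilde K$ for $K$ together with the added copies, $\tilde K$ is still a common subinstance of $\tilde I_1,\tilde I_2$, and every targeted binding now has $k$ matching tuples available inside $\tilde K$.

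Preserving the two query conditions is where the homomorphism bookkeeping enters. Adding facts keeps $\tilde I_1\models Q$ for free. To keep $\tilde I_2\not\models Q$, I use the bridge above: for each targeted binding there is a witness $R(\accbind,\vec v)\in I_2$, so the map fixing $\adom(I_2)$ and sending each copy block's fresh values $\vec y^{(i)}$ to $\vec v$ is a homomorphism from $I_2$ together with the copies into $I_2$. Since $I_2\models\Sigma$, the universal property of the chase extends it to a homomorphism $\tilde I_2\to I_2$ that is the identity on $I_2$; as $I_2\not\models Q$ and homomorphisms reflect Boolean CQ satisfaction, $\tilde I_2\not\models Q$. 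The analogous retraction $\tilde I_1\to I_1$, identity on $I_1$, also exists and is the tool for the last check.

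The step I expect to be the main obstacle is verifying that $\tilde K$ is access-valid in $\tilde I_1$ for \emph{all} methods after the blow-up. For the result-bounded methods this is the whole point of the construction, but I must re-check the non-result-bounded ones, because chasing below the fresh copies creates new $R$-facts, and I must show none of them matches a binding with values in $\adom(\tilde K)$ while lying outside $\tilde K$. Here the tree-like shape of the $\incd$ chase is essential: every chase-descendant carries fresh nulls outside $\adom(\tilde K)$ on its non-exported positions, and the retraction $\tilde I_1\to I_1$ lets me trace any offending match back to a match already present in $I_1$, contradicting access-validity of $K$ there. A second subtlety is that the fresh copy values enlarge $\adom(\tilde K)$ and thus create new candidate bindings; I handle this by performing the blow-up as a saturating fixpoint and observing that bindings using genuinely fresh values match nothing relevant in $\tilde I_1$, so the process stabilizes without ever needing copies of chase-created nulls. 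Assembling these checks, $(\tilde I_1,\tilde I_2,\tilde K)$ witnesses the failure of $\amd$ in $\aschema$, completing the contrapositive.
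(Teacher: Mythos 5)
Your overall strategy is the paper's: pass to $\amd$ via Theorem~\ref{thm:equiv}, use the existence-check view to transfer ``this binding has at least one match'' from $I_1$ to $I_2$, blow up each such access with fresh copies, and retract back via homomorphisms. The gap is in how you organize the chase and in what you put into the common subinstance. You add only the first-level copies $R(\accbind,\vec y^{(i)})$ to $\tilde K$, and then chase $I_1\cup N$ and $I_2\cup N$ \emph{separately}. Those two chases produce different facts, so none of the chase-generated facts can be placed in $\tilde K$ --- yet their active domains meet $\adom(\tilde K)$ through the copy values $\vec y^{(i)}$. Concretely, take the $\incd$ $R(x,y)\rightarrow S(y)$, a result-bounded method on $R$ with input position $1$, and a non-result-bounded method on $S$ with input position $1$. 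Chasing the copy $R(\accbind,y^{(1)})$ creates $S(y^{(1)})$ in $\tilde I_1\setminus\tilde K$, and the access on $S$ with binding $y^{(1)}\in\adom(\tilde K)$ then has a matching tuple outside $\tilde K$; since that method has no result bound, \emph{all} its matches must lie in $\tilde K$, so $\tilde K$ is not access-valid in $\tilde I_1$. Your proposed escape --- tracing the offending match back through the retraction to contradict access-validity of $K$ in $I_1$ --- does not work: the retraction sends $y^{(1)}$ to a witness value $\vec v$ with $R(\accbind,\vec v)\in I_1$, and $\vec v$ need not lie in $\adom(K)$, so access-validity of $K$ in $I_1$ says nothing about the traced-back access. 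Likewise, ``bindings using genuinely fresh values match nothing relevant'' is false, as the example shows.

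The repair is the one the paper uses, and it is a genuinely different organization of the same blow-up. Build the enlarged accessible part \emph{once}: start from $K$ (keeping the $\checkview_\mt$ facts for now), add the fresh witness blocks for every existence-check trigger, and chase this single instance with $\Sigma$ to a $\Sigma$-satisfying $I_\acc^+$; then set $I_p^+\colonequals I_p\cup I_\acc^+$ for $p\in\{1,2\}$ and take $I_\acc^+$ itself (restricted to the original signature) as the common subinstance. Because $\incd$ bodies are single atoms, each union $I_p\cup I_\acc^+$ still satisfies $\Sigma$ even though neither side was chased against the other, and every binding that involves a chase null finds all of its matching tuples inside $I_\acc^+$, which is shared by both sides. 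Your bridge lemma, the choice of $k$ (or infinitely many) fresh copies, and the homomorphisms $I_p^+\to I_p$ built from the existence of head witnesses in $I_p$ all survive unchanged in that setup; it is only the ``chase each side separately and share only the top layer'' step that must be abandoned.
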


This existence-check simplifiability result implies in particular that 
\emph{for schemas with~$\incd$s, monotone answerability  is decidable even
with result bounds}. This is because the existence-check simplification of the
schema features only $\incd$s and no result bounds,
so the query containment problem for~$\amd$ 
only features guarded TGDs, which implies decidability. We will show a finer
complexity bound in the next section.

To prove Theorem~\ref{thm:simplifyidsexistence}, we show that if $Q$ is not $\amd$ 
in the existence-check simplification~$\aschema'$ of~$\aschema$, then it cannot be $\amd$ 
in~$\aschema$. This
suffices to prove the contrapositive of the result, because 
$\amd$ is equivalent
to monotone answerability  (Theorem~\ref{thm:equiv}).
This claim is shown with a general method of
\emph{blowing up models} that we will reuse in all subsequent simplifiability
results. We assume that~$\amd$ does not hold in the
simplification~$\aschema'$, and consider
a \emph{counterexample to~$\amd$} for~$\aschema'$:
two instances $I_1,I_2$ both satisfying the schema constraints, 
such that~$I_1$ satisfies $Q$ while $I_2$ satisfies $\neg Q$, and
$I_1$ and $I_2$ have a
common subinstance $I_\acc$ which is access-valid in~$I_1$.
We use them to build a counterexample to~$\amd$
for the original schema~$\aschema$: we will always do so by adding more facts
to~$I_1$ and~$I_2$
and then restricting to the relations of~$\aschema$.
We formalize this method in the following immediate lemma:

\begin{lemma}
  \label{lem:enlarge}
  Let $\aschema$ and $\aschema'$ be 
  schemas and $Q$ a CQ on the common relations of~$\aschema$ and~$\aschema'$
  such that~$Q$ is not $\amd$ in~$\aschema'$.
  Suppose that for some counterexample $I_1, I_2$ to~$\amd$ for~$Q$ in~$\aschema'$
  we can build instances $I_1^+$ and $I_2^+$ that satisfy the constraints
  of~$\aschema$ with a common subinstance
  $I_\acc$ that is access-valid in~$I_1^+$ for~$\aschema$, and such that for
  each $p \in \{1, 2\}$, the instance $I_p^+$ has a homomorphism to~$I_p$, and
  the restriction of $I_p$ to the relations of~$\aschema$ is a subinstance
  of~$I_p^+$.
Then $Q$ is not $\amd$ in~$\aschema$.
\end{lemma}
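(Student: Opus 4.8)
The plan is to apply the alternative characterization of $\amd$ from Proposition~\ref{prop:altdef}. To witness that $Q$ is not $\amd$ in~$\aschema$, it suffices to exhibit two instances satisfying the constraints of~$\aschema$ that share a common subinstance which is access-valid in the first, and on which~$Q$ disagrees. I would simply take $I_1^+$ and $I_2^+$ as these witnesses. By hypothesis they satisfy the constraints of~$\aschema$ and share the subinstance~$I_\acc$, which is access-valid in~$I_1^+$ for~$\aschema$; so two of the three required conditions are handed to us directly, and the only remaining task is to show that~$Q$ holds in~$I_1^+$ but fails in~$I_2^+$. Since $I_1,I_2$ form a counterexample to~$\amd$ for~$Q$ in~$\aschema'$, and~$Q$ is Boolean, we know that~$Q$ holds in~$I_1$ and fails in~$I_2$.

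For~$I_1^+$, I would transport~$Q$ \emph{forward} along a subinstance inclusion. Because~$Q$ is a CQ over the relations common to~$\aschema$ and~$\aschema'$, any homomorphism witnessing~$Q$ in~$I_1$ in fact lands inside the restriction of~$I_1$ to the relations of~$\aschema$; the hypothesis states that this restriction is a subinstance of~$I_1^+$, so the very same homomorphism witnesses~$Q$ in~$I_1^+$, giving $Q(I_1^+)=\mathrm{true}$. For~$I_2^+$, I would instead use the homomorphism $h\colon I_2^+\to I_2$ supplied by the hypothesis, exploiting that Boolean CQs are preserved under homomorphisms. If~$Q$ held in~$I_2^+$, witnessed by some $g\colon Q\to I_2^+$, then $h\circ g$ would be a homomorphism from~$Q$ to~$I_2$ --- here I use that~$h$ preserves the relations of~$\aschema$, which include those appearing in~$Q$ --- so~$Q$ would hold in~$I_2$, contradicting our assumption. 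Hence $Q(I_2^+)=\mathrm{false}$.

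Combining these two facts gives $Q(I_1^+)\not\subseteq Q(I_2^+)$, and together with the access-validity of~$I_\acc$ in~$I_1^+$ and the fact that both enlarged instances satisfy the constraints of~$\aschema$, Proposition~\ref{prop:altdef} delivers that~$Q$ is not $\amd$ in~$\aschema$, as required. I do not anticipate a genuine obstacle, since this is essentially a bookkeeping lemma that repackages the definitions; the one point that demands care is the \emph{asymmetry} between the two sides. The first instance is handled by an inclusion, pushing a satisfying homomorphism \emph{forward} into~$I_1^+$, whereas the second is handled by a homomorphism into~$I_2$, \emph{reflecting} non-satisfaction \emph{backward} from~$I_2$ to~$I_2^+$. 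Keeping straight which direction each map runs, and verifying that~$Q$ being confined to the common relations legitimizes both steps, is the entire content of the argument.
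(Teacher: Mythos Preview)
Your proposal is correct and matches the paper's approach exactly. The paper calls this lemma ``immediate'' and does not give a standalone proof in the main text; the analogous lemma for RA-plans (Lemma~\ref{lem:enlargera}) is proved in the appendix with precisely your argument: $I_1^+$ satisfies~$Q$ because the $\aschema$-restriction of~$I_1$ sits inside it, $I_2^+$ fails~$Q$ by the homomorphism to~$I_2$, and the remaining hypotheses feed directly into Proposition~\ref{prop:altdef}.
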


Let us sketch how the blowing-up process of the lemma is used to prove
our existence-check simplification result:

\begin{proof}[\myproof sketch for Theorem~\ref{thm:simplifyidsexistence}]
Assume we have  a counterexample $I_1, I_2$ to
$\amd$ for~$Q$ in the simplification~$\aschema'$.
We will ``blow up'' $I_1$ and $I_2$
  to~$I_1^+$ and $I_2^+$ as explained in Lemma~\ref{lem:enlarge},
  ensuring that $I_1^+$ and $I_2^+$ have a common subinstance $I^{+}_\acc$ that
  is access-valid in~$I_1^+$
  for the original schema $\aschema$. For this, we must ensure that each access
in~$I^{+}_\acc$ 
to a result-bounded method 
returns either no tuples or more tuples than the bound.

Intuitively, we form $I^{+}_\acc$ in two steps.
First, we
consider all $\incd$s of the form $\checkview_\mt(\vec x) \rightarrow \exists \vec y
~ R(\vec x, \vec y)$ in~$\aschema'$, and we chase them ``obliviously''; i.e., for every method
$\mt$ and value for~$\vec x$, we create infinitely many facts to instantiate the
head, with infinitely many nulls for~$\vec y$. We even do this when the trigger
is not active, i.e., when witnesses for the head already exist.
    Let $I_\acc^*$ be the result of this.

    In a second step, we solve the constraint violations that may have been
    added by creating these new facts. We do so by
  applying the chase to~$I_\acc^*$ 
  in the usual way
  with all $\incd$
  constraints of~$\Sigma$. This yields $I_\acc^{+}$, from which we remove all
  relations not in~$\aschema$, i.e., all $R_\mt$ facts.

We form $I_1^+$ by unioning $I^{+}_\acc$ with~$I_1$ and restricting again to the
  relations of $\aschema$.
We similarly form $I_2^+$ from ~$I^{+}_\acc$ and $I_2$. As the
constraints are $\incd$s, we can argue that~$I_1^+$ and $I_2^+$ satisfy
$\Sigma$, because $I_1$, $I_2$, and $I^{+}_\acc$ do.
We can also construct homomorphisms of~$I_1^+$ back to~$I_1$ and $I_2^+$ back to~$I_2$,
and we can use $I^{+}_\acc$ as the
common access-valid subinstance. This proves
Theorem~\ref{thm:simplifyidsexistence}.
\end{proof}

\newcommand{\simplifyfddef}{
\item The signature of~$\aschema'$ is that of~$\aschema$ plus some new
relations: for each result-bounded
method $\mt$, letting $R$ be the relation accessed by~$\mt$,
we add a relation~$R_\mt$ whose arity is
$\left|\detby(\mt)\right|$.
\item The integrity constraints of~$\aschema'$ are those of~$\aschema$ plus, 
for each result-bounded method~$\mt$ of~$\aschema$,
a new 
constraint (expressible as two~$\incd$s):
 $R_\mt(\vec x, \vec y) \!\leftrightarrow\! \exists \vec z ~ R(\vec x, \vec y, \vec z)$,\\
where $\vec x$ denotes the input positions of~$\mt$ and $\vec y$ denotes the
other positions of~$\detby(\mt)$.
\item The methods of~$\aschema'$ are the methods of~$\aschema$ that have
no result bounds, 
 plus the following: for each result-bounded method
$\mt$ on relation~$R$ in~$\aschema$,
a method $\mt'$ on~$R_\mt$ that has no result bounds and whose input positions
are the positions of~$R_\mt$ corresponding to input positions of~$\mt$.
}

\myparaskip
\myparagraph{FD simplification} 
When our constraints include functional dependencies, we can hope for another kind of simplification,
generalizing
the idea of Example~\ref{ex:fd}:
an FD can force the output of a result-bounded method to be deterministic on a
projection of the output positions.
We will define the \emph{FD simplification} to formalize this intuition.

Given a set of constraints $\Sigma$, a relation~$R$ that occurs in~$\Sigma$, and a
subset $P$
of the positions of
$R$, we write $\detby(R,P)$ for the set of positions \emph{determined} by $P$, i.e.,
the set of positions~$i$ of~$R$ such that~$\Sigma$ implies the FD
$P \determines i$. In particular, we have $P \subseteq \detby(R,P)$.
For any access method $\mt$, letting $R$ be the relation that it accesses,
we let $\detby(\mt)$ denote $\detby(R,P)$ where $P$ is the set of input
positions of~$\mt$. 
Given a schema $\aschema$ with result-bounded methods, we can now define its \emph{FD simplification} 
$\aschema'$ as follows:
\mylistskip
\begin{compactitem}
  \simplifyfddef
\end{compactitem}
\mylistskip
Note that the FD simplification is the same as the existence check
simplification when the integrity constraints~$\Sigma$ do not imply any FD.
Further observe that, even though the methods of~$\aschema'$ have no result
bounds, any access to a new method $\mt'$ of~$\aschema'$
is guaranteed to return at most one result. This is thanks to the FD on the
corresponding relation~$R$, and thanks to the constraints that relate
$R_\mt$ and~$R$.

\begin{example} \label{ex:fdsimplify}
Recall the schema $\aschema$ of Example~\ref{ex:fd} and the FD $\phi$
  on~$\univdirect$. In the FD simplification of~$\aschema$, we
  add a relation $\univdirect_{\udaccess2}(\attrfmt{id}, \attrfmt{address})$,
  we replace $\udaccess_2$ by a method $\udaccess_2$ on~$\univdirect_{\udaccess2}$ whose
  input attribute is~$\attrfmt{id}$, and we add the $\incd$s 
  $\univdirect(i, a, p) \rightarrow \univdirect_{\udaccess2}(i, a)$ and
  $\univdirect_{\udaccess2}(i, a) \rightarrow \exists p ~ \univdirect(i, a, p)$.
  The method $\udaccess_2'$ has no result bound, but the $\incd$s above and the
  FD $\phi$ 
  ensure that it always returns at most one result.

  Since the FD simplification has no result-bounded methods, the query containment problem for the simplification
will not  
  use any  complex cardinality constraints, in contrast to Example \ref{ex:reduce}.
\end{example}

A schema $\aschema$ is \emph{FD simplifiable} if every CQ having a monotone plan
over~$\aschema$
has one over the FD simplification of~$\aschema$. As for existence-check,
if a schema is FD simplifiable, 
 we can decide  monotone answerability by reducing to the same  problem in 
a schema
without result bounds.

We use a variant of our ``blowing-up process'' to show that all schemas 
with only FD constraints are FD simplifiable:

\newcommand{\fdsimplify}{
  Let $\aschema$ be a schema whose constraints are FDs, and let $Q$ be a CQ
  that is
   monotone answerable in~$\aschema$. Then $Q$ is monotone answerable in the FD simplification of~$\aschema$.
}
\begin{theorem} \label{thm:fdsimplify}
  \fdsimplify
\end{theorem}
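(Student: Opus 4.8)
The plan is to mirror the proof of Theorem~\ref{thm:simplifyidsexistence}: I would argue the contrapositive at the level of $\amd$, relying on the equivalence of $\amd$ with monotone answerability (Theorem~\ref{thm:equiv}), on the access-valid characterization of Proposition~\ref{prop:altdef}, and on the blowing-up principle of Lemma~\ref{lem:enlarge}. So I assume $Q$ is not $\amd$ in the FD simplification $\aschema'$, and fix a counterexample: instances $I_1, I_2$ satisfying the constraints of $\aschema'$ with $I_1 \models Q$ and $I_2 \models \neg Q$, sharing a common subinstance $I_{\anacc}$ that is access-valid in $I_1$ for $\aschema'$. The goal is to produce a counterexample to $\amd$ for $\aschema$ by enlarging $I_1$ and $I_2$ so that every access to a result-bounded method of $\aschema$ either returns nothing or can already return more tuples than its bound from inside the common subinstance.

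The enlargement I would use is a \emph{padding} that exploits the FDs. For each result-bounded method $\mt$ of $\aschema$ on a relation $R$ with bound $k$, and for each fact $R_\mt(\vec a, \vec b) \in I_{\anacc}$ --- where $\vec a$ lies on the input positions of $\mt$ and $\vec b$ on the remaining positions of $\detby(\mt)$ --- I add $k$ new facts $R(\vec a, \vec b, \vec c^{(1)}), \ldots, R(\vec a, \vec b, \vec c^{(k)})$ that agree with $(\vec a, \vec b)$ on all positions of $\detby(\mt)$ and carry pairwise-distinct fresh nulls on the positions outside $\detby(\mt)$. Writing $P$ for the set of all padding facts, I set $I_{\anacc}^+$ to consist of $P$ together with all $\aschema$-facts of $I_{\anacc}$, and similarly let $I_p^+$ consist of $P$ together with the $\aschema$-facts of $I_p$, for $p \in \{1,2\}$; I then invoke Lemma~\ref{lem:enlarge} with $I_{\anacc}^+$ as the common access-valid subinstance.

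Two of the hypotheses of Lemma~\ref{lem:enlarge} are then routine. The restriction of $I_p$ to the relations of $\aschema$ is a subinstance of $I_p^+$ by construction. For the homomorphism $I_p^+ \to I_p$, note that since $R_\mt(\vec a, \vec b) \in I_{\anacc} \subseteq I_p$ and $I_p$ satisfies the $\incd$ $R_\mt(\vec x, \vec y) \to \exists \vec z\, R(\vec x, \vec y, \vec z)$ of $\aschema'$, there is a witness $R(\vec a, \vec b, \vec z_0) \in I_p$; mapping the fresh nulls $\vec c^{(i)}$ of each padding fact onto this witness and fixing everything else gives the homomorphism. In particular $I_2^+ \models \neg Q$ since $Q$ is preserved by homomorphisms, while $I_1^+ \models Q$ by monotonicity of CQs. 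For access-validity of $I_{\anacc}^+$ in $I_1^+$: a binding of $\mt$ only constrains input positions, which lie in $\detby(\mt)$, so it uses no fresh null and is already a binding over $I_{\anacc}$; whenever such a binding has a match in $I_1^+$ it stems from some $R_\mt(\vec a, \vec b) \in I_{\anacc}$ (by access-validity in $\aschema'$ and the $\incd$), so $P$ supplies $k$ matches inside $I_{\anacc}^+$ and the lower bound can be met within the common subinstance; non-result-bounded methods behave exactly as in Theorem~\ref{thm:simplifyidsexistence}.

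The delicate step --- and the one that genuinely uses that the constraints are FDs --- is checking that $I_1^+$ and $I_2^+$ still satisfy $\Sigma$. The crux is that $\detby(\mt)$ is closed under the FDs of $\Sigma$, while the padding places fresh, pairwise-distinct nulls only on positions outside $\detby(\mt)$. Given any FD $D \determines j$ and any two facts of $I_p^+$ that agree on $D$, I would show they agree on $j$ by cases: if $D$ meets a position outside $\detby(\mt)$, the fresh nulls prevent two distinct padded facts, or a padded and an original fact, from agreeing on $D$ at all; and if $D \subseteq \detby(\mt)$, then $j \in \detby(\mt)$ by closure, the padded facts inherit their $\detby(\mt)$-values from witnesses already present in $I_p$, and the agreement on $j$ then follows from $I_p \models \Sigma$. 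Carrying out this case analysis cleanly --- including the interaction of padding coming from different methods and different bindings on the same relation --- is where the real work lies; once it is done, Lemma~\ref{lem:enlarge} yields that $Q$ is not $\amd$ in $\aschema$, which by Theorem~\ref{thm:equiv} is precisely the contrapositive of the statement.
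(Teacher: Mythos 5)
Your proposal is correct and takes essentially the same route as the paper: a contrapositive blow-up via Lemma~\ref{lem:enlarge} that adds, per result-bounded access, $k$ tuples agreeing with the existing matches on $\detby(\mt)$ and carrying fresh values on the remaining positions, with the FD preservation argued by exactly the dichotomy on whether the determiner meets a position outside $\detby(\mt)$. The only differences are organizational: you anchor the padding on the $R_\mt$-facts of the common subinstance and pad every access uniformly (so all padding lands in $I_\acc^+$ at once, and bindings containing fresh nulls --- a case you gloss over --- harmlessly match only padding facts), whereas the paper blows up accesses selectively and therefore needs two auxiliary claims about one access ``extending'' another to control the interaction between blow-ups.
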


\begin{proof}[\myproof sketch]
  We start by considering a counterexample $I_1,I_2$ to~$\amd$ for the FD simplification
  of~$\aschema$, i.e., $Q$ holds in~$I_1$ but not in~$I_2$, and $I_1$ and $I_2$
  have a common subinstance $I_\acc$ which is access-valid in~$I_1$. We blow up
  the accesses on~$I_1$ one after the other, to enlarge $I_1$ and~$I_2$ to
  $I_1^+$ and~$I_2^+$ satisfying the requirements of Lemma~\ref{lem:enlarge}.

  We blow up each access by adding tuples to~$I_1$ and~$I_2$, to ensure that the
  access has enough common matching tuples in~$I_1$ and~$I_2$ to define
  a valid output. It
  suffices to do this for accesses with result-bounded methods~$\mt$,in the
case where
  some matching tuples in~$I_1$ are not in~$I_2$. In this case, the definition
  of $\detby(\mt)$ ensures that matching tuples in~$I_1$ and~$I_2$ agree on
  positions of~$\detby(\mt)$. But the assumption that there are some
matching tuples in $I_1$ that are not in $I_2$ implies that not all positions
  are  in $\detby(\mt)$. Thus, we can add enough tuples to~$I_1$ and~$I_2$ by defining
  them on~$\detby(\mt)$ like the existing tuples, and putting fresh values in the
  other positions. This can be shown to satisfy the FD constraints of~$\aschema$.
  By performing this blow-up process on each access, we obtain
  $I_1^+$ and~$I_2^+$ with the access-valid subinstance~$I^+_\acc$, we restrict
  to the relations of~$\aschema$, and we
  conclude using Lemma~\ref{lem:enlarge}.
\end{proof}

\section{Decidability of monotone\\answerability}
\label{sec:complexity}

Thus far we have seen a general way to  reduce monotone answerability problems with result bounds to query containment
problems (Section~\ref{sec:reduce}). We have also seen schema  simplification
results for both FDs and $\incd$s, which give us insight
into how result-bounded methods can be used (Section~\ref{sec:simplify}).
We now show that for these two classes of constraints,
the reduction to containment and simplification results combine to
give decidability results, along with tight complexity bounds.
Note that both of these classes are well-known to be finitely
controllable~\cite{cosm,rosati};
hence, thanks to Proposition \ref{prop:finitecontrol}, 
\emph{all bounds on monotone answerability in this section
also apply to finite monotone answerability}.

\myparaskip
\myparagraph{Decidability for FDs}
We first consider schemas in which all the integrity constraints are FDs. 
We start with an analysis of monotone answerability in the case \emph{without result
bounds}:

\newcommand{\fdclassic}{
We can decide whether a CQ is monotone answerable with respect to a
  schema without result bounds whose constraints
  are FDs. The problem is $\np$-complete.
}
\begin{proposition} \label{prop:decidmdetfdclassic}
\fdclassic
\end{proposition}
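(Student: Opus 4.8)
The plan is to decide monotone answerability for FD-only schemas without result bounds by reducing, via Proposition~\ref{prop:reduce}, to the $\amd$ containment problem $Q \subseteq_\Gamma Q'$, and then analyzing the chase. Since there are no result bounds, the accessibility axioms take the simplified ``no result bounds'' form given just after Proposition~\ref{prop:reduce}: for each method~$\mt$ on relation~$R$ with input positions~$\vec x$, we get the full TGD $\bigwedge_i \accessible(x_i) \wedge R(\vec x, \vec y) \rightarrow R'(\vec x, \vec y) \wedge \bigwedge_i \accessible(y_i)$. The constraint set~$\Gamma$ thus consists of the FDs~$\Sigma$ on the $R$-relations, the copied FDs~$\Sigma'$ on the $R'$-relations, and these accessibility TGDs. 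Crucially, all of these are \emph{full} dependencies (FDs have no existentials, and the accessibility axioms are full TGDs), so the chase introduces no new nulls: every element of the chase is an element of~$\canondb(Q)$.

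The first key step is to bound the size of the chase. Because no fresh values are ever created, the active domain of the chase is a subset of the (constantly many) elements of~$\canondb(Q)$, so the total number of derivable facts over the finitely many relations is polynomial in~$Q$ and~$\Gamma$; FD chase steps only merge elements, and TGD steps only add $R'$- and $\accessible$-facts over the existing domain. Hence the chase terminates in polynomially many steps and the whole construction is of polynomial size. This immediately gives a $\ptime$ decision procedure once we have the containment reduction, but to nail the $\np$ upper bound tightly (the query~$Q'$ being matched may require guessing a homomorphism), the plan is to observe that we can guess the chase sequence --- or more cleanly, guess the final chased instance together with the homomorphism witnessing~$Q'$ --- and verify in polynomial time that it is closed under~$\Gamma$ and contains an image of~$Q'$. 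So membership in~$\np$ follows. I would double-check whether the problem is in fact in~$\ptime$ (since full-TGD/FD chase is deterministic up to the FD merges), but the stated result is $\np$-completeness, so the matching lower bound is what forces~$\np$.

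For $\np$-hardness, the plan is a reduction from a standard $\np$-complete problem, most naturally CQ containment or CQ evaluation, encoding the difficulty of finding the homomorphism for~$Q'$ into the answerability instance. The cleanest route is to reduce from the $\np$-complete problem of Boolean CQ evaluation (equivalently, the homomorphism problem / containment of CQs), building a schema in which the methods and FDs are benign enough that answerability of a suitably constructed~$Q$ is equivalent to an instance of that $\np$-hard problem; one arranges the access methods so that the accessible part essentially reconstructs~$I_1$ and the containment $Q \subseteq_\Gamma Q'$ boils down to the existence of a query homomorphism.

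The main obstacle I expect is the lower bound rather than the upper bound: the upper bound is a fairly routine ``full dependencies give a polynomial-size chase, then guess a homomorphism'' argument, but engineering an FD-only, result-bound-free schema whose answerability question is genuinely $\np$-hard --- while respecting the rigid syntactic shape of the accessibility axioms and ensuring the reduction preserves the $\amd$ semantics in \emph{both} directions --- is the delicate part. A secondary subtlety to verify carefully is that the interaction between the FD merge steps and the accessibility TGDs does not blow up the domain or cause the containment to behave non-monotonically; since FDs only identify existing elements this should be safe, but it must be checked that merging in the $R$-copy is correctly mirrored by~$\Sigma'$ in the $R'$-copy so that the reduction of Proposition~\ref{prop:reduce} remains faithful.
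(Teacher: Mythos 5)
Your upper bound follows the paper's proof essentially verbatim: reduce to the $\amd$ containment $Q \subseteq_\Gamma Q'$ via Proposition~\ref{prop:reduce}, observe that without result bounds $\Gamma$ consists only of FDs and full TGDs, conclude the chase terminates polynomially, and check $Q'$ on the chase result in $\np$. Two small points. First, your justification for the polynomial chase size (``bounded active domain, hence polynomially many facts over the finitely many relations'') is not quite right when relation arities are unbounded, since a domain of size $n$ admits $n^a$ facts of arity~$a$; the correct reason, which is what the paper uses, is that the only facts ever added are unary $\accessible$-facts and primed copies of \emph{existing} $R$-facts, so the chase saturates in linearly many rounds. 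Second, the lower bound you identify as ``the delicate part'' requires no new construction: $\np$-hardness already holds without constraints or result bounds and is simply cited from prior work on access patterns~\cite{access2}.
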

\begin{proof}[\myproof sketch]
  The lower bound already holds
  without result bounds or constraints \cite{access2}, so we show the upper
  bound.
  By  Theorem \ref{thm:equiv}
  and Proposition~\ref{prop:reduce},
the problem reduces to the $\amd$ query
  containment problem $Q \subseteq_\Gamma Q'$ for~$\aschema$. As $\aschema$ has
  no result bounds, we can define $\Gamma$ using the rewriting of the accessibility axioms given
  after Proposition~\ref{prop:reduce}. This ensures that $\Gamma$
  only contains FDs and full TGDs from~$R$ and~$\accessible$
  to~$R'$ and~$\accessible$. 
  We can then show that the chase with~$\Gamma$ terminates in polynomially many
  rounds. Hence, we can decide
  containment by checking in~$\np$ if $Q'$ holds on the chase result,
  which concludes.
\end{proof}

We now return to the situation \emph{with result bounds}. We know that
schemas with FDs are FD simplifiable.
 From this we get a reduction to query containment with no result bounds, but introducing
new axioms.
We can show that the additional axioms involving $R_{\mt}$ and $R$ do not harm chase
termination,
so that~$\amd$ 
is decidable; in fact, it is NP-complete, i.e., no harder than CQ evaluation:

\newcommand{\thmdecidfd}{
  We can decide whether a CQ is monotone answerable with respect to a schema with
  result bounds whose constraints are FDs. The problem is $\np$-complete.
}
\begin{theorem} \label{thm:decidfd} 
  \thmdecidfd
\end{theorem}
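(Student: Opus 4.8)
The plan is to combine the FD simplifiability result (Theorem~\ref{thm:fdsimplify}) with a chase-termination argument, exactly paralleling the proof of Proposition~\ref{prop:decidmdetfdclassic} but accounting for the extra relations and axioms introduced by the FD simplification. The lower bound is inherited for free: $\np$-hardness already holds without result bounds, so it follows from Proposition~\ref{prop:decidmdetfdclassic} (or directly from~\cite{access2}). Thus the whole burden is the $\np$ upper bound.

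For the upper bound, I would proceed as follows. First, by Theorem~\ref{thm:fdsimplify}, a CQ $Q$ is monotone answerable in the original result-bounded schema $\aschema$ if and only if it is monotone answerable in the FD simplification $\aschema'$. Since $\aschema'$ has \emph{no} result-bounded methods, Proposition~\ref{prop:reduce} together with the rewriting of the accessibility axioms given after it lets me express monotone answerability in $\aschema'$ as an $\amd$ query containment $Q \subseteq_\Gamma Q'$, where $\Gamma$ contains: the original FDs $\Sigma$ (on the $R$ relations) and their copies $\Sigma'$ (on the $R'$ relations); the two $\incd$s $R_\mt(\vec x, \vec y) \leftrightarrow \exists \vec z ~ R(\vec x, \vec y, \vec z)$ for each result-bounded method; and the rewritten accessibility axioms, which are now full TGDs flowing from $R$ and $\accessible$ into $R'$ and $\accessible$ (no result-bounding, hence no cardinality constraints). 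The key observation is that the only genuinely existential rules are the ``backward'' $\incd$s $R_\mt(\vec x, \vec y) \rightarrow \exists \vec z ~ R(\vec x, \vec y, \vec z)$; everything else is either an FD or a full TGD.

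The main obstacle is showing that the chase with $\Gamma$ terminates in polynomially many rounds, so that one can guess a homomorphism of $Q'$ into the chase result and verify it in $\np$. I would argue termination as follows. The existential $\incd$s only fire on $R_\mt$-facts, and an $R_\mt$-fact can only be created by the full TGD $R(\vec x, \vec y, \vec z) \rightarrow R_\mt(\vec x, \vec y)$ (the ``forward'' direction), i.e.\ from an already-existing $R$-fact. Crucially, when the backward $\incd$ fires on such an $R_\mt(\vec x, \vec y)$, the witnessing $R$-fact that triggered the forward rule already realizes the head, so \emph{the backward trigger is never active} on facts produced this way. Hence no new nulls are ever created by the $\incd$s during the chase, and the only fresh elements come from the original $\canondb(Q)$. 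The remaining rules (FDs and full TGDs) never introduce nulls and operate over a domain of polynomial size; the FD chase steps only identify existing elements, and the full TGDs only copy tuples among $R$, $R'$, $R_\mt$, and $\accessible$. I would therefore bound the number of distinct facts (and thus chase rounds) polynomially in the sizes of $Q$ and $\aschema'$, mirroring the termination argument sketched for Proposition~\ref{prop:decidmdetfdclassic}. Once polynomial termination is established, containment reduces to checking whether $Q'$ holds on the (polynomially sized) chase result, which is an $\np$ task; combined with the lower bound this yields $\np$-completeness.

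The one subtlety I would be careful about is that the FD simplification adds the relations $R_\mt$ and their defining $\incd$s, and I must confirm these do not reintroduce the cardinality blow-up that Example~\ref{ex:reduce} warned about — but since the simplification has replaced every result bound by an ordinary unbounded Boolean or single-output method, the $\amd$ containment for $\aschema'$ uses only the plain (non-counting) accessibility axioms, so $\Gamma$ is genuinely free of $\exists^{\geq j}$ constraints. Verifying this cleanly, and checking that the interaction between the new FDs (which may now relate positions across the $R_\mt$ and $R$ copies) and the full TGDs still does not create new nulls, is where the real work lies.
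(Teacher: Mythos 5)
Your overall strategy is the same as the paper's: reduce to the FD simplification via Theorem~\ref{thm:fdsimplify}, reduce to the $\amd$ containment via Proposition~\ref{prop:reduce}, and then argue polynomial chase termination as in Proposition~\ref{prop:decidmdetfdclassic}. The lower bound is handled identically. However, your termination argument has a genuine gap.

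You claim that \emph{no new nulls are ever created by the $\incd$s during the chase}, on the grounds that every $R_\mt$-fact is generated from a witnessing $R$-fact, so the backward $\incd$ $R_\mt(\vec x, \vec y) \rightarrow \exists \vec z\, R(\vec x, \vec y, \vec z)$ never has an active trigger. This is correct for the \emph{unprimed} copy, but you have overlooked the \emph{primed} copy $R'_\mt(\vec x, \vec y) \rightarrow \exists \vec z\, R'(\vec x, \vec y, \vec z)$. The FD simplification introduces an access method $\mt'$ on $R_\mt$, so $\Gamma$ contains an accessibility axiom that transfers an $R_\mt$-fact (with accessible inputs) into an $R'_\mt$-fact. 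Such an $R'_\mt$-fact is \emph{not} in general accompanied by a witnessing $R'$-fact, so the primed backward $\incd$ is an active trigger and fires, creating fresh nulls. Your stated invariant is therefore false, and the argument as written does not establish termination. The repair — which is exactly what the paper's proof does — is to observe that these firings are harmless: the nulls they create appear only in primed facts, never propagate back to the unprimed relations or to $\accessible$ (there are no TGDs with primed relations in their bodies other than FDs and the primed forward rules, whose $R'_\mt$-outputs trigger nothing further and cannot help $Q'$), and the only subsequent steps applicable to such a fact are FD firings, which merely merge its elements. With that additional analysis the chase still stabilizes in polynomially many rounds and the $\np$ bound follows; without it, the claim that the $\incd$s are inert is simply wrong.
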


\myparaskip
\myparagraph{Decidability for $\incd$s}
Second, we consider schemas whose constraints consist of~$\incd$s.
As we already mentioned, Theorem~\ref{thm:simplifyidsexistence} implies
decidability for such schemas. We now give the precise complexity bound:

\begin{theorem} \label{thm:decidids}
  We can decide whether
a CQ is monotone answerable
with respect to  a schema with result bounds whose  constraints are
$\incd$s. Further, the problem is $\exptime$-complete.
\end{theorem}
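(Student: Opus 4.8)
The plan is to combine the existence-check simplification with the reduction to query containment, and then to analyze the resulting containment problem under inclusion dependencies. For the upper bound, I would first invoke Theorem~\ref{thm:simplifyidsexistence}: since all constraints of $\aschema$ are $\incd$s, the CQ $Q$ is monotone answerable in $\aschema$ iff it is monotone answerable in the existence-check simplification $\aschema'$. The crucial gain is that $\aschema'$ has no result bounds at all, and its constraints are again $\incd$s, namely the original ones together with the two $\incd$s $\checkview_\mt(\vec x) \leftrightarrow \exists \vec y\, R(\vec x, \vec y)$ added for each result-bounded method. Thus it suffices to decide monotone answerability for schemas whose constraints are $\incd$s and which carry no result bounds.

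For such a schema, Theorem~\ref{thm:equiv} and Proposition~\ref{prop:reduce} reduce the problem to the $\amd$ containment $Q \subseteq_\Gamma Q'$, where, because there are no result bounds, $\Gamma$ can be written using the simplified accessibility axioms given just after Proposition~\ref{prop:reduce}. Every rule in $\Gamma$ is then a guarded TGD: the only existential-generating rules are the $\incd$s $\Sigma$ and their primed copies $\Sigma'$ (single-atom heads), while the accessibility axioms are full and guarded by a database atom $R$ or $R_{\anacc}$. Decidability is therefore immediate, since CQ containment under guarded TGDs is decidable. Concretely I would chase $\canondb(Q)$ with $\Gamma$, exploiting that the chase restricted to the $\incd$s is forest-shaped, and test whether $Q'$ maps into the result.

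The delicate point, and the main obstacle, is to obtain an $\exptime$ rather than merely a $\twoexp$ upper bound, since guarded TGDs over a signature of unbounded arity give only $\twoexp$ in general. I would resolve this by exploiting the restricted shape of $\Gamma$: the only rules that create fresh nulls are $\incd$s, each adding a single atom and never branching on more than one atom of the forest, while the accessibility axioms merely propagate the unary mark $\accessible$ and copy guarded facts among $R$, $R_{\anacc}$, and $R'$, so they neither deepen the forest nor introduce genuine nondeterminism once the accessible atoms are fixed. Tracking a single branch of the forest together with its accessibility marks, in the style of Johnson and Klug~\cite{johnsonklug}, then yields an alternating-$\pspace$ procedure and hence $\exptime$; equivalently, one observes that this is exactly the no-result-bound answerability problem for $\incd$s, whose $\exptime$ upper bound is already established in~\cite{thebook} and may be invoked directly on~$\aschema'$.

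Finally, for the matching lower bound I would note that a schema with $\incd$ constraints and no result bounds is a special case of the schemas considered here, obtained by declaring every method non-result-bounded. Monotone answerability for such schemas is already $\exptime$-hard~\cite{thebook}, the hardness stemming from the interaction of access patterns with $\incd$s rather than from result bounds. Hence answerability with result bounds is $\exptime$-hard as well, and together with the upper bound this gives $\exptime$-completeness.
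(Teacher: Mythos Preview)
Your proposal is correct and follows essentially the same route as the paper: reduce to the existence-check simplification via Theorem~\ref{thm:simplifyidsexistence} (which has no result bounds and only $\incd$ constraints), then invoke the known $\exptime$ upper bound for monotone answerability over $\incd$ schemas without result bounds, with the matching lower bound inherited from that same setting. The only cosmetic difference is that the paper cites~\cite{bbbicdt} for both the hardness and the $\exptime$ upper bound, whereas you cite~\cite{thebook}; your Johnson--Klug style sketch is optional since the direct citation already suffices.
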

\begin{proof}
  Hardness already holds without result bounds~\cite{bbbicdt}, so we focus on
  the upper bound.
  By Theorem~\ref{thm:simplifyidsexistence}, we can equivalently replace
  the schema $\aschema$ with
  its existence-check simplification $\aschema'$, and $\aschema'$ does not
  have result bounds. Further, we can check that $\aschema'$ consists only of
  $\incd$s, namely, those of~$\aschema$ plus the $\incd$s added in the simplification. 
Note that the resulting query containment problem only involves guarded TGDs, and thus we can conclude
that the problem is in $\twoexp$ from~\cite{datalogpm}. However, we can do better:
\cite{bbbicdt} showed that the monotone answerability problem for schemas where
the constraints are $\incd$s is in $\exptime$, and thus we conclude the proof.
 Note that the result in \cite{bbbicdt} is based on a finer analysis of
   the query containment  problems associated with answerability. We will refine
this analysis in obtaining bounds for more restrictive classes of constraints.
\end{proof}

\myparaskip
\myparagraph{Complexity for bounded-width $\incd$s}
An important practical case for $\incd$s are those whose \emph{width} --- the number of
exported variables --- is bounded by a constant. This 
includes  $\uincd$s. For bounded-width $\incd$s, it was shown by Johnson and Klug~\cite{johnsonklug}
that query containment under constraints is $\np$-complete.
A natural question is whether the same holds for monotone
answerability.
We accordingly conclude the section by showing
the following, which is new even in the setting without result bounds:

\newcommand{\npidsbounds}{
  It is $\np$-complete to decide whether a CQ is monotone answerable with respect to a schema
  with result bounds whose constraints are bounded-width $\incd$s.
}
\begin{theorem} \label{thm:npidsbounds}
  \npidsbounds
\end{theorem}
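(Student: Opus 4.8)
The plan is to prove the two bounds separately. For $\np$-hardness, I would observe that the lower bound already holds in the degenerate special case of an empty set of constraints (which is trivially a set of bounded-width $\incd$s) and no result bounds: deciding the existence of an executable rewriting relative to access methods alone is $\np$-hard by \cite{access2}, and via Theorem~\ref{thm:equiv} this transfers verbatim to monotone answerability. The real work is therefore in the $\np$ upper bound.

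For the upper bound I would first apply Theorem~\ref{thm:equiv} and Proposition~\ref{prop:reduce} to rephrase monotone answerability as the $\amd$ containment $Q \subseteq_\Gamma Q'$, and then invoke the existence-check simplification (Theorem~\ref{thm:simplifyidsexistence}) to pass to a schema \emph{without} result bounds whose constraints are still $\incd$s. The catch is that the resulting $\Gamma$ is guarded but is \emph{not} a set of bounded-width $\incd$s: it comprises the bounded-width $\incd$s $\Sigma$ and their primed copies $\Sigma'$, the accessibility axioms (full TGDs guarded by an $R$-atom but mentioning several $\accessible$-atoms in body and head), and the $\incd$s $\checkview_\mt(\vec x) \leftrightarrow \exists \vec y\, R(\vec x, \vec y)$ introduced by the simplification, whose width equals the number of input positions of $\mt$ and is thus unbounded. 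Hence the generic bound for guarded TGDs used in Theorem~\ref{thm:decidids} only gives $\twoexp$, and the Johnson--Klug $\np$ bound does not apply out of the box. Note that even without result bounds the accessibility axioms are already non-linear, so the linearization below is needed in that case too; the existence-check simplification merely reduces the result-bounded case to it, at the cost of the wide $\checkview_\mt$-$\incd$s.

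The heart of the argument is a linearization step that rewrites $\Gamma$ into an equivalent set $\Gamma_{\mathrm{lin}}$ of bounded-width linear TGDs (restricted $\incd$s), so that the containment is preserved and the Johnson--Klug analysis can be invoked. Following the guarded-to-linear idea of \cite{gmp}, I would fold the $\accessible$-atoms of each accessibility axiom into its guard: for each relation I introduce annotated copies recording, for the boundedly-many positions that an $\incd$ can ever export, whether they carry an accessible value, and I rewrite each dependency so that its single body atom already carries this accessibility information. The full TGDs (accessibility axioms and copies) are treated separately from the existential $\incd$s, exploiting that they generate no nulls and therefore only decorate the existing chase domain rather than enlarging the chase tree. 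The step I expect to be the main obstacle is the unbounded width of the existence-check $\incd$s: on the primed side the axiom $\checkview'_\mt(\vec x)\!\rightarrow\!\exists \vec y\, R'(\vec x,\vec y)$ can fire and manufacture a wide $R'$-fact exporting all of $\vec x$. I would control this by noting that every such $\vec x$ is inherited from a pre-existing (chase-generated) unprimed $R$-fact, since $\checkview'_\mt(\vec x)$ can only arise from an accessible $\checkview_\mt(\vec x)$, itself obtained by projecting some $R(\vec x,\vec y)$. Thus the wide primed creation can be \emph{simulated in lockstep} with the $R$-fact that spawned it, letting me replace the single wide $\incd$ by a family of bounded-width linear TGDs that build up the paired $R'$-fact position-by-position through auxiliary predicates, keeping the exported arity bounded by the maximal $\incd$-width.

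Once $\Gamma_{\mathrm{lin}}$ is in hand, I would check that $Q \subseteq_\Gamma Q'$ holds iff the corresponding containment holds under $\Gamma_{\mathrm{lin}}$, and then apply the Johnson--Klug machinery \cite{johnsonklug}, which extends to this well-behaved class of bounded-width linear TGDs: the relevant part of the chase of $\canondb(Q)$ then admits a witness of polynomial size, so one can guess such a witness together with a homomorphism of $Q'$ into it and verify it in polynomial time, placing the problem in $\np$. With the hardness above, this yields $\np$-completeness. I expect the bookkeeping in the linearization---certifying that $\Gamma_{\mathrm{lin}}$ genuinely has bounded width and genuinely preserves the containment---to be the most delicate part; finite controllability of $\incd$s, already noted in this section, guarantees that the finite and unrestricted variants coincide, so it suffices to reason over unrestricted instances throughout.
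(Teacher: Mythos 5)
Your overall architecture matches the paper's: hardness from~\cite{access2}, existence-check simplification via Theorem~\ref{thm:simplifyidsexistence}, reduction to the $\amd$ containment, a linearization that folds the $\accessible$-atoms into annotated copies of the relations (this is exactly the paper's $R_P$ construction in Proposition~\ref{prop:linearizeaccessids}), and a Johnson--Klug-style depth bound at the end. Your observation that $\checkview'_\mt(\vec x)$ can only arise by projecting an existing $R$-fact is also correct and is precisely what the paper uses to collapse the relation-to-view, transfer, and view-to-relation rules into a single axiom $R(\vec x,\vec y)\rightarrow\exists\vec z\,R'(\vec x,\vec z)$.

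There is, however, a genuine gap at the crucial step. You promise an equivalent set of \emph{bounded-width} linear TGDs, but that target is unattainable: the transfer rules (already in the case without result bounds, $R_P(\vec x)\rightarrow R'(\vec x)$ must export \emph{every} position of~$R$, and likewise the result-bounded fact transfer must export all input positions of~$\mt$) inherently have unbounded width, and your proposed fix --- building up the wide $R'$-fact ``position-by-position through auxiliary predicates'' --- is not a valid chase construction. A TGD creates its head atom in one firing, instantiating non-exported positions with fresh nulls; no later firing can go back and fill in additional positions of an already-created atom, and a chain of width-$w$ steps cannot carry more than $w$ values forward, so there is no way to reassemble an atom that must agree with the body on unboundedly many positions. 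The paper's resolution is different and is the key missing idea: it does not bound the width of these rules but instead introduces the weaker notion of bounded \emph{semi-width} --- a decomposition into a bounded-width part plus a part whose position graph is acyclic --- places all the full-width transfer rules in the acyclic part, and proves a generalization of Johnson and Klug (Proposition~\ref{prop:semiwidthclassic}) showing that the chase-depth bound, and hence the $\np$ algorithm, survives under bounded semi-width because acyclic rules can only propagate a given value for $\card{\Sigma_2}$ steps. Without this (or an equivalent device), your argument cannot invoke the Johnson--Klug machinery at the end.
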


To show this result, we will again use the fact that $\incd$ constraints are
existence-check simplifiable (Theorem~\ref{thm:simplifyidsexistence}).
Using  Proposition \ref{prop:reduce} we 
reduce to a query containment problem with guarded TGDs. But this is not enough
to get an $\np$ bound.
The reason is that the query containment problem includes accessibility axioms, which
are not~$\incd$s.
So we cannot hope to conclude directly 
using~\cite{johnsonklug}. In the rest of this section, we sketch the proof of
Theorem~\ref{thm:npidsbounds} in the case \emph{without result bounds},
explaining in particular how we handle this problem.
See Appendix~\ref{apx:npidsbounds} for the complete proof of
Theorem~\ref{thm:npidsbounds}.

 In the absence of result bounds, the~$\amd$ query containment problem
$Q \subseteq_\Gamma Q'$ can be expressed as follows: 
$\Gamma$ contains the bounded-width $\incd$s $\Sigma$ of the schema, their
primed copy $\Sigma'$, and for
each access method $\mt$ accessing relation~$R$ with input positions~$\vec x$
there is an accessibility axiom:
\[\Big(\bigwedge_i \accessible(x_i)\Big) \wedge
  R(\vec x, \vec y) \rightarrow R'(\vec x, \vec y) \wedge \bigwedge_i
  \accessible(y_i)\]
For each method~$\mt$, we can rewrite the axiom above by splitting its head, and
obtain the following pair
of axioms:
\mylistskip
\begin{compactitem}
\item (Truncated Accessibility):\\$\left(\bigwedge_i \accessible(x_i)\right) \wedge
  R(\vec x, \vec y) \rightarrow \bigwedge_i
  \accessible(y_i)$
\item (Transfer): $\left(\bigwedge_i \accessible(x_i)\right) \wedge R(\vec x, \vec y)
  \rightarrow  R'(\vec x, \vec y)$
\end{compactitem}
\mylistskip
We let $\Delta$ be the set of the Truncated Accessibility axioms and
Transfer axioms that we obtain for all the methods~$\mt$.

The constraints of~$\Delta$ are TGDs but not IDs. However, we will explain how we can take
advantage of their structure to \emph{linearize} $\Delta$ together with
$\Sigma$, i.e., construct a set~$\Sigma^\lift$
of $\incd$s
that ``simulate''
the chase by $\Sigma$ and~$\Delta$.
To define $\Sigma^\lift$ formally, we will change the signature.
Let $\sign$ be the signature of the relations used in~$\Sigma$, not including
the special unary relation $\accessible$ used in~$\Delta$; and let $w\in\NN$ be
the constant bound on the width of the $\incd$s in~$\Sigma$.
We expand~$\sign$ to the signature $\sign^{\lift}$ as follows.
For each relation~$R$ of arity~$n$ in~$\sign$,
we consider each subset $P$ of the positions
of~$R$ of size at most~$w$.
For each such subset $P$,
we add a relation
$R_{P}$ of arity $n$ to~$\sign^{\lift}$. 
Intuitively, an $R_P$-fact denotes an $R$-fact where the elements
in the positions of~$P$ are 
accessible.

Remember that our goal is to \emph{linearize}~$\Sigma$ and~$\Delta$ to a set of
$\incd$s $\Sigma^\lift$ which emulates the chase by~$\Sigma$ and~$\Delta$.
If we could ensure that $\Sigma^\lift$ has bounded width, we could then conclude
using the result of~\cite{johnsonklug}. We will not be able to enforce this, but $\Sigma^\lift$
will instead satisfy a notion of \emph{bounded semi-width} that we now define.
The \emph{basic position graph} of~$\Sigma^\lift$ is the directed graph whose nodes are the positions of relations
in~$\Sigma^\lift$ with an edge 
from position~$i$ of a relation~$T$ to position~$j$ of a relation~$U$
if and only
if the following is true:
there is an $\incd$ $\dep \in \Sigma$ whose body atom~$A$ uses relation $T$,
whose head atom~$A'$ uses relation $U$, and there is an exported variable $x$
that occurs at position~$i$ of~$A$ and at position $j$ of~$A'$.
We say that
$\Sigma^\lift$ has \emph{semi-width} bounded
by $w$ if it can be decomposed into~$\Sigma^\lift_1 \cup \Sigma^\lift_2$
where $\Sigma_1^\lift$ has width bounded by~$w$ and
the position graph of~$\Sigma_2^\lift$ is acyclic.

We can now state our linearization result:

\newcommand{\linearizeaccessids}{
  Given the set $\Sigma$ of $\incd$s of width~$w$ and the set
  $\Delta$ of Truncated Accessibility and Transfer axioms, and given
  a set of facts~$I_0$,
  we can compute in $\ptime$ a set of $\incd$s $\Sigma^\lift$ of semi-width~$w$ 
  and a set of facts $I_0^\lift$ satisfying the following:
  for any set of primed facts $I$ derivable from~$I_0$ by chasing with~$\Sigma$
and~$\Delta$, 
we can derive the same set of primed facts from~$I_0^\lift$ by chasing
with~$\Sigma^\lift$.}
\begin{proposition} \label{prop:linearizeaccessids}
  \linearizeaccessids
\end{proposition}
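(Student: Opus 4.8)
The plan is to \emph{eliminate the unary predicate $\accessible$} by absorbing accessibility information into the relation names. For each relation $R$ and each set $P$ of at most $w$ of its positions, the relation $R_P$ records an $R$-fact together with the promise that the elements in the positions of $P$ are accessible. The construction must then mirror the three effects of the chase by $\Sigma \cup \Delta$: the inclusion dependencies of $\Sigma$ generate new facts, the Truncated Accessibility axioms turn output positions accessible once the input positions are, and the Transfer axioms copy a fact into its primed version once its input positions are accessible. I would simulate these effects with two groups of inclusion dependencies, $\Sigma^\lift = \Sigma^\lift_1 \cup \Sigma^\lift_2$, designed so that $\Sigma^\lift_1$ carries \emph{inherited} accessibility down the chase forest while $\Sigma^\lift_2$ performs the \emph{local} generation of accessibility and the transfer to primed relations.

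Concretely, for $\Sigma^\lift_1$ I would take, for every $\dep \in \Sigma$ and every annotation $P$ of its body relation, the annotated copy of $\dep$ that propagates to the head exactly the accessibility of the exported variables; since $\dep$ has width at most $w$, at most $w$ positions of the head become annotated, so these rules are themselves $\incd$s of width at most $w$. The set $I_0^\lift$ is obtained from $I_0$ by adding, for each fact and each subset $P$ of at most $w$ of its positions whose elements are initially accessible, the fact $R_P$; this is computable in $\ptime$ for fixed $w$, as is the whole construction, since there are polynomially many relations $R_P$ and polynomially many annotated rules. For $\Sigma^\lift_2$ I would put the rules realizing Truncated Accessibility and Transfer: from an annotated fact whose annotation certifies that the input positions of some method are accessible, produce the refined annotations that additionally mark the output positions, and produce the primed fact. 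The key structural point is that these rules only ever \emph{enlarge} the annotation on a fixed underlying relation or move to a primed relation, and never feed back into the $\Sigma$-generation of new facts; hence the position graph of $\Sigma^\lift_2$ contains no cycle, while its width is allowed to be large. This yields the decomposition witnessing semi-width $w$.

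For correctness I would prove the two inclusions separately. \emph{Soundness} — every primed fact $\Sigma^\lift$-derivable from $I_0^\lift$ is $\Sigma\cup\Delta$-derivable from $I_0$ — is the routine direction: reading $R_P(\vec a)$ as ``$R(\vec a)$ holds and the positions of $P$ carry accessible elements'', every $\Sigma^\lift$ chase step is justified by a block of $\Sigma\cup\Delta$ chase steps, by induction on the length of the $\Sigma^\lift$-derivation. \emph{Completeness} is the substantive direction, and I would establish it by induction on a chase of $\Sigma\cup\Delta$ from $I_0$, maintaining the invariant that whenever a fact $R(\vec a)$ has been produced and a set $P$ of at most $w$ of its positions carry accessible elements, the annotated fact $R_P(\vec a)$ is derivable from $I_0^\lift$ by $\Sigma^\lift$. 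The invariant immediately yields that every transferred primed fact is reproduced. The propagation cases are handled by $\Sigma^\lift_1$: because a child shares at most $w$ positions with its parent, only the accessibility of those at most $w$ exported positions needs to be transmitted, so it can always be recorded in an annotation of size at most $w$ and handed to the width-$w$ rule that creates the child; the delicate cases are those where accessibility is generated \emph{locally} at a fact and then re-enters the computation.

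I expect the main obstacle to be exactly this reconciliation of a \emph{global} predicate with \emph{bounded} annotations: accessibility is a property of elements that may occur in unboundedly many facts, yet each annotation may name at most $w$ positions, and a single-atom $\incd$ body can certify the accessibility of at most $w$ positions at once. The argument must therefore show that tracking only $w$ accessible positions per fact loses nothing for the derivation of primed facts — intuitively because the chase of bounded-width $\incd$s is forest-shaped and any two facts communicate through an interface of at most $w$ elements, so the Johnson--Klug bound on interface size also bounds the accessibility that must be carried across a chase edge and that must be read to fire a method. Making this precise, while simultaneously keeping $\Sigma^\lift_1$ of width $w$ and the position graph of $\Sigma^\lift_2$ acyclic, is the heart of the proof; once it is done, the semi-width bound is what later lets us invoke the analysis behind~\cite{johnsonklug} in the complexity argument.
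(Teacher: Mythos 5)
There is a genuine gap, and you have in fact pointed at it yourself without closing it: your construction has no mechanism for accessibility that is generated \emph{below} a fact in the chase forest and then flows back \emph{up} to that fact. Your $\Sigma^\lift_1$ hands accessibility downward only at the moment a child fact is created, and your $\Sigma^\lift_2$ only enlarges the annotation of a fixed fact by applying the \emph{original} Truncated Accessibility axioms locally. Consider $\Sigma = \{R(x,y) \rightarrow S(x,y)\}$, a method on~$S$ with input position~$1$, a method on~$R$ with input position~$2$, and $I_0$ containing $R(a,b)$ with $\accessible(a)$. In the real chase, the $S$-method makes $b$ accessible via $S(a,b)$, and then the $R$-method transfers $R(a,b)$ to $R'(a,b)$. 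In your system, $R_{\{1\}}(a,b)$ yields $S_{\{1\}}(a,b)$, which yields $S_{\{1,2\}}(a,b)$ and $S'(a,b)$, but nothing ever produces $R_{\{1,2\}}(a,b)$ or $R'(a,b)$: the invariant you propose for the completeness induction is simply false for your $\Sigma^\lift$. Observing that the chase forest has interfaces of size $\leq w$ does not repair this, because the problematic flow is along a single edge (child to parent), not across a wide interface.

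The missing idea is a \emph{saturation} step: one must precompute all \emph{derived} truncated accessibility axioms of breadth at most $w$, i.e., all consequences of $\Sigma \cup \Delta$ of the form $\bigl(\bigwedge_{i \in P} \accessible(x_i)\bigr) \wedge R(\vec x) \rightarrow \accessible(x_j)$ with $\card{P} \leq w$ --- these capture, at the level of a single guard fact, accessibility that is only provable by descending into the subtree generated from that fact and coming back. There are polynomially many such axioms for fixed $w$ and they are computable in $\ptime$ by a closure algorithm (with transitivity, backward rewriting through each $\incd$, and a rule per access method). The (Transfer) and (Lift) rules of $\Sigma^\lift$ must then be defined relative to the set of positions \emph{transferred} by $P$ under these derived axioms, not relative to $P$ itself, and $I_0^\lift$ must be built from the saturation of $I_0$ under them. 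Even then, soundness of ``applying the derived axioms only locally, at fact-creation time'' is not routine: the paper needs a normalization of the chase (well-orderedness, then a short-cut chase) to show no derivation is lost. Your proposal correctly sets up the annotated signature, the width-$w$/acyclic decomposition, and the shape of the two correctness directions, but the step you defer as ``the heart of the proof'' is precisely where a new construction --- not just a more careful argument --- is required.
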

\begin{proof}[\myproof sketch]
  We can easily translate each  of the truncated axioms into $\incd$s on the expanded signature, but we also need to account for the propagation of
  accessibility facts via
$\incd$s in the chase. We do this by incorporating to~$\Sigma^\lift$
  some new~$\incd$s in  the
  extended signature~$\sign^\lift$ that are implied by~$\Sigma$ and~$\Delta$.
  A saturation algorithm can compute them in polynomial time, thanks to the
  polynomial bound on the number of subsets~$P$ considered in
  $\sign^\lift$.
\end{proof}

The bound on the semi-width of $\Sigma^{\lift}$ then implies an
$\np$ bound on query containment, thanks 
to the following easy generalization of the result of Johnson and Klug
~\cite{johnsonklug}:
\newcommand{\semiwidthclassic}{%
For any fixed  $w \in \NN$,
there is an $\np$ algorithm for
containment under $\incd$s of
semi-width at most~$w$.
}%
\begin{proposition} \label{prop:semiwidthclassic}
  \semiwidthclassic
\end{proposition}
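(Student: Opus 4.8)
The plan is to reduce the containment $Q \subseteq_\Sigma Q'$ to a homomorphism test against the chase, and then bound the portion of the chase that must be inspected. By the chase characterization recalled in the preliminaries, $Q \subseteq_\Sigma Q'$ holds iff $Q'$ has a homomorphism into $\chase_\Sigma(\canondb(Q))$. Since every $\incd$ has a single body atom, this chase is a \emph{forest}: apart from the facts of $\canondb(Q)$, each chase fact $c$ has a unique parent fact $p$ (the fact on which the generating trigger fired), and $c$ shares with $p$ exactly the exported values of the firing $\incd$ --- call this set the \emph{interface} of the edge $p \to c$. An $\np$ algorithm guesses a polynomially bounded fragment of this forest (equivalently, a chase sequence of polynomial length) together with a homomorphism $h$ from $Q'$ into the fragment, and verifies both in $\ptime$. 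Soundness is immediate, since every guessed fact lies in the real chase; the whole content of the argument is the matching \emph{completeness} bound, namely that whenever $Q'$ maps into the chase it already maps into a fragment of polynomial size.

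For $\incd$s of width $\le w$, this completeness bound is exactly the argument of Johnson and Klug~\cite{johnsonklug}, which I would first recall. Let $F = h(Q')$ be the (at most $|Q'|$) image facts and let $A = \adom(h(Q'))$ be the \emph{important} values. Take the minimal subforest $S$ connecting $F$ and closed under least common ancestors; its branch and leaf nodes number $O(|Q'|)$, but the transit paths between them can be long. One shortens a transit path by \emph{splicing} out the segment between two transit nodes of the same \emph{type}, where the type records the relation of the node together with, for each important value forced to \emph{transit} the edge leading downward, the position it occupies. Because a width-$\le w$ interface carries at most $w$ values, at most $w$ important values transit any edge, so there are only polynomially many types even when the arity is part of the input; hence every transit path, and therefore $S$, compresses to polynomial size.

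The adaptation to semi-width is where the new work lies. I decompose $\Sigma = \Sigma_1 \cup \Sigma_2$ with $\Sigma_1$ of width $\le w$ and the position graph of $\Sigma_2$ acyclic. Along $\Sigma_1$-edges the interface still has $\le w$ values, but along $\Sigma_2$-edges it may be as wide as the (unbounded) arity, so the naive type is exponential in the arity and Johnson--Klug does not apply directly. To control this I would use acyclicity as follows: assign to each position the length of the longest path ending at it in the acyclic position graph of $\Sigma_2$, a quantity bounded by the number $d$ of positions, which is polynomial. Each $\Sigma_2$-export of a value moves it along an edge of this graph and hence strictly increases its level, so any fixed value can be re-exported by $\Sigma_2$ at most $d$ times before a $\Sigma_1$-edge intervenes. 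This lets me group each $\Sigma_1$-edge with the bounded-length burst of $\Sigma_2$-edges that propagate its interface, forming \emph{macro-edges} across which only boundedly many important values (a function of $w$ and $d$) can transit; re-running the splicing argument on macro-edges then yields the desired polynomial fragment, and hence the $\np$ algorithm.

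I expect the main obstacle to be precisely this control of the wide $\Sigma_2$-edges at the compression points. The subtle case is when a long run of $\Sigma_2$-edges is held together not by a single transiting value but by a chain of \emph{fresh nulls} handed off from one step to the next; one must argue that such a run either carries no important value downward, so that its intermediate transit nodes can be collapsed, or else exhibits a single value whose re-export chain would exceed $d$, contradicting acyclicity. Making the level-based bookkeeping survive the arbitrary interleaving of $\Sigma_1$- and $\Sigma_2$-steps, and checking that splicing a macro-segment still leaves a valid chase sequence together with a valid homomorphism, is the delicate part of the argument; everything else is the routine $\np$ guess-and-verify wrapper around the compressed fragment.
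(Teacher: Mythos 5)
Your proposal is correct and follows essentially the same route as the paper: reduce to a homomorphism test against the tree-shaped chase, use the Johnson--Klug splicing/tightness argument to bound the depth of a minimal match by counting configurations of the at most $w$ values that transit each $\Sigma_1$-edge, and handle $\Sigma_2$ by observing that acyclicity of its position graph caps how many consecutive $\Sigma_2$-exports any value can undergo, so each width-bounded step stretches by only a polynomial factor. Your level function (longest path ending at a position) and macro-edge bookkeeping are just a slightly more explicit rendering of the paper's bound that a value propagates through at most $\left|\Sigma_2\right|$ consecutive $\Sigma_2$-steps before either dying out or passing through a width-$w$ interface.
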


This allows us to conclude the proof of Theorem~\ref{thm:npidsbounds}:
  \begin{proof}[\myproof sketch of Theorem~\ref{thm:npidsbounds}]
$\np$-hardness already holds in the case without constraints
  or result bounds \cite{access2}, so we focus on $\np$-membership.
  In the case without result bounds, 
    we have explained how to reduce to a query containment problem $Q
    \subseteq_{\Gamma} Q'$ with $\Gamma = \Sigma \cup \Sigma' \cup \Delta$. Now, we have shown in
 Proposition \ref{prop:linearizeaccessids} how $\Sigma$ and~$\Delta$ can be
    simulated by a set $\Sigma^\lift$ of $\incd$s of bounded semi-width.
    Further, $\Sigma'$ consists of bounded-width $\incd$s, so we can
    modify~$\Sigma^\lift$ to incorporate~$\Sigma'$. This allows us to
    decide the problem $Q \subseteq_{\Gamma} Q'$ in~$\np$ using Proposition~\ref{prop:semiwidthclassic}.
  The details of this argument, and its extension to the case with
    result bounds, are in Appendix~\ref{apx:npidsbounds}.
  \end{proof}

\sectionarxiv{Schema simplification for\\expressive constraints}{Schema
simplification for expressive constraints}
\label{sec:simplifychoice}

We have presented in Section~\ref{sec:simplify} the 
two kinds of simplifications anticipated in the introduction:
\emph{existence-check simplification} (using
result-bounded methods to check for the existence of tuples, as
in Example~\ref{ex:existencecheck}); and \emph{FD
simplification} (using them to retrieve
functionally determined information, as in 
in Example~\ref{ex:fd}). A natural question is then to understand
whether these simplifications capture \emph{all} the ways in which result-bounded
methods can be useful, for integrity constraints expressed in more general
constraint languages. It turns out that this is not the case when we move even
slightly beyond $\incd$s:
\begin{example} \label{ex:noexistsimplify}
Consider a schema $\aschema$ with TGD
  constraints
$T(y) \wedge S(x) \rightarrow T(x)$ and
$T(y) \rightarrow  \exists x ~ S(x) $.
We have an input-free access method $\mt_S$ on~$S$ with result bound~$1$
and a Boolean access method $\mt_T$ on~$T$.
Consider the query $Q= \exists y ~ T(y)$.
The following monotone plan answers $Q$:\\[.3em]
$T_1 \Leftarrow {\mt_S} \Leftarrow \emptyset$; \hfill
$T_2 \Leftarrow {\mt_T} \Leftarrow T_1$;  \hfill
  $T_3 \colonequals \pi_{\emptyset} T_2$; \hfill
  $\return~ T_3$;
  \\[.3em]
That is, we access $S$
and return true if the result is in~$T$.

On the other hand, consider the existence-check simplification $\aschema'$
of~$\aschema$. It has an existence-check method on~$S$, 
but
  we can only test if $S$ is non-empty, giving
 no indication whether $Q$ holds. So
  $Q$ is not answerable in $\aschema'$.
  The same holds for the FD simplification $\aschema''$ of~$\aschema$, because
  $\aschema$ implies no FDs, so $\aschema'$ and $\aschema''$ are the same.
\end{example}

Thus, existence-check simplification and FD simplification no longer suffice for
more expressive constraints. In this section, we introduce a new notion of
simplification, called \emph{choice simplification}. We will show that it allows
us to simplify
schemas with very general constraint classes, in particular TGDs as in
Example~\ref{ex:noexistsimplify}. In the next section, we will 
combine this with our query containment reduction (Proposition~\ref{prop:reduce})
to show that monotone answerability is decidable for much more expressive constraints.
Intuitively, choice simplification changes the
\emph{value} of all result bounds, replacing them by one; this
means that the number of tuples returned by result-bounded methods is not
important, provided that we obtain at least one if some exist.
We formalize the definition in this section, and show choice simplifiability for
two constraint classes: equality-free first-order logics
(which includes in particular TGDs), and $\uincd$s and FDs.
We study the decidability and complexity consequences of these results in the
next section.

\myparaskip
\myparagraph{Choice simplification}
Given a schema $\aschema$ with result-bounded methods,
its \emph{choice simplification} $\aschema'$ is defined by keeping the relations and constraints
of~$\aschema$, but changing every result-bounded  method  to have
bound~$1$. That is, every result-bounded method of~$\aschema'$ returns~$\emptyset$ if there are no matching
tuples for the access, and otherwise selects and returns one matching tuple.
We call~$\aschema$ \emph{choice simplifiable}
if any CQ having a monotone plan over~$\aschema$ has one over~$\aschema'$.
This implies that the value of the result bounds never matters.

Choice simplifiability is weaker than existence check or FD simplifiability,
but it still has a dramatic impact on the resulting query containment problem:

\begin{example} \label{ex:choice}
  Consider the schema $\aschema$ in Example~\ref{ex:simple}
and its na\"ive axiomatization in Example~\ref{ex:reduce}.
  As $\aschema$ is choice simplifiable, we can axiomatize its
  choice simplification instead, and 
the problematic axiom in the third bullet item becomes a simple $\incd$, namely
  the following:
    $\univdirect(\vec y) \rightarrow \exists \vec y' ~ \univdirect_\acc(\vec y')$.
\end{example}

\myparaskip
\myparagraph{Showing choice simplifiability}
We now give a  result showing that choice simplification holds for a huge class of constraints: all first-order
constraints that do not involve equality. This result implies, for instance,
that choice simplification holds for integrity constraints expressed as
TGDs:

\newcommand{\thmsimplifychoice}{
 Let $\aschema$ be a schema with constraints in
  equality-free first-order logic (e.g., TGDs),  and 
  let $Q$ be a CQ that is monotone answerable in~$\aschema$.
Then $Q$ is monotone answerable in the choice simplification of~$\aschema$.
}
\begin{theorem} \label{thm:simplifychoice}
  \thmsimplifychoice
\end{theorem}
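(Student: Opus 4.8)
The plan is to prove the contrapositive by the \emph{blowing-up models} method of Lemma~\ref{lem:enlarge}, exactly as in the proofs of Theorems~\ref{thm:simplifyidsexistence} and~\ref{thm:fdsimplify}. By Theorem~\ref{thm:equiv} it suffices to argue with~$\amd$ rather than monotone answerability, and by Proposition~\ref{prop:elimupper} I may assume every result bound is a result lower bound. Write $\aschema'$ for the choice simplification: it has the same relations and the same equality-free constraints~$\Sigma$ as~$\aschema$, differing only in that each result-bounded method has lower bound~$1$. Assuming $Q$ is not $\amd$ in~$\aschema'$, I fix (via Proposition~\ref{prop:altdef}) a counterexample $I_1,I_2 \models \Sigma$ with a common subinstance $I_\acc$ access-valid in~$I_1$ for~$\aschema'$, where $Q$ holds in~$I_1$ but not in~$I_2$. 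The goal is to construct the enlargements $I_1^+, I_2^+$ demanded by Lemma~\ref{lem:enlarge} and conclude that $Q$ is not $\amd$ in~$\aschema$.

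The single tool doing the work is a \emph{$K$-fold blow-up}, where $K$ is the largest result lower bound occurring in~$\aschema$. For an instance~$M$, let $M^{\times K}$ have domain $\adom(M) \times \{1,\dots,K\}$, identifying $(a,1)$ with~$a$ so that $M$ is literally a subinstance of~$M^{\times K}$, and put $R((a_1,i_1),\dots,(a_n,i_n))$ into~$M^{\times K}$ exactly when $R(a_1,\dots,a_n)$ holds in~$M$. The projection $\pi\colon (a,i)\mapsto a$ is then a homomorphism $M^{\times K}\to M$ that is the identity on constants. The crucial fact I would prove, by a routine induction on formulas, is that \emph{equality-free} FO is insensitive to this operation: for every equality-free $\varphi(\vec x)$ and every valuation $v$ into~$M^{\times K}$, $M^{\times K}\models\varphi[v]$ iff $M\models\varphi[\pi\circ v]$. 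Boolean connectives and quantifiers are immediate (using, under active-domain semantics, that $\pi$ is onto the active domain), and the atomic case holds precisely because relations of~$M^{\times K}$ are defined through~$\pi$; the \emph{only} use of equality-freeness is the absence of an atom $x=y$, which would fail since the copies $(a,1),(a,2)$ project to the same element. In particular $M^{\times K}\models\Sigma$ whenever $M\models\Sigma$.

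I then set $I_1^+\colonequals I_1^{\times K}$, $I_2^+\colonequals I_2^{\times K}$, and offer $I_\acc^{\times K}$ as the common access-valid subinstance. The first hypotheses of Lemma~\ref{lem:enlarge} are immediate: $I_1^+,I_2^+\models\Sigma$ by the preservation lemma; $\pi$ supplies the homomorphisms $I_p^+\to I_p$; and $I_p$ is a subinstance of~$I_p^+$ by the copy-$1$ identification. Since the relations of~$\aschema$ and~$\aschema'$ coincide, this is what the lemma requires, and it then recovers that $Q$ holds in~$I_1^+$ by monotonicity and fails in~$I_2^+$ via the homomorphism. It remains to check that $I_\acc^{\times K}$ is access-valid in~$I_1^+$ for~$\aschema$. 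Fix an access $(\mt,\accbind)$ with binding values in~$I_\acc^{\times K}$, and let $\pi(\accbind)$ be the projected binding, whose values lie in~$I_\acc$; the matching tuples of~$(\mt,\accbind)$ in any blow-up project exactly onto those of $(\mt,\pi(\accbind))$ in the base. For a method that is \emph{not} result-bounded (unchanged between the schemas), access-validity in~$I_1$ forces every matching tuple of~$\pi(\accbind)$ in~$I_1$ into~$I_\acc$, and copying this upward shows every matching tuple of~$\accbind$ in~$I_1^+$ lies in~$I_\acc^{\times K}$.

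The genuine obstacle is the result-bounded case, with lower bound $k_\mt\le K$ in~$\aschema$ but only~$1$ in~$\aschema'$. If $I_\acc$ has no matching tuple for~$\pi(\accbind)$, then access-validity in~$\aschema'$ (bound~$1$) forces $I_1$, hence $I_1^+$, to have none either, so the empty output is valid. Otherwise $I_\acc$ has a matching tuple~$\vec c$; as a result-bounded non-Boolean method has at least one output position, letting that position range over the $K$ copies inflates $\vec c$ into at least $K\ge k_\mt$ distinct matching tuples of~$\accbind$ inside~$I_\acc^{\times K}$, which therefore meets every requirement $\exists^{\ge j}$ for $j\le k_\mt$ and is a valid output in~$I_1^+$ for~$\aschema$ (Boolean result-bounded methods are unaffected by their bound and fall under the previous case). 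Having verified all hypotheses, Lemma~\ref{lem:enlarge} yields that $Q$ is not $\amd$ in~$\aschema$, completing the contrapositive. I expect the preservation lemma to be the conceptual key but technically routine; the real work is this counting step, namely that each result-bounded access over~$I_\acc^{\times K}$ returns either nothing or at least~$K$ tuples, which is exactly what lets a bound-$1$ witness in~$I_\acc$ blow up into a valid output for the original, larger bound.
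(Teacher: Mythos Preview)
Your proposal is correct and follows essentially the same approach as the paper's proof: both take a counterexample to $\amd$ in the choice simplification and blow up $I_1$, $I_2$, and $I_\acc$ by replacing each element with multiple copies, then argue that equality-free FO cannot distinguish the blow-up from the original, and that a single matching tuple in~$I_\acc$ inflates to enough copies in~$I_\acc^+$ to satisfy the original result lower bounds. The only cosmetic differences are that the paper uses an \emph{infinite} blow-up (copies indexed by~$\NN$) and proves preservation of equality-free FO via an Ehrenfeucht--Fra\"iss\'e game without equality, whereas you use a finite $K$-fold blow-up and a direct induction on formulas; both variants work, and your finite version is arguably more elementary.
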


\begin{proof}[\myproof sketch]
 The result is shown using 
  a simpler variant of the ``blow-up'' method of
Theorem~\ref{thm:simplifyidsexistence}.
We start with counterexample models to~$\amd$ in the choice simplification, and blow
them up by cloning the output tuples of each result-bounded access, including all facts
that hold about these output tuples.
\end{proof}

\myparaskip
\myparagraph{Choice simplifiability with $\uincd$s and FDs}
The previous result does not cover FDs.
However, we can also show a choice simplifiability result for 
FDs and $\uincd$s:

\newcommand{\thmsimplifychoiceuidfd}{
Let $\aschema$ be a schema whose constraints are
$\uincd$s and arbitrary FDs, and $Q$ be a CQ that is monotone answerable in~$\aschema$.
Then $Q$ is monotone answerable in the choice simplification of~$\aschema$.
}
\begin{theorem} \label{thm:simplifychoiceuidfd}
\thmsimplifychoiceuidfd
\end{theorem}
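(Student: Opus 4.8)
The plan is to prove the contrapositive through the ``blowing up'' framework of Lemma~\ref{lem:enlarge}, exactly as in Theorems~\ref{thm:simplifyidsexistence} and~\ref{thm:fdsimplify}, but now combining the FD-style and the $\incd$-style blow-ups into one. By Theorem~\ref{thm:equiv} it suffices to argue about $\amd$, so I would start from a counterexample to $\amd$ for $Q$ in the choice simplification $\aschema'$: two instances $I_1,I_2$ satisfying the $\uincd$s and FDs of $\aschema$, with $I_1\models Q$, $I_2\models\neg Q$, and a common subinstance $\calA$ that is access-valid in $I_1$ for $\aschema'$. Since every result bound of $\aschema'$ is $1$, access-validity for $\aschema'$ only guarantees that each access whose binding lies in $\dom(\calA)$ and which has a matching tuple in $I_1$ already has \emph{one} matching tuple inside $\calA$. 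The goal is to enlarge $I_1,I_2$ to $I_1^+,I_2^+$ with a common subinstance $\calA^+\supseteq\calA$ that is access-valid in $I_1^+$ for the \emph{original} bounds, i.e.\ such that each such access has at least $k$ matching tuples inside $\calA^+$, where $k$ is the method's result lower bound, while keeping homomorphisms $I_p^+\to I_p$ and keeping $\Sigma$ satisfied; Lemma~\ref{lem:enlarge} then finishes the proof.

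The enlargement interleaves two moves, applied to a fixpoint. The \emph{FD-clone move} is the one from Theorem~\ref{thm:fdsimplify}: given an access $(\mt,\accbind)$ with a matching tuple $t\in\calA^+$ and fewer than $k$ matches, I add copies of $t$ that agree with $t$ on the determined positions $\detby(\mt)$ and carry fresh, pairwise-distinct values on the remaining positions. Because $\detby(\mt)$ is closed under the FDs, any FD whose left-hand side sits inside $\detby(\mt)$ also has its right-hand side there, so the copies agree with $t$ on it; every other FD has some left-hand position outside $\detby(\mt)$, where the copies carry fresh values, so no FD is broken. These copies go into $\calA^+$ and hence into both $I_1^+$ and $I_2^+$, and they fold back onto $t$ under a homomorphism to each $I_p$. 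The \emph{$\uincd$-witness move} then repairs the $\uincd$s that the fresh values violate, as in Theorem~\ref{thm:simplifyidsexistence}: for each fresh value sitting in a position exported by a $\uincd$, I add the required witness fact. Crucially, the freshly created values are themselves accessible, so new accesses become available and the two moves must be iterated; I would take the (possibly infinite) union as $\calA^+$, $I_1^+$, $I_2^+$, which is legitimate since we reason about unrestricted answerability.

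The homomorphisms $I_p^+\to I_p$ are maintained by a single invariant: every clone folds to its source tuple, and every witness fact folds to a witness that already exists in $I_p$ for the image of the value it was created for. That image is a genuine element of $I_p$ (the fold maps $\calA^+$ into $I_p$, and stage $0$ is $\calA\subseteq I_p$), and since $I_p\models\Sigma$ the needed $\uincd$-witness is present in $I_p$; the clone-to-source part of the fold is common to both $I_1$ and $I_2$, while the witness nulls are folded independently into $I_1$ and into $I_2$, so they may land on different witnesses without conflict. Because $I_2^+\to I_2$ and $I_2\models\neg Q$, we keep $I_2^+\models\neg Q$, whereas $\calA^+\subseteq I_1^+$ is access-valid for the original bounds by construction; this is exactly a counterexample to $\amd$ in $\aschema$.

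The main obstacle is the \emph{interaction between FDs and $\uincd$s} during the witness move. Adding a $\uincd$-witness with otherwise-fresh nulls can break an FD on the witnessed relation --- most visibly a ``constant'' FD $\emptyset\determines j$, but also any FD whose determining positions are forced by the imported value. The fix is to carry out the witness step in an FD-aware way: rather than inserting fresh nulls blindly, I insert the witness and then apply FD-equation steps, using that $I_p\models\Sigma$ to guarantee each forced equality is consistent with both folds (the equated elements already have equal images in each $I_p$). Making this precise is cleanest after closing $\Sigma$ under the implied $\uincd$s and FDs, so that the combined $\uincd$/FD chase is well-behaved; I would then prove by induction on the stages of the construction that $\calA^+$ always folds homomorphically into $I_1$ and into $I_2$ and that the current instances satisfy the (closed) constraints, which yields the theorem.
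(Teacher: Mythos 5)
Your overall blueprint is the same as the paper's: pass to the contrapositive via $\amd$ (Theorem~\ref{thm:equiv}), start from a counterexample for the choice simplification, and repair each result-bounded access by adding FD-respecting clones of a matching tuple and then chasing the fresh part with the $\uincd$s. Two remarks, one minor and one substantive. The minor one: iterating the repair ``to a fixpoint'' requires the bookkeeping that the paper isolates in a progressive enlargement lemma (Lemma~\ref{lem:enlargeprog}) --- you must check that an access repaired at stage $i$ stays repaired after later additions, and that every access whose binding involves newly created elements is automatically repaired because all of its matching tuples lie in the chased part, which sits inside the common subinstance. You note that new accesses appear, but you do not argue either invariant.

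The substantive gap is your handling of the FD/$\uincd$ interaction, which is where the proof actually lives. The fix you propose --- insert the $\uincd$-witnesses and then ``apply FD-equation steps'' --- does not work. The witness facts are placed in the common subinstance $\calA^+$ and hence belong to both $I_1^+$ and $I_2^+$. If an FD forces one of their fresh nulls to be identified with a concrete element, that element is in general an element of $I_1$ only, or of $I_2$ only, or the two instances force identifications with two \emph{different} elements; in every case the merged fact can no longer be one and the same fact shared by $I_1^+$ and $I_2^+$, so the common-subinstance invariant breaks. Observing that the forced equalities are ``consistent with both folds'' does not save this, because the merge alters the instances themselves, not just the homomorphisms. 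What is actually needed --- and what the paper proves as Lemma~\ref{lem:decomp} --- is that \emph{no} FD violation arises at all, provided one (i) chases only the set of newly added clone facts, (ii) fires only active triggers, and (iii) suppresses every trigger whose exported element is shared with the original instances (for those, the required witness already exists in each $I_p$ since $I_p$ satisfies $\Sigma$ and the shared element occurs there at the same position). Under this discipline each chase-created fact shares at most one element with anything outside the chase, and that element is a fresh clone value, so no FD with non-empty left-hand side can relate a chase fact to a pre-existing fact, and the active-trigger $\uincd$ chase is known not to create FD violations on its own. Your instinct to restrict the witness move to fresh values is exactly right; the missing piece is this disjointness argument, and the merging step you substitute for it would invalidate the construction.
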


\begin{proof}[\myproof sketch]
We use a
strengthening of the
enlargement process of Lemma~\ref{lem:enlarge} which 
constructs $I_1^+$ and $I_2^+$ from~$I_1$ and
$I_2$ in successive steps, to fix accesses one after the other.
The construction that performs the blow-up is more complex (see Appendix~\ref{app:simplifychoiceuidfd}):
it involves copying access outputs and chasing with $\uincd$s in such a way as to avoid
violating the FDs.
\end{proof}

\sectionarxiv{Decidability using choice\\simplification}{Decidability using choice simplification}
\label{sec:complexitychoice}

In this section, we present the consequences of the choice simplifiability
results of the previous section,
in terms of decidability for expressive constraint languages. Again, these will
apply to both monotone answerability and finite monotone answerability.

\myparaskip
\myparagraph{Decidable equality-free constraints}
Thanks to Theorem~\ref{thm:simplifychoice}, we know that monotone answerability is decidable 
for a wide variety of schemas. The approach applies to constraints that do not involve equality 
and have decidable query containment.
We state here one complexity result for the class of  \emph{frontier-guarded TGDs}.
These are TGDs whose body contains a single atom  including
all exported variables.
But the same approach  applies  to  
extensions of FGTGDs
with disjunction and negation \cite{bourhis2016guarded,gnfj}.

\begin{theorem} \label{thm:decidegf}
  We can decide whether a CQ is 
monotone answerable with respect to a schema with
  result bounds whose constraints are frontier-guarded TGDs.
  The problem is $\twoexp$-complete.
\end{theorem}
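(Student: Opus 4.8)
The plan is to reduce to CQ containment under frontier-guarded TGDs, for which decidability and a $\twoexp$ upper bound are known \cite{baget2010walking,gnfj}, and then to obtain a matching lower bound by reducing containment back to answerability. The positive direction is the one that directly exploits the machinery developed above. Since frontier-guarded TGDs are in particular equality-free first-order constraints, Theorem~\ref{thm:simplifychoice} applies: monotone answerability of a CQ $Q$ in a schema $\aschema$ with frontier-guarded TGD constraints is equivalent to monotone answerability of $Q$ in the choice simplification $\aschema'$, where every result bound has been replaced by~$1$. Choice simplification does not touch the constraints, so they remain frontier-guarded TGDs. Applying Proposition~\ref{prop:elimupper} we may further drop the upper bounds, so that every result-bounded method of $\aschema'$ carries a result lower bound of~$1$, and finally Proposition~\ref{prop:reduce} turns monotone answerability into the $\amd$ containment $Q \subseteq_\Gamma Q'$.

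The key observation for the upper bound is that, once all result lower bounds equal~$1$, the constraint set $\Gamma$ consists entirely of frontier-guarded TGDs. Indeed, $\Sigma$ and its primed copy $\Sigma'$ are frontier-guarded by hypothesis; the accessibility axiom for a non-result-bounded method on $R$ is a full TGD whose body is guarded by the atom $R(\vec x, \vec y)$; and the axiom relating the accessed copy of $R$ back to $R$, $R'$ and $\accessible$ has a single body atom guarding all its variables. The only axioms that could a priori be problematic are those for result-bounded methods, but with a lower bound of~$1$ the cardinality formula $\exists^{\geq j}$ degenerates (only $j=1$ survives) and the axiom becomes
\[\Big(\bigwedge_i \accessible(x_i)\Big) \wedge R(\vec x, \vec y) \rightarrow \exists \vec z ~ R_\anacc(\vec x, \vec z),\]
which is again frontier-guarded, its frontier $\vec x$ being contained in the guard atom $R(\vec x, \vec y)$. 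Hence $Q \subseteq_\Gamma Q'$ is a containment under frontier-guarded TGDs, decidable in $\twoexp$, giving the upper bound; and since frontier-guarded TGDs are finitely controllable, Proposition~\ref{prop:finitecontrol} yields the same bound for finite monotone answerability.

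For the lower bound I would reduce from Boolean CQ containment under guarded TGDs, $Q_0 \subseteq_\Sigma Q_1$, which is already $\twoexp$-hard \cite{taming} and whose constraints are in particular frontier-guarded. The idea is to make the antecedent ``always present'' and to route the consequent through a fresh $0$-ary goal proposition. Concretely, I would add a proposition $\mathsf{G}$ together with the frontier-guarded TGD $\mathrm{body}(Q_1) \rightarrow \mathsf{G}$, and force a homomorphic copy of $\canondb(Q_0)$ to exist through a rule $\mathsf{S} \rightarrow \exists \vec w ~ \mathrm{body}(Q_0)$ fired from a seed proposition $\mathsf{S}$ that is itself asserted unconditionally; the sole access method is an input-free method returning $\mathsf{S}$, and the query whose answerability we test is $\mathsf{G}$. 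Under this schema every instance satisfying the constraints contains a $Q_0$-witness, so $\mathsf{G}$ is forced in every model exactly when $Q_0 \subseteq_\Sigma Q_1$ holds, making $\mathsf{G}$ answerable; when the containment fails, a model witnessing this, paired with a copy of it augmented by $\mathsf{G}$, yields a counterexample to $\amd$ with the singleton $\{\mathsf{S}\}$ as common access-valid subinstance.

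The main obstacle is precisely this lower-bound reduction: one must make the accessible part rich enough to drive the containment, yet constrained enough that no \emph{spurious} counterexample to $\amd$ exists when the containment holds. A naive encoding fails because $\mathsf{G}$ could be asserted ``for free'' in $I_1$ over an empty accessible part, making every query non-answerable regardless of the containment; forcing the seed $\mathsf{S}$ (and hence, through the constraints, the whole $Q_0$-witness) to occur in every model is what rules this out, at the cost of a small technical point about unconditional assertions that is handled by a standard seeding gadget. The upper-bound direction, by contrast, is essentially mechanical once Theorem~\ref{thm:simplifychoice} and Proposition~\ref{prop:reduce} are in place, since its entire content is the verification that choice simplification collapses the cardinality axioms into ordinary frontier-guarded TGDs.
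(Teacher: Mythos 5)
Your upper bound is exactly the paper's argument: apply Theorem~\ref{thm:simplifychoice} to replace all result bounds by~$1$, Proposition~\ref{prop:elimupper} to keep only result lower bounds, and Proposition~\ref{prop:reduce} to reduce to a containment $Q \subseteq_\Gamma Q'$; your verification that every axiom in~$\Gamma$ (including the degenerate $j=1$ cardinality axiom) is frontier-guarded is correct, and the $\twoexp$ bound for containment under frontier-guarded TGDs finishes it. For the lower bound the paper does much less than you do: it simply observes that hardness already holds in the absence of result bounds, by a known reduction from containment under frontier-guarded TGDs to answerability (Prop.~3.16 of the cited book), so no new construction is needed.

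Your explicit hardness reduction, however, has a real gap at the point you flag: a seed proposition $\mathsf{S}$ cannot be ``asserted unconditionally'' by frontier-guarded TGDs, or indeed by any TGDs, since the empty instance satisfies every TGD. There is no ``standard seeding gadget'' that forces a fact into all models within this constraint language, and without the seed your forward direction collapses: when $Q_0 \subseteq_\Sigma Q_1$ holds, the empty instance and the instance $\{\mathsf{G}\}$ both satisfy the constraints, have the empty set as a common access-valid subinstance, and disagree on $\mathsf{G}$, so $\mathsf{G}$ is never $\amd$. The fix is to fold the seed into the query rather than into the constraints: take the query to be answered to be $\mathsf{S} \wedge \mathsf{G}$ (or, as in the paper's own reduction in Appendix~\ref{apx:undecid}, take the query to be $Q_0$ itself, with constraints $A \rightarrow Q_0$ and $Q_1 \rightarrow A$ and an input-free access on a $0$-ary relation $A$). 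Then in the ``containment holds'' direction the query is equivalent to the accessible atom in every model, so the plan that probes the access answers it; and in the ``containment fails'' direction your pair $I \cup \{\mathsf{S}\}$ versus $I \cup \{\mathsf{S}, \mathsf{G}\}$ (respectively, the counterexample instance versus the empty instance) still refutes $\amd$. With that repair your reduction is sound, but as written the unconditional assertion is not merely a technicality to be deferred --- it is the step on which the correctness of the forward direction rests.
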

\begin{proof}
  Hardness already holds by a reduction from query containment with
  frontier-guarded TGDs (see, e.g., Prop.~3.16 in \cite{thebook}), already in
  the absence of result bounds, so we will only show on $\twoexp$-membership.
By  Theorem~\ref{thm:simplifychoice} we can assume that all result bounds
are one, and by Proposition~\ref{prop:elimupper} we can replace
the schema with the relaxed version that contains
 only result lower bounds. Now, a result lower bound of~$1$ can be expressed
  as an $\incd$. Thus, Proposition~\ref{prop:reduce} allows us to reduce
  monotone answerability to a query containment problem 
  with additional frontier-guarded TGDs, and
this is decidable in~$\twoexp$ (see, e.g.,~\cite{bgo}).
\end{proof}

\myparaskip
\myparagraph{Complexity with $\uincd$s and FDs}
We now turn to constraints that consist of $\uincd$s and FDs,
and use the choice simplifiability result of
Theorem~\ref{thm:simplifychoiceuidfd} to derive complexity results for monotone
answerability with result-bounded access methods:

\newcommand{\deciduidfd}{
  We can decide monotone answerability with respect to a schema
  with result bounds whose constraints are $\uincd$s and FDs. The problem is in $\exptime$.
}
\begin{theorem} \label{thm:deciduidfd}
  \deciduidfd
\end{theorem}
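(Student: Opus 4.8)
The plan is to reduce, via choice simplification, to a query-containment problem whose constraints can be massaged into pure $\uincd$s and FDs, and then to invoke a decision procedure for that fragment. First I would apply Theorem~\ref{thm:simplifychoiceuidfd}: since the constraints of $\aschema$ are $\uincd$s and FDs, $Q$ is monotone answerable in $\aschema$ iff it is monotone answerable in the choice simplification $\aschema'$, in which every result bound is~$1$. By Proposition~\ref{prop:elimupper} I may replace these bounds by result \emph{lower} bounds of~$1$, moving to $\relaxs(\aschema')$, without changing monotone answerability. This is the crucial step that eliminates the cardinality constraints of Section~\ref{sec:reduce}: with a lower bound of~$1$, the accessibility axiom of a result-bounded method has premise $\exists^{\geq 1}\vec y\,R(\vec x,\vec y)$ and conclusion $\exists^{\geq 1}\vec z\,R_{\mathsf{Acc}}(\vec x,\vec z)$, which collapse to ordinary existentials, so the axiom becomes a plain guarded TGD.

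Next, by Proposition~\ref{prop:reduce}, monotone answerability reduces to a query containment $Q \subseteq_\Gamma Q'$, where $\Gamma$ comprises the original $\uincd$s and FDs, their primed copies $\Sigma'$, and the (now cardinality-free) accessibility axioms. The obstacle is that $\Gamma$ is not a set of $\uincd$s and FDs: the accessibility axioms are guarded TGDs mentioning the unary predicate $\accessible$ and the accessible-copy relations $R_{\mathsf{Acc}}$, and mixing guarded TGDs with FDs is undecidable in general. To get around this I would exploit that the inclusion dependencies are \emph{unary} and \emph{linearize} the accessibility axioms together with the $\uincd$s, in the spirit of Proposition~\ref{prop:linearizeaccessids}: passing to an extended signature that records which positions are $\accessible$ (the $R_P$ relations, here with $|P|\le 1$), I would construct a set $\Sigma^\lift$ of $\uincd$s that simulates the propagation of accessibility through the chase, and carry the FDs over to the corresponding relations of the extended signature. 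The outcome is a constraint set consisting solely of $\uincd$s and FDs that derives exactly the same primed facts, so that $Q\subseteq_\Gamma Q'$ is equivalent to the analogous containment over this linearized set.

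Finally I would decide the latter containment using the machinery for $\uincd$s and FDs. As recalled in Section~\ref{sec:prelims}, such constraints are finitely controllable once their finite-closure dependencies are added, so Proposition~\ref{prop:finitecontrol} makes the finite and unrestricted variants coincide; I would then run the known $\exptime$ procedure for CQ containment under $\uincd$s and FDs, giving the claimed $\exptime$ upper bound. A matching lower bound is not required here: the FD case is already $\np$-hard.

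The step I expect to be the main obstacle is the linearization in the presence of FDs. Unlike Proposition~\ref{prop:linearizeaccessids}, which handles inclusion dependencies alone, here I must check that the translation to the extended signature is both sound and complete with respect to the FDs---that FD-forced equalities on the original relations are faithfully mirrored on the $R_P$ relations, and that the simulation introduces no spurious FD violations---and that the resulting set still lies within the $\exptime$-decidable $\uincd$+FD fragment rather than a harder class. Getting this interaction right, together with correctly propagating the primed constraints $\Sigma'$ into $\Sigma^\lift$, is where the real work lies.
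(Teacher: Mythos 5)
Your opening moves match the paper exactly: choice simplification via Theorem~\ref{thm:simplifychoiceuidfd}, elimination of upper bounds via Proposition~\ref{prop:elimupper}, and reduction to a containment $Q \subseteq_\Gamma Q'$ via Proposition~\ref{prop:reduce}. But the step you defer to the end --- ``getting the interaction with FDs right'' --- is not a technical verification you can postpone; it is the theorem's central difficulty, and the resolution you sketch would not work. First, linearizing in the style of Proposition~\ref{prop:linearizeaccessids} does not land you back in the $\uincd$+FD fragment: the (Transfer)-type rules $R_P(\vec x) \rightarrow R'(\vec x)$ produced by that construction are full inclusion dependencies exporting \emph{all} positions of~$R$, so the output is a set of IDs of bounded \emph{semi-width}, not of unary IDs. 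Consequently there is no ``known $\exptime$ procedure for CQ containment under $\uincd$s and FDs'' to fall back on --- FDs combined with non-unary IDs are undecidable in general \cite{mitchell1983implication}, and bounded semi-width gives no control over FD firings. Carrying the FDs over to the $R_P$ relations and hoping the combination stays decidable is exactly the gap.

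The paper's actual key idea is a \emph{separability} argument that removes the FDs before any linearization happens. After inlining $R_\acc$, the accessibility axioms for result-bounded methods are rewritten so that, besides the input positions $\vec x$ of~$\mt$, they also export the positions of~$R$ that are \emph{functionally determined} by~$\vec x$; each such rewritten chase step is simulated by an original step followed by FD applications, so soundness is preserved. An induction on proof length then shows that, once the FDs have been applied to~$Q$ itself (yielding the minimized query $Q^*$), no chase step with the rewritten TGDs ever creates an FD violation on~$R$ or~$R'$: any would-be violation must have its determiner inside the exported positions, hence its determined position too, hence the violation would already exist on the pre-images, contradicting minimality. The FDs can therefore be dropped, leaving a set $\gammasep$ of GTGDs, which gives $\twoexp$ via \cite{taming}; the sharper $\exptime$ bound then comes from the generalized linearization result (Corollary~\ref{cor:exp}), exploiting that the side signature of these GTGDs is just the fixed unary relation $\accessible$ --- not from a containment procedure for $\uincd$s and FDs.
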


Compared to Theorem~\ref{thm:npidsbounds},
this result restricts to $\uincd$s rather than $\incd$s,
and has a higher complexity, but it allows FD constraints.
To the best of our knowledge, this result is new even in the setting without result bounds.

\newcommand{\separableconstraints}{
\item for each non-result-bounded method~$\mt$ accessing relation $R$ with
  input positions $\vec x$, $\left(\bigwedge_i \accessible(x_i)\right) \wedge
  R(\vec x, \vec y) \rightarrow R'(\vec x, \vec y)
\wedge \bigwedge_i \accessible(y_i)$
\item for each result-bounded method $\mt$ accessing relation $R$ with input
  positions $\vec x$,  $\left(\bigwedge_i \accessible(x_i)\right) \wedge R(\vec x, \vec y)
  \rightarrow \exists \vec z ~ R(\vec x, \vec z) \wedge R'(\vec x, \vec z)
    \wedge \bigwedge_i \accessible(z_i)$
}
\begin{proof}[\myproof sketch of Theorem~\ref{thm:deciduidfd}]
We prove only decidability in $\twoexp$. The finer bound is in
  Appendix~\ref{app:deciduidfd}, and uses a more
involved variant of our linearization method.

  We use choice simplifiability (Theorem~\ref{thm:simplifychoiceuidfd}) to assume that all result bounds
  are one, use Proposition~\ref{prop:elimupper} to replace them by
  result lower bounds, and use Proposition~\ref{prop:reduce} to reduce to a query
  containment problem 
  $Q \subseteq_\Gamma Q'$.
The constraints $\Gamma$ include $\Sigma$, its copy $\Sigma'$,
and accessibility axioms:
  \mylistskip
\begin{compactitem}
\item $\left(\bigwedge_i \accessible(x_i)\right) \wedge
  R(\vec x, \vec y) \rightarrow R_\acc(\vec x, \vec y)$ for each
  non-result-bounded method $\mt$ accessing relation~$R$ and having input
  positions $\vec x$;
\item $\left(\bigwedge_i \accessible(x_i)\right) \wedge \exists \vec y ~
  R(\vec x, \vec y)
\rightarrow \exists \vec z ~ R_\acc(\vec x, \vec z)$ for each result-bounded
  method $\mt$ accessing relation~$R$ and having input positions $\vec x$;
\item $R_\acc(\vec w) \rightarrow R(\vec w) \wedge R'(\vec w) \wedge \bigwedge_i
  \accessible(w_i)$ for each relation~$R$.
\end{compactitem}
\mylistskip
Note that $\Gamma$ includes FDs and non-unary IDs; containment for these in general is undecidable
\cite{mitchell1983implication}.
To show decidability, we will
explain how to rewrite these axioms in a way that makes $\Gamma$
  \emph{separable}~\cite{cali2003decidability}. That is, we will be able to
  drop the FDs of~$\Sigma$ and~$\Sigma'$ without impacting containment.
First, by inlining~$R_\acc$, we can rewrite the axioms as follows:
\mylistskip
\begin{compactitem}
  \separableconstraints
\end{compactitem}
\mylistskip
We then modify the second type of axiom so that,
in addition to the variables $\vec x$ at input positions of~$\mt$ in~$R$,
they also export the variables at positions of~$R$ that are determined by the
  input positions.
This rewriting does not impact the soundness of the chase, because each chase step
  with a rewritten axiom can be
mimicked by a step with an original axiom followed by FD applications.

  After this rewriting, a simple induction on proof length (see Appendix~\ref{app:deciduidfd})
  shows that
  firing TGD triggers in the chase
  never creates a
  violation on~$R$ or~$R'$ of the FDs of~$\Sigma$ and~$\Sigma'$.
  Hence, after having applied these FDs to~$Q$,
  we know that we can drop them
  without impacting query containment.
  Let  $Q^*$ be the minimization of~$Q$ under the FDs, and
  let $\gammasep$ denote the rewritten constraints without the FDs.
  We have shown that monotone answerability is equivalent to \mbox{$Q^*
  \subseteq_{\gammasep} Q'$}.
As~$\gammasep$ contains only GTGDs, we can infer
decidability in~$\twoexp$ using~\cite{taming}, which concludes.
\end{proof}

\myparaskip\myparagraph{Extending to finite monotone answerability}
Our decidability results for choice simplifiable constraints 
extend to monotone
answerability over finite instances. For Theorem~\ref{thm:decidegf},
we simply use Proposition~\ref{prop:finitecontrol}.
As for Theorem~\ref{thm:deciduidfd}, we can also show that
it extends to the finite variant:
\newcommand{\uidfdfinite}{
  We can decide whether a CQ is finitely monotone answerable with respect to a
  schema with result bounds whose constraints are $\uincd$s and FDs. The problem is
  in $\exptime$.
}
\begin{corollary} \label{cor:uidfdfinite}
  \uidfdfinite
\end{corollary}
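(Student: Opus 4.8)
The plan is to reduce the finite variant to the unrestricted variant that was already settled in Theorem~\ref{thm:deciduidfd}, using the finite-controllability machinery of Proposition~\ref{prop:finitecontrol}. The difficulty is that Proposition~\ref{prop:finitecontrol} requires the integrity constraints to be finitely controllable, whereas a mix of $\uincd$s and FDs is famously \emph{not} finitely controllable: the infinite chase can witness containments $Q \subseteq_\Sigma Q'$ that fail on every finite model. I would therefore first pass from $\Sigma$ to a larger but \emph{finitely equivalent} constraint set that \emph{is} finitely controllable, and only then invoke the unrestricted decision procedure as a black box.

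Concretely, I would compute the \emph{finite closure} $\widehat\Sigma$ of $\Sigma$, obtained by adding all dependencies that are \emph{finitely} implied by $\Sigma$ but possibly not implied in the unrestricted sense. For $\uincd$s and FDs the finite implication problem is well understood: by the characterization of Cosmadakis, Kanellakis and Vardi it is decidable in polynomial time, the finitely implied dependencies are again $\uincd$s and FDs (they arise from ``cycle'' reasoning on the $\uincd$s), and there are only polynomially many $\uincd$s over a fixed signature, so $\widehat\Sigma$ consists again only of $\uincd$s and FDs and has size polynomial in $\Sigma$. Two properties of $\widehat\Sigma$ are what I need. First, every dependency of $\widehat\Sigma$ is finitely implied by $\Sigma$, so $\widehat\Sigma$ has exactly the same \emph{finite} models as $\Sigma$; hence the schema $\aschema'$ obtained by replacing $\Sigma$ with $\widehat\Sigma$, while keeping the same methods and result bounds, has the same finite instances, and $Q$ is finitely monotone answerable in $\aschema$ if and only if it is finitely monotone answerable in $\aschema'$. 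Second, and this is the crux, $\widehat\Sigma$ is finitely controllable, which is precisely the finite controllability result for unary inclusion dependencies together with functional dependencies established in prior work on finite open-world query answering.

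With $\widehat\Sigma$ in hand the argument closes quickly. Since $\widehat\Sigma$ is finitely controllable, Proposition~\ref{prop:finitecontrol} applies to $\aschema'$, so $Q$ is finitely monotone answerable in $\aschema'$ if and only if it is (unrestrictedly) monotone answerable in $\aschema'$. But $\aschema'$ is again a schema whose constraints are $\uincd$s and FDs, so Theorem~\ref{thm:deciduidfd} decides unrestricted monotone answerability in $\aschema'$ in $\exptime$. Chaining the equivalences, finite monotone answerability in $\aschema$ reduces to unrestricted monotone answerability in $\aschema'$, which is in $\exptime$; and since $\widehat\Sigma$ is only polynomially larger than $\Sigma$, the whole procedure remains in $\exptime$. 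Note that this route lets me reuse Theorem~\ref{thm:deciduidfd} verbatim, so I never need a finite analogue of the choice simplifiability result (Theorem~\ref{thm:simplifychoiceuidfd}): finite controllability converts the finite question into an unrestricted one before any simplification is applied.

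The main obstacle is the second property: correctly establishing (or invoking) that the finite closure of $\uincd$s and FDs is finitely controllable, and checking that this closure is expressible purely with $\uincd$s and FDs so that Theorem~\ref{thm:deciduidfd} applies without modification. This rests on the finite-implication characterization of Cosmadakis, Kanellakis and Vardi together with the finite open-world query answering results for number restrictions; the remaining steps -- preservation of finite models under closure and the polynomial size bound on $\widehat\Sigma$ -- are routine bookkeeping.
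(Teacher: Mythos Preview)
Your approach is essentially the same as the paper's: both pass to the finite closure of the $\uincd$s and FDs, use that the closure is finitely controllable (the paper phrases this via Theorem~\ref{thm:finiteopenucq}, you via Proposition~\ref{prop:finitecontrol}, but these are the same fact), and then invoke Theorem~\ref{thm:deciduidfd} on the closed schema. One minor point: your claim that $\widehat\Sigma$ has polynomial size is argued only for the $\uincd$ part; the paper notes the full closure can be exponential in $\Sigma$ but that a polynomial-size set with the same unrestricted closure can be computed (citing \cite{cosm,amarilli2015finite}), which is what actually secures the $\exptime$ bound.
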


However, constraints that mix UIDs and FDs are not finitely controllable, so we
cannot simply use Proposition~\ref{prop:finitecontrol}. 
Instead, we will consider the \emph{finite closure} of the set of
$\uincd$s and FDs $\Sigma$.
This is the set $\Sigma^*$ of FDs and $\uincd$s that are implied
by~$\Sigma$ over finite instances.
The finite closure of~$\Sigma$
is computable (see~\cite{cosm}), and query
containment over finite instances with $\Sigma$ is equivalent to query containment
over all instances with $\Sigma^*$:

\newcommand{\thmfiniteopenucq}{
For any Boolean UCQs $Q$ and~$Q'$,  the following are equivalent:
  \begin{inparaenum}[(i.)]
\item for any finite instance $I$ satisfying~$\Sigma$,  if $Q$ holds on~$I$ then
  $Q'$ holds on $I$;
\item for any instance $I$ satisfying~$\Sigma^*$,  if $Q$ holds on $I$ then $Q'$
  holds on $I$.
\end{inparaenum}
}
\begin{theorem}[\cite{amarilli2015finite}]\label{thm:finiteopenucq} 
  \thmfiniteopenucq
\end{theorem}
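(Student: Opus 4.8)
The plan is to reduce the stated equivalence to the \emph{finite controllability} of the finite closure $\Sigma^*$, and then to establish that finite controllability by a finite-model construction driven by the chase. Throughout, the crucial structural fact is that finite models of $\Sigma$ and of $\Sigma^*$ coincide: by definition $\Sigma^*$ consists of the FDs and $\uincd$s that $\Sigma$ implies over finite instances, so every finite model of $\Sigma$ satisfies $\Sigma^*$, while $\Sigma \subseteq \Sigma^*$ gives the reverse inclusion for all instances.

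First I would dispose of the direction (ii) $\Rightarrow$ (i), which is immediate from this fact: if $I$ is finite with $I \models \Sigma$ and $Q$ holds on $I$, then $I \models \Sigma^*$, so (ii) applies and yields $Q'$ on $I$. For the converse (i) $\Rightarrow$ (ii) I would isolate the key lemma that $\Sigma^*$ is \emph{finitely controllable} for UCQ containment, i.e.\ $Q \subseteq_{\Sigma^*} Q'$ holds over all instances iff it holds over finite instances. Granting the lemma, the argument is a short chain: (i) is finite containment under $\Sigma$; since $\Sigma$ and $\Sigma^*$ have the same finite models this equals finite containment under $\Sigma^*$; by the lemma this equals unrestricted containment under $\Sigma^*$, which is exactly (ii). Thus the entire content lies in proving the lemma.

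To prove the lemma I would argue contrapositively by model construction. Suppose $Q \subseteq_{\Sigma^*} Q'$ fails over some instance. Then $Q'$ is not certain, so it fails on the universal model $M \colonequals \chase(\Sigma^*, \canondb(Q))$ (if this chase fails because the FDs force a contradiction, then $Q$ is unsatisfiable with $\Sigma^*$ and both sides hold vacuously, so assume it does not). This $M$ satisfies $\Sigma^*$, embeds $\canondb(Q)$, makes $Q$ hold, and makes the UCQ $Q'$ fail. The goal is to build a \emph{finite} $M' \models \Sigma^*$ together with a homomorphism $h : M' \to M$ fixing $\canondb(Q)$: then $Q$ still holds in $M'$ through the embedded $\canondb(Q)$, while $Q'$ still fails in $M'$, since any match of a disjunct of $Q'$ in $M'$ would push forward along $h$ to a match in $M$. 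As $M'$ is then a finite model of $\Sigma^* \supseteq \Sigma$, it refutes (i). Because $Q$ and $Q'$ are fixed UCQs and $Q'$ is preserved backwards along homomorphisms, everything reduces to exhibiting such a finite $M'$.

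The construction of $M'$ is the technical heart, and here I would exploit the shape of the chase under $\uincd$s and FDs: $\uincd$ steps export a single value and spawn fresh nulls, giving a forest-like skeleton, while FD steps fuse elements, so $M$ is a tree-like $\uincd$ structure decorated by FD-induced identifications. I would truncate this skeleton at a large finite depth and then \emph{close off} the dangling $\uincd$ obligations at the frontier, redirecting each required edge back to an already-present element of a compatible \emph{type} (the type recording which $\uincd$ heads and FD-determined positions an element realizes). The main obstacle is precisely the $\uincd$/FD interaction during this redirection: sending an edge to an existing element must not create an FD violation, yet repairing an FD violation must not re-open a $\uincd$ obligation, so the closing process must terminate consistently. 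This is exactly why taking $\Sigma^*$ as the \emph{finite} closure is essential: by the completeness of the Cosmadakis--Kanellakis--Vardi axiomatization of finite implication for FDs and $\uincd$s~\cite{cosm}, the type system induced by $\Sigma^*$ is closed under the redirections we need, so every frontier obligation admits a compatibly-typed target and the finite model can be completed without violating any FD or $\uincd$. Assembling the truncated chase with these redirected edges yields the desired finite $M' \models \Sigma^*$ and the homomorphism $h$; the careful balancing of edge multiplicities and FD classes that makes $M'$ genuinely finite is the part deferred to~\cite{amarilli2015finite}.
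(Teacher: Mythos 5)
Your overall skeleton is sound and, up to packaging, matches the paper's: the direction (ii)$\Rightarrow$(i) is the same one-line argument, and your reduction of (i)$\Rightarrow$(ii) to finite controllability of $\Sigma^*$ (via the observation that $\Sigma$ and $\Sigma^*$ have the same finite models) is a clean restatement of what the paper does by invoking the finite universal models of \cite{amarilli2015finite} once per disjunct of~$Q$. (A minor repair: for a UCQ $Q$ the instance $\canondb(Q)$ is not defined; you must chase the canonical database of a \emph{failing disjunct} $Q_i$, and correspondingly extract a disjunct $Q'_{j_i}$ of~$Q'$, as the paper does.)

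The genuine gap is in your proof of the key lemma. You reduce everything to exhibiting a finite $M' \models \Sigma^*$ admitting a homomorphism $h : M' \to M$ into the chase $M$, but such an $M'$ does not exist in general, so you have reduced to an unachievable goal. Concretely, suppose $\Sigma^*$ contains the $\uincd$ $R(x,y) \rightarrow \exists z \, R(y,z)$ and $Q$ is $\exists x y \, R(x,y)$: any finite model of this $\uincd$ in which $Q$ holds must contain a directed $R$-cycle (follow out-edges until an element repeats), whereas the chase $M$ of $\canondb(Q)$ is an infinite simple path, hence has no directed closed walk; so \emph{no} homomorphism from such an $M'$ to $M$ exists, whether or not it fixes $\canondb(Q)$. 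Note that your own construction (redirecting frontier $\uincd$-obligations back to earlier elements) creates exactly such cycles, so it can never meet the criterion you impose; the approach only works in the degenerate case where the chase is already finite. The correct criterion --- and the one \cite{amarilli2015finite} actually establishes and the paper actually uses --- is much weaker: $M'$ need only be \emph{sound for CQs of bounded size}, i.e., every CQ of size at most $\card{Q'}$ holding in $M'$ must already be implied by $\canondb(Q_i)$ and $\Sigma^*$. Proving that the cycle-closing redirections create no spurious matches of small queries is precisely the technical heart of \cite{amarilli2015finite}; it cannot be obtained by pushing matches forward along a homomorphism, so this step is not a detail that can be deferred but the place where your argument, as stated, breaks.
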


This allows us to prove Corollary~\ref{cor:uidfdfinite}:

\begin{proof}[\myproof sketch of Corollary~\ref{cor:uidfdfinite}]
  We argue only for decidability: the details and $\exptime$ bound are in
  Appendix~\ref{app:uidfdfinite}.
  Let $\aschema$ be a
  schema whose constraints $\Sigma$ are UIDs and FDs, and let $\aschema^*$ be
  the same schema as~$\aschema$ but with constraints $\Sigma^*$.
  We will show that any CQ $Q$ is finitely monotone
  answerable over~$\aschema$ iff $Q$ is monotone answerable over~$\aschema^*$.
  We can decide the latter
  by Theorem~\ref{thm:deciduidfd}, so it suffices to show the equivalence.
  For the forward direction, given a monotone plan $\aplan$
  that answers~$Q$ over finite instances satisfying~$\Sigma$, we can convert it to a UCQ $Q_\aplan$. Now, since $Q$ and $Q_\aplan$ are
  equivalent over finite instances
satisfying~$\Sigma$, they are equivalent over all instances
  satisfying~$\Sigma^*$, by Theorem~\ref{thm:finiteopenucq}.
Thus $Q$ is monotone answerable over~$\aschema^*$. Conversely,
  if $Q$ is monotone answerable over
$\aschema^*$, it is finitely monotone answerable over all finite instances
  satisfying~$\Sigma^*$, but Theorem~\ref{thm:finiteopenucq} says that 
finite instances that satisfy~$\Sigma$ also satisfy~$\Sigma^*$,  which
  concludes.
\end{proof}

\section{General FO constraints} \label{sec:general}

We have shown that, for many expressive
constraint classes, the value of result
bounds does not matter, and monotone answerability is decidable.
A natural question is to understand what happens with schema simplification
and decidability for general FO constraints.
In this case, choice simplifiability no longer holds:

\begin{example} \label{ex:complex}
Consider a schema $\aschema$ with two relations $P$ and~$U$ of arity~$1$.
There is an input-free method $\mt_P$ on~$P$ with result bound~$5$,
 and an input-free method $\mt_U$ on~$U$ with no result bound.
The first-order constraints $\Sigma$ say that~$P$ has exactly $7$ tuples,
and if one of the tuples is in~$U$,
then $4$ of these tuples must be in~$U$.
  Consider the query $Q: \exists x ~ P(x) \land U(x)$.
The query is monotone answerable on~$\aschema$: the plan simply accesses $P$
  with~$\mt_P$ and intersects the result with~$U$ using~$\mt_U$.
  Thanks to~$\Sigma$, this will always return the correct result.

In the choice simplification $\aschema'$ of~$\aschema$, 
all we can do 
is access $\mt_U$, returning all of~$U$, and access $\mt_P$, returning a single
  tuple. If this tuple is not in~$U$, we have no information on  whether or not
$Q$ holds. Hence, we can easily see that $Q$ is not answerable on~$\aschema'$.
\end{example}

The constraints in the previous example still lie
in a decidable language, namely, two-variable logic with counting quantifiers~\cite{pratt2009data}. 
We can 
still decide monotone answerability for this language even without any schema simplification;
see Appendix~\ref{app:aritytwo}.
Unsurprisingly, if we move to  constraints 
where containment is undecidable, 
then the monotone answerability problem is also undecidable, even in cases such as equality-free FO
which are choice simplifiable:

\newcommand{\foundecid}{It is undecidable to check if $Q$ is monotone answerable
with respect to equality-free FO constraints.}

\begin{proposition} \label{prp:undec} 
\foundecid
\end{proposition}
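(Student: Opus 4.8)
The plan is to reduce from the undecidability of satisfiability for equality-free relational first-order sentences. Given such a sentence $\psi$ over a signature $\tau$, I build a schema $\aschema$ whose signature is $\tau$ together with one fresh unary relation $G$, whose constraints are $\Sigma = \{\psi\}$, and whose access methods are just a \emph{Boolean} method (all positions are input positions) on each relation; the query is $Q = \exists x~G(x)$. The point of using only Boolean methods of arity $\geq 1$ is that every access requires an input value, and since $\accessible_0(\aselect,I) = \emptyset$ there is never any value available to form a binding. Hence by the inductive definition of the accessible part, $\accpart(\aselect,I) = \emptyset$ for every $I$ and every $\aselect$; equivalently, the empty subinstance is the only data one can ever reach.

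With empty accessible parts the characterization of Proposition~\ref{prop:altdef} collapses nicely: the empty instance is a common subinstance of any $I_1,I_2$ and is vacuously access-valid in $I_1$ (no binding has all its values in $\adom(\emptyset)=\emptyset$), so the antecedent of the $\amd$ condition always holds. Thus $Q$ is $\amd$ if and only if $Q(I_1) \subseteq Q(I_2)$ for \emph{all} $I_1,I_2 \models \Sigma$, i.e.\ if and only if the Boolean query $Q$ has constant truth value across all models of $\Sigma$. By Theorem~\ref{thm:equiv}, $Q$ is monotone answerable iff this holds. I would then argue the two directions. If $\psi$ is unsatisfiable, there are no models of $\Sigma$, so the condition holds vacuously and $Q$ is answerable. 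If $\psi$ is satisfiable by some model $M$ with $\adom(M)\neq\emptyset$, pick $a\in\adom(M)$ and set $I_2 := M$ (with $G$ empty) and $I_1 := M \cup \{G(a)\}$; since $G$ is fresh (not mentioned in $\psi$) and $a$ is already in the active domain, we have $\adom(I_1)=\adom(M)$, so both satisfy $\Sigma$, while $Q(I_1)$ is true and $Q(I_2)$ is false. Hence $Q$ is not answerable. Consequently $Q$ is monotone answerable iff $\psi$ is unsatisfiable, and a decision procedure for answerability would decide (un)satisfiability, a contradiction.

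The main obstacle is the corner case where $\psi$ is satisfied only by the empty instance: then $\psi$ is ``satisfiable'' yet $Q$ is vacuously refuted by its unique (empty) model, so $Q$ would be answerable, breaking the reduction. This is really the discrepancy between classical first-order semantics (nonempty domains, possibly with junk elements) and the active-domain semantics used here. I would eliminate it by a standard relativization that also bridges the two semantics: from an arbitrary equality-free $\phi$, form $\hat\phi$ by introducing a fresh unary relation $U$, relativizing every quantifier of $\phi$ to $U$, and adding the conjunct $\exists x~U(x)$. Then $\hat\phi$ is equality-free, every model of $\hat\phi$ has nonempty active domain, and one checks directly that $\hat\phi$ is satisfiable under active-domain semantics iff $\phi$ is satisfiable in classical first-order logic over nonempty domains --- which is undecidable even for relational signatures without equality or function symbols. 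Feeding $\psi := \hat\phi$ into the construction above removes the empty-model pathology, since any model of $\Sigma$ now supplies the element $a$ needed in the non-answerable direction. (If one prefers the finite variant of answerability, the same construction works by instead reducing from finite satisfiability, which is undecidable by Trakhtenbrot's theorem.) Everything else --- verifying that accessible parts are empty and that the two constructed instances satisfy $\Sigma$ --- is routine given the freshness of $G$ and the invariance of $\psi$'s truth value under adding $G$-facts on existing domain elements.
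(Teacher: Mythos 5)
Your proof is correct and follows essentially the same route as the paper's: reduce from satisfiability of equality-free FO sentences by arranging that the accessible part is always empty, so that by Theorem~\ref{thm:equiv} and Proposition~\ref{prop:altdef} answerability collapses to the query having a constant truth value over all models of the constraints. The only difference is that the paper uses a $0$-ary query relation $A$ and a schema with \emph{no} access methods at all, which renders both your Boolean-method gadget and the relativization patch for the empty-model corner case unnecessary, since toggling a $0$-ary fact requires no domain element.
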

The same holds for other constraint languages where query containment is undecidable, such as general TGDs.

\section{Summary and Conclusion} \label{sec:conc}
We formalized the problem of answering queries in a complete way 
by accessing Web services that only return a bounded number of answers to each
access, assuming integrity constraints on the data.
We showed how to reduce this to a standard reasoning problem, query containment
with constraints. 
We have further shown simplification results for many classes of  constraints, 
limiting the ways in which  a query can be answered using result-bounded plans, 
thus simplifying the corresponding query containment problem.
By coupling these results with an analysis of query containment,
we have derived complexity bounds
for monotone answerability under several
classes of constraints.
 Table~\ref{tab:results} summarizes which simplifiability result
 holds for each constraint class, 
as well as the decidability and
complexity results.  
We leave open the complexity of monotone answerability
with result bounds for some important
cases: Full TGDs, and more generally weakly-acyclic TGDs.
Our choice approximation result applies here, but we do not know
how to analyze the chase even for the simplified containment problem.

We have  restricted to \emph{monotone} plans 
throughout the paper. As explained in
Appendix~\ref{apx:ra},
the reduction to query containment still applies to plans  that can use
negation.
Our schema simplification results 
also extend easily to 
answerability with such plans, but lead to a more involved query containment
problem. Hence,
we do not know how to show decidability of the
answerability problem  for $\uincd$s and FDs with such plans.
We also leave open the question of whether choice simplifiability holds
for general FDs and $\incd$s.

In our study of the answerability problem,
we have also introduced technical tools which could be useful in a wider
context. One example is the blowing-up method that we use in schema
simplification results; a second example is 
linearization, for which we intend to study further applications.

\myparagraph{Acknowledgments}
The work  was funded by EPSRC grants
PDQ (EP/M005852/1), ED$^3$ (EP/N014359/1), and DBOnto (EP/L012138/1).

\pagebreak
\bibliographystyle{abbrv}
\balance
\pagebreak
\bibliography{algs}
\pagebreak
\appendix
\section{Details for Section~\lowercase{\ref{sec:prelims}}: Preliminaries}
\label{app:idempotent}

In the body of the paper we defined a semantics for plans using valid access
selections, which assumed that 
multiple accesses with a result-bounded method always return the same output.
We also claimed that all our results held without this assumption.
We now show the alternative semantics where this assumption does not hold, and
show that indeed the choice of semantics makes no difference. We will call
\emph{idempotent semantics} in this appendix the one that we use in the main
body of the paper, and \emph{non-idempotent semantics} the one that we now
define.

Intuitively, the idempotent semantics, as used in the body, assumes that 
the access selection function is chosen for the entire plan,
so that all calls with the same input to the same
 access method return the same output. The non-idempotent
semantics makes no such assumption, and can choose a different valid access selection
for each access.
In both cases, the semantics is a function taking an instance $I$ for the input schema and
the input tables of the plan, returning as output a set of \emph{possible outputs} for each output table of the plan.

Formally, given a schema $\aschema$ and instance $I$,
an \emph{access selection} is a function mapping each access on~$I$ to an
output to the
access, as defined in the main text, and it is valid if every output returned by
the access selection is a valid output to the corresponding access.
Given a valid access selection~$\aselect$, we can associate to each instance $I$ and each plan $\aplan$
an \emph{output} by induction on the number of commands. The general
scheme for both semantics is the same: for an
access command $T \Leftarrow_\outmap \mt \Leftarrow_\inmap E$
the output is obtained by evaluating $E$ to get a collection of tuples and then performing an access with~$\mt$ using each tuple, putting
the union of the corresponding output selected by~$\aselect$ into~$T$.   The semantics of
middleware query commands is the usual semantics for relational algebra. The semantics
of concatenation of plans is via composition.

The difference between the two semantics is: for the
\emph{idempotent semantics}, given $I$ we take  the union
over all valid access selections~$\aselect$ of the output of the \emph{entire} plan
for~$I$ and $\aselect$;
for the \emph{non-idempotent semantics}, we calculate the possible outputs of
\emph{each individual
access command} as the union of the outputs for all~$\aselect$, we calculate  the output
of a query middleware command as usual, and then we calculate the possible outputs for a plan
via composition.

\begin{example} \label{ex:twosem}
Consider a schema with a input-free access method $\mt$ with result bound~$5$ on relation~$R$.
Let $\aplan$ be the plan that accesses $\mt$ twice and then determines
  whether the intersection of the results is non-empty:\\[.3em]
$T_1 \Leftarrow \mt \Leftarrow \emptyset; \hfill
T_2 \Leftarrow \mt \Leftarrow \emptyset; \hfill
T_0 \colonequals \pi_{\emptyset} (T_1 \cap T_2); \hfill
  \return~T_0$\\[.3em]
As $T_1$ and $T_2$ are identical under the idempotent semantics,
  $\aplan$ just tests if $R$ is non-empty.
  Under the non-idempotent semantics, $\aplan$ is non-deterministic,
since it can return empty or non-empty when $R$ contains at least $10$ tuples.
\end{example}

Note that, in both semantics, when we use multiple access methods on the same relation,
there is no requirement that an access selection be ``consistent'':
if an instance $I$ includes a fact $R(a,b)$ and we have result-bounded access methods $\mt_1$ on the
first position of~$R$ and $\mt_2$ on the second position of~$R$, then an
access to~$\mt_1$ on~$a$ might return $(a,b)$ even if an access to~$\mt_2$ on
$b$ does not return
$(a,b)$. This captures the typical situation where distinct access methods use unrelated criteria to
determine which tuples to return.

It is clear that if a query that has a plan that answers it under the non-idempotent
semantics, then the same plan works under the idempotent semantics.
Conversely, Example~\ref{ex:twosem} shows that that a given plan may answer a query
under the idempotent semantics, while it does  not answer any query under the non-idempotent semantics.
However, if a query $Q$ has  some plan that answers it under the idempotent
semantics, we can show that it also does under the non-idempotent semantics:

\begin{proposition} \label{prop:idempsuffices}
For any CQ  $Q$ over schema $\aschema$, there is a  monotone plan that
answers $Q$ under  the idempotent semantics with respect to~$\aschema$ iff
there is a  monotone plan that answers $Q$ under the non-idempotent semantics.
Likewise, there is an RA-plan that answers $Q$ under the idempotent semantics
  with respect to~$\aschema$ iff there is an RA-plan that answers $Q$ under the
  non-idempotent semantics.
\end{proposition}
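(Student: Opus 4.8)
The plan is to prove the two implications separately, routing the harder one through the access monotonic-determinacy characterization of Theorem~\ref{thm:equiv}. The direction from the non-idempotent to the idempotent semantics is immediate and needs no new plan: every run under the idempotent semantics, which fixes a single valid access selection~$\aselect$ for the whole plan, is a special case of a run under the non-idempotent semantics, which may vary the selection per access command but may also reuse the same one everywhere. Hence, on every instance~$I$ satisfying the constraints, the idempotent possible outputs form a nonempty subset of the non-idempotent possible outputs. So if $\aplan$ answers~$Q$ non-idempotently, i.e.\ its non-idempotent possible outputs are exactly $\{Q(I)\}$, then its idempotent possible outputs are a nonempty subset of $\{Q(I)\}$, hence also exactly $\{Q(I)\}$, and the same plan answers~$Q$ idempotently. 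This argument works verbatim for RA-plans.

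For the converse, I would first use Theorem~\ref{thm:equiv} to replace idempotent monotone answerability by~$\amd$, a property phrased purely in terms of instances, accessible parts, and valid access selections, which therefore says nothing about how a plan re-accesses methods. It then suffices to show that $\amd$ implies the existence of a plan answering~$Q$ under the non-idempotent semantics. To do this I would revisit the plan-extraction step behind Theorem~\ref{thm:equiv} (the interpolation argument of Appendix~\ref{app:equiv}), which yields a finite monotone plan~$\aplan$ equivalent to a positive-existential test~$\theta$ applied to the facts gathered by the accesses, and verify that $\aplan$ is correct non-idempotently as well. Fix~$I$ satisfying~$\Sigma$ and an arbitrary non-idempotent run, and let~$J$ be the union of all facts its accesses return. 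Soundness (the output is contained in~$Q(I)$) is automatic and uses no consistency between accesses: each returned fact is a matching tuple of~$I$, so~$J$ is a subinstance of~$I$, and the completeness side of the interpolant then certifies that any answer witnessed by~$\theta(J)$ is a genuine answer of~$Q$ on~$I$.

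The delicate half is completeness, i.e.\ that the run still gathers enough that $\theta(J)$ holds whenever $Q(I)$ does. The clean sufficient condition is that~$J$ be \emph{access-valid} in~$I$ in the sense of Proposition~\ref{prop:altdef}: interpreting both primed and unprimed copies of each relation as~$I$, the accessed relations as~$J$, and~$\accessible$ as~$\adom(J)$ then yields a model of the $\amd$ constraints, so the containment forces~$\theta(J)$. The point is that access-validity of~$J$ only demands that, for each access whose binding lies in~$\adom(J)$, \emph{some} valid output sits inside~$J$; this is guaranteed per access by the result lower bounds, which hold identically in both semantics, and it requires no cross-access consistency. I therefore expect the main obstacle to be arranging that the extracted plan actually issues every access whose binding becomes available, so that~$J$ is genuinely access-valid rather than partial, while staying a finite plan. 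I would handle this by taking the extracted plan in its saturating form, which performs each access on all currently available bindings before applying~$\theta$, so that every binding in~$\adom(J)$ is in fact accessed during the run; finiteness is inherited from the interpolation argument, and the per-access lower bound then delivers access-validity of~$J$ and hence completeness. The RA-plan statement follows by the same two-step scheme, using the RA-answerability analogue of Theorem~\ref{thm:equiv} from Appendix~\ref{apx:ra} in place of~$\amd$, once one checks as above that the extracted RA-plan relies only on per-access guarantees and not on the repeated-access consistency that separates the two semantics.
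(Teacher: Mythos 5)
Your forward direction (non-idempotent $\Rightarrow$ idempotent, via ``fewer possible outputs'') is exactly the paper's argument and is fine. The backward direction, however, has a genuine gap in the completeness step. You want to take the plan extracted from the $\amd$ containment proof, let $J$ be the union of all facts returned during a non-idempotent run, and conclude $\theta(J)$ by observing that interpreting the accessed relations as $J$ and $\accessible$ as $\adom(J)$ yields a model of the accessibility axioms. But that interpretation is a model only if $J$ is access-valid in $I$, and access-validity is a \emph{fixpoint} condition: every binding over $\adom(J)$ must have a valid output inside $J$. A finite plan cannot achieve this, and your proposed fix (``the saturating form, which performs each access on all currently available bindings'') does not close the loop --- the values returned by the \emph{last} access command lie in $\adom(J)$ but are never fed back as bindings, so the accessibility axioms fail on them and the containment gives you nothing. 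The correct completeness argument for an extracted plan is that it simulates the finite proof, not that its gathered facts are access-valid; and that simulation argument itself breaks non-idempotently, because the correctness of the extracted plan (via Proposition~\ref{prop:elimresultboundplan}) is established by building a single instance of $\elimnd(\aschema)$ from a single access selection $\aselect$ --- with per-command selections $\aselect_1,\ldots,\aselect_n$ there is no such instance. Note also that \emph{some} plan modification is unavoidable: Example~\ref{ex:twosem} exhibits a plan that answers a query idempotently but answers nothing non-idempotently, so you cannot hope to certify the run of a plan merely by per-access lower-bound guarantees.

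The paper's proof sidesteps all of this with a direct, plan-level construction that never touches $\amd$ or interpolation. Given any monotone plan $\aplan$ correct under the idempotent semantics, it first eliminates result upper bounds (Proposition~\ref{prop:elimupper}) and then builds a cached plan $\aplan'$ that, at each access command, unions in the outputs previously obtained for the same method and binding. Correctness is then a sandwiching argument (Claim~\ref{clm:sameres}): any non-idempotent run of $\aplan'$ with selections $\aselect_1,\ldots,\aselect_n$ contains the idempotent run of $\aplan$ under $\aselect^-$ (first-use selection) and is contained in the idempotent run under $\aselect^+$ (union of all selections, valid precisely because upper bounds were removed); since both idempotent runs output $Q(I)$, so does $\aplan'$. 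For RA-plans the difference operator lets one avoid repeated accesses altogether. If you want to rescue your route, you would essentially have to re-derive this caching construction inside the plan-extraction argument, at which point the detour through $\amd$ buys you nothing.
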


We first give  the argument for \emph{RA-plans} (i.e., non-monotone plans, which allow arbitrary relational algebra expressions).
If there is a plan $\aplan$ that answers $Q$ under the non-idempotent semantics, then clearly $\aplan$
also answers $Q$ under the idempotent semantics, because there are less
  possible outputs. 

In the other direction, suppose $\aplan$ answers $Q$ under the idempotent semantics.
Let $\kw{cached}(\aplan)$ be the function that executes $\aplan$, but whenever it encounters
an access $\mt$ on a binding $\accbind$ that has already been performed in a previous command, it uses
the values output by the prior command rather than making a new access, i.e., it
uses ``cached values''.
Executing $\kw{cached}(\aplan)$  under the non-idempotent semantics gives
  exactly the same outputs as executing $\aplan$ under the idempotent semantics,
  because $\kw{cached}(\aplan)$ never performs the same access twice.
  Further we can implement $\kw{cached}(\aplan)$ as an RA-plan $\aplan'$: 
  for each access command
$T \Leftarrow \mt \Leftarrow E$  in~$\aplan$, we pre-process it in~$\aplan'$ by
removing from the output of~$E$ any tuples previously accessed in~$\mt$,
  using a middleware query command with the relational difference
  operator. We then perform an access to~$\mt$
with the remaining tuples, cache the output for further accesses, and
  post-process the output with a middleware query command to add back the
  output tuples cached from previous accesses.
Thus $\aplan'$ answers $Q$ under the idempotent semantics as required.

\medskip

Let us now give the argument for  \emph{monotone} plans (i.e., USPJ-plans), which are the plans
used throughout the body of the paper. Of course the forward direction
is proven in the same way, so we focus on the backward direction.
Contrary to plans that can use negation, we can no
longer avoid making accesses that were previously performed, because we can no
longer remove input tuples that we do not wish to query. However, we can still
cache the output of each access, and union it back when performing further
accesses.

Let $\aplan$ be a plan that answers $Q$ under the idempotent semantics.
We use Proposition~\ref{prop:elimupper} about the elimination of result upper
bounds to assume without loss of generality that~$\aplan$ answers the query~$Q$
on the schema $\relaxs(\aschema)$, where all result bounds of~$\aschema$ 
are replaced with result lower bounds only.

We define the plan $\aplan'$ from~$\aplan$, where access commands are modified
in the following way: whenever we perform an access for a method $\mt$ in an
access command $i$, we cache the input of access command $i$ in a special
intermediate table $\text{Inp}_{\mt,i}$ and its output in another table
$\text{Out}_{\mt,i}$, and then we add to the output of access command $i$ the
result of unioning, over all previously performed accesses with~$\mt$ for~$j <
i$, the intersection $\text{Inp}_{\mt,i} \cap \text{Inp}_{\mt,j}$ joined
with~$\text{Out}_{\mt,j}$. Informally, whenever we perform an access with a set
of input tuples, we add to its output the previous outputs of the accesses with
the same tuples on the same methods earlier in the plan. This can be implemented
using USPJ operators. For each table defined on the left-hand side of an access
or middleware command in~$\aplan$, we define its \emph{corresponding table} as
the table in~$\aplan'$ where the same result is defined: for middleware
commands, the correspondence is obvious because they are not changed from
$\aplan$ to~$\aplan'$; for access commands, the corresponding table is the one
where we have performed the postprocessing to incorporate the previous tuple
results.

We now make the following claim: 
\begin{claim} 
  \label{clm:sameres}
  Every possible output of~$\aplan'$ in the non-idempotent
semantics is a subset of a possible output of~$\aplan$ in the idempotent
  semantics, and is a superset of a possible output of~$\aplan$ in the
  idempotent semantics.
  \end{claim}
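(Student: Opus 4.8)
The plan is to prove both inclusions by extracting, from the given non-idempotent run of~$\aplan'$, a suitable \emph{idempotent} access selection $\aselect$ for~$\aplan$, and then comparing $\aplan$ run under $\aselect$ against $\aplan'$ run under the fixed non-idempotent selection, command by command. Throughout I would use Proposition~\ref{prop:elimupper} (already invoked above) to assume that $\aschema$ has only result \emph{lower} bounds, as this is exactly what makes the selections I construct valid; and I would use that monotone plans contain no difference operator, so every command is monotone in the tables it reads. For a command of~$\aplan$ producing a table $U$, write $U'$ for the value of its corresponding table in the run of~$\aplan'$, in the sense defined above: for middleware commands $U'$ is computed by the identical expression, and for access commands $U'$ is the table obtained after the post-processing that unions back the cached outputs.

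For the \textbf{subset} inclusion, fix a non-idempotent run of~$\aplan'$ on~$I$ producing output~$O$. For every binding $(\mt, \accbind)$ that is accessed during this run, define $\aselect(\mt, \accbind)$ to be the union of \emph{all} outputs that the run returned for $(\mt, \accbind)$ across its access commands (and define $\aselect$ arbitrarily on bindings that are never accessed). Since we have only lower bounds, this union is a valid output: it is a set of matching tuples, and it contains a genuine output of the access, so it meets the lower-bound condition. I would then show $U' \subseteq U$ for every command, by induction on the command index. Middleware commands are immediate from the induction hypothesis and monotonicity. For an access command, the induction hypothesis gives that the expression feeding the access evaluates in~$\aplan'$ to a subset of its value in~$\aplan$, so the set of bindings used by~$\aplan'$ is contained in that used by~$\aplan$; and for each such binding, the effective post-cached output of~$\aplan'$ is a union of some of the outputs returned for that binding, hence contained in $\aselect(\mt, \accbind)$, which is exactly the output $\aplan$ uses. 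Taking unions over bindings and renaming gives $U' \subseteq U$. Applying this to the output table shows $O$ is contained in the output of~$\aplan$ under~$\aselect$, a possible output of~$\aplan$ in the idempotent semantics.

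The \textbf{superset} inclusion is symmetric, with the dual choice of selection. For every binding $(\mt, \accbind)$ accessed in the run of~$\aplan'$, let $\aselect(\mt, \accbind)$ be the output of the \emph{first} access to $(\mt, \accbind)$; this is a single genuine output, hence valid. The caching in~$\aplan'$ guarantees that every later effective output for $(\mt, \accbind)$ unions this first output back in, so it always contains $\aselect(\mt, \accbind)$. An induction dual to the one above---again using monotonicity and that the bindings used by~$\aplan$ are contained in those used by~$\aplan'$, so that the first output is well defined for each of them---shows $U \subseteq U'$ for every command, whence the output of~$\aplan$ under this~$\aselect$ is contained in~$O$.

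The delicate points, and where I expect the only real care is needed, are bookkeeping ones: matching up the corresponding tables across the two plans, and checking at each access step that the set of bindings actually issued on one side is comparable to the set issued on the other. Both reduce to the induction hypothesis together with monotonicity of the expression feeding the access, so no genuinely new obstacle arises. The content of the argument lies entirely in the two choices of~$\aselect$ (``union of all outputs'' for the subset direction versus ``first output'' for the superset direction) and in the observation that restricting to lower bounds makes both choices valid.
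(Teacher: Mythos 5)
Your proposal is correct and matches the paper's own proof essentially step for step: the paper likewise defines the "first output" selection ($\aselect^-$) for the superset direction and the "union of all outputs" selection ($\aselect^+$) for the subset direction, verifies their validity using the prior elimination of result upper bounds, and runs the same command-by-command induction using monotonicity and the cached copies. No substantive differences.
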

 
 This
suffices to establish that~$\aplan'$ answers the query $Q$ in the non-idempotent
semantics, because, as~$\aplan$ answers $Q$ in the idempotent semantics, its
only possible output on an instance $I$ in the idempotent semantics is $Q(I)$,
so Claim~\ref{clm:sameres} implies that the only possible output of~$\aplan'$
on~$I$ is also~$Q(I)$, so $\aplan'$ answers $Q$ under the non-idempotent
semantics, concluding the proof. So it suffices to prove
Claim~\ref{clm:sameres}. We now do so:

\begin{proof}
  Letting $O$ be a result of~$\aplan'$ under
  the non-idempotent semantics on an instance~$I$, and letting 
  $\aselect_1, \ldots, \aselect_n$ be the choice of valid access selections used for each
  access command of~$\aplan'$ to obtain~$O$, we first show that $O$ is a
  superset of a possible output of~$\aplan$ in the idempotent semantics, and
  then show that $O$ is a subset of a possible output of~$\aplan$ in the
  idempotent semantics.

  To show the first inclusion, let us first consider the access selection
  $\aselect^-$ on~$I$ defined in
  the following way: for each access binding $\accbind$ on a method $\mt$,
  letting $\aselect_i$
  be the access selection for the first access command of~$\aplan$ where the
  access on~$\accbind$ is performed on~$\mt$, we define $\aselect^-(\mt,
  \accbind) \colonequals \aselect_i(\mt, \accbind)$; if the access is never
  performed, define $\aselect$ according to one of the~$\aselect_i$ (chosen arbitrarily). 
  We see that~$\aselect^-$ is a valid access selection for~$I$, because each
  $\aselect_i$
  is a valid access selection for~$i$, and for each access $\aselect^-$ returns the
  output of one of the~$\aselect_i$, which is valid.
  Now, by
  induction on the length of the plan, it is clear that for every table in the
  execution of~$\aplan$ on~$I$ with~$\aselect^-$, its contents are a \emph{subset}
  of the contents of the corresponding table in the execution of~$\aplan'$ on~$I$ with
  $\aselect_1, \ldots, \aselect_n$. Indeed, the base case is trivial. The induction case
  for middleware commands is by monotonicity of the USPJ operators. The
  induction case on access commands is simply because we perform an access with
  a subset of bindings: for each binding $\accbind$, if this is the first time we
  perform the access for this method on~$\accbind$, we obtain the same output in
  $\aplan$ as in~$\aplan'$, and if this is not the first time, in~$\aplan$ we
  obtain the output as we did the first time, and in~$\aplan'$ we still obtain it
  because we retrieve it from the cached copy. The conclusion of the induction
  is that the output of~$\aplan$ on~$I$ under $\aselect^-$ is a subset of the output
  $O$ of~$\aplan'$ on~$I$ under $\aselect_1, \ldots, \aselect_n$.

  Let us now show the second inclusion by considering the access selection
  $\aselect^+$ on~$I$ defined in the following way: for each access binding
  $\accbind$ and method $\mt$, we define $\aselect^+(\mt, \accbind) \colonequals
  \bigcup_{1 \leq i \leq n} \aselect_i(\mt, \accbind)$. That is, $\aselect^+$ returns
  all outputs that are returned in the execution of~$\aplan'$ on~$I$ in the
  non-idempotent semantics with~$\aselect_1, \ldots, \aselect_n$. This is a valid
  access selection, because for each access and binding it returns a superset
  of a valid output, so we are still obeying the result lower bounds, and there
  are no result upper bounds because we we are working with the schema
  $\relaxs(\aschema)$ where result upper bounds have been eliminated.
  Now, by induction
  on the length of the plan, analogously to the case above, we see that for
  every table in the execution of~$\aplan$ on~$I$ with~$\aselect^+$, its contents are
  a \emph{superset} of that of the corresponding table in the execution of
  $\aplan'$ on~$I$ with~$\aselect_1, \ldots, \aselect_n$: the induction case is because
  each access on a binding in~$\aplan'$ cannot return more than the outputs  of
  this access in all the~$\aselect_i$, and this is the output obtained with
  $\aselect^+$. So we have shown that~$O$ is a subset of a possible output of
  $\aplan$, and that it is a superset of a possible output of~$\aplan$,
  concluding the proof of the claim.
\end{proof}

This concludes the proof of Proposition~\ref{prop:idempsuffices}.

\section{Finite controllability of common classes of constraints} \label{app:finitecontrol}

Recall that we defined
a set of constraints $\Sigma$ to be finitely controllable
if for every Boolean UCQ $Q$ and~$Q'$,
the following are equivalent:
\begin{itemize}
\item  $Q \subseteq_\Sigma Q'$ 
\item 
if a finite instance $I$ satisfies $Q$, then it
also satisfies $Q'$
\end{itemize}
That is, the finite and unrestricted versions of query containment coincide.
Note that the first item necessarily implies the second.

In the body we claimed that common classes of constraints are finitely controllable. 
For frontier-guarded TGDs, this follows immediately from \cite{gnfj} which shows
that the guarded negation fragment of first-order logic has the finite model property.
Indeed,
$\gnf$ can express any sentence of the form $Q \wedge \Sigma \wedge \neg Q'$ where $\Sigma$ is a set
of frontier-guarded TGDs and $Q,Q'$ are Boolean UCQs. Hence, we can express in
$\gnf$ that there is a counterexample to the query containment problem
$Q \subseteq_\Sigma Q'$. Now, the finite model property of~$\gnf$ implies that there
is a counterexample to $Q \subseteq_\Sigma Q'$ iff there is a finite
counterexample, and this is precisely what finite controllability says.
The same argument applies to disjunctive Guarded TGDs~\cite{bourhis2016guarded}.

For classes of constraints where the chase terminates, which includes FDs as well as weakly-acyclic TGDs, finite controllability
in the sense above is also easy to see. Suppose the first item fails for some $Q'$, then there is an
instance $I$ satisfying the constraints $\Sigma$, and satisfying $Q \wedge \neg
Q'$, so that $I$ satisfies some disjunct $Q_i$ of $Q$. Thus the containment
problem
$Q_i \subseteq_\Sigma Q'$ fails. Letting $I_i$ be the chase
of $Q_i$ by the constraints~$\Sigma$, we know that each  disjunct $Q'_j$ of $Q'$ must fail to hold in $I_i$, since the
chase is universal for containment. But then, because the chase by~$\Sigma$
terminates, we know that $I_i$ is finite, so it contradicts the second item.

\section{Proofs for Section~\lowercase{\ref{sec:reduce}}:
Reducing to Query Containment}
\subsection{Proof of Proposition~\ref{prop:altdef}: Equivalence Between
Accessible Part Notions}
\label{app:determinacy}

Recall the statement of Proposition~\ref{prop:altdef}:
\begin{quote}
  \propaltdef
\end{quote}

\begin{proof}
It suffices to show that the two definitions of ``having more accessible data'' agree.
That is, we show that
the following are equivalent:
 \begin{enumerate}[(i)]
\item $I_1$ and $I_2$ have a common subinstance $I_\acc$ that
is access-valid in~$I_1$.
\item There  are $A_1 \subseteq A_2$ such that $A_1$  is an accessible part
for~$I_1$ and~$A_2$ is an accessible part for~$I_2$.
\end{enumerate}

Suppose  $I_1$ and $I_2$ have a common subinstance $I_\acc$ that is access-valid
  in~$I_1$. 
This means that we can define a valid access selection~$\aselect_1$ that takes any
  access performed with values of~$I_\acc$ and a method of~$\aschema$, and maps it
  to a set of matching tuples in~$I_\acc$ that is valid  in $I_1$. 
We can
  extend $\aselect_1$ to a function~$\aselect_2$ which returns a superset of the
  tuples returned by $\aselect_1$ for accesses with values of~$I_\acc$, and returns
  an arbitrary set of tuples from~$I_2$ otherwise, such that this output to the
  access is valid in~$I_2$.
  We have $\accpart(\aselect_1,I_1) \subseteq\accpart(\aselect_2,I_2)$, and thus the first item
  implies the second.

Conversely, suppose there are $A_1 \subseteq A_2$ such that
$A_1$ is an accessible part for~$I_1$ and $A_2$ is an accessible part for~$I_2$.
Let $\sigma_1$ and $\sigma_2$ be the valid access selections used to define~$A_1$ and
$A_2$, so that 
that $\accpart(\aselect_1, I_1) \subseteq \accpart(\aselect_2, I_2)$.
Let $I_\acc \colonequals \accpart(\aselect_1, I_1)$, and let us show that
$I_\acc$ is a common subinstance of~$I_1$ and~$I_2$ that is access-valid
in~$I_1$.
By definition, we know that $I_\acc$ is a subinstance of~$I_1$, and by
  assumption we have $I_\acc \subseteq A_2 \subseteq I_2$, so indeed $I_\acc$ is
  a common subinstance of~$I_1$ and $I_2$. Now, to show that it is access-valid
  in~$I_1$, consider any access $\abind, \mt$
  with values in~$I_\acc$. We know that there is $i$ such
that~$\abind$ is in~$\accpart_i(\aselect_1, I_1)$, so by definition of the
  fixpoint process and of the access selection~$\aselect_1$ there is a valid
  output  in~$\accpart_{i+1}(\aselect_1, I_1)$, hence in~$I_\acc$.
  Thus, $I_\acc$ is access-valid. This shows the converse
  implication, and concludes the proof.
\end{proof}

\subsection{Proof of Theorem~\ref{thm:equiv}: Equivalence Between Answerability and $\amd$}
\label{app:equiv}

Recall the statement of the theorem:
\begin{quote}
  \thmequiv
\end{quote}

As in our other results involving $\amd$, we will use the definition in terms of access-valid
subinstances, i.e., we use Proposition~\ref{prop:altdef}.

We first prove the ``easy direction'':

\begin{proposition} \label{prp:plantoproof}
If $Q$ has a  (monotone) plan $\aplan$ that answers it \wrt\ $\aschema$, then
$Q$ is $\amd$ over~$\aschema$.
\end{proposition}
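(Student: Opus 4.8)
The plan is to work with the access-valid-subinstance characterization of $\amd$ from Proposition~\ref{prop:altdef}. So I fix two instances $I_1, I_2$ satisfying the constraints $\Sigma$ together with a common subinstance $I_\acc$ that is access-valid in $I_1$, and I aim to show $Q(I_1) \subseteq Q(I_2)$; since we restrict to Boolean CQs, this just means that if $Q$ holds on $I_1$ then it holds on $I_2$. The key device is to run the plan $\aplan$ twice: once on $I_1$ under an access selection that never leaves $I_\acc$, and once on $I_2$ under an access selection that dominates the first one. Since $\aplan$ answers $Q$, both runs produce the unique possible outputs $Q(I_1)$ and $Q(I_2)$, and domination will give the desired inclusion.

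First I would build a valid access selection $\aselect_1$ for $I_1$ that, on every access whose binding takes values in $\adom(I_\acc)$, returns a valid output contained in $I_\acc$; such an output exists precisely because $I_\acc$ is access-valid in $I_1$, and on the remaining accesses $\aselect_1$ may be defined arbitrarily. A straightforward induction on the commands of $\aplan$ then shows that, when $\aplan$ is run on $I_1$ under $\aselect_1$, every temporary table contains only values of $\adom(I_\acc)$ and every retrieved fact lies in $I_\acc$: middleware commands introduce no new values, and each access command is performed with bindings over $\adom(I_\acc)$, so $\aselect_1$ keeps its output inside $I_\acc$. As $\aselect_1$ is valid, the resulting output is a possible output of $\aplan$ on $I_1$, hence equals $Q(I_1)$.

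Next I would construct a valid access selection $\aselect_2$ for $I_2$ that dominates $\aselect_1$: on every access with binding over $\adom(I_\acc)$, I take the valid output $J_1$ chosen by $\aselect_1$ and extend it to a valid output $J_2 \supseteq J_1$ of the same access in $I_2$. This extension is the only delicate point, since I cannot yet appeal to the elimination of result upper bounds (Proposition~\ref{prop:elimupper} is itself proved using this theorem). The point is that $J_1 \subseteq I_\acc \subseteq I_2$ consists of tuples matching the binding, hence $J_1$ is contained in the set $M_2$ of matching tuples in $I_2$ and has size at most the bound $k$; therefore $\card{J_1} \leq \min(k,\card{M_2})$ and one can enlarge $J_1$ within $M_2$ to a set $J_2$ of size exactly $\min(k,\card{M_2})$, which is a valid output in $I_2$. (For a method without result bound this is forced, as $J_1$ is already all matching tuples of $I_1$ at that binding and the unique valid output in $I_2$ is all of $M_2 \supseteq J_1$.) On accesses whose bindings use values outside $\adom(I_\acc)$, $\aselect_2$ is defined arbitrarily.

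Finally I would run $\aplan$ on $I_2$ under $\aselect_2$ and prove, by a second induction on commands, that every temporary table in this run is a superset of the corresponding table in the $I_1$-run under $\aselect_1$: middleware commands preserve inclusion by monotonicity of the difference-free algebra, and for an access command the tuples feeding the access on $I_2$ include those feeding it on $I_1$, while on each such binding $\aselect_2$ returns a superset of $\aselect_1$. Consequently the output of $\aplan$ on $I_2$ under $\aselect_2$ contains $Q(I_1)$; being a possible output of $\aplan$ on $I_2$, it equals $Q(I_2)$, so $Q(I_1) \subseteq Q(I_2)$, as required. I expect the main obstacle to be the construction of a \emph{valid yet dominating} $\aselect_2$ in the presence of result upper bounds (handled by the padding argument above), together with making the two inductions interlock correctly: the accesses made in the $I_1$-run all use bindings over $\adom(I_\acc)$, which is exactly where $\aselect_2$ has been made to dominate $\aselect_1$, while any extra accesses performed in the $I_2$-run only enlarge the tables.
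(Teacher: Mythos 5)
Your proposal is correct and takes essentially the same route as the paper's proof: fix a counterexample pair via Proposition~\ref{prop:altdef}, choose a valid access selection $\aselect_1$ on $I_1$ whose outputs stay inside $I_\acc$, extend it to a dominating valid access selection $\aselect_2$ on $I_2$, and show by induction on the commands of the monotone plan that every temporary table in the $I_2$-run contains the corresponding table in the $I_1$-run, whence $Q(I_1)\subseteq Q(I_2)$. The one place where you go beyond the paper --- explicitly verifying that $J_1$ can be padded inside $M_2$ to a set of size $\min(k,\card{M_2})$ so that the dominating output is still \emph{valid} under the result upper bound, and noting that the first induction is needed so that all accesses of the $I_1$-run fall in the regime where $\aselect_2$ dominates --- fills in details the paper leaves implicit rather than changing the argument.
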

\begin{proof}
  We use the definition of $\amd$ given in Proposition~\ref{prop:altdef}.
Assume that there are two instances $I_1, I_2$ satisfy the constraints
  of~$\aschema$ and that there is a common subinstance $I_\acc$ 
  that is access-valid in~$I_1$. Let us show that $Q(I_1) \subseteq Q(I_2)$.
  As $I_\acc$ is access-valid, let $\aselect_1$ be a valid access selection
  for~$I_\acc$: for any access with values in~$I_\acc$, the access selection
  $\aselect_1$ returns an output which is valid in~$I_\acc$. We extend $\aselect_1$
  to a valid access selection for~$I_2$ as in the proof of
  Proposition~\ref{prop:altdef}: for accesses in~$I_\acc$, the access selection
  $\aselect_2$ returns a superset of~$\aselect_1$, which is possible because $I_\acc
  \subseteq I_2$, and for other accesses it returns some valid subset of tuples
  of~$I_2$.
We  argue that  
for each temporary table of~$\aplan$, its value
when  evaluated on~$I_1$ with~$\aselect_1$, is contained
in its value when evaluated on~$I_2$ with~$\aselect_2$.
 We prove this by induction on~$\aplan$. As the plan is monotone, the property
  is preserved by query middleware commands, so inductively it suffices to look at an
access command  $T \Leftarrow \mt \Leftarrow E$ with~$\mt$ an access method on
some relation. Let $E_1$ be the value of~$E$ when evaluated on~$I_1$
  with~$\aselect_1$,
and let $E_2$ be the value when evaluated on~$I_2$ with~$\aselect_2$. Then by the monotonicity
of the query $E$ and the induction hypothesis, we have $E_1 \subseteq E_2$.
Now, given a tuple $\vec t$ in~$E_1$,  let  $M^1_{\vec t}$ be the
set of ``matching tuples''
(tuples for the relation~$R$ extending $\vec t$) in 
$I_1$ selected by $\aselect_1$. Similarly let $M^2_{\vec t}$ be the set
selected by $\aselect_2$ in~$I_2$.  By construction of~$\aselect_2$, we have $M^1_{\vec t} \subseteq M^2_{\vec t}$,
and thus $\bigcup_{\vec t \in E_1} M^1_{\vec t} \subseteq \bigcup_{\vec t \in
  E_1} M^2_{\vec t}$,
which  completes the induction.
  Thanks to our
  induction proof, we know that the output of~$\aplan$ on~$I_1$ with~$\aselect_1$ is a
  subset of the output of~$\aplan$ on~$I_2$ with~$\aselect_2$. As we have assumed
  that $\aplan$ answers~$Q$ on~$\aschema$, this means that $Q(I_1) \subseteq
  Q(I_2)$, which is what we wanted to show.
\end{proof}

For the other direction, we first use the corresponding result in the case without result bounds:

\begin{theorem}[\cite{thebook,ustods}]
  \label{thm:mdetermandplansclassic}
  For any CQ~$Q$ and schema $\aschema$ (with no result bounds)
  whose constraints $\Sigma$ are expressible
  in active-domain first-order logic, the following are equivalent:
\begin{compactenum}
\item $Q$ has a  monotone plan that answers it over~$\aschema$
\item
$Q$ is $\amd$ over~$\aschema$.
\end{compactenum}
\end{theorem}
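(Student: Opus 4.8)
The plan is to prove the two implications separately: $(1)\Rightarrow(2)$ is the routine monotonicity argument, while $(2)\Rightarrow(1)$ is the substantive direction, which I would handle by interpolation. For the forward direction, suppose $\aplan$ is a monotone plan answering $Q$, and take instances $I_1,I_2\models\Sigma$ with a common subinstance $I_0$ that is access-valid in $I_1$; I must show $Q(I_1)\subseteq Q(I_2)$. Since $\aschema$ has no result bounds, each access has a \emph{unique} valid output, namely all matching tuples, so access-validity of $I_0$ in $I_1$ says precisely that $I_0$ is closed under accesses relative to $I_1$: every $I_1$-match of a binding with values in $I_0$ already lies in $I_0\subseteq I_2$. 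I would then prove by induction on the commands of $\aplan$ that each temporary table computed on $I_1$ is contained in the corresponding table computed on $I_2$: the base case is trivial, middleware commands preserve the inclusion because their operators are monotone, and access commands preserve it because a larger input table produces a superset of matches in the larger instance. Hence the output on $I_1$ is contained in that on $I_2$, and since $\aplan$ answers $Q$ we get $Q(I_1)\subseteq Q(I_2)$, i.e.\ $Q$ is $\amd$. This is exactly the specialization to the no-bound setting of the monotonicity argument of Proposition~\ref{prp:plantoproof}.

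For the backward direction I first restate $\amd$ as a semantic entailment, and then extract a plan from it. Introduce primed copies $R'$ of every relation, a fresh unary predicate $\accessible$, and let $\Gamma$ consist of $\Sigma$ on the unprimed relations, its primed variant $\Sigma'$, and, for each method on a relation $R$ with input positions $\vec x$, the (no--result-bound) accessibility axiom
\[\Big(\bigwedge_i \accessible(x_i)\Big)\wedge R(\vec x,\vec y)\;\rightarrow\; R'(\vec x,\vec y)\wedge\bigwedge_i \accessible(y_i),\]
exactly as displayed after Proposition~\ref{prop:reduce}. Unwinding the definitions (as in Proposition~\ref{prop:altdef}), a counterexample to $\amd$ is precisely a model of $\Gamma$ in which $Q$ holds on the unprimed relations (standing for $I_1$) but $Q'$ fails on the primed relations (standing for $I_2$), with $\accessible$ naming the active domain of the common access-valid subinstance. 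Thus $\amd$ is equivalent to validity of the entailment $Q\wedge\Gamma\models Q'$. I stress that this equivalence is a direct formalization of the definition of $\amd$ and does \emph{not} itself invoke the plan side of the theorem, so no circularity is introduced.

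From this entailment I would produce a plan by interpolation. Split the valid entailment $Q\wedge\Gamma\models Q'$ so that the ``left'' theory collects $Q$, $\Sigma$, and the accessibility axioms, while the ``right'' side is $\neg Q'\wedge\Sigma'$. The only symbols common to both sides are the accessibility predicate $\accessible$ and the relations it guards, so Craig interpolation yields a first-order interpolant $\theta$ over this accessible vocabulary with $Q\wedge\Sigma\wedge(\text{accessibility})\models\theta$ and $\theta\wedge\Sigma'\models Q'$; that is, $\theta$ re-expresses $Q$ purely in terms of accessible data. The remaining work is to upgrade $\theta$ to an \emph{executable} formula: because $Q$ and $Q'$ are positive and the accessibility axioms are Horn with the shared symbols occurring only positively, a polarity-preserving (Lyndon) interpolation lets me take $\theta$ positive, and the fact that every accessible fact is guarded by $\accessible$ lets me take its quantifiers to range only over accessible elements. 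Such a positive-existential, $\accessible$-guarded formula is an executable rewriting, and it compiles into a monotone plan in the standard way: existential quantifiers over accessible values become iterated access commands populating temporary tables, conjunctions become joins, disjunctions become unions, and the final test becomes the return command. This plan answers $Q$ on every instance satisfying $\Sigma$, giving $(2)\Rightarrow(1)$.

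The main obstacle is the second half of this direction. Plain Craig interpolation only delivers some first-order interpolant, whereas a \emph{monotone} plan requires $\theta$ to be positive-existential and to quantify solely over accessible elements; securing this needs the Lyndon form of interpolation together with the observation that the accessibility axioms propagate $\accessible$ to exactly the positions a plan can reach. Moreover, since the theorem allows $\Sigma$ to be an arbitrary active-domain first-order theory rather than a set of TGDs, a syntactic chase proof of the entailment need not exist, so one must work with the model-theoretic form of interpolation throughout. This polarity-and-guardedness bookkeeping is precisely the technical content developed in \cite{ustods,thebook}, on which the statement relies.
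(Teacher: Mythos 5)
Two framing remarks first. The paper does not actually prove Theorem~\ref{thm:mdetermandplansclassic}: it is imported wholesale from \cite{thebook,ustods}, and the paper only records (after Theorem~\ref{thm:equiv}) that those works establish it ``using a variant of Craig's interpolation theorem,'' extracting a monotone plan from a proof of the $\amd$ containment. Measured against that, your route is the intended one: your forward direction is the same monotonicity induction the paper itself gives (in the more general result-bounded setting) as Proposition~\ref{prp:plantoproof}, and your backward direction reconstructs the interpolation argument of the cited works. One small repair is needed in the forward direction: the phrase ``a larger input table produces a superset of matches in the larger instance'' is unjustified, because $I_2$ is not a superinstance of $I_1$; the inclusion of access outputs holds only because the $I_1$-matches of a binding over $\adom(I_\acc)$ lie in $I_\acc \subseteq I_2$. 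To use that fact at every access command you must strengthen the induction hypothesis to assert, in addition to table containment, that every value in a temporary table of the $I_1$-run lies in $\adom(I_\acc)$.

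The genuine gap is in your interpolation set-up. You place the inlined axioms $\left(\bigwedge_i \accessible(x_i)\right) \wedge R(\vec x, \vec y) \rightarrow R'(\vec x, \vec y) \wedge \bigwedge_i \accessible(y_i)$ entirely in the left theory, with the right side being $\Sigma' \rightarrow Q'$. Then the symbols common to the two sides are the \emph{primed} relations only: $\accessible$ does not occur on the right at all, and the unprimed relations do not occur there either. So Craig/Lyndon interpolation yields an interpolant over the primed signature, not over ``$\accessible$ and the relations it guards,'' and your crucial final step --- arranging that $\theta$'s quantifiers are $\accessible$-guarded so that $\theta$ is executable and compiles into a monotone plan --- has nothing to attach to. (A positive interpolant over primed atoms is not automatically executable: evaluating an $R'$-atom requires knowing its input-position values first, which is exactly the access-ordering structure your splitting discards.) The repair is the one the cited works use, and it is what the paper's three-part axiomatization in Proposition~\ref{prop:reduce} is designed for: keep the intermediate copies $R_\acc$ as the interface, putting $\left(\bigwedge_i \accessible(x_i)\right) \wedge R(\vec x, \vec y) \rightarrow R_\acc(\vec x, \vec y)$ and $R_\acc(\vec w) \rightarrow R(\vec w) \wedge \bigwedge_i \accessible(w_i)$ on the left, and $R_\acc(\vec w) \rightarrow R'(\vec w)$ together with $\Sigma'$ on the right. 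Now the shared vocabulary is exactly the access-interface relations $R_\acc$ (plus $\accessible$ if its transfer axiom is duplicated), a positive-existential interpolant over it is literally a description of data obtainable by accesses, and the compilation into access commands, joins, unions and a return command goes through as you describe.
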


Thus, for schemas without result-bounded  methods, existence
of a monotone plan is the same as~$\amd$, and both can be expressed as a query containment problem.
It is further shown in~\cite{ustods} that a monotone plan can be extracted from
any proof of the query containment for~$\amd$.
This reduction to query containment is what we will now extend to the setting
with result-bounded methods in the main text.

We adapt the above result to the setting of result-bounded methods with a simple
construction that allows us
to rewrite away the result-bounded methods (expressing them in the constraints):

\myparaskip
\myparagraph{Axiomatization of result-bounded methods}
Given a schema $\aschema$ with constraints and access methods, possibly
with result bounds, we will define an auxiliary schema $\elimnd(\aschema)$ without
result bounds.
In the schema $\elimnd(\aschema)$, for
every method $\mt$ with  result bound~$k$ on relation~$R$  we have a new relation
$R_\mt$ whose arity agrees with that of~$R$. Informally, $R_\mt$ stores
only up to~$k$ result tuples for each input.
The constraints include all the constraints of~$\aschema$ (on the original
relation names).
In addition, we have for every method $\mt$ with input positions $i_1 \ldots i_m$
and result bound~$k$, the following axioms:

\begin{itemize}
\item A \emph{soundness of selection} axiom stating that~$R_\mt$ is a subset of
  $R$.
\item An axiom stating that for any binding of the input positions,
$R_\mt$ has at most $k$ distinct matching tuples
\item For each $1 \leq j \leq k$, 
  a \emph{result lower bound axiom} stating that,
for any values $c_{i_1} \ldots c_{i_m}$, if
$R$
    contains at least $j$ matching tuples (i.e., tuples $\vec c$ that extend
    $c_{i_1} \ldots c_{i_m}$), then $R_\mt$ contains at least $j$  such tuples. 
\end{itemize}

In this schema we have the same access methods, except that \emph{any
$\mt$ with a  result bound over~$R$ is removed, and in its place we add an access method
with no  result bound over~$R_\mt$}.

Given a query $Q$ over  $\aschema$, we can consider it as a query  over
$\elimnd(\aschema)$ instances by simply ignoring the additional relations.

We claim that, in considering $Q$ over~$\elimnd(\aschema)$ rather than $\aschema$,
we do not change monotone answerability.

\begin{proposition} \label{prop:elimresultboundplan}
For any query $Q$ over~$\aschema$, there is a monotone plan that answers~$Q$
  over~$\aschema$ iff there is a monotone plan that answers~$Q$ over~$\elimnd(\aschema)$.
\end{proposition}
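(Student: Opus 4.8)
The plan is to set up a tight correspondence between, on the one hand, pairs consisting of an instance $I$ of $\aschema$ satisfying $\Sigma$ together with a valid access selection $\aselect$, and, on the other hand, instances $J$ of $\elimnd(\aschema)$ satisfying its constraints. The point of the $\elimnd$ construction is precisely that the new relation $R_\mt$ records the tuples that a valid access selection chooses to return for the result-bounded method $\mt$, and that its three axioms (soundness, the at-most-$k$ bound, and the per-$j$ lower bounds) transcribe exactly conditions~(i) and~(ii) in the definition of a valid output. I first fix the syntactic translation of plans: given a monotone plan $\aplan$ over $\aschema$, let $\aplan'$ over $\elimnd(\aschema)$ be obtained by replacing every access command $T \Leftarrow_\outmap \mt \Leftarrow_\inmap E$ whose method $\mt$ is result-bounded on a relation $R$ by the command that accesses, through its unbounded method, the corresponding relation $R_\mt$ (which has the same arity as $R$, so $\inmap$ and $\outmap$ carry over unchanged); all other commands are left untouched. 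Since plans touch base relations only through access methods, and the two schemas differ only in the replacement of each result-bounded method by an unbounded method on $R_\mt$, this is a bijection between monotone plans over the two schemas; moreover $Q$ is interpreted identically over $I$ and over any $J$ restricting to $I$, since $Q$ mentions only original relations.

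Next I establish the two directions of the instance correspondence. Given $I \models \Sigma$ and a valid access selection $\aselect$, define $J(I,\aselect)$ by keeping the original relations of $I$ and setting $R_\mt := \bigcup_\accbind \aselect(\mt,\accbind)$, the union over all bindings $\accbind$, for each result-bounded $\mt$. The crucial observation is that each tuple matches exactly one binding (namely its own projection onto the input positions), so for a fixed $\accbind$ the matching tuples of $R_\mt$ are exactly $\aselect(\mt,\accbind)$; validity of $\aselect$ then yields the soundness, at-most-$k$, and lower-bound axioms per binding, so $J(I,\aselect) \models \elimnd(\aschema)$. Conversely, given $J \models \elimnd(\aschema)$, its restriction $I$ to the original relations satisfies $\Sigma$, and reading off, for each result-bounded $\mt$ and binding $\accbind$, the set of tuples of $R_\mt$ matching $\accbind$ defines a valid access selection $\aselect_J$ (forcing the unique output on non-result-bounded methods); the same three axioms give conditions~(i)--(ii). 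Because every tuple of $R_\mt$ lies in $\aselect_J(\mt,\accbind)$ for its own binding, one checks $J(I,\aselect_J)=J$, so the two constructions are mutually inverse on the result-bounded degrees of freedom. Finally, a routine induction on the length of the plan shows that the execution of $\aplan$ on $(I,\aselect)$ in the idempotent semantics and the deterministic execution of $\aplan'$ on $J(I,\aselect)$ produce identical temporary tables: middleware commands agree trivially, and an access on $\mt$ with binding $\accbind$ returns $\aselect(\mt,\accbind)$ in $\aplan$ and exactly the matching tuples of $R_\mt$, i.e.\ the same set, in $\aplan'$.

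Combining these facts, for each fixed $I \models \Sigma$ the set of possible outputs of $\aplan$ on $I$ equals the set $\{\,\text{output of }\aplan' \text{ on } J : J \models \elimnd(\aschema),\ J\text{ restricts to }I\,\}$; moreover this set is always nonempty, since a valid access selection always exists. Hence $\aplan$ has $\{Q(I)\}$ as its only possible output for every $I$ iff $\aplan'$ outputs $Q(J)=Q(I)$ on every $J$, that is, $\aplan$ answers $Q$ over $\aschema$ iff $\aplan'$ answers $Q$ over $\elimnd(\aschema)$; together with the bijectivity of the plan translation this proves the proposition. I expect the main obstacle to be bookkeeping rather than conceptual: verifying carefully that a single relation $R_\mt$ faithfully encodes the choices of an access selection across all bindings---this relies on the idempotent semantics (so that the selection is a function of the binding) and on distinct bindings having disjoint matching tuples---and confirming that the three $\elimnd$ axioms capture exactly the validity conditions on outputs, including the upper bound.
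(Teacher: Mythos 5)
Your proposal is correct and follows essentially the same route as the paper's proof: the same syntactic plan translation (swapping each result-bounded access on $R$ for an access on $R_\mt$), the same instance correspondence (interpreting $R_\mt$ as the union of the selected outputs over all bindings, and conversely reading an access selection off $R_\mt$), and the same induction showing the executions coincide. You make explicit two points the paper leaves implicit --- that distinct bindings have disjoint matching tuples, so the single relation $R_\mt$ faithfully encodes the per-binding choices, and that the two constructions are mutually inverse --- but the argument is the same.
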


Proposition~\ref{prop:elimresultboundplan} thus shows that we can axiomatize 
result bounds, at the cost of including new constraints. 

\begin{proof}
Suppose that there is a monotone plan $\aplan$ over~$\aschema$ that answers $Q$. Let
$\aplan'$ be formed from~$\aplan$ by replacing every access
with method $\mt$ on relation~$R$ with an access to~$R_\mt$ with the
  corresponding method.
We claim that~$\aplan'$ answers $Q$ over~$\elimnd(\aschema)$.
Indeed, given an instance $I'$ for~$\elimnd(\aschema)$, we can drop the relations
$R_\mt$ to get an instance $I$ for~$Q$, and use the relations $R_\mt$ to define
a valid access selection~$\aselect$ for each method of~$\aschema$, and we can
show that~$\aplan$ evaluated with~$\aselect$ over~$I$ gives the same output as
$\aplan'$ over~$I$. Since the former evaluates to~$Q(I)$, so must the latter.

Conversely,
suppose that there is a monotone plan $\aplan'$ that answers $Q$ over~$\elimnd(\aschema)$.
Construct $\aplan$  from~$\aplan'$ by replacing accesses to~$R_\mt$ with accesses to~$R$.
We claim that~$\aplan$ answers $Q$ over~$\aschema$.
To show this, consider
an instance $I$ for~$\aschema$, and a particular valid access selection~$\aselect$,
  and let us show that the evaluation of~$\aplan$ on~$I$ following~$\aselect$
  correctly answers~$Q$.
We 
build an instance $I'$ of~$\elimnd(\aschema)$ by interpreting $R_\mt$ as follows:
  for each tuple $\vec{t}$ such that $R(\vec{t})$ holds in~$I$, project
  $\vec{t}$ on the input positions $i_1 \ldots i_m$
of~$\mt$, and include all of the outputs of this access according to~$\aselect$
  in~$R_\mt$. As the outputs of accesses according to~$\aselect$ are must be, $I'$
  must satisfy the
  constraints of~$\elimnd(\aschema)$.
  We define a valid access selection~$\aselect'$
  from~$\aselect$ so that every access on~$R_\mt$ returns the output of the
  corresponding access on~$R$ according to~$\aselect$.
Since $\aplan'$ answers $Q$,
  we know that evaluating $\aplan'$ on~$I'$ with~$\aselect'$ yields the output~$Q(I')$ of~$Q$
  on~$I'$.
Now, the definition of~$\aselect'$ ensures that the accesses made by $\aplan'$ on~$I'$
  under~$\aselect'$ are exactly the same
as those made by $\aplan$ on~$I$ under~$\aselect$, and that the output of these
accesses are the same.
Thus $\aplan$ evaluated 
on~$I$ under~$\aselect$ gives the same result as~$\aplan'$ does on~$I'$
  under~$\aselect'$, namely, $Q(I')$.
Now, $Q$ only uses the original relations of~$\aschema$, so the definition
  of~$I'$ clearly implies that $Q(I') = Q(I)$, so indeed the evaluation
  of~$\aplan$ on~$I$ under~$\aselect$ returns~$Q(I)$. As this holds for any
  valid access selection~$\aselect$, we have shown that~$\aplan$ answers~$Q$
  over~$\aschema$, the desired result.
\end{proof}

The equivalence of a schema $\aschema$ with
result bounds and its variant $\elimnd(\aschema)$ easily
extends  to~$\amd$.

\begin{proposition} \label{prop:elimresultbounddet}
For any query $Q$ over~$\aschema$, the corresponding
query is $\amd$ over~$\elimnd(\aschema)$ if and only if
$Q$ is $\amd$ over~$\aschema$.
\end{proposition}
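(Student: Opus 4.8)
The plan is to prove the equivalence directly from the access-valid-subinstance characterization of $\amd$ (Proposition~\ref{prop:altdef}), exploiting the same correspondence between instances of $\aschema$ equipped with a valid access selection and instances of $\elimnd(\aschema)$ that was used in Proposition~\ref{prop:elimresultboundplan}. Concretely, from an instance $I$ of $\aschema$ and a valid access selection $\aselect$ I build the instance $\hat{I}(\aselect)$ of $\elimnd(\aschema)$ agreeing with $I$ on the original relations and interpreting each $R_\mt$ as the set of facts selected by $\aselect$ for $\mt$; validity of $\aselect$ is exactly what makes $\hat{I}(\aselect)$ satisfy the soundness, upper-bound, and lower-bound axioms of $\elimnd(\aschema)$. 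Conversely, from an instance $J$ of $\elimnd(\aschema)$ I obtain $\check{J}$ by dropping the $R_\mt$ relations, and read off a valid access selection for $\aschema$ on $\check{J}$ from the $R_\mt$ facts (returning all matching tuples for non-result-bounded methods). Since $Q$ mentions only original relations, $Q(\hat{I}(\aselect))=Q(I)$ and $Q(\check{J})=Q(J)$ throughout, so only the accessibility side of $\amd$ needs care.

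For the direction assuming $\amd$ over $\elimnd(\aschema)$ and deriving it over $\aschema$, I take $I_1,I_2\models\Sigma$ with a common subinstance $I_\acc$ access-valid in~$I_1$, and fix a valid access selection $\aselect_1$ on $I_1$ witnessing this, i.e. returning outputs inside $I_\acc$ for every binding with values in $I_\acc$. I set $J_1\colonequals\hat{I}_1(\aselect_1)$ and let $J_\acc$ consist of $I_\acc$ together with the $R_\mt$ facts selected by $\aselect_1$ on bindings with values in $I_\acc$; these lie inside $I_\acc$ by the choice of $\aselect_1$, and access-validity of $I_\acc$ for the (unchanged) non-result-bounded methods makes $J_\acc$ access-valid in $J_1$ for $\elimnd(\aschema)$, whose methods have no result bounds. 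To build $J_2$ I extend $\aselect_1$ to a valid $\aselect_2$ on $I_2$ returning, on bindings with values in $I_\acc$, a superset of $\aselect_1$; this is possible because, writing $M_1,M_2$ for the matching tuples in $I_1,I_2$, we have $\aselect_1(\mt,\accbind)\subseteq I_\acc\subseteq I_2$, so $\min(k,|M_2|)\geq|\aselect_1(\mt,\accbind)|=\min(k,|M_1|)$ and a valid output of $I_2$ of the right size can be chosen containing it. Then $J_2\colonequals\hat{I}_2(\aselect_2)$ has $J_\acc$ as a common subinstance with $J_1$, and $\amd$ over $\elimnd(\aschema)$ gives $Q(J_1)\subseteq Q(J_2)$, i.e. $Q(I_1)\subseteq Q(I_2)$.

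For the converse, assuming $\amd$ over $\aschema$, I take $J_1,J_2$ satisfying the constraints of $\elimnd(\aschema)$ with a common subinstance $J_\acc$ access-valid in $J_1$, put $I_p\colonequals\check{J}_p$, and let $I_\acc$ consist of the original-relation facts of $J_\acc$ together with the fact $R(\vec t)$ for every fact $R_\mt(\vec t)$ of $J_\acc$. These added facts lie in both $I_1$ and $I_2$ because $J_\acc\subseteq J_1,J_2$ and both satisfy soundness ($R_\mt\subseteq R$), so $I_\acc$ is a common subinstance with the same active domain as $J_\acc$. Access-validity of $I_\acc$ in $I_1$ is checked per method: for non-result-bounded methods it is immediate, while for a result-bounded $\mt$ of bound $k$ and binding $\accbind$ over $I_\acc$, access-validity of $J_\acc$ forces all matching $R_\mt$-tuples of $J_1$ into $J_\acc$, and the upper-bound, lower-bound, and soundness axioms pin their number to exactly $\min(k,|M|)$, where $M$ is the set of matching $R$-tuples in $I_1$; this is precisely a valid output of $(\mt,\accbind)$ in $I_1$, and it sits inside $I_\acc$ by construction. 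Hence $\amd$ over $\aschema$ yields $Q(I_1)\subseteq Q(I_2)$, i.e. $Q(J_1)\subseteq Q(J_2)$.

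The main obstacle is the bookkeeping around the $R_\mt$ relations: one must ensure the common subinstance on the $\elimnd(\aschema)$ side carries both the $R_\mt$ facts and the soundness-implied $R$ witnesses, so that it stays common after restriction, and the two cardinality arguments — enlarging a valid output of $I_1$ to one of $I_2$, and identifying the forced $R_\mt$ facts as exactly a valid output of size $\min(k,|M|)$ — are where the result-bound semantics genuinely enters. Everything else follows mechanically from the definitions and Proposition~\ref{prop:altdef}.
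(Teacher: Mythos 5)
Your proof is correct and follows essentially the same route as the paper's: both directions translate between $\aschema$-instances equipped with valid access selections and $\elimnd(\aschema)$-instances by reading the $R_\mt$ relations off the selection (and vice versa), extend the common access-valid subinstance accordingly, and use the superset-extension of $\aselect_1$ to $\aselect_2$ in one direction and the axiom-forced cardinality $\min(k,\card{M})$ of the $R_\mt$ matches in the other. The only cosmetic difference is that the paper phrases its converse via accessible parts while you stay with the access-valid-subinstance characterization of Proposition~\ref{prop:altdef} throughout, which if anything makes the cardinality bookkeeping more explicit.
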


\begin{proof}
  For the forward direction,
  assume $Q$ that is $\amd$ over~$\elimnd(\aschema)$, and let us show that $Q$ is
  $\amd$ over~$\aschema$.
  We use the characterization of $\amd$ in terms of access-valid
subinstances given in Proposition~\ref{prop:altdef}.
  Let 
  $I_1$ and $I_2$ be instances satisfying the constraints of~$\aschema$, 
  and let $I_\acc$ be a common subinstance of~$I_1$ and~$I_2$ which is
  access-valid in~$I_1$ for~$\aschema$. Let $\aselect_1$ be a valid access selection
  for~$I_\acc$. We can extend it to an access selection $\aselect_2$ for~$I_2$ 
that ensures that every access
  with $\aselect_2$ returns a superset of the tuples obtained with~$\aselect_1$.
  We now extend~$I_1$ into an instance~$I_1'$ for~$\elimnd(\aschema)$ by
  interpreting each~$R_\mt$ as the union of the
outputs given by~$\aselect_1$ over every
  possible access with~$\mt$ on~$I_\acc$. 
We define $I_2'$ from $I_2$ and
  $\aselect_2$ in the same way. As the access outputs given by~$\aselect_1$ and
  $\aselect_2$ must be valid, we know that $I_1'$ and $I_2'$ satisfy the
  new constraints of~$\elimnd(\aschema)$, and clearly they still satisfy the
  constraints of~$\aschema$. Now extend $I_\acc$ to~$I_\acc'$ by adding all
  $R_\mt$-facts of~$I_1'$ for all~$\mt$. Clearly $I_\acc'$ is a subinstance
  of~$I_1'$.  It is access-valid because $I_\acc$ was access-valid. 
It is a subinstance
  of~$I_2'$ because $I_\acc$ is a subinstance of~$I_2'$ and because the
  $R_\mt$-facts in~$I_1'$ also occur in~$I_2'$ by construction of~$\aselect_2$.
  Thus, because $Q$ is $\amd$ over~$\elimnd(\aschema)$, we know that $Q(I_1') \subseteq
  Q(I_2')$. Now, as $Q$ only uses the relations in~$\aschema$, we have 
  $Q(I_1)=Q(I'_1)$ and $Q(I_2)=Q'(I'_2)$, so we have shown that $Q(I_1) \subseteq
  Q(I_2)$, concluding the forward direction.

Conversely, suppose $Q$ is $\amd$ over~$\aschema$
and consider instances  $I'_1$ and $I'_2$ for $\elimnd(\aschema)$
with valid access selections $\aselect'_1$ and $\aselect'_2$ giving accessible parts $A'_1 \subseteq A'_2$. We create  an instance $I_1$ for
$\aschema$ from~$I'_1$ by dropping the relations $R_\mt$, and similarly
create $I_2$ from~$I'_2$. Clearly both satisfy the constraints of~$\aschema$.
We modify $\aselect'_1$ to obtain an access selection~$\aselect_1$ for~$I_1$:
for every access on~$I_1$ with a method~$\mt$, the output is that of the
corresponding access with~$\aselect_1'$ on~$R_\mt$; we do the same to build
$\aselect_2$ from~$\aselect_2'$. It is clear that these access selections are
valid, i.e., that they return valid outputs to any access; and letting
$A_1$ and~$A_2$ be the corresponding accessible parts of~$I_1$ and~$I_2$, it is
clear that $A_1 \subseteq A_2$. Thus, because $Q$ is $\amd$ over~$\aschema$, we
know that $Q(I_1) \subseteq Q(I_2)$, and again we have $Q(I_1) = Q(I_1')$ and
$Q(I_2) = Q(I_2')$ so we have $Q(I_1') \subseteq Q(I_2')$, which concludes.
\end{proof}

Putting together Proposition \ref{prop:elimresultboundplan}, Proposition
\ref{prop:elimresultbounddet} and Theorem \ref{thm:mdetermandplansclassic},
 we complete the proof of Theorem \ref{thm:equiv}.

\section{Proofs for Section~\lowercase{\ref{sec:simplify}}: 
Simplifying Result Bounds}
\subsection{Proof of Theorem~\ref{thm:simplifyidsexistence}: Existence-Check
Simplification for~IDs}
\label{app:simplifyidsexistence}

In this appendix, we show Theorem~\ref{thm:simplifyidsexistence}. Recall its
statement:

\begin{quote}
  \thmsimplifyidsexistence
\end{quote}

We will show the contrapositive of this statement. Let us assume that~$Q$ does
not have a monotone plan in the existence-check simplification~$\aschema'$ of
$\aschema$.
We will show that this implies that~$Q$ is not $\amd$ in~$\aschema$: this allows us to conclude
because, by Theorem~\ref{thm:equiv}, this implies that~$Q$ has no
monotone plan in~$\aschema$. Throughout the proof, we will use the definition of
$\amd$ given by Proposition~\ref{prop:altdef}.

Thus it suffices to show:

\begin{quote} Consider a schema $\aschema$ whose  constraints are 
$\incd$s,  and
let $Q$ be a CQ that is $\amd$ with respect to~$\aschema$.
Then $Q$ is also $\amd$ in the existence-check simplification of~$\aschema$.
\end{quote}

We now prove the theorem, using Lemma~\ref{lem:enlarge}:

\begin{proof}
Let $\aschema$ be the original
schema and $\aschema'$ be the existence-check simplification.
Notice that the query $Q$ is indeed posed on the common relations of~$\aschema$
and~$\aschema'$, i.e., it does not involve the $R_\mt$ relations added
  in~$\aschema'$.
  To use Lemma~\ref{lem:enlarge}, suppose that we have  a counterexample $(I_1, I_2)$ to
$\amd$ for~$Q$ and the  simplification~$\aschema'$, i.e.,
the instances $I_1$ and $I_2$ satisfy the constraints $\Sigma'$ of~$\aschema'$, the instance $I_1$ satisfies $Q$ and
the instance $I_2$ violates $Q$, and $I_1$ and $I_2$ have a common subinstance $I_\acc$ that is
access-valid in~$I_1$.
As mentioned in the main text of the paper, we will show how to ``blow up'' each instance
to~$I_1^+$ and $I_2^+$ which have a common subinstance which is access-valid in~$I_1^+$,
  i.e., we must ensure that each access to a result-bounded method with
a result bound in~$I_1^+$ returns either no tuples or more tuples than the bound.
In the blowing-up process we will
preserve the constraints $\Sigma'$ and the properties of the~$I_i$ with respect
to the CQ~$Q$.

We now explain how $I_1^+$ and $I_2^+$ are formed.
The first step is ``obliviously chasing with the existence-check constraints'':
for any existence-check constraint $\dep$
of the form
\[
  \forall x_1 \ldots x_m ~ \checkview_\mt(\vec x) \rightarrow ~ \exists y_1 \ldots y_n ~
  R(\vec x, \vec y)
\]
and any homomorphism $h$ of the variables $x_1 \ldots x_n$ to~$I_\acc$,
we extend the  mapping by choosing infinitely many fresh witnesses for~$y_1 \ldots y_n$, naming the~$j^{th}$ value for~$y_i$
 in some canonical way
depending on~$(h(x_1), \ldots h(x_m), \dep, j, i)$, and creating the
  corresponding facts.
We let $I_\acc^*$ be $I_\acc$ extended with these facts.

The  second step is ``standard chasing with the original dependencies'': we chase $I_\acc^*$ in a standard way  in rounds with all
dependencies of~$\Sigma$, yielding a possibly infinite result. We let
$I_\acc^{+}$ be the result of extending $I_\acc^*$ by this chasing process. Note
that~$I_\acc^{+}$ then satisfies $\Sigma$ by definition of the chase.

We now construct $I_1^+ \colonequals I_1 \cup I_\acc^{+}$
and similarly define $I_2^+ \colonequals I_2 \cup I_\acc^*$.
  First observe that, for all $p \in \{1, 2\}$, we have ~$I_p
\subseteq I_p^+$, so that~$I_1^+$ still satisfies $Q$.
  Further,
  we argue that for all $p \in \{1, 2\}$, the instance~$I_p^+$ satisfies~$\Sigma$. 
As $\Sigma$ consists only of $\incd$s,
 its triggers consist of single facts, so it suffices to check this
on~$I_p$ and on~$I_\acc^+$ separately. For $I_\acc^+$, we know that it satisfies $\Sigma$ by
definition of the chase. For $I_p$, we know it satisfies $\Sigma'$, which is a superset
of~$\Sigma$, hence it satisfies $\Sigma$.
  
  We must now justify that, for all $p \in \{1, 2\}$, the instance~$I_p^+$
has a homomorphism $h$ to~$I_p$, which will imply that~$I_2^+$ still does not satisfy~$Q$. 
We
first define $h$ to be the identify on~$I_p$.
It then suffices to define $h$ as a homomorphism from~$I_\acc^{+}$ to~$I_p$
which is the identity on~$I_\acc$, because $I_\acc^{+} \cap I_p = I_\acc$. 
We next define $h$ on~$I_\acc^* \setminus I_\acc$. Consider a fact
  $F = R(\vec a)$ of~$I_\acc^* \setminus I_\acc$ created by obliviously chasing
  a trigger on an existence-check constraint $\dep$ on~$I_\acc$. 
Let $F' = S(\vec b)$ be the
fact of~$I_\acc$ in the image of the trigger: that is, the fact that matches the body of
$\dep$.
   We know that~$\dep$ holds in~$I_p$ and thus there is some
 fact  $F'' \colonequals R(\vec c)$ in~$I_p$
  that serves as a witness for this. Writing $\arity(R)$ to denote the
  arity of~$R$,
  we define $h(a_i)$
  for each $1 \leq i \leq \arity(R)$ as~$h(a_i) \colonequals c_i$. In this way, the image
of the fact $F$ under $h$ is  $F''$.
  This is consistent with the stipulation that~$h$ is the identity on~$I_\acc$,
  because whenever $a_i \in I_\acc$ then it must be exported between $F'$ and
  $F$, hence $a_i$ is also exported between $F'$ and $F''$ so we have $c_i =
  a_i$. Further, all these assignments are consistent across the facts of
  $I_\acc^* \setminus I_\acc$ because all elements of~$I_\acc^* \setminus
  I_\acc$ which do not occur in~$\dom(I_\acc)$ occur at exactly one position in
  one fact of~$I_\acc^* \setminus I_\acc$.

We now define $h$ on facts of~$I_\acc^{+} \setminus I_\acc^*$ by extending it
on the new elements introduced throughout the chase.
Whenever we create a fact $F = R(\vec a)$ in~$I_\acc^{+}$ for
a trigger mapping to~$F' = S(\vec b)$ for an $\incd$ $\dep$ in~$I_\acc^{+}$, we explain how to
extend $h$ to the nulls introduced in~$F$. Consider the fact $h(F')
= S(h(\vec
b))$ in~$I_p$. The body of~$\dep$ also matches this fact, and as~$I_p$ satisfies $\ids$
there must be a fact $F'' = R(\vec c)$ in~$I_p$ which extends this match to the
head of~$\dep$,
since  $\dep$ holds in~$I_p$.
 We define $h(a_i) \colonequals c_i$ for all $1 \leq i
\leq \arity(R)$. We show that this is consistent with the current definition of~$h$.
Whenever an element $a_i$ of~$\vec a$ already occurred in~$I_\acc^{+}$, it must
have been exported between $F'$ and $F$, so $h(a_i)$ was also exported between
$h(F')$ and $F''$, so we already have $h(a_i) = c_i$. Further, this assignment
is well-defined for the nulls introduced in~$F$, because each null occurs only at
one position.  The resulting $h$ is a homomorphism because the
image of previous facts is unchanged, and because the fact $R(h(\vec a)) = F''$ is
a fact of~$I_p$ as required.

This
concludes the proof of the fact that there is a homomorphism from~$I_\acc^{+}$ to
$I_\acc$ which is the identity on~$I_p$.

It remains to justify that the common subinstance $I_\acc^{+}$ in~$I_1^+$ and
$I_2^+$ is
access-valid in~$I_1^+$. Consider one access in~$I_1^+$ performed with some
method $\mt$ of a relation~$R$, with a binding $\accbind$ of values
in~$I_\acc^{+}$, and let us show that we can define a valid output to this
access in~$I_\acc^+$.
It is clear by definition of~$I_\acc^{+}$ that, if some value of~$\accbind$ is not
in the domain of~$I_\acc$, it must be a null introduced in the chase to create
$I_\acc^{+}$, in the first or in the second step.  In this case the only possible matching facts in~$I_1^+$ are
in~$I_\acc^{+} \setminus I_\acc$ and there is nothing to show. Hence,
we focus on  the case when all values of~$\accbind$ are in~$I_\acc$. If $\mt$ is not a
result-bounded access, then we can simply use the fact that~$I_\acc$ is access-valid
in~$I_1$ to know that all matching tuples in~$I_1$ were in~$I_\acc$, so the
matching tuples in~$I_1^+$ must be in~$I_\acc \cup (I_1^+ \setminus I_1)$, hence
in~$I_\acc^{+}$. If $\mt$ is a bounded access, then consider the access
on~$\mt'$ with the same binding. Either this access returns nothing or it tells
us that there is a fact $\checkview_\mt$ containing the values of~$\accbind$.
In the first case, as $I_\acc$ is access-valid in~$I_1$, we know that~$I_1$
contains no matching tuple, hence the constraints of $\aschema'$ imply
that~$I_1$ does not contain any $R$-fact which matches~$\accbind$ in the input
positions of~$\mt$. This means that any matching tuple in~$I_1^+$ for the access
on~$\mt$ must be in~$I_1^+ \setminus I_1$, so they are in~$I_\acc^+$ and we can
define a valid output to the access in~$I_\acc^+$. This covers the first case.

In the second case, the $\checkview_\mt$ fact of~$I_1$ implies by construction
that~$I_\acc^*$, hence $I_\acc^{+}$, contains infinitely many suitable
facts $R(\vec x, \vec y)$ with~$\vec x = \vec y$. Letting $k$ be the bound
of~$\mt$, we choose $k$ facts among those, and obtain a valid output to the
access with~$\accbind$ on~$\mt$ in~$I_1^+$. Hence, we have shown that~$I_\acc^{+}$ is
access-valid in~$I_1^+$.

The only remaining difficulty is that $I_1^+$, $I_2^+$, and $I_\acc^+$ are not
instances on the relational signature of~$\aschema$, because they still contain
$R_\mt$-facts. We simply build the final counterexample superinstance by
removing all facts that are not on a relation of~$\aschema$. The constraints
of~$\Sigma$ still hold, because they only mention relations of~$\aschema$. There
is still a homomorphism from $I_p^+$ to~$I_p$ for every $p \in \{1, 2\}$ as we
are removing facts from the left-hand side of the homomorphism. Further, it is
now the case that for all $p \in \{1, 2\}$, the restriction of~$I_p$ to
the relations of~$\aschema$ is a subinstance of~$I_p^+$, as claimed in the lemma
statement. Last, it is still the case that $I_\acc^+$ is a common subinstance
of~$I_1^+$ and~$I_2^+$ which is access-valid in~$I_1^+$ for~$\aschema$, as
$\aschema$ only has accesses on relations of its signature. Hence, the result of
this modification satisfies the conditions of Lemma~\ref{lem:enlarge}. Using
this lemma, we have completed the proof of Theorem~\ref{thm:simplifyidsexistence}.
\end{proof}

\subsection{Proof of Theorem~\ref{thm:fdsimplify}: FD
Simplification for~FDs}
\label{app:simplifyfds}

In this appendix, we show Theorem~\ref{thm:fdsimplify}. Recall its
statement:

\begin{quote}
  \fdsimplify
\end{quote}

We will again show the contrapositive of the statement.
Assume that we have a counterexample $I_1,I_2$ to
$\amd$ for
the FD simplification of $\aschema$, with $Q$ holding in $I_1$, with $Q$ not holding in $I_2$,
and with $I_1$ and $I_2$ having a common subinstance $I_\acc$ that is access-valid
  in~$I_1$ under the FD simplification of~$\aschema$.
We will  upgrade
these to $I^+_1, I^+_2, I_\acc^+$ having the same property for $\aschema$, by
blowing up accesses one after the other. To do so, we initially set $I_1^+
\colonequals I_1$, set $I_2^+ \colonequals I_2$, set $I_\acc^+ \colonequals
I_\acc$, and then we consider accesses one after the other.

  Consider each access $(\mt, \abind)$ using a method $\mt$ on relation $R$ with
  binding $\abind$ having values in $I_\acc$.
  Let $M_1$ be the matching tuples for $(\mt, \abind)$ in~$I_1$,
  and $M_2$ the matching tuples in~$I_2$. The definition of~$I_\acc$ and the
  constraints added in the FD simplification ensure that
  $M_1$ and $M_2$ must either intersect or be both empty. If $M_1$ and $M_2$ are
  both empty or if~$M_1$ is a singleton, then we do nothing for the access $(\mt,
  \abind)$: intuitively, we can already define a valid output to this access
  in~$I_1$ for~$\aschema$.
  Otherwise, we know that $M_1$ and $M_2$ are both non-empty.
  Let $k$ be the result bound
  of~$\mt$.
Recall that $\detby(\mt)$ denotes the positions determined under the FDs by
  the input positions of~$\mt$: the tuples of~$M_1$ and
  of~$M_2$ must agree on $\detby(\mt)$. Let~$X$
  be the other positions of~$R$ that are \emph{not} in~$\detby(\mt)$:
  the set $X$ must be non-empty, since otherwise $M_1$ and $M_2$ would both be
  singletons, contradicting our assumption.

  Let us blow up the output of $(\mt, \abind)$ in $I_1$ and $I_2$ by
  constructing
  $k$ tuples
  with all positions in $\detby(\mt)$ agreeing with the common value of the
  tuples of~$M_1$ and~$M_2$, and with all positions in~$X$ filled using fresh values
  that are disjoint from each other and from other values in $I_1 \cup I_2$.
  We then add these $k$ tuples to~$I_1^+$, to~$I_2^+$, and to~$I_\acc^+$.
  Performing this process for all accesses in~$I_\acc$ defines the
  final~$I_1^+$, $I_2^+$, and $I_\acc^+$.

  It is clear that $I_1 \subseteq I_1^+$ and that $I_2 \subseteq I_2^+$. We see that
  $I_\acc^+ \subseteq I_1^+$ and $I_\acc^+ \subseteq I_2^+$, because these two
  last inclusions are true initially and all tuples added to~$I_\acc^+$ are also
  added to~$I_1^+$ and~$I_2^+$. Further, for every $p \in \{1, 2\}$, we
  see that $I_p^+$ has a homomorphism back to~$I_p$, by defining it as the
  identity on~$I_2$, and mapping the fresh elements of every new tuple of~$I_p^+ \setminus I_p$ to
  the corresponding elements in some tuple of the (non-empty) set~$M_p$ considered at the point where
  the new tuple was added. 
This defines a homomorphism because the new tuple
  matches the tuples of~$M_2$ at the positions of~$\detby(\mt)$, and at other
  positions the new tuple contains fresh values occurring only at one position.

  We must justify that $I_1$ and~$I_2$ still satisfy the FD constraints
  of~$\aschema$. To do so, it suffices to consider each FD $\phi$ on relation~$R$, and to
  consider violations of~$\phi$ that involve the new tuples. If the
  left-hand-side of~$\phi$ contains a position of~$X$, then the freshness of
  the new values ensures that we have not added a violation of~$\phi$.
  Otherwise, the left-hand-side of~$\phi$ is contained in~$\detby(\mt)$, and the
  new tuples agree on these positions with existing tuples from~$M_1$ and~$M_2$,
  so we cannot have introduced a violation either. Hence, the FD constraints
  of~$\aschema$ still hold after adding the new tuples.

  We then explain why~$I^+_\acc$ is access-valid in~$I_1^+$. To do so, we will
  first define the notion of an access $(\mt', \accbind')$ \emph{extending}
  another access $(\mt, \accbind)$ if $\detby(\mt')$ is a superset of
  the input positions of~$\mt$, and if the restriction on~$\accbind'$ to the
  positions of~$\mt$ is exactly~$\accbind$.
  We make two claims:

  \begin{claim}
    \label{clm:extends1}
    Assume that, in the construction, we blow up an access $(\mt', \accbind')$
    that extends an access $(\mt, \accbind)$. Then we also blow up the
    access~$(\mt, \accbind)$ in the construction.
  \end{claim}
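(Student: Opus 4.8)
The plan is to reduce both ``blow up'' predicates to a simple cardinality condition and then transport that condition from $(\mt',\accbind')$ to $(\mt,\accbind)$ using the functional dependencies. First I would record the following restatement of the construction: for an access with binding in $I_\acc$, the ``do nothing'' case is exactly ``$M_1$ and $M_2$ both empty, or $M_1$ a singleton'', so by the dichotomy recalled just before the construction (the matching sets in $I_1$ and $I_2$ either intersect or are both empty) the access is blown up precisely when its set of matching tuples in $I_1$ has at least two elements. Thus, writing $M_1$ for the matching tuples of $(\mt,\accbind)$ in $I_1$ and $M_1'$ for those of $(\mt',\accbind')$, the hypothesis that $(\mt',\accbind')$ is blown up gives $\card{M_1'} \geq 2$, and the goal is to show $\card{M_1} \geq 2$.

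The heart of the argument is to establish $M_1' \subseteq M_1$. Every tuple $t \in M_1'$ agrees with $\accbind'$ on the input positions of $\mt'$. Since $I_1$ satisfies the FDs $\Sigma$ (these are among the constraints of the FD simplification) and $\detby(\mt')$ is by definition the set of positions determined by the input positions of $\mt'$, all tuples of $I_1$ that agree with $\accbind'$ on the input positions of $\mt'$ must agree on every position of $\detby(\mt')$. The ``extends'' hypothesis supplies both that the input positions of $\mt$ are contained in $\detby(\mt')$ and that the value taken on the input positions of $\mt$ by any such tuple is exactly $\accbind$; hence every $t \in M_1'$ matches $\accbind$ on the input positions of $\mt$ and therefore lies in $M_1$. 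Combining with the first paragraph, $\card{M_1} \geq \card{M_1'} \geq 2$, so $(\mt,\accbind)$ is blown up.

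The step I expect to be the main obstacle is this FD propagation, and in particular pinning down the reading of ``the restriction of $\accbind'$ to the positions of $\mt$ is exactly $\accbind$'' when the input positions of $\mt$ are not literally among the input positions of $\mt'$ but only inside $\detby(\mt')$: there the equality must be understood through the FD-closure, that is, as a statement about the forced values on $\detby(\mt')$ rather than as a set-theoretic restriction of the function $\accbind'$. A secondary point to check is that $(\mt,\accbind)$ is genuinely an access considered by the construction, namely that $\accbind$ has all its values in $I_\acc$; in the case where the input positions of $\mt$ lie among those of $\mt'$ this is immediate since $\accbind$ is then a sub-binding of $\accbind'$, and in the general case it is part of the hypothesis that $(\mt,\accbind)$ is an access we attempt to blow up ``in the construction''.
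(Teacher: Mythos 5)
Your proposal is correct and follows essentially the same route as the paper's one-line proof: blowing up $(\mt',\accbind')$ means it has at least two matching tuples in $I_1$, these are also matching tuples for $(\mt,\accbind)$, hence $(\mt,\accbind)$ is blown up too. The only difference is that you spell out, via the FDs and the reading of ``restriction through $\detby(\mt')$'', the inclusion $M_1'\subseteq M_1$ that the paper dismisses as ``easily seen''.
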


  \begin{proof}
    If we blew up $(\mt', \accbind')$ then it had more than one matching tuple in~$I_1$, and they
    are easily seen to be matching tuples for~$(\mt, \accbind)$ as well, so we
    also blow up $(\mt, \accbind)$.
  \end{proof}

  \begin{claim}
    \label{clm:extends2}
    Assume that, in the construction, when blowing up an access~$(\mt',
    \accbind')$, we add to~$I_1^+$ or to~$I_2^+$ some tuples that are also matching tuples for
    a different access $(\mt, \accbind)$. Then the access~$(\mt', \accbind')$
    extends the access~$(\mt,
    \accbind)$.
  \end{claim}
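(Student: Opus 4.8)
The plan is to unfold the construction and read off what the blow-up of $(\mt', \accbind')$ actually produces. Every tuple $t$ added to $I_1^+$, $I_2^+$, and $I_\acc^+$ when blowing up $(\mt', \accbind')$ is an $R$-tuple (where $R$ is the relation of $\mt'$) that agrees, on all positions of $\detby(\mt')$, with the common value shared by the matching sets $M_1'$ and $M_2'$ of this access (and hence agrees with $\accbind'$ on the input positions of~$\mt'$); on the remaining positions $X' \colonequals \{1,\dots,\arity(R)\}\setminus\detby(\mt')$ it carries fresh values, chosen disjoint from one another and from $\adom(I_1\cup I_2)$. The hypothesis is that one such $t$ is also a matching tuple for a different access $(\mt,\accbind)$; since $t$ is an $R$-fact, $\mt$ must access $R$ as well, so both accesses are on the same relation and the extension relation is meaningful.

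First I would establish that the input positions of~$\mt$ are contained in~$\detby(\mt')$, which is the first half of the definition of extension. The leverage is freshness: the binding $\accbind$ has all its values in~$I_\acc$, whereas at every position of~$X'$ the tuple~$t$ carries a value outside $\adom(I_1\cup I_2)\supseteq\adom(I_\acc)$. If some input position $i$ of~$\mt$ lay in~$X'$, then matching would force $\accbind$ to take the fresh value $t_i$ at position~$i$, contradicting that $\accbind$'s values lie in~$I_\acc$. Hence no input position of~$\mt$ is in~$X'$, giving the desired inclusion.

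Second I would verify the agreement of bindings, the second half of the definition. For each input position $i$ of~$\mt$ we now know $i \in \detby(\mt')$, so $t_i$ equals the common determined value at~$i$, which is exactly what $\accbind'$, extended to~$\detby(\mt')$, assigns there. Since $t$ matches $\accbind$, we get $\accbind(i) = t_i$, and therefore $\accbind$ coincides with the restriction of (the determined extension of)~$\accbind'$ to the input positions of~$\mt$. This is precisely what it means for $(\mt',\accbind')$ to extend $(\mt,\accbind)$, completing the argument.

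The step I expect to be the main subtlety is not any calculation but the bookkeeping of the extension relation itself: although $\accbind'$ is formally a binding on the input positions of~$\mt'$, the construction pins a value on all of~$\detby(\mt')$ (the common value of $M_1'$ and $M_2'$), and it is this \emph{determined} binding that must be restricted to the input positions of~$\mt$. Once this reading is fixed, freshness does all the real work and both conditions drop out immediately, with no delicate cardinality or FD reasoning required here.
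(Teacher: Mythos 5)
Your proof is correct and takes essentially the same route as the paper's: pick a matching tuple for $(\mt',\accbind')$, note that the added tuples agree with it on $\detby(\mt')$ and carry fresh values elsewhere, and use freshness to force the input positions of~$\mt$ into $\detby(\mt')$ with the bindings then agreeing there. You also correctly identify and resolve the same mild abuse the paper itself commits in speaking of ``the restriction of~$\accbind'$'' to positions that may lie in $\detby(\mt')$ without being input positions of~$\mt'$.
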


  \begin{proof}
    Consider a matching tuple $\vec{t}$ for $(\mt', \accbind')$ in~$I_1$: one must exist,
    because we are blowing up this access. The new tuples added in the blowup
    match $\vec{t}$ on the positions of~$\detby(\mt')$, and they contain fresh
    values at the other positions. Hence, for these tuples to be matching tuples
    for~$(\mt, \accbind)$ in~$I_1^+$ or in~$I_2^+$, then the input positions of~$\mt$ must be a subset of
    $\detby(\mt)$, and $\accbind$ must be the restriction of~$\accbind'$ to the
    input positions of~$\mt$, establishing the result.
  \end{proof}
    
  We can now prove that $I^+_\acc$ is access-valid in~$I_1^+$. Let us consider a method
  $\mt$ and binding $\accbind$. If $\accbind$ contains values
  from~$\dom(I^+_\acc) \setminus \dom(I_\acc)$. Thus we know that these values
  occur only in tuples from~$I^+_\acc\setminus I_\acc$, so we know that the
  matching tuples in~$I_1^+$ are all in~$I^+_\acc$ and there is nothing to show.
  Hence, we focus on the case where $\accbind$ consists of values
  of~$\dom(I_\acc)$. In this case, when we considered the access $(\mt,
  \accbind)$ in the blow-up process above, letting $M_1$ and $M_2$ be the matching
  tuples for the access in~$I_1$ and~$I_2$, either we did not perform the blowup
  or we did. If we performed the blowup, then we can define a
  valid result to the access in~$I_\acc^+$ using the $k$ tuples that we added
  in the blowup. If we did not, then by Claim~\ref{clm:extends1}, the
  construction has not blown up any access that extends $(\mt, \accbind)$
  either, so by the contrapositive of Claim~\ref{clm:extends2} we know that the
  matching tuples~$M_1^+$ for~$(\mt, \accbind)$ in~$I_1^+$  are exactly~$M_1$,
  and likewise $M_2^+ = M_2$ when defining~$M_2^+$ analogously.
  Now, if we did not perform the blowup for~$(\mt, \accbind)$, then either $M_1$
  and $M_2$ are both empty, or $M_1$ is a singleton.  We now know that the same
  is true of~$M_1^+$ and~$M_2^+$. If both $M_1$ and $M_2$
  are empty, then there are no matching tuples and the empty set is a valid
  output to the access. If $M_1^+$ is a singleton, then the single matching
  tuple is also a matching tuple in~$I_1$ for the access, so it must be part
  of~$I_\acc$ because~$I_\acc$ is access-valid in~$I_1$, and this defines a
  valid output to the access in~$I_1^+$. Hence, we have shown that $I_\acc^+$ is
  access-valid in~$I_1$ for~$\aschema$.

  The last step is to remove from $I_1^+$, $I_2^+$, and $I_\acc^+$ all facts of
  relations that are not in $\aschema$, i.e., the $R_\mt$ relations. It is still
  the case that $I_p^+$ has a homomorphism to~$I_p$ for all $p \in \{1, 2\}$, it
  is now the case that the restriction of $I_p$ to relations of~$\aschema$ is a
  subinstance of~$I_p^+$ for all $p \in \{1, 2\}$, the constraints of~$\aschema$
  are still satisfied because they do not mention the $R_\mt$ relations, and
  $I_\acc^+$ is still a common subinstance which is access-valid. Thus,
  Lemma~\ref{lem:enlarge} implies that $Q$ is not $\amd$ in~$\aschema$,
  concluding the proof.

\section{Proofs for Section~\lowercase{\ref{sec:complexity}}: Decidability of Monotone Answerability}
\subsection{Proof of Proposition~\ref{prop:decidmdetfdclassic}}
\label{app:decidmdetfdclassic}

Recall the statement of Proposition~\ref{prop:decidmdetfdclassic}:

\begin{quote}
\fdclassic
\end{quote}

  As mentioned in the body of the paper,  the lower bound already holds
  without result bounds or constraints \cite{access2}, so it suffices to show the upper
  bound.
   We also mentioned in the paper that  by  Theorem~\ref{thm:equiv}
  and Proposition~\ref{prop:reduce},
the problem reduces to the $\amd$ query
  containment problem $Q \subseteq_\Gamma Q'$ for~$\aschema$. As $\aschema$ has
  no result bounds, we can define $\Gamma$ using the rewriting of the accessibility axioms given
  after Proposition~\ref{prop:reduce}. 
  The constraints $\Gamma$
  thus consist of FDs and of full TGDs of the form:
\[\left(\bigwedge_i \accessible(x_i)\right) \wedge
  R(\vec x, \vec y) \rightarrow  R'(\vec x, \vec y) \wedge \bigwedge_i
  \accessible(y_i)\]

  As the TGDs are full, we know that
  we do not create fresh values when chasing.
Further, because there are no TGD constraints with primed relations in their
body, once $\accessible$ does not change within a chase round, the entire chase process has
terminated.
Besides, when adding values to $\accessible$ we must reach a fixpoint in linearly many rounds,
since $\accessible$ is unary. 
  Thus chasing with~$\Gamma$ terminates in linearly many rounds. Thus, we can decide
  containment by checking in~$\np$ whether $Q'$ holds on the chase result,
  concluding the proof.

\subsection{Proof of Theorem~\ref{thm:decidfd}: Complexity of Monotone Answerability for FDs}
\label{app:decidfd}

Recall the statement of Theorem~\ref{thm:decidfd}:

\begin{quote} 
  \thmdecidfd
\end{quote}

By Theorem~\ref{thm:fdsimplify} it suffices to deal with the  FD-simplification, meaning that we can 
reduce to a schema of the following form:

\begin{itemize}
  \simplifyfddef
\end{itemize}
By Proposition~\ref{prop:reduce}, we then reduce $\amd$ to query containment.
The resulting query containment problem
involves two copies of the constraints above, on primed and unprimed
copies of the schema, along with accessibility axioms for each
access method (including the new methods~$R_\mt$).
We can observe a few obvious simplifications of these constraints:
\begin{itemize}
\item In the chase, the constraint $R_\mt(\vec x, \vec y) \rightarrow \exists \vec z ~
  R(\vec x, \vec y, \vec z)$ will never fire, since a fact $R_\mt(\vec a, \vec b)$
is always generated by a corresponding fact  $R(\vec a, \vec b, \vec c)$. 
\item  In the chase, constraints of the form $R'(\vec x, \vec y, \vec z) \rightarrow R'_\mt(\vec x, \vec y)$
can fire, since it is possible that an $R'$-fact is created by one access method 
$\mt_1$ (result-bounded or not), but then an axiom of the above form is fired by
    a different access method $\mt_2$ on the same relation.
    However, such an $R'_\mt$-fact will not generate any further
    rule firings, and will not help make the query true (as it does not mention
    relations of the form $R'_\mt$), so we can disregard these constraints.
  \end{itemize}
If we consider the chase with the remaining constraints, we can
see that the only non-full TGDs are the primed copies of constraints in the
first bullet point above, namely constraints of the form:
\[
R'_\mt(\vec x, \vec y)
\rightarrow \exists \vec z ~
  R'(\vec x, \vec y, \vec z)
\]
Hence, these are the only rules that create new values, and these values
will never propagate back to the unprimed relations. Further,
whenever a primed fact $F$ is created containing  a null using the rule above,
the only further chase  steps that can apply to $F$ are FDs, and these will only
merge elements in $F$. 
Thus the chase will terminate in polynomially many parallel rounds 
as in the proof of Proposition~\ref{prop:decidmdetfdclassic} in Appendix~\ref{app:decidmdetfdclassic},
which establishes the NP upper bound and concludes the proof for unrestricted
monotone answerability.

\subsection{Proof of Proposition~\lowercase{\ref{prop:linearizeaccessids}}:
Linearization for Bounded-Width IDs and Truncated Accessibility Axioms}
\label{apx:linearizeaccessids}

In this appendix, we prove Proposition~\ref{prop:linearizeaccessids}. Recall the
statement:

\begin{quote}
  \linearizeaccessids
\end{quote}

To prove this, we will need to introduce some technical tools. 
First, we will need some details about \emph{truncated accessibility axioms},
and give a $\ptime$ implication algorithm for them assuming bounded \emph{breadth}.
Second, we will
present a notion of \emph{truncated chase proof}, which studies more closely the
structure of the chase by bounded-width IDs and truncated accessibility axioms,
and show that we can enforce a \emph{well-orderedness} property that specifies
in which order the dependencies are fired.
Third, we will present \emph{short-cut chase proofs}, where these dependencies
are fired in an even more specific order, and show that this definition of the
chase is still complete.
Last, we will use these tools to prove
Proposition~\ref{prop:linearizeaccessids}.

\subsubsection{Details about Truncated Accessibility Axioms} \label{subsubsec:detailstruncated}

We call \emph{truncated accessibility axiom} any TGD of the
following form on $\sign \cup \{\accessible\}$:
\[\left(\bigwedge_{i \in P} \accessible(x_i)\right) \wedge R(\vec x) \rightarrow
\accessible(x_j)\]
where $R$ is a relation and~$P$ is a subset of the positions of~$R$. 
Notice the similarity with axioms of the form (Truncated Accessibility) as
introduced in the main text: the only difference is that we have rewritten them
further to ensure that the head always contains a single accessibility fact.

Intuitively, such an axiom tells us that, when a subset of the elements of
an~$R$-fact are accessible, then another element of the fact becomes accessible
(by performing an access). An \emph{original truncated accessibility axiom}
is a truncated accessibility axiom which is in the set~$\Delta$ that we obtained in
the reduction to query containment. For these axioms, the set $P$ is the set of input
positions of some method~$\mt$ on~$R$.  We will study  truncated
accessibility axioms that 
are \emph{implied} by the original
truncated accessibility axioms in~$\Delta$ and by the constraints in~$\Sigma$.
We call them the \emph{derived truncated accessibility axioms}.

There can be exponentially many truncated accessibility axioms, but we will not
need to compute all of them: it will suffice to compute those of small
\emph{breadth}. Formally,
the \emph{breadth} of a truncated accessibility axiom is  the size of~$P$.
Note that the  number of possible truncated accessibility axioms of  breadth
$b$ is at most $r \cdot a^{b+1}$, where $r$ is the number of relations
in the signature and  $a$ is the maximal arity of a relation. We show that we
can efficiently
compute the derived truncated accessibility axioms of a given breadth, by
introducing a \emph{truncated accessibility axiom saturation algorithm}.

The algorithm iteratively builds up a set $O$ of triples $(R,\vec p,j)$ with~$\vec p$ a
set of positions of~$R$ of size at most $w$ and $j$ a position of~$R$. Each such
triple represents the following truncated accessibility axiom of breadth $\leq
w$:
\[
  \left(\bigwedge_{i \in \vec{p}} \accessible(x_i)\right) \wedge R(\vec x) \rightarrow
\accessible(x_j)
\]

We first set $O \colonequals \{(R, \vec p, j) \mid j \in \vec p\}$, representing
trivial axioms.
We then repeat the
 steps below:
\begin{itemize}
  \item ($\incd$): If we have an $\incd$ from~$R(\vec x)$ to~$S(\vec y)$, that
    exports the variables
    $x_{j_1}, \dots, x_{j_{m'}}, x_j$ to $y_{k_1}, \dots, y_{k_{m'}}, y_k$,
     and if we have $(S,({k_1} \ldots {k_{m'}}),k) \in O$ for some $k_1 \ldots k_{m'}$
     then we add the tuple $(R,(j_1 \ldots j_{m'}),j)$ to~$O$.
   \item (Transitivity): If there exists a relation $R$, a set of positions
     $\vec p$ of~$R$,  and a set
     of positions $\{t_1 \ldots t_m\}$ of~$R$ with
  $m\leq w$ such that we have $(R, \vec p, t_i) \in O$ for all $1 \leq i \leq m$,
  and we have $(R, \vec r, t') \in O$ with
$\vec r \subseteq \vec p \cup \{t_1 \ldots t_m\}$, then we add $(R, \vec p, t')$
to~$O$.
\item (Access): If we have a method $\mt$ on~$R$ with input positions $j_1 \ldots j_m$
and a set $\vec p$ of at most $w$ positions such that  $(R,\vec p, j_i) \in O$ for all
$1 \leq i \leq m$, then we add $(R,\vec p, j)$ to~$O$ for all $j$ between~$1$
  and the arity of~$R$.
\end{itemize}
We continue until we reach a fixpoint.

Note that  a fixpoint must occur after at most
$r \cdot a^{w+1}$ steps, with~$r$ the number of relations in the schema and $a$
  the maximal arity of a relation. It is clear that the algorithm runs
  in polynomial time in~$\Sigma$ and in the set of access methods. 
We will
  show that this correctly computes all derived truncated accessibility
  axioms satisfying the breadth bound:

\newcommand{\derivedaccessids}{
For any fixed $w \in \NN$,
the truncated accessibility saturation algorithm
 computes all derived truncated accessibility axioms of breadth
  at most~$w$, when given as input a set of~$\incd$s of width~$w$ and a set of truncated accessibility
axioms.
}
\begin{proposition} \label{prop:derivedaccessids}
  \derivedaccessids
\end{proposition}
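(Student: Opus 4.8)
The plan is to show the two inclusions separately: \emph{soundness} (every triple in $O$ denotes a truncated accessibility axiom implied by $\Sigma \cup \Delta$) and \emph{completeness} (every derived truncated accessibility axiom of breadth at most $w$ is represented by a triple in $O$). Soundness is the easy direction and is proved by induction on the number of saturation steps, checking that each operation preserves the invariant. The initialization adds only tautological axioms (if $x_j$ is already accessible then the head holds); the (Access) step is sound because it composes a derivation of accessibility for the input positions of a method $\mt$ with the original axiom of $\Delta$ attached to $\mt$; the (Transitivity) step composes implied axioms on a common relation; and the ($\incd$) step is sound because firing the $\incd$ on an $R$-fact whose exported elements are accessible produces an $S$-fact whose corresponding elements are accessible, so the implied axiom on $S$ can be pulled back to $R$ along the exported positions.

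For completeness, I would first reduce implication to the chase. A truncated accessibility axiom $\bigwedge_{i \in P} \accessible(x_i) \wedge R(\vec x) \to \accessible(x_j)$ with $|P| \le w$ is implied by $\Sigma \cup \Delta$ iff, in $\chase(I_0, \Sigma \cup \Delta)$ where $I_0 = \{R(\vec a)\} \cup \{\accessible(a_i) : i \in P\}$ is the frozen body, the fact $\accessible(a_j)$ is eventually produced. The crucial structural observation is that, since $\Sigma$ consists of $\incd$s and $I_0$ contains a single relational fact, the relational facts of the chase form a tree rooted at $R(\vec a)$: each $\incd$-firing creates one child fact from one parent fact, sharing only the $\le w$ exported elements, while the remaining positions of the child are filled with fresh nulls occurring nowhere else except in the subtree below. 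Accessibility facts, produced by the axioms of $\Delta$ (the (Access) case), attach to single elements and are shared wherever the element occurs, i.e. along root-to-node paths of the tree.

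The heart of the argument is a bottom-up induction on this chase tree. For a fact $F = S(\vec c)$, call a position \emph{external} if its element is shared with the parent of $F$ (for the root, if it lies in $P$); there are at most $w$ external positions, since the interface across each parent--child $\incd$ edge has size at most $w$ and $|P| \le w$. I would prove: for every set $J$ of external positions of $F$ and every position $k$ such that $\accessible(c_k)$ is derivable in the chase using only accessibility of the external elements in $J$ together with firings inside the subtree of $F$, the triple $(S, J, k)$ lies in $O$. In the inductive step, the positions of $F$ that become accessible from a given external set $J$ are obtained by a closure: $S$'s own methods add positions via (Access); and for each child $F_m$, whenever the shared elements feeding $F_m$ are accessible in $F$, the induction hypothesis yields a derived axiom on the child's relation, which the ($\incd$) rule translates into a derived axiom on $S$ pulled back along the exported positions, and (Transitivity) composes these with each other and with the (Access)-derived axioms. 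Since every triple used in this closure is already in $O$ (by the induction hypothesis for the children and by the rules applied at $F$), and the saturation runs to a global fixpoint, the resulting triple $(S, J, k)$ is in $O$ as well. Applying this at the root with $J = P$ and $k = j$ yields $(R, P, j) \in O$, as required.

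The main obstacle is the \emph{bidirectional} flow of accessibility: once an external element of a subtree becomes accessible, firings inside the subtree can make further shared elements accessible, which propagate back up to the parent, possibly enabling yet more firings there, and so on across arbitrarily many round trips and arbitrarily deep subtrees. The key to taming this is that every parent--child interface, and the root's initial accessible set, has size at most $w$, so all intermediate axioms the argument manipulates stay within breadth $w$ and hence within the triples the algorithm tracks; and the fact that the saturation computes a fixpoint, rather than a bounded number of rounds, is exactly what lets it absorb unboundedly iterated propagation. Making the induction hypothesis precise enough to account for accessibility that enters a subtree from outside, while avoiding circular reasoning, is the delicate part of the write-up; formalizing the order in which accessibility facts are derived (a well-ordering of the chase firings) is the natural device for this, and is presumably why the subsequent machinery on truncated chase proofs is developed.
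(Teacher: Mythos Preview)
Your soundness argument is fine. The completeness argument has a genuine gap: ``bottom-up induction on this chase tree'' is not a valid scheme, because the chase by $\incd$s need not terminate, so the tree can be infinite with no leaves, and the order ``children before parents'' is not well-founded. You flag the difficulty in your last paragraph, but you do not actually supply a well-founded measure, and the well-ordering you anticipate is not used in the paper to rescue a tree induction.

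The paper's proof inducts instead on the \emph{length} of a chase proof of $\accessible(c_i)$ from the frozen body---always finite---after first normalising to a well-ordered proof via Lemma~\ref{lem:wellaccessids}. It then looks only at the birth fact $F$ of the final firing. If $F$ is the root $R(\vec c)$, the induction hypothesis applies to each accessibility hypothesis (shorter proofs) and one closes with (Transitivity) and (Access). Otherwise $F=S(\vec a,\vec d)$ was created from its parent $E_1$ by an $\incd$ exporting $\vec a$; well-orderedness guarantees that the $\vec a$-accessibility hypotheses were already present when $F$ was created, giving a strictly shorter proof of the same conclusion starting from $F$ together with those hypotheses. The induction hypothesis, applied to this \emph{derived axiom on $S$}, yields $(S,\vec p',i')\in O$; the ($\incd$) rule pulls it back to the relation of $E_1$; iterating up the ancestor chain reaches a triple on $R$, and a final (Transitivity) step converts its left-hand positions to the original set. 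So the paper's argument is a top-down walk along the root-to-birth-fact path, driven by proof-length induction, rather than a bottom-up aggregation over subtrees. Your per-subtree summary statement is essentially the invariant behind the short-cut chase introduced right after this proposition, so the intuition is good; but turning it into a proof would still require replacing the tree induction by induction on derivation length and invoking well-orderedness to split a derivation cleanly across a parent/child interface---precisely the bidirectional-flow obstacle you identify.
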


We defer the proof of this result until we establish some results about the chase with
these axioms.
\subsubsection{Truncated Chase Proofs and Well-Orderedness}
\label{app:orderability-accessids}
Towards our goal of showing the correctness of the saturation algorithm,
we now present an ordering result about \emph{truncated
chase proofs}, that is, proofs using $\incd$s 
and truncated accessibility axioms. 
In any such proof, we can arrange the facts that we create in a tree.
Each node $n$ of the tree corresponds to a fact $F$ that is generated by an
$\incd$, and the parent of
$n$ is the node associated to the fact contained in the trigger that was
fired to generate~$F$.
During the proof, we also generate additional accessibility facts $A$ by firing
truncated accessibility axioms, and the trigger for the firing involves a fact
$F$ over the original schema (i.e., not an accessibility fact), as well as other
accessibility facts. We then call $F$
the \emph{birth fact} of
the accessibility fact $A$, and 
the \emph{birth constants} of~$A$ are all constants $d$ such that~$\accessible(d)$
is a hypothesis of the truncated accessibility axiom creating $A$.
Our main goal will be to normalize proofs so that the creation of accessibility facts
is ``compatible with the tree structure''.
Consider a truncated chase proof that results in a chase instance $I$. Such
a proof is  \emph{well-ordered} if it has the following property:

\begin{quote}
  For any fact $F=R(\vec c)$ generated in~$I$ by firing a trigger $\trig$
  for an $\incd$,
 if $\accessible(c_i)$ is generated in~$I$ with birth fact in the subtree of~$F$,
 and all the birth constants $c_{m_1}\ldots c_{m_k}$
  of~$c_i$ were exported when firing $\trig$,
  then each fact $\accessible(c_{m_1}) \ldots \accessible(c_{m_k})$ must  already
 have been present  in the chase at the time $F$ was generated.
\end{quote}
We now show:

\begin{lemma} \label{lem:wellaccessids}
For any chase proof from the canonical database
of~$Q$ using truncated accessibility axioms and $\incd$s,  producing
instance $I$, there is a well-ordered chase proof from the canonical database
  of~$Q$ that generates a set of 
facts isomorphic to those of~$I$.
\end{lemma}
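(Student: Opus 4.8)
The plan is to prove the lemma by a reordering argument: starting from an arbitrary truncated chase proof that produces $I$, I repeatedly apply local exchanges that move certain accessibility derivations earlier, until the well-orderedness condition holds, all the while keeping the set of generated facts isomorphic to~$I$. Since we only need an \emph{isomorphic} fact set, I am free to replace the birth facts and birth constants used for a given accessibility fact, as long as exactly the accessible constants of~$I$ are produced in the end.

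Two structural observations drive the argument. First, every $\incd$ has a single relational atom as its body and does not mention $\accessible$, so the firing of an $\incd$ never depends on any accessibility fact. Consequently the tree of facts generated by $\incd$s is completely determined independently of the accessibility derivations; I keep this tree fixed throughout, and only rearrange \emph{when} accessibility facts are produced and how their production is interleaved with the creation of $\incd$-facts. Second, in an $\incd$ chase nulls propagate strictly downward: a null $c$ first created at a fact $G_c$ can only be exported to children, grandchildren, and so on, so \emph{every} fact mentioning $c$ lies in the subtree rooted at $G_c$, while original constants are available at the root. Together with the monotonicity of the chase (an accessibility fact, once derived, persists and may be re-derived at any later stage once its birth fact and the accessibility of its birth constants are available), these facts give me the freedom to slide accessibility derivations up and down the proof.

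The core step handles a single violation. Suppose the proof is not well-ordered: there is an $\incd$-fact $F=R(\vec c)$ generated by a trigger $\trig$, an accessibility fact $\accessible(c_i)$ whose birth fact lies in the subtree of~$F$, all of whose birth constants $c_{m_1},\dots,c_{m_k}$ were exported by $\trig$, yet some $\accessible(c_{m_\ell})$ is produced only after~$F$. Since $c_{m_\ell}$ is exported into~$F$, it is already present at the parent~$P$ of~$F$, and its origin is an ancestor of~$P$. I want to re-derive $\accessible(c_{m_\ell})$ before $F$ is created. I do this by a sub-induction showing that any accessibility fact about a constant present at~$P$ can be produced using only facts available before~$F$: unwinding the derivation of $\accessible(c_{m_\ell})$, each birth constant it depends on is again a constant present at~$P$ (by downward propagation a constant shared with a fact at or above~$P$ cannot originate strictly below~$P$), so the recursion stays within constants available at~$P$ and bottoms out at constants made accessible from facts already present when $F$ is created. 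Moving these derivations before~$F$ removes the violation, and I then argue that no new violation strictly higher in the tree is created, so the lexicographically ordered multiset of the depths of the $\incd$-facts witnessing a violation strictly decreases, ensuring termination.

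The main obstacle is precisely this re-derivation step. The birth fact that the original proof used to make $c_{m_\ell}$ accessible may itself sit inside the subtree of~$F$, so it cannot simply be reused before~$F$ exists; the content of the argument is to show that the accessibility of a constant present above~$F$ never \emph{genuinely} requires descending into the subtree of~$F$, and hence admits an alternative derivation that respects the tree order. Making this precise — in particular, choosing the well-founded order so that the sub-induction terminates and verifying that repairing one violation does not reintroduce others of equal or higher priority — is the delicate part, and it is where the strict downward propagation of nulls in $\incd$ chases is used in an essential way.
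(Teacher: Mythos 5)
Your overall strategy runs in the opposite direction from the paper's, and the step you yourself flag as ``the delicate part'' is where the argument breaks. You fix the tree of $\incd$-facts once and for all and try to repair a violation at $F$ by re-deriving $\accessible(c_{m_\ell})$ \emph{before} $F$ is created, resting on the claim that the accessibility of a constant exported into~$F$ ``never genuinely requires descending into the subtree of~$F$.'' That claim is false. Take the $\incd$s $A(x) \rightarrow \exists y\, B(x,y)$ and $B(x,y) \rightarrow \exists z\, D(x,z)$, an input-free method on~$D$ (truncated axiom $D(x,z) \rightarrow \accessible(x) \wedge \accessible(z)$), and a method on~$B$ with input position~$1$ (truncated axiom $\accessible(x) \wedge B(x,y) \rightarrow \accessible(y)$), starting from $\canondb(Q)=\{A(a)\}$. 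The chase creates $F = B(a,n_1)$, then $D(a,n_2)$ below it, then $\accessible(a)$ from the $D$-fact, then $\accessible(n_1)$ from $F$ with birth constant~$a$. This violates well-orderedness at~$F$ (the birth constant $a$ is exported into $F$ but $\accessible(a)$ postdates $F$), yet $\accessible(a)$ can \emph{only} be obtained from a $D$-fact, every $D$-fact sits strictly inside the subtree of some $B$-fact, and with your fixed tree there is exactly one $B$-fact, namely $F$ itself. So $\accessible(a)$ provably cannot be derived before $F$ exists, and no reordering of accessibility firings can remove the violation. Your sub-induction does not detect this because it recurses on birth \emph{constants} (here there are none: the $D$-axiom is input-free), while the obstruction is the birth \emph{fact}.

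The paper resolves this by moving in the other direction: rather than pulling the accessibility derivation above~$F$, it re-fires the $\incd$ that created $F$ to produce a fresh copy $F'$ \emph{after} the facts $\accessible(c_{m_j})$ have become available, replays the subtree firings on $F'$, and redirects the offending accessibility firing to the copy. This necessarily gives up both of your premises --- the $\incd$-tree is not kept fixed, and the resulting fact set is only isomorphic up to such duplicated subtrees --- but that duplication is exactly what makes the induction on the number of violations go through. If you want to salvage your write-up, you should replace the ``pull accessibility up'' step by this ``push a copy of $F$ down'' step; the rest of your setup (the observation that $\incd$ firings never depend on $\accessible$-facts, and the termination measure on violations) can then be adapted to the paper's argument.
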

\begin{proof}
Note that, in an arbitrary proof, it could well be
  that~$A_j=\accessible(c_{m_j})$ is generated after 
the generation of~$F$. 
 The idea of the proof is that we can ``re-generate
$F$'', re-firing the rules generating $F$ and its subtree after all such facts $A_j$
  are created.

Formally, we proceed by induction  on the number of counterexample firings.
In the inductive step, consider a non-well-ordered proof and the 
subproof~$f_1 \ldots f_k$ up through the first violation
of well-orderedness.
That is, there is a fact $F=R(\vec c)$ generated by a rule firing $f_i$ using an
  $\incd$ $\dep$ from its parent fact
$E$,  a fact  $A_j=\accessible(c_{m_j})$ that was not present in the chase
when $F$ was generated, and  $f_k$ is an accessibility axiom  using
  $\accessible(c_{m_j})$ (and possibly other
accessibility facts)
to generate $\accessible(c_i)$ with birth fact $F_B$ in the subtree of~$F$.
 We create a new proof that
  begins with~$f_1 \ldots f_{k-1}$
  and then  continues by ``copying $f_i$'',
generating a copy $F'$ from~$E$ via~$\dep$.
  Doing this cannot introduce a violation
  of well-orderedness, because it does not generate an accessibility fact, and
  there are no accessibility facts in the subtree of the new fact~$F'$. 

  We now continue the proof with a copy of
the firings $f_{i+1} \ldots \allowbreak f_{k-1}$,
  but the firings that were performed in the
  subtree of~$F$ are now performed instead on the corresponding node in the
  subtree of~$F'$.
  When we perform the copy of these firings, we know that we do not
  cause any violation of well-orderedness, because the original firings $f_{i+1}
  \ldots f_{k-1}$ did not cause such a violation (by minimality of~$f_k$).

  Last, instead of firing $f_k$ on the fact $F_B$ in the subtree of~$F$, we fire
  it on the corresponding fact $F_B'$ in the subtree of~$F'$:
  we call this rule firing $f_k'$. We argue that all the necessary accessibility
  hypotheses for~$f_k'$ have
  been generated, so that we can indeed fire~$f_k'$. Indeed, for the
  accessibility hypotheses of~$f_k$ that have been created in the subtree
  of~$F$, we know that these hypotheses had been
  generated by firing $f_{i+1} \ldots f_{k-1}$, so these
  the corresponding hypotheses of~$f_k'$ have also been generated in the subtree of~$F'$.
  Now, for the accessibility hypotheses of~$f_k$ that are on
  exported elements between $E$ and~$F$, they had been generated already
  when we wanted to fire~$f_k$, so they are generated when we want to
  fire~$f_k'$. 
  In fact, our construction has ensured that these accessibility hypotheses had
  already been generated when creating~$F'$, which ensures that we can
  fire~$f_k'$ and not cause a violation of well-orderedness.
Hence, the proof that we have obtained by this process generates
$\accessible(c_i)$ in a well-ordered way, and the number of violations of
  well-orderedness has decreased.
\end{proof}

\subsubsection{Proof of Proposition~\ref{prop:derivedaccessids}}
Using the well-ordered chase, we are now ready to complete the proof of  Proposition \ref{prop:derivedaccessids},
which stated that the Truncated accessibility axiom saturation algorithm generates
exactly the derived truncated axioms of a given breadth:

\begin{proof}
For one direction, it is straightforward to see that all rules  obtained by this process
are in fact derived truncated accessibility axioms.
Conversely, we claim that, for all derived truncated accessibility axioms of
  breadth $\leq w$ 
\[ \accessible(c_{s_1}) \wedge \ldots \wedge \accessible(c_{s_l}) \wedge
R(\vec x) \rightarrow \accessible(c_i),
\]
  then the corresponding triple $(R, (s_1 \ldots s_l), i)$ is added to~$O$.

We prove this by induction on the length of a chase proof
of the accessibility fact~$\accessible(c_i)$ from the hypotheses $R(\vec c)$ and the accessibility
  facts $\accessible(c_{s_j})$ for $1 \leq j \leq l$ (with $l \leq w$).
Note that by Lemma~\ref{lem:wellaccessids} we can assume that the proof is well-ordered.

If the proof is trivial, then clearly $(R, \vec p, i) \in O$ by the initialization of~$O$.
If it is non-trivial then some accessibility axiom fired to produce $\accessible(c_i)$,
and we can fix a guard atom $F$ and accessibility facts $F_1 \ldots F_l$ that were hypotheses
of the firing: following our earlier terminology, $F$ is the birth fact
  of~$\accessible(c_i)$ and the constants occurring in the~$F_1 \ldots F_l$ are
  the birth constants of~$\accessible(c_i)$.
  If $F=S(\vec c')$ with  $\vec c'$ a subset of~$\vec c$, then each $F_i$
is of the form $\accessible(c_{s_i})$, and by induction
$(R, \vec p, c_{s_i}) \in O$ for each $i$.  Now by (Transitivity) and (Access) we complete the
argument.

Otherwise, the guard $F=S(\vec a, \vec d)$ of the accessibility axiom firing
  was generated by firing an $\incd$ $\dep$ to some fact $E_1=T_1(\vec a, \vec b)$, with~$\vec a$  the subset
of the values in~$E_1$ that were exported when firing~$\dep$.
  By well-orderedness, we know that each accessibility fact 
used in the firing that mentions a value in~$\vec a$ was present when $\dep$
  was fired
  on~$E_1$: as the width of the $\incd$s is~$w$, this set has width at most~$w$.
  Now, we see
that there is a subproof of shorter length proving $\accessible(c_{i})$ 
  from~$F$ and this subset of~$F_1 \ldots F_l$.
Therefore by induction we have $(S, \vec p', i') \in O$ for~$\vec p'$
  corresponding to the subset above (of size at most~$w$, so matching the
  breadth bound)
and $i'$ corresponding to~$c_{i}$ in~$F$. Applying the rule ($\incd$) we have $(T_1,\vec
s'', i'') \in O$ for~$i''$ corresponding to~$c_i$ in~$E_1$ and $\vec p''$
corresponding to the subset in~$E_1$.
The fact $E_1$ may itself have been generated by a non-full $\incd$ applied to some $E_2$, and
hence may contain values that are not in the original set of constants $\vec c$.
But if so we can iterate the above process on the $\incd$ from~$E_2$ to~$E_1$, noting that~$E_2$ also must
contain~$c_i$. Hence, by iterating this process,
  we arrive at a triple $(T_n, \vec p_n, i_n)$ which is in~$O$,
  where $i_n$ corresponds to the position
of~$c_i$ in a fact $F_n$ that occurs in the original proof with no application
  of an $\incd$. In other words, we must have
$F_n=R(\vec c)$, and hence $T_n=R$ and $i_n=i$. By induction again, we have
$(R, \vec p, j) \in O$ for each $j \in \vec p_n$. Applying (Transitivity) completes the argument.
\end{proof}

We have shown that we can compute in $\ptime$ the implication closure of truncated
accessibility axioms of bounded breadth under bounded-width $\incd$s. We will use
this implication closure in the construction of~$\Sigma^\lift$ to show
Proposition~\ref{prop:linearizeaccessids}, but we first need to introduce the
notion of \emph{short-cut chase}.

\subsubsection{Short-Cut Chase and Completeness}

We now state a further proof normalization result: instead of chasing with truncated
accessibility axioms, we can create the same facts by firing derived axioms of small breadth in a ``greedy
fashion''.
Recall that $\Sigma$ consists of $\incd$s of width $w$, and let us write
$\Delta'$ for the set of truncated
accessibility axioms that we consider. Remember that we can use
Proposition~\ref{prop:derivedaccessids} to compute in PTIME all derived truncated accessibility
axioms from~$\Delta'$ and~$\Sigma$ of breath at most~$w$.

A \emph{short-cut chase proof} on an initial instance $I_0$ with~$\Sigma$
and~$\Delta'$ uses
two alternating kinds of steps:
\newcommand{\shortcutsteps}{
  \item  \emph{$\incd$ steps}, where we fire
an $\incd$ on a trigger $\trig$ to generate a fact $F$: we put $F$ in a new node~$n$
    which is a child of the node~$n'$ containing the fact of~$\trig$; and we
    copy in~$n$ all facts of the form $\accessible(c)$ that held in~$n'$
about any element $c$ that was exported when firing~$\trig$.
  \item \emph{Breadth-bounded saturation steps}, where we consider a newly created node~$n$ and 
apply all derived truncated accessibility axioms of breadth at most~$w$ on that
    node until we reach a fixpoint and there are no more violations of these
    axioms on~$n$.
}
\mylistskip
\begin{compactitem}
  \shortcutsteps
\end{compactitem}
\mylistskip
We continue
this process until a fixpoint is reached.
\newcommand{\shortcutexpl}{%
  The atoms in the proof are thus associated with a tree structure:
it is a tree of nodes that correspond to the application of $\incd$s, and each node
also contains accessibility facts that occur in the node where they were
generated and 
in the descendants of those nodes that contain
facts to  which the elements are exported.}%
\shortcutexpl{}
The name ``short-cut'' intuitively indicates that
we short-cut certain derivations that could have been performed by moving up and
down in the chase tree: instead, we apply a derived
truncated accessibility axiom.

\newcommand{\normalizationaccessids}{
  For any set $\Sigma$ of $\incd$s of width~$w$,
  given a set of facts $I_0$ and a chase proof using $\Sigma$
that produces $I$, 
  letting $I^+_0$ be the closure of~$I_0$ under the original and derived
  truncated accessibility axioms in~$\Delta'$,
there is  $I'$ produced by a short-cut
chase proof from~$I^+_0$ with~$\Sigma$ and~$\Delta'$ such that there is a homomorphism from~$I$ to~$I'$.
}
\begin{lemma} \label{lem:normalizationaccessids}
\normalizationaccessids
\end{lemma}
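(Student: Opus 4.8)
The plan is to normalize the given chase proof and then simulate it by a short-cut chase proof, constructing a homomorphism $h$ from $I$ into the short-cut result $I'$ as we go. First I would invoke Lemma~\ref{lem:wellaccessids} to assume, without loss of generality, that the given truncated chase proof producing $I$ (using $\Sigma$ together with the truncated accessibility axioms of $\Delta'$) is \emph{well-ordered}. Both this proof and the short-cut proof generate their $\incd$-facts in a tree rooted at facts of~$I_0$, and since a short-cut $\incd$-step can fire exactly the same triggers, I would let the short-cut proof mirror the same tree of $\incd$-firings. This fixes $h$ on the original-schema facts: it is the identity on $\dom(I_0)$ and maps each null of~$I$ to the corresponding null created by the matching $\incd$-firing in~$I'$. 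Since $I^+_0$ is by definition the closure of~$I_0$ under the original and derived truncated accessibility axioms, the base case for accessibility facts on $\dom(I_0)$ is immediate.

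The heart of the argument is an induction establishing that whenever $\accessible(c)$ is generated in~$I$, with $c$ occurring in some fact $F$ at a node $n$, then $\accessible(h(c))$ is generated in~$I'$ --- either copied into the node $h(n)$ during the $\incd$-step that creates it, or produced by the breadth-bounded saturation step at~$h(n)$. These two mechanisms correspond exactly to the two ways accessibility propagates in the original chase. For propagation across an $\incd$-firing, well-orderedness guarantees that the accessibility facts on the exported elements were already present in the parent node at the moment the child fact was created; this is precisely what the short-cut $\incd$-step copies from parent to child. For derivations taking place ``within'' a single fact using at most $w$ accessibility hypotheses, Proposition~\ref{prop:derivedaccessids} tells us that the corresponding derived truncated accessibility axiom of breadth at most~$w$ is available, and the saturation step applies all such axioms to a fixpoint; hence any such accessibility fact is reproduced locally at~$h(n)$.

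The main obstacle I anticipate is collapsing \emph{long-range} accessibility derivations: in the unrestricted (even well-ordered) chase, a single fact $\accessible(c)$ may be derived by a chain of firings that walks up and down many edges of the tree, whereas the short-cut chase is only permitted to copy along $\incd$-edges and to saturate locally with bounded-breadth derived axioms. Showing that every such chain can be replayed by local copies plus one saturation step is where well-orderedness, the width bound~$w$ (which caps how many elements are exported, and hence the breadth of the needed derived axiom), and the completeness of the saturation algorithm (Proposition~\ref{prop:derivedaccessids}) must all be combined. Concretely, I would argue that, thanks to well-orderedness, no accessibility fact the short-cut chase needs has to travel ``backwards'' toward an ancestor before becoming usable: each hypothesis is either already copied down the branch or present on the same fact, so a single derived axiom of breadth at most~$w$ suffices to reproduce it. Once all accessibility facts are matched, the homomorphism condition holds on $\incd$-facts by the shared tree structure and on accessibility facts by the induction, which completes the proof.
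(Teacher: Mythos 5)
Your proof takes a genuinely different route from the paper's. You simulate the given chase step by step, building the homomorphism by induction over a well-ordered version of the original proof and arguing that every accessibility fact is locally reproduced by copying plus bounded-breadth saturation. The paper instead avoids any step-by-step simulation: it runs both chases to completion, proves that the (infinite) short-cut chase result satisfies \emph{all} of $\Sigma$ and $\Delta'$ --- the key tool being Lemma~\ref{lem:normalizedclosureaccessids}, which shows that in the short-cut chase every accessibility fact $\accessible(c_i)$ is necessarily produced at the very node where $c_i$ is introduced (or already lies in $I_0^+$), because at that moment at most $w$ elements of the new fact carry accessibility facts and so a derived axiom of breadth $\leq w$ already applies --- and then obtains the homomorphism for free from universality of the chase. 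Well-orderedness (Lemma~\ref{lem:wellaccessids}) enters the paper's development only inside the completeness proof of the saturation algorithm (Proposition~\ref{prop:derivedaccessids}), not in this lemma. The paper's argument buys economy: one never has to track where in the tree each accessibility fact of $I$ reappears in $I'$. Your argument buys a more explicit, constructive homomorphism, but the delicate point you flag at the end is exactly where the work lies: it is not enough to show each $\accessible(c)$ of $I$ is reproduced \emph{somewhere} in $I'$; you must show it appears at the topmost node containing $c$ (so that the $\incd$-steps copy it down every branch where it is later needed as a hypothesis), and well-orderedness of the \emph{original} chase does not by itself control the short-cut chase. To close that step you would end up proving the analogue of Lemma~\ref{lem:normalizedclosureaccessids} --- the $\leq w$ non-fresh elements argument --- inside your induction, at which point the universality shortcut becomes the simpler way to finish.
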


To prove this lemma,
we start with an observation about the closure properties of short-cut chase proofs.
\begin{lemma} \label{lem:normalizedclosureaccessids}
  Let $I_0^+$ be an initial instance closed under the derived and original truncated accessibility
  axioms, and suppose that a short-cut chase proof
has a breadth-bounded saturation step producing a fact $G=\accessible(c_i)$.
Then $c_i$ is not an element of~$I_0^+$, and the node associated with the
  breath-bounded saturation step
was created by the $\incd$-step where $c_i$ is generated.
\end{lemma}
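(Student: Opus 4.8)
The plan is to analyze the single breadth-bounded saturation step that produces $G=\accessible(c_i)$ and to argue that the only way a genuinely new accessibility fact can arise is the ``fresh-null'' case, which immediately gives both conclusions. Recall that in a short-cut chase proof each node~$n$ carries exactly one fact $F_n$ over the original signature (the one created by the $\incd$-step that produced~$n$) together with accessibility facts, and that a breadth-bounded saturation step at~$n$ fires a derived truncated accessibility axiom whose unique guard atom is matched to $F_n$ and whose conclusion is $\accessible(c_i)$, with $c_i$ sitting at some position~$i$ of $F_n$. I split on whether position~$i$ is a \emph{fresh-null} position or an \emph{exported} position of the $\incd$-step that created~$n$. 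In the fresh case, $c_i$ is a null introduced exactly when $n$ was created, hence distinct from everything present earlier, so $c_i\notin\dom(I_0^+)$ and, by construction, $n$ is the node created by the $\incd$-step generating~$c_i$: both conclusions hold. It then remains only to rule out the exported case.

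So suppose position~$i$ is exported (or, as a degenerate subcase, $F_n$ already lies in $I_0^+$); I will show that $\accessible(c_i)$ was \emph{already present} before the step, contradicting the hypothesis that the step \emph{produces} it. If $F_n\subseteq I_0^+$, then the guard and all accessibility hypotheses of the firing axiom lie in $I_0^+$, and since $I_0^+$ is closed under the derived truncated accessibility axioms, $\accessible(c_i)$ is already in $I_0^+$. Otherwise $n$ was created by an $\incd$ $\dep$ from its parent~$n'$, and $c_i$ is shared with a position of the parent fact $F_{n'}$. The short-cut chase saturates each node to a fixpoint right after its creation and never revisits it, so when~$n$ was created the parent~$n'$ was already fully saturated and the relevant exported accessibility facts had been copied into~$n$. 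The goal is therefore to descend the axiom deriving $\accessible(c_i)$ at~$n$ into an axiom that already fired at~$n'$. I first fold away the hypotheses of the firing axiom that sit at fresh positions: processing them in the order the saturation of~$n$ derived them, each such $\accessible(c_t)$ was obtained from $F_n$ using only hypotheses already present at the time, hence, tracing back, from exported-position hypotheses only. Repeated application of the (Transitivity) rule of the saturation algorithm then produces a single derived axiom with guard $F_n$ whose hypothesis set is contained in the exported positions and whose conclusion is position~$i$. Since every hypothesis and the conclusion are now exported through~$\dep$, the ($\incd$) rule converts this into a derived axiom on the parent's relation over the corresponding positions of~$n'$; by Proposition~\ref{prop:derivedaccessids} it is among those applied during the completed saturation of~$n'$, its guard is $F_{n'}$, and its exported hypotheses are exactly those later copied into~$n$. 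Hence $\accessible(c_i)$ was derived at~$n'$, copied into~$n$ at creation, and is not new, the required contradiction.

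The main obstacle is the breadth bookkeeping in the folding step: the axiom producing $\accessible(c_i)$ may mix hypotheses at fresh and exported positions, and I must ensure that eliminating the fresh ones via (Transitivity) never pushes the hypothesis set above the breadth bound~$w$ required to invoke Proposition~\ref{prop:derivedaccessids}, and that the intermediate (Transitivity) applications themselves respect the $m\le w$ restriction built into that rule. This is precisely where the width bound on~$\Sigma$ is used: the exported positions of any single $\incd$ number at most~$w$, so the residual exported hypothesis set stays within~$w$; and because every fresh accessibility fact was itself produced by a \emph{breadth-bounded} derived axiom, each intermediate set of fresh positions being eliminated lies among the at most~$w$ hypotheses of such an axiom, keeping the (Transitivity) applications legal. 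Verifying these two bounds carefully, together with the (routine) claim that accessibility facts never appear at a node after its saturation completes, is what makes the descent through the ($\incd$) rule available and completes the argument.
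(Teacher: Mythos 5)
Your proof is correct and rests on the same mechanism as the paper's: collapse the derivation of $\accessible(c_i)$ into a single derived truncated accessibility axiom of breadth at most~$w$ (possible because the only accessibility facts available at a node, beyond those it derives itself, are the at most $w$ copied onto exported elements), then observe that this axiom would already have fired at an earlier saturation step --- at the parent, or inside the closed $I_0^+$ --- contradicting the assumption that the step in question \emph{produces} the fact. The paper organizes the cases differently: for $c_i \notin I_0^+$ it collapses the whole sub-derivation at the node where the null $c_i$ is born and concludes the axiom fires there, while for $c_i \in I_0^+$ it runs a minimal-depth descent; your single split on whether position~$i$ is fresh at the firing node, with a one-step descent to the parent otherwise, is a mild and arguably more uniform reorganization of the same argument. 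One point to tighten in your folding step: the hypotheses you must eliminate are not only those at fresh positions but also any accessibility facts on \emph{exported} elements that were derived during the node's own saturation rather than copied from the parent; as written you stop the trace-back at ``exported-position hypotheses,'' which need not all have been present at the parent. This is repaired either by a first-violation (minimality) argument over the firings, as the paper uses in its second case, or by replacing the syntactic folding with the semantic observation that everything present at a node is a consequence of its guard fact together with the at most $w$ initially copied accessibility facts.
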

\begin{proof}
We first consider the case where $c_i$ is not in~$I_0^+$, so it is 
  a null introduced in a fact $E=R(\vec c)$
  that was created by an $\incd$ trigger $\trig$. 
Let $n$ be the node of~$E$, and let $S=\accessible(c_{j_1})  \ldots \allowbreak \accessible(c_{j_l})$ be the set of
  accessibility facts that were true of the~$c_i$ when firing~$\trig$: the facts
  of~$S$ are
  present in~$E$.
Note that~$S$ has size at most $w$, since all but $w$ elements were fresh in~$E$
when the $\incd$ was fired. The node~$n$ must be an ancestor of the node
where $\accessible(c_i)$ is generated, because $n$ is an ancestor of all nodes
  where $c_i$ appears. Thus $G=\accessible(c_i)$  
is a consequence of~$E$ and the hypotheses $S$  under the constraints,
since it is generated via derived truncated accessibility axioms or constraints
  in~$\Sigma$. But then we know that
\[
  R(\vec x) \wedge  \left(\bigwedge_{k \leq l} \accessible(x_{j_k})\right) \rightarrow \accessible(x_i)
\]
is a derived truncated accessibility axiom
and it has breadth at most~$w$.  Hence, this axiom applied to generate
  $G$ from~$\{E\} \cup S$ when applying the breadth-bounded saturation step
  to~$E$, and indeed $G$ was created in the node $n$ where $c_i$ was generated.

  We now argue that $c_i$ cannot be in~$I_0^+$. Assuming to the contrary that
  it is, we know that the saturation
  step that produced~$G$ must have applied to a node which is not the root,
  as~$I_0^+$ is closed under the derived and original truncated accessibility
  axioms.
  We can assume that the depth
of the node~$n$ where $G$ is generated is minimal among all such counterexamples.
Then $G$ is generated at a node~$n$ corresponding to the firing of an $\incd$ from 
a node~$E$ to a node~$F$. But then arguing as above, $G$ must already follow  from
$E$ and the accessibility hypotheses that were present when the $\incd$ was fired, of which
  there are at most~$w$. Thus $G$
would have been derived in the breadth-bounded saturation step that followed $E$, which
  contradicts the minimality of~$n$.
\end{proof}

We now are ready to complete the proof of Lemma~\ref{lem:normalizationaccessids}:
\begin{proof}[\myproof of Lemma~\ref{lem:normalizationaccessids}]
We can extend $I$ to a full chase instance (possibly infinite), denoted $I_\infty$. Likewise,
we can continue the short-cut chase process indefinitely, letting $I'_\infty$ be the resulting facts.
It is clear that $I'_\infty$ satisfies the constraints of~$\Sigma$, and
   we claim that~$I'_\infty$ also satisfies the constraints of~$\Delta'$. 
Assume by contradiction that there is an
  active trigger in~$I'_\infty$: it is
  a trigger for an original truncated accessibility axiom in~$\Delta'$, with
  facts $\left(\bigwedge \accessible(c_{m_j})\right) \wedge R(\vec c)$, whose firing would have produced fact $\accessible(c_i)$.
Consider the node~$n$ where $R(\vec c)$ occurs in the short-cut chase proof. 
If $n$ is the root node corresponding to~$I_0^+$, then we know by
  Lemma~\ref{lem:normalizedclosureaccessids} that any accessibility facts on elements
  of~$I_0^+$ must have been generated in~$I_0^+$, i.e., must have been
  already present there, because $I_0^+$ is already saturated; hence, we
  conclude that the trigger is in~$I_0^+$, hence it is not active because
  $I_0^+$ is closed under the original truncated accessibility axioms. Hence,
  $n$ is not the root node.

  Now, if the node~$n$ is not the root, then by
  Lemma~\ref{lem:normalizedclosureaccessids}, each fact
$\accessible(c_{m_j})$ must have been present at the time $R(\vec c)$ was generated
  Hence, the breadth-bounded saturation step at~$n$ should
  have resolved the trigger, so we have a contradiction.

Since instance $I'_\infty$ satisfies  the constraints, there is
a homomorphism $h$ from the full infinite chase $I_\infty$ to that instance, by universality
of the chase \cite{fagindataex}. Letting $I'$ be the image of~$I$, we get the
  desired conclusion.
\end{proof}

\subsubsection{Concluding the Proof of Proposition~\ref{prop:linearizeaccessids}}
\label{app:prflinearizeaccessids}

We now present our definition of the set of $\incd$s $\Sigma^{\lift}$ that will simulate the
chase by~$\Sigma$ and~$\Delta$. Thanks to what precedes
(Lemma~\ref{lem:normalizationaccessids}), we know that it suffices to simulate
the short-cut chase.

We start by calling $\Delta^+$ the set of derived
truncated accessibility axioms calculated using
Proposition~\ref{prop:derivedaccessids} on~$\Sigma$ and~$\Delta$.
To define these axioms, when considering a relation $R$, a subset $P$ of the
positions of~$R$, and a position $j$ of~$R$, we will say that $P$
\emph{transfers} $j$ if $\Delta^+$ contains the following derived truncated
accessibility axiom: \[\left(\bigwedge_{i\in P} \accessible(x_i)\right) \wedge
R(\vec x) \rightarrow \accessible(x_j).\]
The set of positions $P'$ of~$R$ transferred by~$P$ is then the set of positions~$j$
such that $P$ transfers $j$. In particular, note that we always have $P'
\supseteq P$.

We now define $\Sigma^{\lift}$ as follows:

\begin{itemize}
  \item (Transfer): Consider a relation $R$, and a subset $P$ of positions of~$R$ of
    size at most~$w$. Let $P'$ be the set of positions transferred by~$P$.
    If $P'$ contains the set of input positions of some access method on~$R$,
    then 
    we add the full $\incd$:
    \[
      R_P(\vec x) \rightarrow R'(\vec x)
    \]

  \item (Lift):
  Consider an $\incd$  $\dep$ of~$\Sigma$, 
  \[R(\vec u) \rightarrow \exists \vec z ~ S(\vec z,\vec
  u),\]
    For every subset $P$ of positions of~$R$ of size at most~$P$, we let $P'$
    be the set of positions transferred by $P$.
We let $P''$ be the intersection
    of~$P'$ with the exported positions in the body of~$\dep$, and we let $P'''$
    be the subset of the exported positions in the head of~$\dep$ that
    corresponds to~$P''$.  Then we add the dependency:
    \[R_P(\vec u) \rightarrow \exists \vec z ~ S_{P''}(\vec z,\vec u)\]
\end{itemize}
We also need to describe the effect of~$\Sigma$ and $\Delta$ when we start the
chase.
We recall that $\sign$ denotes the signature of the schema, and that the
constraints of $\Sigma$ are expressed on~$\sign$,
that the constraints $\Sigma'$ are expressed on a primed copy $\sign'$
of~$\sign$, and that $\Delta$ is expressed on~$\sign$, $\sign'$, and the unary
relation~$\accessible$.
Given a CQ~$Q$, let $I_0 \colonequals \canondb(Q)$ be its canonical database, and
let $I_0^\lift$ be formed by adding atoms to $I_0$ as follows. 

\begin{itemize}
\item Apply all of the truncated accessibility axioms of~$\Delta^+$
  to~$I_0$ to obtain~$I_0'$.
\item  For any relation~$R$ of the signature~$\sign$, and
  for every fact $R(a_1 \ldots a_n)$ of~$I_0'$, let $P$ be the
  set of the~$i \in \{1 \ldots n\}$ such that~$\accessible(a_i)$ holds
  in~$I_0'$.  For every $P' \subseteq P$ of size at most~$w$, 
  add to~$I_0^\lift$ the fact $R_{P'}(a_1 \ldots a_n)$. Further, in the case
    where
    $\accessible(a_i)$ holds for each $1 \leq i \leq n$, then we add the fact $R'(a_1
    \ldots a_n)$.
\end{itemize}

It is now easy to see that $\Sigma^\lift$ and $I_0^\lift$
satisfy the required conditions: for every set of primed facts~$I$ derivable
from $I_0$ by chasing with $\Sigma$ and~$\Delta$, we can derive the same set of
primed facts from~$I_0^\lift$ by chasing with $\Sigma^\lift$.
Indeed, chasing with the Lift rules creates a tree of facts that
corresponds exactly to a short-cut chase proof: when we create an $R_P$-fact,
the $P$ subscript denotes exactly the set of positions of the new facts that
contains exported elements that are accessible. It is then easy to see that the
(Transfer) rules creates primed facts exactly for facts that can be transferred
by applying some method.

The only thing left to do is to notice that $\Sigma^\lift$ has bounded semi-width,
but this is because the rules (Lift) have bounded width
and the rules (Transfer) clearly have an acyclic position graph. This
concludes the proof.

\subsection{Proof of Proposition~\ref{prop:semiwidthclassic}: Containment under
IDs of Bounded Semi-Width}
\label{apx:semiwidthclassic}

We recall the statement of 
Proposition~\ref{prop:semiwidthclassic}.

\medskip

\semiwidthclassic

\medskip

We recall the definition of semi-width from the body, generalizing it slightly from $\incd$s.
The \emph{basic position graph} of a set of TGDs
$\Sigma$ is the directed graph whose nodes are the positions of relations
in~$\Sigma$,  with an edge from~$R[i]$ to~$S[j]$ if and only
if there is a dependency $\dep \in \Sigma$ with exported
variable $x$ occurring in position~$i$ of~$R$ in the body
of~$\dep$ and position~$j$ of~$S$ in the head of~$\dep$.
We say that a collection of TGDs $\Sigma$ has \emph{semi-width} bounded
by $w$ if it can be decomposed into~$\Sigma_1 \cup \Sigma_2$
where $\Sigma_1$ has width bounded by $w$ and
the position graph of~$\Sigma_2$ is acyclic.

Consider  a chase sequence based on the canonical database $I_0$ of a conjunctive query
$Q$, using  a collection of $\incd$s $\Sigma$.
The collection of facts generated by this sequence can be given the structure of a tree, where there is a root node
associated with~$I_0$, and one node~$n_F$ for each generated fact $F$.
If performing a chase step on fact $F$ produces fact $F'$ in the sequence, then
the node~$n_{F'}$ is a child of the node~$n_F$. We refer to this as the \emph{chase tree} of the sequence.

Consider nodes $n$ and $n'$ in the chase tree, with~$n$ a strict ancestor of~$n'$.
We say they $n$ and $n'$ are \emph{far apart}
 if there are distinct generated facts $F_1$ and $F_2$ such that the node $n_1$
 corresponding to~$F_1$ and the node~$n_2$ corresponding to~$F_2$ are both ancestors
 of~$n'$ and descendants of~$n$, if $n_1$ is an ancestor of~$n_2$,
if $F_1$ and $F_2$ were generated by the same rule of~$\Sigma$,
and if any value of~$F_1$ which occurs in~$F_2$ occurs
in the same positions within  $F_2$ as in~$F_1$.
If such an $n$ and $n'$ are not far apart, we say
that are \emph{near}.

Given a match $h$ of~$Q$ in the chase tree, its
\emph{augmented image} is the closure of its image
under least common ancestors. If $Q$ has size $k$ then
this has size $\leq 2k$.
For nodes $n_1$ and $n_2$ in the augmented image,
we call $n_1$ the \emph{image parent of} $n_2$ 
if $n_1$ is the lowest  ancestor
of~$n_2$ in the augmented image.

The analysis of Johnson and Klug is based on the following lemma:
\begin{lemma} If $Q$ has a match in the chase, then there is 
a match $h$ with the property that if
$n_1$  is the image parent of~$n_2$
then $n_1$ and $n_2$ are near.
\end{lemma}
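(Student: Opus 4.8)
The plan is to argue by minimality: among all matches of $Q$ in the (possibly infinite) chase, I would fix one minimizing a depth measure, and then show that a far-apart image-parent pair would let me \emph{fold} the chase to produce a match of strictly smaller measure, a contradiction. Concretely, let $I_\infty$ be the full chase obtained by continuing the given sequence to saturation, so that every fact that can be generated is present; a match of $Q$ ``in the chase'' is a homomorphism $h$ from $Q$ into $I_\infty$. For each atom $a$ of $Q$ write $n(a)$ for the node of the chase tree carrying the fact $h(a)$, and set $\mu(h) \colonequals \sum_a \mathrm{depth}(n(a))$. Since depths are finite naturals and at least one match exists by hypothesis, I can fix a match $h$ minimizing $\mu$, and the goal becomes to show that this $h$ already has the near property.

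Before folding I would record the guardedness property of the $\incd$ chase: every non-root fact $F$ guards the subtree below its node $n_F$, in the sense that this subtree is exactly the chase generated from $F$, and the only values it shares with the rest of $I_\infty$ are values occurring in $F$ itself. This is immediate from the shape of $\incd$s (single body atom, single head atom, no repeated variables), which forces each chase step to propagate values strictly downward through the fact that triggered it.

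Now suppose for contradiction that some image parent $n_1$ and image child $n_2$ are far apart, with witnesses $G_1, G_2$ carried by nodes $m_1, m_2$, where $m_1$ is a strict ancestor of $m_2$, both lie on the path from $n_1$ to $n_2$, $G_1$ and $G_2$ are generated by the same rule of $\Sigma$, and every value common to $G_1$ and $G_2$ occupies the same positions in both. Because $G_1$ and $G_2$ come from the same $\incd$, the subtree below $m_2$ is isomorphic to the subtree below $m_1$ via an isomorphism $\iota$ sending $G_2$ to $G_1$; the equal-positions hypothesis lets me choose $\iota$ to fix exactly the values shared by $G_1$ and $G_2$, and the guardedness property guarantees that these shared values are the only values of the subtree below $m_2$ that also occur elsewhere in $I_\infty$. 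Extending $\iota$ by the identity off that subtree therefore yields a well-defined endomorphism $g$ of $I_\infty$. The composite $h' \colonequals g \circ h$ is again a homomorphism from $Q$ into the chase, hence a match. Every image node lying in the subtree rooted at $m_2$ has its depth decreased by the distance from $m_1$ to $m_2$, which is at least $1$, while all other image nodes are fixed by $g$ and keep their depth. Since $m_2$ is an ancestor of $n_2$, the node $n_2$ lies in that subtree, and the augmented image contains an actual image node at or below $n_2$ (if $n_2$ is not itself an image node it is a least common ancestor, whose witnesses lie strictly below it); hence at least one atom of $Q$ has its image depth strictly reduced and none is increased, so $\mu(h') < \mu(h)$, contradicting minimality.

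The main obstacle is the construction of the folding endomorphism $g$. The two ingredients that make it go through are the guardedness property of the $\incd$ chase, which confines the interaction of the subtree below $m_2$ to the values of $G_2$, and the equal-positions clause in the definition of \emph{far apart}, which is precisely what is needed for the isomorphism $G_2 \mapsto G_1$ to fix those shared values and so glue with the identity into a single homomorphism of $I_\infty$. Once $g$ is in hand the depth bookkeeping is routine, and termination is free: the argument concludes through the choice of a $\mu$-minimal match rather than through an explicit induction on the number of foldings.
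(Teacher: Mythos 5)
Your overall strategy (fix a saturated chase $I_\infty$, take a match minimizing the summed depth $\mu$, and show that a far-apart image-parent pair yields a ``folding'' that strictly decreases $\mu$) is a reasonable variant of the paper's argument, which instead performs surgery on the chase \emph{sequence} --- excising the interval between $F_1$ and $F_2$ and re-rooting the lower subtree at $F_1$ --- and inducts on the number of violations and their depths. But your proof has a concrete false step: the claim that ``the shared values of $G_1$ and $G_2$ are the only values of the subtree below $m_2$ that also occur elsewhere in $I_\infty$,'' on which the well-definedness of the endomorphism $g$ rests. The values of the subtree below $m_2$ that occur outside it are the values exported into $G_2$ from its parent, and these include nulls introduced at nodes \emph{strictly between} $m_1$ and $m_2$; such a null occurs in the interval (outside the subtree, where $g$ is the identity) yet need not occur in $G_1$ at all, so $\iota$ moves it. Take $\Sigma=\{R(x,y)\rightarrow\exists z\,R(y,z)\}$ chased from $R(a,b)$, giving $R(a,b),R(b,c_1),R(c_1,c_2),R(c_2,c_3),\dots$; with $F_1=R(b,c_1)$ and $F_2=R(c_2,c_3)$ the far-apart condition holds vacuously (no shared values), but $c_2$ occurs both in $F_2$'s subtree and in the interval fact $R(c_1,c_2)$, while $\iota(c_2)=b\neq c_2$. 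So $g$ is not a function on values and $h'=g\circ h$ is not defined as written.

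The missing idea --- which is exactly the observation the paper makes in its parenthetical --- is that because $n_1$ is the \emph{image parent} of $n_2$ and the augmented image is closed under least common ancestors, no node strictly between $n_1$ and $n_2$, and no subtree hanging off that interior, contains an element of the augmented image. Hence you only need your value map to be consistent on values occurring in facts of $h(Q)$, and for those a shared-value argument does go through: a value of $h(Q)$ appearing both inside the subtree below $m_2$ and outside it must have been introduced at an ancestor of $n_1$ and must therefore pass through $G_1$, so it is a shared value of $G_1$ and $G_2$ and is fixed by $\iota$. Without this observation your proof does not close. A secondary, repairable issue: the subtrees below $m_1$ and $m_2$ need not be \emph{isomorphic} in the restricted (active-trigger) chase, since a trigger can be active at $G_2$ but not at $G_1$; you should either work with the oblivious chase, or obtain $\iota$ from universality of the chase of $\{G_2\}$ --- and in the latter case your depth accounting (``every image node drops by $\mathrm{dist}(m_1,m_2)$'') needs a separate justification, which is another reason the paper prefers to rebuild the chase tree rather than map into the old one.
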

\begin{proof}
We prove this by induction on the number of violating $n_2$'s
 and
the sum of the depths of the violations in the tree.
If $n_1$ is far apart from~$n_2$, then there are witnesses
$F_1$ and $F_2$ to this.
   We eliminate the interval between $F_1$
and $F_2$ (along with the subtrees hanging off of them, which
by assumption do not contain any match elements).
We  adjust  $h$ accordingly.
In doing this we reduce the sum of the depths, while no new violations 
are created, since the image parent relationships  are preserved.
Iterating this operation
we must achieve a tree where the nodes corresponding
to~$n_1$ and $n_2$ are  near 
 and thus the number of violations decreases.
\end{proof}

Call a match $h$ of~$Q$ in the chase \emph{tight} if it
has the property given in the lemma above. The \emph{depth}
of the match is the depth of the lowest node in its image.
The next observation, also due to Johnson and Klug, is that when the width
is bounded, tight matches can not occur  far down in the tree:

\begin{lemma} \label{lem:depthbound} If $\Sigma$ is a set of $\incd$s of width $w$
and the schema has arity bounded by $m$,
then a tight match  of size $k$  has depth at most
$k \cdot |\Sigma| \cdot m^{w+1}$.
\end{lemma}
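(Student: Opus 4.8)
The plan is to bound the length of any root-to-leaf interval of the chase tree lying between two \emph{near} nodes, and then to assemble these local bounds along the augmented image of the match. Recall from the preceding (tight-match) lemma that it suffices to treat a tight match $h$: one in which every image node $n_2$ is near its image parent $n_1$. Since $Q$ has size $k$, its augmented image has at most $2k$ nodes, and the path in the chase tree from the root down to the lowest node of the image passes through image nodes only, each consecutive pair standing in the image-parent/image-child relation. Hence it is enough to show that whenever an ancestor $n_1$ and a descendant $n_2$ are near, their distance in the chase tree is bounded by a quantity of order $|\Sigma| \cdot m^{w+1}$; multiplying by the at most $2k$ image nodes then yields a depth bound of order $k \cdot |\Sigma| \cdot m^{w+1}$, as claimed.

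To bound the near-distance, I would run a pigeonhole argument on the sequence of facts $G_0, \dots, G_t$ lying on the interval from $n_1$ to $n_2$, where each $G_{i+1}$ is obtained from $G_i$ by firing a single $\incd$ $\delta_i \in \Sigma$. The central structural feature of the $\incd$ chase is that values propagate only from a fact to its children and that fresh nulls are globally new; consequently the set of facts containing any fixed value forms a contiguous segment of the interval, and at each step at most $w$ values are exported (the width bound). In particular, any value shared between two facts $G_i$ and $G_j$ with $i<j$ must occur in every intermediate fact and must pass through the width-$w$ export bottleneck at each step, so $G_i$ and $G_j$ share at most $w$ values.

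The key step is to attach to each fact $G_i$ a \emph{type} drawn from a set of size at most $|\Sigma| \cdot m^{w+1}$, designed so that two facts of equal type are forced to be far apart. The type records the rule $\delta_i$ that produced $G_i$ together with a bounded description of where its propagated values sit: since at most $w$ positions of $G_i$ (out of at most $m$, as $\arity$ is bounded by $m$) carry values continuing downward, one obtains at most $|\Sigma| \cdot m^{w+1}$ possible types. If the interval were longer than this, two facts $G_i$ and $G_j$ would receive the same type; I would then argue that they are far apart --- same rule by construction, and their shared values (at most $w$, all forced through the tracked positions) occurring at identical positions --- contradicting nearness of $n_1$ and $n_2$.

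The main obstacle is precisely the design of this type and the verification that equal type forces positional consistency of \emph{all} shared values. The delicate point is that a value shared between $G_i$ and $G_j$ need not survive to $n_2$: it may be born as a fresh null at or after $G_i$ and die before $G_j$, so tracking only the values of the bottom fact $\adom(G_t)$ is insufficient. Handling this requires recording in the type enough of the correspondence between inherited and exported positions --- not merely the \emph{sets} of such positions --- so that the cumulative export map along the interval acts as the identity on any value living across both $G_i$ and $G_j$. Once this invariant is in place, the pigeonhole and the assembly along the augmented image finish the argument; the remaining checks (that folding a far-apart interval preserves a tight match, already invoked in the preceding lemma, and that the constant factors combine into the stated form) are routine.
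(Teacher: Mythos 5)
Your overall strategy is the same as the paper's: reduce to bounding the length of each image-parent interval of a tight match, pigeonhole the facts on that interval by a bounded ``type,'' and multiply by the (at most $2k$) nodes of the augmented image. The place where your write-up is actually \emph{more} careful than the paper's two-line argument is also where it stops being a proof: you correctly identify that the crux is designing a type of bounded cardinality such that \emph{type equality forces far-apartness}, you correctly observe that the natural candidates fail, and then you do not supply a type that works. The failure mode you flag is real and concrete: with the width-$1$ $\incd$s $T(x_1,x_2) \rightarrow \exists z\, S(x_2,z)$ and $S(y_1,y_2) \rightarrow \exists z\, T(z,y_2)$, the path below $T(a,b)$ contains $S(b,n_1), S(n_1,n_3), S(n_3,n_5),\dots$, and two \emph{consecutive} $S$-facts are generated by the same rule, carry no value of the top fact (so they agree on any configuration defined from $h(x')$, as in the paper, or from the bottom fact, as in your first attempt), yet share a null at swapped positions and hence do \emph{not} witness far-apartness. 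So ``same rule $+$ same placement of tracked endpoint values'' does not imply far-apartness, and the pigeonhole does not close.

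Your proposed repair does not close it either. For $\incd$s the correspondence between inherited and exported positions \emph{within one step} is determined by the rule itself, so ``recording it'' adds nothing to a type that already records the rule; and recording the \emph{cumulative} export correspondence across the whole interval is not a bounded-size label. Demanding that ``the cumulative export map acts as the identity on any value living across both $G_i$ and $G_j$'' is not an invariant you can pigeonhole on --- it is a restatement of the far-apartness condition you are trying to establish. What is missing is the one genuinely combinatorial step: an argument that among sufficiently many same-rule facts on the interval, \emph{some} pair has all of its shared values positionally consistent (note that in the example above the pair $S(b,n_1)$, $S(n_3,n_5)$ works because it shares nothing, so far-apartness holds vacuously at distance four --- the lemma is true, but one must argue for the existence of such a pair rather than claim every same-type pair works). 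To be fair, the paper's own proof pigeonholes only on the rule and the placement of the at most $w$ surviving values of $h(x')$ and asserts the contradiction with tightness in one sentence, so it elides the same point; but your proposal explicitly names this as ``the main obstacle'' and then defers it, which leaves the proof incomplete exactly where the work is.
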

\begin{proof}
We claim that the length of the path  between a match element $h(x)$
and its image parent $h(x')$
 must be at most $|\Sigma| \cdot m^{w+1}$.
At most $w$ values from~$h(x')$ are present
in any fact on the path,  and thus the number of configurations
that can occur is at most $m^{w+1}$.
  Thus after
 $|\Sigma| \cdot m^{w+1}$ there will be two elements which
repeat both the rule and the configuration of the values, which
would contradict tightness.
\end{proof}

Johnson and Klug's result follows from combining the previous two lemmas:
\begin{proposition}[\cite{johnsonklug}]
  \label{prop:jkwidth}
For any fixed  $w \in \NN$, 
there is an $\np$ algorithm for query containment under  $\incd$s of
width at most $w$.
\end{proposition}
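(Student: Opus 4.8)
The plan is to combine the two preceding lemmas into a nondeterministic polynomial-time procedure for deciding $Q \subseteq_\Sigma Q'$, using the standard characterization of containment under $\incd$s via the chase. First I would recall that, since $\Sigma$ consists of $\incd$s (hence TGDs), query containment is witnessed by the chase: $Q \subseteq_\Sigma Q'$ holds if and only if $Q'$ has a homomorphic image in $\chase(\canondb(Q),\Sigma)$, which is the universality property already invoked in the preliminaries. The chase with $\incd$s carries the tree structure described above, so deciding containment amounts to searching for a match $h$ of $Q'$ in this possibly infinite chase tree.

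The heart of the argument is to show that whenever such a match exists, it can be exhibited by a \emph{polynomial-size} fragment of the chase tree. By the tightness lemma, if $Q'$ matches at all then it has a tight match $h$; the augmented image of $h$ (its image closed under least common ancestors) has at most $2k$ nodes, where $k = |Q'|$. For each node $n_2$ of the augmented image with image parent $n_1$, tightness guarantees that $n_1$ and $n_2$ are near, and the proof of Lemma~\ref{lem:depthbound} shows the path between them has length at most $|\Sigma| \cdot m^{w+1}$, where $m$ bounds the arity. Since $w$ is fixed, $m^{w+1}$ is polynomial in the input, so the union of all image-parent-to-child paths forms a subtree containing at most $2k \cdot |\Sigma| \cdot m^{w+1}$ facts --- polynomially many --- and this subtree, rooted at the fragment coming from $\canondb(Q)$, carries all of $h$.

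This yields the $\np$ algorithm: guess a collection of at most $2k \cdot |\Sigma| \cdot m^{w+1}$ facts arranged as a tree rooted at $\canondb(Q)$; verify in polynomial time that each non-root fact is generated from its parent by a single firing of some $\incd$ of $\Sigma$, which is a purely local check; and verify that $Q'$ maps homomorphically into the union of the guessed facts. By completeness of the chase together with the size bound just established, this procedure accepts exactly when $Q \subseteq_\Sigma Q'$. The step I expect to be the main obstacle --- and the reason both lemmas are needed --- is that the depth bound alone does not bound the \emph{size} of the chase tree, which can branch exponentially even at polynomial depth; it is the tightness lemma that lets us restrict attention to the $O(k)$ nodes of the augmented image and the polynomial-length paths linking them, thereby collapsing an exponential search space into a polynomial certificate.
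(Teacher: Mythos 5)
Your proposal is correct and follows essentially the same route as the paper: the paper's proof likewise guesses a polynomial-size fragment of the chase tree (phrased as ``$k$ branches of depth at most $k \cdot |\Sigma| \cdot m^{w+1}$'') together with a match, relying on the tightness lemma and Lemma~\ref{lem:depthbound} exactly as you do. Your write-up merely spells out the verification step (local checking of $\incd$ firings and of the homomorphism) and the accounting via the augmented image, which the paper leaves implicit.
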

\begin{proof}
We guess $k$ branches of depth at most
$k \cdot |\Sigma| \cdot m^{w+1}$ in the chase and
a match in them.
\end{proof}

We now give the  extension of this calculation for
bounded semi-width.

Recall from the body that a collection of $\incd$s $\Sigma$ has \emph{semi-width} bounded
by $w$ if it can be decomposed into~$\Sigma_1 \cup \Sigma_2$
where $\Sigma_1$ has width bounded by $w$ and
the position graph of~$\Sigma_2$ is acyclic.

An easy modification of Proposition~\ref{prop:jkwidth}
now completes the proof of Proposition
\ref{prop:semiwidthclassic}:
\begin{proof}
We revisit the argument of Lemma~\ref{lem:depthbound}.
As in that argument, it suffices
to show that the length of the path  between a match element $h(x)$
and its closest ancestor $h(x')$ in the image
 must be at most $|\Sigma| \cdot m^{w+1}$.
As soon as we apply a rule of~$\Sigma_1$ along
the path, at most $w$ values are exported, and so the
remaining path is bounded as before.
Since $\Sigma_2$ has an acyclic position graph, 
a value in~$h(x')$ can propagate for at most $|\Sigma_2|$ steps 
when using rules of~$\Sigma_2$ only. Thus after
at most $|\Sigma_2|$  edges in a path
we will either have no values propagated (if we used only
rules from~$\Sigma_2$) or at most $w$ values (if we used
  a rule from~$\Sigma_1$). Thus we can bound the path size
  by the previous bound plus a factor of~$|\Sigma_2|$.
\end{proof}

We will need a slight strengthening of
this result in Appendix~\ref{app:complexitygeneral},
which works with \emph{linear TGDs} rather than IDs: these are
TGDs with a single atom in the body, but allowing
repetition of variables in either body or head.

Our strengthened  result is:

\newcommand{\semiwidthclassicgeneral}{%
For fixed  $w$,
there is an $\np$ algorithm for query 
containment under linear TGDs of
semi-width at most~$w$.
}%
\begin{proposition} \label{prop:semiwidthclassic-general}
  \semiwidthclassicgeneral
\end{proposition}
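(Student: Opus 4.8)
The plan is to follow the proof of Proposition~\ref{prop:semiwidthclassic} essentially verbatim, reusing the chase tree, the notions of \emph{near} and \emph{far apart} nodes, the \emph{augmented image} and \emph{image parent} of a match, the notion of a \emph{tight} match, and the final $\np$ guess of a bounded set of branches. The first thing I would check is that the chase by linear TGDs still carries a tree structure: since the body of a linear TGD is a single atom, every chase step fires on a single fact, so each generated fact has a unique \emph{parent}, namely the body fact of the step that created it. The two genuine differences from the $\incd$ case are that (a)~a single step may now create several facts at once, one per head atom, which I would treat as \emph{sibling} nodes that may share the existential nulls of that step, and (b)~a generated fact may contain \emph{repeated} values, because a linear TGD may repeat variables in its body or head. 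The definitions of near/far-apart, augmented image, image parent, and tightness refer only to rules, values, positions, and ancestry, so they transfer unchanged, and I would simply restate them on this tree.

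The main obstacle, and the step I would carry out most carefully, is the folding argument establishing the analogue of the tight-match lemma (the lemma preceding Lemma~\ref{lem:depthbound}). Recall that when an image parent $n_1$ and its image child $n_2$ are far apart, there are facts $F_1$ (an ancestor) and $F_2$ (a descendant) generated by the same rule such that every value of $F_1$ occurring in $F_2$ occurs at the same positions. I would excise the interval strictly between $F_1$ and $F_2$, together with the hanging subtrees carrying no match element, and reattach the subtree below $F_2$ directly to $F_1$. Concretely I would define a retraction homomorphism $g$ that sends $F_2$ to $F_1$, sends each existential null of $F_2$ to the corresponding null produced at $F_1$ by the same rule, and is the identity further below; composing $h$ with $g$ then yields a match of strictly smaller depth and creates no new far-apart violations. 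The point to verify is that $g$ is well defined despite repeated variables: since $F_1$ and $F_2$ instantiate the \emph{same} rule, they realize the \emph{same} repetition pattern on their frontier and existential positions, so identifying them position-by-position is consistent, and since the shared values already sit at identical positions, $g$ fixes them. For multi-atom heads I would drop the sibling facts of $F_2$ that carry no match element, mapping any nulls shared within a step to the corresponding nulls of the step at $F_1$.

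With the tight-match lemma in hand, I would re-run the depth bound exactly as in the proof of Proposition~\ref{prop:semiwidthclassic}, using the decomposition $\Sigma = \Sigma_1 \cup \Sigma_2$ with $\Sigma_1$ of width at most $w$ and $\Sigma_2$ of acyclic position graph (both width and the position graph are defined through frontier variables, so repeated variables only add edges and leave acyclicity a meaningful, checkable property). Along the path between an image child and its image parent, any application of a $\Sigma_1$ rule retains at most $w$ frontier values, while $\Sigma_2$ can propagate a value for at most $|\Sigma_2|$ steps before acyclicity blocks it, so it suffices to bound the number of distinct \emph{configurations} of the at most $w$ tracked values in a generated fact. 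Here I would stress the one place where repeats could seem dangerous: a priori a tracked value might occupy an arbitrary subset of the $m$ positions, giving exponentially many patterns in the arity. This does not happen, because the occurrence pattern of the tracked values in a freshly generated fact is dictated by the fixed rule: it is determined by the rule, the chosen head atom, and the assignment of the $\le w$ tracked values to the rule's $\le w$ frontier variables. For fixed $w$ the number of such assignments is constant, so the number of configurations stays polynomial in $|\Sigma|$, and each image-parent-to-child path is polynomially bounded, exactly as in Proposition~\ref{prop:semiwidthclassic}.

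Finally I would conclude as in Proposition~\ref{prop:jkwidth}: a tight match of size $k$ has polynomial depth, so the $\np$ algorithm guesses the at most $2k$ branches of the augmented image, each of polynomial depth, guesses the match into them, and verifies in polynomial time that each guessed branch is a legal chase branch from $\canondb(Q)$ and that the guessed map is a homomorphism from $Q'$ into the union of these branches. I expect the folding step of the second paragraph to be the crux, since it is the only place where the combinatorial structure of repeated variables and multi-atom heads interacts with the surgery on the chase tree; the depth bound and the $\np$ wrap-up are then routine adaptations of the $\incd$ arguments, once the configuration count is seen to remain polynomial.
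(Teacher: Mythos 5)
Your proposal is correct and follows exactly the route the paper takes: the paper's own proof of Proposition~\ref{prop:semiwidthclassic-general} is the one-line remark that the argument for $\incd$s carries over verbatim once the chase tree is defined analogously for linear TGDs. You have merely (and correctly) filled in the two details the paper leaves implicit --- that the folding retraction remains well defined under repeated variables because $F_1$ and $F_2$ instantiate the same rule and hence the same repetition pattern, and that the configuration count stays polynomial because the occurrence pattern of the $\le w$ tracked values in a generated fact is dictated by the rule and the assignment to its frontier variables.
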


The proposition is proven exactly as in the case of $\incd$s, defining the chase
tree for linear TGDs analogously as how we defined it for $\incd$s.

\subsection{Proof of Theorem~\ref{thm:npidsbounds}: Complexity of Monotone
Answerability for Bounded-Width IDs}
\label{apx:npidsbounds}

We now prove Theorem~\ref{thm:npidsbounds}. 
Recall the statement:

\begin{quote}
  \npidsbounds
\end{quote}

In the main text, we have only sketched the proof in the case without result
bounds. We first complete the proof in the case without result bounds, and then
extend it to support result bounds.

\subsubsection{Proving Theorem~\ref{thm:npidsbounds} without Result Bounds}

Recall from the body of the paper that, in the absence of result bounds, the containment for~$\amd$ is
$Q \subseteq_\Gamma Q'$, where 
$\Gamma$ consists of the bounded-width $\incd$s $\Sigma$, their primed copy
$\Sigma'$, and $\Delta$ which includes, for each access method $\mt$ on a relation
$R$ with input positions $\vec x$:
\begin{itemize}
  \item (Truncated Accessibility): $\left(\bigwedge_i \accessible(x_i)\right) \wedge
  R(\vec x, \vec y) \rightarrow \bigwedge_i
  \accessible(y_i)$
\item (Transfer): $\left(\bigwedge_i \accessible(x_i)\right) \wedge R(\vec x, \vec y)
  \rightarrow  R'(\vec x, \vec y)$
\end{itemize}

Recall the proof of Proposition~\ref{prop:linearizeaccessids}, and the
definition of $\Sigma^\lift$, which consists of IDs created by bullet point
(Lift) and of IDs created by bullet point (Transfer) in
Appendix~\ref{app:prflinearizeaccessids}. 
Let $\Gamma_{\bounded}$ consist of~$\Sigma'$ and of the IDs of~$\Sigma^\lift$
created by bullet point (Lift), and let
$\Gamma_{\acyclic}$ consist of the rules of~$\Sigma$ created by bullet point
(Transfer).

We now claim the following:

\begin{claim}
$\amd$ is equivalent to checking whether the chase of~$I_0^\lift$ by 
$\Gamma_{\bounded} \cup \Gamma_{\acyclic}$ satisfies~$Q'$,
  where the instance $I_0^\lift$ is obtained from~$I_0 \colonequals \canondb(Q)$ by applying derived truncated accessibility axioms
and the original axioms.
\end{claim}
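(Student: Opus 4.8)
The plan is to obtain the claim by assembling the reduction of answerability to query containment with the linearization machinery already in place. First I would apply Proposition~\ref{prop:reduce}, which (in the no-result-bound case) tells us that $Q$ is $\amd$ iff $Q \subseteq_\Gamma Q'$ holds for $\Gamma = \Sigma \cup \Sigma' \cup \Delta$. Since $\Gamma$ consists only of TGDs, this containment is witnessed by the chase, so it holds iff $Q'$ has a match in the (possibly infinite) chase of $I_0 \colonequals \canondb(Q)$ by $\Gamma$. The crucial observation is that $Q'$ mentions only the primed relations $R'$, so the whole question reduces to comparing \emph{which primed facts are derivable}. I would also record at the outset that $\Gamma_{\bounded} \cup \Gamma_{\acyclic}$ is exactly $\Sigma' \cup \Sigma^\lift$, namely the primed inclusion dependencies together with the Lift and Transfer rules of $\Sigma^\lift$, and that in the $\Gamma$-chase the primed facts arise by seeding $R'$-facts via the Transfer axioms of $\Delta$ from accessible unprimed facts and then propagating them with $\Sigma'$; since $\Sigma'$ never touches the unprimed or accessibility relations, the primed facts of the $\Gamma$-chase are precisely the $\Sigma'$-closure of these seeds.

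For the forward direction ($\amd \Rightarrow$ the $\Sigma' \cup \Sigma^\lift$-chase of $I_0^\lift$ satisfies $Q'$) I would invoke Proposition~\ref{prop:linearizeaccessids} directly: every set of primed facts derivable from $I_0$ by chasing with $\Sigma$ and $\Delta$ is also derivable from $I_0^\lift$ by chasing with $\Sigma^\lift$. Applying this to the seed primed facts and then propagating with $\Sigma'$ (which lives in $\Gamma_{\bounded}$) shows that any match of $Q'$ in the $\Gamma$-chase survives in the $\Sigma' \cup \Sigma^\lift$-chase of $I_0^\lift$.

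The converse inclusion is where the real work lies, and I expect it to be the main obstacle, since Proposition~\ref{prop:linearizeaccessids} only gives one direction: I must show the linearized chase does not \emph{overgenerate}, i.e., that every primed fact derivable from $I_0^\lift$ by $\Sigma' \cup \Sigma^\lift$ corresponds to a primed fact derivable from $I_0$ by $\Gamma$. The plan is to \emph{decode} the linearized chase, mapping each fact $R_P(\vec a)$ to the unprimed fact $R(\vec a)$ together with the facts $\accessible(a_i)$ for $i \in P$, and each primed fact to itself, and then to argue by induction on the length of the $\Sigma' \cup \Sigma^\lift$-chase that this decoding lands inside the $\Gamma$-chase of $I_0$. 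The invariant to maintain is the intended semantics of $R_P$, namely that $R_P(\vec a)$ is produced only when $R(\vec a)$ and the accessibility of the positions in $P$ are both derivable under $\Gamma$. The base case reduces to checking that the seed facts added to $I_0^\lift$ are faithful, which holds because they come from applying the original and derived truncated accessibility axioms, and these are implied by $\Sigma$ and $\Delta$ by Proposition~\ref{prop:derivedaccessids}. The inductive step checks that each Lift rule decodes to an $\incd$ firing of $\Sigma$ followed by a sound propagation of accessibility, and that each Transfer rule of $\Sigma^\lift$ fires only when the transferred positions genuinely contain a method's input positions, so that the Transfer axiom of $\Delta$ can legitimately seed the same $R'$-fact.

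The delicate bookkeeping is thus twofold: inheriting the soundness of the transfer-closure computation (the derived truncated accessibility axioms) through the decoding, and verifying that $\Sigma'$ plays an identical role on both sides so that the primed facts coincide after propagation. Once both inclusions on derivable primed facts are in hand, the claim follows at once: because $Q'$ uses only primed relations, it has a match in the $\Gamma$-chase of $I_0$ iff it has a match in the $\Gamma_{\bounded} \cup \Gamma_{\acyclic}$-chase of $I_0^\lift$, and by Proposition~\ref{prop:reduce} the former is equivalent to $\amd$.
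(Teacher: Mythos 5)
Your proposal is correct and follows essentially the same route as the paper: reduce $\amd$ to the containment $Q \subseteq_\Gamma Q'$ via Proposition~\ref{prop:reduce}, normalize the chase so that the rules of $\Sigma'$ fire last, invoke Proposition~\ref{prop:linearizeaccessids} for the direction showing the linearized chase derives enough primed facts, and check soundness of the linearized rules for the other direction. The only difference is one of emphasis: the paper dismisses the no-overgeneration direction as immediate to see, whereas you treat it as the main work and spell out the decoding of $R_P$-facts back into $R$-facts plus $\accessible$-facts; the mathematical content is the same.
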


\begin{proof}
We know that~$\amd$ is equivalent to the containment under~$\Gamma = \Sigma \cup
  \Sigma' \cup \Delta$.

It is easy to see that proofs formed from~$I_0^\lift$ using~$\Gamma_{\bounded}
  \cup \Gamma_{\acyclic}$
can be simulated by a proof formed from~$I_0$ using~$\Gamma$, so we focus on showing the converse.
We can observe that it suffices to consider chase proofs where
  we first fire rules of~$\Sigma$, (Truncated Accessibility) axioms, and
  (Transfer) axioms to get
a set of primed facts $I_1$,
and we then fire rules of~$\Sigma'$ to get $I_2$. From
  Proposition~\ref{prop:linearizeaccessids} 
we  know that using the axioms of the linearization,
which are in~$\Gamma_{\bounded} \cup \Gamma_{\acyclic}$, we can
derive a set of primed facts~$I'_1 = I_1$
Now we can apply the  rules of~$\Sigma'$ to~$I'_1$ to get a set $I'_2$
that is  a homomorphic image of~$I_2$. 
We conclude that~$I'_2$ also has a match of~$Q'$ as required.
\end{proof}

Now, the semi-width of~$\Gamma_{\bounded} \cup \Gamma_{\acyclic}$ is then $w$, since
 $\Gamma_{\bounded}$ consists of $\incd$s of width~$w$ and  $\Gamma_{\acyclic}$ of
 acyclic $\incd$s.
 We can therefore answer the problem in~$\np$ using
 Proposition~\ref{prop:semiwidthclassic}. This concludes the proof of
Theorem~\ref{thm:npidsbounds} in the case without result bounds.

\subsubsection{Proving Theorem~\ref{thm:npidsbounds} with Result Bounds}

We now conclude the proof of Theorem~\ref{thm:npidsbounds} by handling the case
with result bounds.
By Theorem~\ref{thm:simplifyidsexistence}, for any schema $\aschema$ whose constraints $\Sigma$ are $\incd$s,
we can reduce the monotone answerability problem to the same problem for the
existence-check simplification $\aschema'$ with no result bounds, by
replacing each result-bounded method $\mt$ on a  relation~$R$  with
a non-result bounded access method $\mt_1$ on a new relation  $\checkview_\mt$, and expanding
$\Sigma$ to a larger set of constraints $\Sigma_1$,
adding additional constraints capturing the semantics of the ``existence-check views'' $\checkview_\mt$:
\begin{align*}
\forall \vec x \vec y ~ R(\vec x, \vec y) \rightarrow \checkview_\mt(\vec x) \\
\forall \vec x ~ \checkview_\mt(\vec x) ~ \rightarrow ~ \exists \vec y ~ R(\vec x, \vec y)
\end{align*}
Let us denote IDs of the first form as ``relation-to-view'' and
of the second form as ``view-to-relation''.
Note that these IDs do not have bounded width, hence we cannot simply reduce to
the case without result bounds that we have just proved. We will explain how to
adapt the proof to handle these IDs, namely, 
linearizing using Proposition~\ref{prop:linearizeaccessids}, and then
partitioning
the results into two subsets, one of bounded width and the  other acyclic.

Let us consider the query containment problem for the monotone answerability problem of~$\Sigma_1$.
This problem is of the form
$Q \subseteq_\Gamma Q'$, where
$\Gamma$ contains $\Sigma_1$, its copy $\Sigma'_1$, and the accessibility
axioms. These axioms can again be rephrased: for each access method $\mt$ on a
relation~$S$, letting $\vec x$ denote the input positions of~$\mt$, we have the
following two axioms:
\begin{itemize}
  \item  (Truncated Accessibility): $\left(\bigwedge_i \accessible(x_i)\right) \wedge S(\vec x, \vec y) \rightarrow \bigwedge_i
  \accessible(y_i) $
\item (Transfer): $\left(\bigwedge_i \accessible(x_i)\right) \wedge S(\vec x, \vec y)
  \rightarrow  S'(\vec x, \vec y)$
\end{itemize}
In the above, the relation $S$ can be any of the relations of~$\Sigma_1$, including relations $R$ of the
original signature and relations $\checkview_\mt$. In the first case, this means
that~$\mt$ is an access method of~$\aschema$ that did not have a result
bound. In the second case, this means that~$\mt$ is a method of the form $\mt_1$ introduced in the
existence-check simplification $\aschema'$ for a result-bounded method
of~$\aschema$, so $\mt_1$ has no input positions: this means that, in this case,
the (Truncated Accessibility) axiom is vacuous and the (Transfer) axiom further
simplifies to:
\[
  \text{(Simpler Transfer):~}\checkview_\mt(\vec y) \rightarrow \checkview'_\mt(\vec y)
\]
We first observe that in~$\Gamma$ we do not need to include 
the view-to-relation constraints of~$\Sigma_1$: in the
chase, they will never fire, since facts over~$\checkview_\mt$ can only be formed  from the corresponding
$R$-fact, and we only fire active triggers.
Similarly, we do not need  to include the relation-to-view constraints
of~$\Sigma_1'$. These rules could fire to produce a new fact
$\checkview'_\mt(\vec y)$, but such a fact could only trigger the corresponding
view-to-relation constraint of $\Sigma_1'$, resulting in a state of the chase
that has a homomorphism to the one before the firing of the relation-to-view
constraint. Thus such firings can not lead to new matches.
Thus, $\Gamma$ consists now of~$\Sigma$, of~$\Sigma'$,
of (Truncated Accessibility) and
(Transfer) axioms for each method $\mt$ having no result bound in~$\aschema$,
and for each method $\mt$ with a result bound in~$\aschema$ we have
a relation-to-view
constraint from~$R$ to~$\checkview_\mt$ that comes from~$\Sigma_1$,
a view-to-relation constraint from~$\checkview_\mt'$ to~$R'$
that comes from $\Sigma_1'$, and a 
(Simpler Transfer) axiom.

We next note that we can normalize chase proofs with~$\Gamma$ so that the relation-to-view constraints
are applied only prior to (Simple Transfer). 
Thus, for each result-bounded method $\mt$ of~$\aschema'$,
we can merge the relation-to-view rule
from~$R$ to~$\checkview_\mt$, the
(Simpler Transfer) axiom from~$\checkview_\mt$ to~$\checkview_\mt'$, and
the view-to-relation rules from~$\checkview_\mt'$ to~$R'$, into an axiom of the
following form, where $\vec x$ denotes the input positions of~$\mt$:
\[
\mbox{(Result-bounded Fact Transfer) } R(\vec x, \vec y) \rightarrow  \exists \vec z ~ R'(\vec x, \vec z)
\]
To summarize, the resulting axioms $\Gamma'$ consist of:
\begin{itemize}
  \item The original constraints $\Sigma$ of the schema.
  \item Their primed copy $\Sigma'$
  \item The (Truncated Accessibility) and (Transfer) axioms for each access method
    without result bounds
  \item The (Result-bounded Fact Transfer) axioms for access methods with result
    bounds
\end{itemize}
In other words, the only difference with the setting without result bounds is
the last bullet point corresponding to (Result-bounded Fact Transfer). We can
then conclude with exactly the same proof as for the case without result bounds,
but modifying the proof of Proposition~\ref{prop:linearizeaccessids} to add the
following axiom:

\begin{itemize}
  \item (Result-bounded Fact Transfer): For each relation $R$ and subset $P$ of
    positions of~$R$ of size at most~$w$, for each access method $\mt$ on~$R$
    with a result bound, we add the $\incd$:
    \[R_P(\vec x, \vec y) \rightarrow \exists \vec z ~ R'(\vec x, \vec z)\]
    where $\vec x$ denotes the input positions of~$\mt$.
\end{itemize}

It is clear that adding this axiom ensures that the same primed facts are
generated than in the short-cut chase, and the resulting axioms still have
bounded semi-width: the (Result-bounded Fact Transfer) axioms are grouped in the
acyclic part together with the Transfer axioms, and they still have an acyclic
position graph.
This completes the proof of Theorem~\ref{thm:npidsbounds}.

\sectionarxiv{Proofs for Section~\lowercase{\ref{sec:simplifychoice}}: Schema
Simplification for Expressive\\Constraints}{Proofs for Section~\lowercase{\ref{sec:simplifychoice}}: Schema Simplification for Expressive Constraints}
\subsection{Proof of Theorem~\ref{thm:simplifychoice}: Choice Simplification for
Equality-Free FO}
\label{app:simplifychoice}

Recall the statement of Theorem~\ref{thm:simplifychoice}.

\begin{quote}
  \thmsimplifychoice
\end{quote}

Using our equivalence with~$\amd$, we see that it suffices to show:

\begin{quote} 
 Let schema $\aschema$ have  constraints  given by
equality-free first-order constraints,  and $Q$ be a CQ that is $\amd$ in~$\aschema$.
Then $Q$ is also $\amd$ in the choice simplification of~$\aschema$.
\end{quote}

We will again use the  
  ``blowing-up'' construction of Lemma~\ref{lem:enlarge}. Note that, this time,
  the schema of~$\aschema$ and~$\aschema'$ is the same, so we simply need to
  show that $I_p$ is a subinstance of~$I_p^+$ for each $p \in \{1, 2\}$.

  Consider a counterexample $I_1, I_2$ to~$\amd$ for~$Q$ in the choice
  simplification:
  we know that~$I_1$ satisfies $Q$, that $I_2$ violates $Q$, that
  $I_1$ and $I_2$ satisfy the equality-free first order constraints 
  of~$\aschema$,
  and that $I_1$ and $I_2$ have a common subinstance $I_\acc$ which is access-valid
  in~$I_1$ in the choice simplification of~$\aschema$.
  We will expand them to~$I_1^+$ and $I_2^+$ that have a common subinstance which
  is access-valid in~$I_1^+$ for~$\aschema$.

  For each element $a$ in the domain of~$I_1$, introduce infinitely many
  fresh elements $a_j$ for~$j \in \NN_{>0}$, and identify $a_0 \colonequals a$.
  Now, define $I_1^+ \colonequals \mathrm{Blowup}(I_1)$, where 
  $\mathrm{Blowup}(I_1)$ is the instance with facts $\{R(a^1_{i_1} \ldots a^n_{i_n}) \mid R(\vec
  a) \in I_1, \vec i \in \NN^n\}$. Define $I_2^+$ from~$I_2$ in the same way.

  We will now show correctness of this construction.
We claim that~$I_1$ and $I_1^+$ agree on all equality-free first-order
  constraints, which we show using a variant of the
  standard Ehrenfeucht-Fra\"iss\'e game without
  equality~\cite{casanovas1996elementary}.
  In this game there are pebbles on both structures;
 play proceeds by Spoiler placing a new pebble on some element in one structure, and Duplicator
must respond by placing a pebble with the same name in the other structure. Duplicator
loses if the mapping given by the pebbles does not preserve all relations
of the signature.  If Duplicator has a strategy  that never loses, then one can show by induction
that the two structures agree on all equality-free first-order sentences.

  Duplicator's strategy  will maintain the following invariants:
\begin{enumerate}
\item  if a pebble is on some element
  $a_j \in I_1^+$, then the corresponding pebble in~$I_1$ 
 is on~$a$;
\item if a pebble is on some element in~$I_1$, then the corresponding pebble
in~$I_1^+$ is on some element $a_j$ for~$j \in \NN$.
\end{enumerate}
These invariants will guarantee that the strategy is winning.
Duplicator's  response to a move by Spoiler in~$I_1^+$ is determined by the strategy 
above. In response to a move by Spoiler placing a pebble on~$b$ in~$I_1$, Duplicator
places the corresponding pebble on 
$b_0 = b$ in~$I_1^+$. 

Clearly the same claim can be shown for~$I_2$ and $I_2^+$.
In particular this shows that~$I_1$ still satisfies $Q$ and $I_2$ still
violates $Q$.

All that remains is to construct the common subinstance.
Let $I_\acc^+ \colonequals \mathrm{Blowup}(I_\acc)$. As $I_\acc$ is a common
subinstance of~$I_1$ and~$I_2$, clearly $I_\acc^+$ is a common subinstance
of~$I_1^+$ and~$I_2^+$. To see why $I_\acc^+$ is access-valid in~$I_1$, 
given an input tuple $\vec t'$ in
$I_\acc^+$, let $\vec t$ be the corresponding tuple in~$I_\acc$.
If $\vec t$ had no matching tuples in~$I_1$, then clearly the same is true  in
$I_1^+$.
If $\vec t$ had at least one matching tuple $\vec u$ in~$I_1$, then such a
tuple exists in~$I_\acc$ because it is access-valid in~$I_1$, and hence
sufficiently many copies exist in~$I_\acc^+$ to satisfy the original result
bounds, so that we can find a valid output for the access in~$I_\acc^+$.
Hence $I_\acc^+$ is access-valid in~$I_1^+$, which completes the proof.

\subsection{Proof of Theorem~\ref{thm:simplifychoiceuidfd}: Choice Simplification for
UIDs and FDs}
\label{app:simplifychoiceuidfd}

Recall the statement of Theorem~\ref{thm:simplifychoiceuidfd}:

\begin{quote}
\thmsimplifychoiceuidfd
\end{quote}

Our high-level strategy to prove Theorem~\ref{thm:simplifychoiceuidfd} is to
use a ``progressive'' variant of the
process of Lemma~\ref{lem:enlarge},
a variant where we ``fix'' one access at a time.
Remember that Lemma~\ref{lem:enlarge} said
that, if a counterexample to~$\amd$ in~$\aschema'$ can be
expanded to a counterexample in~$\aschema$, then $Q$ being $\amd$  in
$\aschema$ implies the same in~$\aschema'$. The next lemma makes a weaker
hypothesis: it assumes that for any counterexample in~$\aschema'$ and for any
choice of access, we can expand to a counterexample in~$\aschema'$ in which there
is an output to this access which is valid for~$\aschema$. To ensure that we
make progress, we must also require that, for every choice of access to which
there was previously a valid output for~$\aschema$, then there is still such an
output to the access.
In other words, the assumption is
that we can repair the counterexample from~$\aschema'$ to~$\aschema$ by working one
access at a time. We show that this is sufficient to reach the same conclusion:

\begin{lemma}
  \label{lem:enlargeprog}
  Let $\aschema$ be a schema and $\aschema'$ be its choice simplification,
  and let $\Sigma$ be the constraints.

  Assume that, for any CQ~$Q$ not $\amd$ in~$\aschema'$, for any
  counterexample $I_1, I_2$ of~$\amd$ for~$Q$ and
  $\aschema'$ with a common subinstance $I_\acc$ which is access-valid
  in~$I_1$ for~$\aschema'$, for any access $\mt, \accbind$ in~$I_\acc$,
  the following holds:
  we can construct a counterexample $I_1^+, I_2^+$ of
  $\amd$ for~$Q$ and $\aschema'$,
  i.e., $I_1^+$ and $I_2^+$ satisfy $\Sigma$,
  $I_1 \subseteq I_1^+$, $I_2 \subseteq I_2^+$,
  $I_1^+$ has a homomorphism to~$I_1$, $I_2^+$ has a homomorphism to~$I_2$,
  and $I_1^+$ and $I_2^+$ have a common subinstance $I_\acc^+$ which is 
  access-valid in~$I_1^+$ for~$\aschema'$, and we can further impose that:
  \begin{enumerate}
    \item $I_\acc^+$ is a superset of~$I_\acc$;
    \item there is an output to the access $\mt, \accbind$ in~$I_\acc^+$ which is valid
in~$I_1$
  for~$\aschema$;
    \item for any access in~$I_\acc$ having an output
      in~$I_\acc$ which is valid
      for~$\aschema$ in~$I_1$, there is an output to this access in~$I_\acc^+$ 
      which is valid for~$\aschema$ in~$I_1^+$;
  \item for any
  access in~$I_\acc^+$ which is not an access in~$I_\acc$, there is an
      output in~$I_\acc^+$ which is valid for~$\aschema$ in~$I_1^+$;
  \end{enumerate}
  Then any query which is $\amd$ in~$\aschema$ is also
  $\amd$ in~$\aschema'$.
\end{lemma}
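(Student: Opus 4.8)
The plan is to argue the contrapositive: assuming $Q$ is not $\amd$ in the choice simplification $\aschema'$, I will build a counterexample to $\amd$ for $\aschema$, which by Theorem~\ref{thm:equiv} shows that $Q$ is not $\amd$ (equivalently, not monotone answerable) in $\aschema$; throughout I use the reformulation of $\amd$ from Proposition~\ref{prop:altdef}. So I fix a counterexample $(I_1,I_2)$ to $\amd$ for $Q$ and $\aschema'$, together with a common subinstance, call it $A_0$, that is access-valid in $I_1$ for $\aschema'$. The idea is to invoke the hypothesis repeatedly to repair one access at a time, producing an increasing chain of $\aschema'$-counterexamples $(I_1^{(n)}, I_2^{(n)}, A_n)$, and then to pass to the limit. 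Validity of an output for an access $\alpha$ of bound $k$ against a result lower bound means the output consists of $\min(k,|M|)$ matching tuples, where $M$ is the match set of $\alpha$ in the ambient instance; an output of this size also has at most $k$ tuples, so it respects any upper bound for free (equivalently, one may first drop upper bounds via Proposition~\ref{prop:elimupper}).

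First I would fix a \emph{fair} schedule of accesses: one in which every access that ever occurs in some $A_n$ is selected at infinitely many stages (a dovetailing over the growing set of accesses, which is legitimate since each binding uses finitely many values, all appearing at a finite stage). At stage $n$, applying the hypothesis to the scheduled access $\alpha_n$ yields $(I_1^{(n+1)}, I_2^{(n+1)}, A_{n+1})$ with $I_p^{(n)} \subseteq I_p^{(n+1)}$, $A_n \subseteq A_{n+1}$, and homomorphisms $h_p^{(n)} \colon I_p^{(n+1)} \to I_p^{(n)}$. Condition~(2) supplies a valid-for-$\aschema$ output for $\alpha_n$; condition~(3) preserves valid-for-$\aschema$ outputs of accesses that already had one; and condition~(4) furnishes valid-for-$\aschema$ outputs for the accesses newly created by the enlargement. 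I would also arrange, or observe that one may take without loss of generality, each $h_p^{(n)}$ to be the identity on $I_p^{(n)}$, so that the family $(h_p^{(n)})_n$ composes coherently: for $a$ first appearing at stage $m$, $(h_p^{(0)}\circ\cdots\circ h_p^{(m-1)})(a)$ is stable under further maps, giving a single homomorphism on the union.

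Next I set $I_p^\omega \colonequals \bigcup_n I_p^{(n)}$ and $A_\omega \colonequals \bigcup_n A_n$, and verify that $(I_1^\omega, I_2^\omega, A_\omega)$ is a counterexample to $\amd$ for $\aschema$. Monotonicity of CQs gives $I_1^\omega \models Q$ from $I_1 \subseteq I_1^\omega$; the composed homomorphism $I_2^\omega \to I_2$ yields $I_2^\omega \not\models Q$, since $Q$ is preserved under homomorphisms and fails on $I_2$. The instances $I_1^\omega, I_2^\omega$ satisfy $\Sigma$ because, for the classes at hand, namely $\uincd$s and FDs, which are $\Pi_2$ sentences, satisfaction is preserved under unions of chains. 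Finally $A_\omega$ is a common subinstance of $I_1^\omega$ and $I_2^\omega$, so only its access-validity in $I_1^\omega$ for $\aschema$ remains.

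That last point is the crux, and the step I expect to be the main obstacle. The difficulty is that validity against a result lower bound is relative to the ambient instance, so an output valid in some $I_1^{(n)}$ can stop being valid once later stages add matching tuples to $I_1$. I would resolve this by noting that an access $\alpha$ of bound $k$ needs only $\min(k,|M_\alpha^\omega|)$ matching tuples, where $M_\alpha^\omega$ is its match set in $I_1^\omega$; this is a finite number, at most $k$, so these tuples already lie in some $I_1^{(s(\alpha))}$ with $s(\alpha)$ finite. By fairness $\alpha$ is scheduled at some stage $t \ge s(\alpha)$, where condition~(2) returns an output of size $\min(k,|M_{I_1^{(t)}}|) = \min(k,|M_\alpha^\omega|)$ contained in $A_{t+1} \subseteq A_\omega$; such an output is then valid in $I_1^\omega$ as well. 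This is precisely the role of conditions~(2)--(4): condition~(2) realises the repair at the scheduled stage, while conditions~(3) and~(4) keep every already-repaired or freshly-created access supplied with a valid output so that the scheduling argument applies uniformly along the chain. The two remaining care-points, the coherent composition of the homomorphisms and the preservation of $\Sigma$ under unions of chains, are routine for $\uincd$s and FDs but must be checked explicitly.
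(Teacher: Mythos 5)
Your proof takes essentially the same route as the paper's: argue the contrapositive, enumerate the accesses, repeatedly invoke the hypothesis to build an increasing chain of counterexamples, and pass to the union, observing that lower-bound validity survives in the limit because each access needs only finitely many witness tuples, all of which appear at some finite stage (your fair schedule is harmless but redundant, since conditions (3) and (4) already keep repaired and newly created accesses repaired at every later stage, so a single pass over the accesses of $I_\acc$ suffices, which is what the paper does). The one step the hypotheses do not grant you is that the homomorphisms $h_p^{(n)}$ may be taken to be the identity on $I_p^{(n)}$, but this is also unnecessary: since $Q$ is a finite CQ, any match of $Q$ in $\bigcup_n I_2^{(n)}$ already lies in some $I_2^{(n)}$, which maps homomorphically to $I_2$ by composing the stage-wise homomorphisms up to stage $n$, so $Q$ fails in the union without any coherence among the $h_2^{(n)}$.
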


\begin{proof}
  We will again prove the contrapositive. Let $Q$ be a query which is not
  $\amd$ in~$\aschema'$, and let $I_1, I_2$ be a
  counterexample, with~$I_\acc$ the common subinstance of~$I_1$ and~$I_2$ which
  is access-valid in~$I_1$ for~$\aschema'$.
  Enumerate the
  accesses in~$I_\acc$ as a sequence $(\mt^1, \accbind^1), \ldots,
  (\mt^n, \accbind^n), \ldots$: by definition of~$I_\acc$, all of them have an
  output in~$I_\acc$ which is valid in~$I_1$ for $\aschema'$, 
  but initially we do not assume that any of these outputs are valid
  for~$\aschema$ as well.
  We then build
  an infinite sequence $(I_1^1, I_2^1), \ldots, (I_1^n, I_2^n), \ldots$ along
  with all
  the corresponding common subinstances $I_\acc^1, \allowbreak \ldots, \allowbreak I_\acc^n, \ldots$, with each
  $I_\acc^i$ being a common subinstance of~$I^i_1$ and $I^i_2$ which is access-valid
  in~$I^i_1$, by 
  applying the process of the hypothesis of the lemma in succession to the
  accesses
  $(\mt^1, \accbind^1), \ldots, (\mt^n, \accbind^n), \ldots$. In particular, note
  that whenever $(\mt^i, \accbind^i)$ already has an output in~$I_\acc^i$ which
  is valid in~$I^i_1$ for~$\aschema$, then we
  can simply take $I^{i+1}_1$, $I^{i+1}_2$, $I^{i+1}_\acc$ to be respectively equal
  to~$I^i_1$, $I^i_2$, $I^i_\acc$, without even having to rely on the hypothesis of
  the lemma.

  It is now obvious by induction that, for all $i \in \NN$, $I^i_1$ and $I^i_2$
  satisfy the constraints $\Sigma$, we have $I_1 \subseteq I^i_1$, we have that $I^i_2$ has a
  homomorphism to~$I_2$, and $I^i_\acc$ is a common subinstance of~$I^i_1$ and
  $I^i_2$ which is access-valid in~$I^i_1$ for~$\aschema'$, where the
  accesses $(\mt^1, \accbind^1), \ldots, \allowbreak (\mt^i, \accbind^i)$ additionally 
  have an output in~$I_\acc^i$ which is valid in~$I_1^i$ 
  for~$\aschema$, and where all the accesses in~$I^i_\acc$ which are not accesses
  of~$I_\acc$ also have an output in~$I^i_\acc$ which is valid in~$I_1^i$ for~$\aschema$. Hence, 
  considering, the infinite result $(I_1^\infty, I_2^\infty), I_\acc^\infty$ of this
  process, we know that all accesses in~$I^\infty_\acc$ have an output
  in~$I^\infty_\acc$ which is valid in~$I_1^\infty$ for~$\aschema$. Hence,
  $I_\acc^\infty$ is actually a common subinstance of~$I_1^\infty$ and
  $I_2^\infty$ which is access-valid in~$I_1^\infty$ for~$\aschema$, so
  $I_1^\infty, I_2^\infty$ is a counterexample to 
  $\amd$ of~$Q$ in~$\aschema$, which concludes the proof.
\end{proof}

  Thanks to Lemma~\ref{lem:enlargeprog}, we can now prove
  Theorem~\ref{thm:simplifychoiceuidfd} by arguing that we can fix each
  individual access.
  Let $\aschema$ be the schema, let $\aschema'$ be its choice simplification, and
  let $\Sigma$ be the constraints.

  We now explain how we fulfill the requirements of  Lemma~\ref{lem:enlargeprog}. 
Let $Q$ be a CQ and assume that it is not
  $\amd$ in~$\aschema'$, and let $I_1, I_2$, be a 
  counterexample to~$\amd$, with~$I_\acc$ being a common
  subinstance of~$I_1$ and $I_2$ which is access-valid in~$I_1$ for~$\aschema'$.
  Let $(\mt, \accbind)$ be an access on relation~$R$ in~$I_\acc$:
  we know that there is an output to the access in~$I_\acc$ which is valid
  for~$\aschema'$ in~$I_1$, but this output is not necessarily valid
  for~$\aschema$.
  Our goal is to build $I_1^+$ and $I_2^+$ such that~$I_1^+$ is 
  a superinstance of~$I_1$ and $I_2^+$ homomorphically maps to~$I_2$,;
  we want both $I_1^+$ and $I_2^+$ to satisfy  $\Sigma$, and want $I_1^+$ and
  $I_2^+$ to have a
  common subinstance $I_\acc^+$ which is access-valid in~$I_1^+$,
  where $\accbind$ now has an output which is valid for~$\aschema$ (i.e., not only for the choice
  simplification), all new accesses also have an output which is valid for~$\aschema$, and
  no other accesses are affected.
  At a high level, we will do the same blow-up as in the proof of
  Theorem~\ref{thm:fdsimplify}, except that we will need to chase afterwards to
  argue that the $\uincd$s are true.

  First observe that, if there are no matching tuples in~$I_1$ for the access
  $(\mt, \accbind)$, then the empty set is already an output in~$I_\acc$ to the
  access which is valid in~$I_1$ for~$\aschema$
  so there is nothing to do, i.e., we can just take $I_1^+ \colonequals I_1$,
  $I_2^+ \colonequals I_2$, and $I_\acc^+ \colonequals I_\acc$.
  Further, note that if there
  is only one matching tuple in~$I_1$ for the access, as~$I_\acc$ is access-valid for
  the choice simplification, then this tuple is necessarily in
  $I_\acc$ also, so again there is nothing to do. Hence, we know that there is
  strictly more than one matching tuple in~$I_1$ for the access $(\mt,
  \accbind)$; as~$I_\acc$ is access-valid for~$\aschema'$, then it contains at least
  one of these tuples, say $\vec t_1$, and as~$I_\acc \subseteq I_2$,
  then $I_2$ also contains $\vec t_1$. Let $\vec t_2$ be a second matching tuple
  in~$I_1$ which is different from~$\vec t_1$. Let $C$ be 
  the non-empty set of  positions of~$R$ where $\vec t_1$ and $\vec t_2$
  disagree. Note that, since $I_1$ satisfies the constraints,
  the constraints cannot imply an FD from the complement of~$C$
  to a position~$j \in C$, as otherwise $\vec t_1$ and $\vec t_2$
  would witness that~$I_1$ violates this FD.

  We form an infinite collection of facts $R(\vec o_i)$
  where $\vec o_i$ is constructed from~$\vec t_1$
  by replacing the values at positions in~$C$ by fresh values (in particular distinct from
  values in   other positions in~$R$ and in other $\vec o_j$'s). 
  Let $N \colonequals  \{R(\vec o_1) \ldots R(\vec o_n), \ldots\}$.
  We claim that
  $I_1 \cup N$ does not violate any FDs implied by
  the schema. If there were a violation of a FD~$\phi$, the violation
  $F_1, F_2$ must involve some new fact $R(\vec o_i)$, as~$I_1$ on its own
  satisfies the constraints. We know that the left-hand-side of~$\phi$ cannot include a
  position of~$C$, as all elements in the new facts $R(\vec o_i)$ at these
  positions are fresh. Hence, the left-hand-side of~$\phi$ is included in the
  complement of~$C$, but recall that we argued above that then the
  right-hand-side of~$\phi$ cannot 
  be in~$C$. 
  Hence, both the left-hand-side and right-hand-side of~$\phi$ are in the complement of~$C$. But
  on this set of positions the facts of the violation~$F_1$ and $F_2$ agree
  with the existing fact $\vec t_1$ and~$\vec t_2$ of~$I_1$, a contradiction.
  So we know that~$I_1 \cup N$ does not violate the FDs.
  The same argument shows that~$I_2 \cup N$ does not violate the FDs.

  So far, the argument was essentially the same as in the proof of
  Theorem~\ref{thm:fdsimplify}, but now we explain the additional chasing step.
  Let $W$ be formed from chasing $N$ with the $\uincd$s,
  ignoring triggers whose exported element occurs in~$\vec t_1$.
  We have argued that~$I_1 \cup N$ and $I_2 \cup N$ satisfy the FDs. 
We want to show that both the $\uincd$s and FDs  hold of~$I_1 \cup W$ and $I_2 \cup W$.
 Note that as we
  have $\vec t_1$ in~$I_1$ and in~$I_2$ we know that any element of the domain
  of~$N$ which also occurs in~$I_1$ or in~$I_2$ must be an element of~$\vec
  t_1$. Also note that any such element that  occurs at a certain position~$(R,i)$ in~$N$, then it also 
occurs at~$(R,i)$
  in~$I_1$.  We then conclude that
   that~$I_1 \cup W$ and $I_2 \cup W$
  satisfy the constraints, thanks to the following general lemma:

\begin{lemma} \label{lem:decomp}
Let $\ids$ be a set of~$\uincd$s
  and let $\fds$ be a set of FDs.
  Let $I$ and $N$ be instances, and let 
  $\Delta \colonequals \dom(I) \cap \dom(N)$.
  Assume that~$I$ satisfies $\fds \cup \ids$, that
  $I \cup N$ satisfies $\fds$, and that whenever $a \in \Delta$ occurs at a
  position~$(R, i)$ in~$N$ then it also occurs at~$(R,i)$ in~$I$.
  Let $W$ denote the chase of~$N$ by~$\ids$ 
  where we do not fire any triggers which map  an
  exported variable to an element of  $\Delta$.
  Then $I \cup W$ satisfies $\ids \cup \fds$.
\end{lemma}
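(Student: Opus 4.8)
The plan is to verify the two halves of the conclusion separately, namely $I \cup W \models \ids$ and $I \cup W \models \fds$, and both will rest on a single structural observation about the restricted $\uincd$-chase. When we fire a $\uincd$ exporting position $s$ of $P$ into position $t$ of $R$ to create a fact $R(\vec b)$, the exported position $t$ receives the exported value while every other position receives a fresh null; since the construction skips exactly those triggers whose exported element lies in $\Delta$, no fact of $W \setminus N$ carries any element of $\Delta$. Consequently $\dom(I) \cap \dom(W) = \Delta$ (fresh nulls being disjoint from $\dom(I)$), and, more usefully, \emph{every fact of $W$ that mentions an element of $\Delta$ already lies in $N$}. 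I would establish this first, as it drives the two delicate cases below.

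For $I \cup W \models \ids$, I would take a $\uincd$ exporting position $i$ of $R$ into position $j$ of $S$ and a fact $R(\vec a) \in I \cup W$ with $a_i = c$, splitting on the origin of the fact. If $R(\vec a) \in I$, then the hypothesis $I \models \ids$ supplies a witness in $I$. If $R(\vec a) \in W$ with $c \notin \Delta$, then the corresponding trigger was \emph{not} skipped, so the fixpoint $W$ already contains a witness. If $R(\vec a) \in W$ with $c \in \Delta$, the structural observation forces $R(\vec a) \in N$; then the position-compatibility hypothesis places $c$ at position $(R,i)$ inside $I$, and $I \models \ids$ again yields a witness within $I$. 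In every case the witness lies in $I \cup W$.

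For $I \cup W \models \fds$, I would suppose toward a contradiction that two $R$-facts $F_1, F_2$ agree on $D$ but disagree on position $j$ for some FD $D \to j$, treating the substantive case $|D| \ge 1$ (FDs with empty determinant assert that a column is constant, do not arise in the intended application, and can be set aside). If both facts lie in $I$, this contradicts the hypothesis $I \models \fds$. If $F_1 \in I$ and $F_2 \in W$, then the values shared at the positions of $D$ lie in $\dom(I) \cap \dom(W) = \Delta$, so $F_2$ mentions a $\Delta$-element and hence $F_2 \in N$ by the structural observation; the violation then lives inside $I \cup N$, contradicting the hypothesis $I \cup N \models \fds$. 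The remaining case, both facts in $W$, is the crux.

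For the both-in-$W$ case I would order the facts by creation time and let $F_2$ be the one created no earlier. If $F_2 \in N$, the violation is inside $N \subseteq I \cup N$, again contradicting $I \cup N \models \fds$; otherwise $F_2$ was produced by firing some $\uincd$ with head position $t$, so $t$ is its only non-fresh position, whereas $F_1$ already exists when $F_2$ is created and therefore shares none of $F_2$'s fresh nulls. Agreement on all of $D$ (with $|D| \ge 1$) is then possible only if $D = \{t\}$ and the shared value equals the exported value $e$; but then $F_1$ is an $R$-fact carrying $e$ at position $t$ already present before $F_2$ was created, contradicting the activeness of the trigger that created $F_2$. This last step is what I expect to be the main obstacle: showing that the restricted $\uincd$-chase of an FD-satisfying instance cannot itself introduce FD violations. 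Its resolution combines the fresh-null structure of each chase step with the active-trigger discipline of the standard chase, and it is precisely the $\Delta$-freeness of created facts that also makes the cross-origin FD case and the $\Delta$-valued $\uincd$ case go through.
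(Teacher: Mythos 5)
Your proof is correct and follows essentially the same route as the paper's: the $\uincd$ case rests on the observation that the only unsatisfied triggers are the skipped ones, which live on facts of $N$ whose $\Delta$-elements reappear at the same positions in $I$ so that $I$'s own witnesses apply, and the FD case rests on chase-created facts being too fresh to collide with anything except through $N$, where the hypothesis $I \cup N \models \fds$ takes over. The differences are only in packaging: you prove inline, via the creation-order/active-trigger argument, the fact that the $\uincd$ chase preserves FDs (which the paper cites from~\cite{cali2003decidability}), your direct structural observation that no created fact mentions a $\Delta$-element replaces the paper's preprocessing step of closing the $\uincd$s under implication, and your explicit exclusion of empty-determinant FDs makes visible an assumption that the paper's own ``disjoint elements cannot form a violation'' step leaves implicit.
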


Intuitively, the lemma applies to any instance $I$ satisfying the
constraints ($\uincd$s and FDs), to which we want to add a set $N$ of new facts, in a way which
still satisfies the constraints. We assume that the elements of~$I$ that occur
in~$N$ never do so at new positions relative to where they occur in~$I$, and we
assume that~$I \cup N$ satisfies the FDs. We then claim that
we can make $I \cup N$ satisfy the $\uincd$s simply
by chasing~$N$ by the $\uincd$s in a way which ignores some triggers, i.e., by
adding~$W$. (The triggers that we ignore are unnecessary in terms of satisfying
the $\uincd$s, and in fact we would possibly be introducing FD violations by firing
them, so it is important that we do not fire them.)

We now prove the lemma:

\begin{proof}
  We assume without loss of generality that the~$\uincd$s are closed under
  implication~\cite{cosm}. This allows us to assume that, whenever we chase by
  the~$\uincd$s, after each round of the chase, all remaining violations of the
  $\uincd$s are on facts involving some null created in the last round.
In particular, in
  $W$, all remaining violations of~$\ids$ are on facts of~$N$.

  We first show that~$I \cup W$ satisfies $\ids$. Assume by way of
  contradiction that it has an active trigger $\trig$ for a $\uincd$ $\dep$. 
   The range of~$\trig$ is either in~$I$ or in~$W$. The
  first case is impossible because $I$ satisfies $\ids$ so it cannot have an
  active trigger for~$\dep$. The second case is impossible also by definition of the chase,
  unless the active trigger maps an exported variable
  to an element  of~$\Delta$, i.e., it is a trigger which we did not fire in
  $W$. Let $R(\vec a)$ be the fact of~$W$ in the image of~$\trig$.
  By the above, as the~$\incd$s are closed under
  implication, $R(\vec a)$ is necessarily a fact of~$N$. Let $a_i$ be the image of the exported
  variable in~$\vec a$, with~$a_i \in \Delta$. Hence, $a_i$ occurs at position~$(R, i)$ in
  $N$, so by our assumption on~$N$ it also occurs at position~$(R,i)$ in~$I$. Let $R(\vec b)$ be a
  fact of~$I$ such that~$b_i = a_i$. As $I$ satisfies $\ids$, for the match
of the body of~$\dep$ to~$R(\vec b)$ there is a corresponding
fact $F$ in~$I$ extending the match to the head of~$\dep$.
  But $F$ also serves as a witness in~$I \cup W$ for the match of the body of
  $\dep$,
  so we have reached a contradiction.
  Hence, we have shown satisfaction of~$\ids$.
  
  We now show that~$I \cup W$ satisfies $\fds$.
  We begin by arguing that~$W$
  satisfies $\fds$. This is  because $N$ satisfies $\fds$; it is easy to show (and
is proven in \cite{cali2003decidability}) 
that  performing the chase with
active triggers of  $\uincd$s
  never creates violations of FDs, so this is also true of~$W$
  as it is a subset of the facts of the actual chase of~$N$ by~$\ids$.
  Now,
  assume by way of contradiction that there is an FD violation~$\{F, F'\}$ in~$I \cup
  W$. As $I$ and $W$ satisfy $\fds$ in isolation,
  it must be the case that one
  fact of the violation is in~$I$ and one is in~$W$: without loss of
  generality, assume that we have $F \in I$ and $F' \in W$. There are
  three possibilities: $F'$ is a fact of~$N$, $F'$ is a fact created in the
  first round of the chase (so one of its elements, the exported element, is in
  $\dom(N)$, and the others are not), or $F'$ is a fact created in later rounds
  of the chase. 
The first case is ruled out by the
  hypothesis that~$I \cup N$ satisfies $\fds$.
In the
  second case, by definition of~$W$, the element from~$\dom(N)$ in
  $F'$ cannot be from~$\dom(I)$, as otherwise we would not have exported this
  element (i.e., it would be a trigger that we would not have fired); hence $F'$ contains only fresh elements and one element in
  $\dom(N) \setminus \dom(I)$, so  $F$ and $F'$ are on disjoint elements so they cannot be a
  violation. 
In the third case, $F'$ contains only fresh elements, so again  $F$
  and $F'$ cannot form an FD violation as they have no common element.
\end{proof}

So we now know that~$I_1^+ \colonequals I_1 \cup W$
and $I_2^+ \colonequals I_2 \cup W$ satisfy the constraints.
Let
us then conclude our proof of
Theorem~\ref{thm:simplifychoiceuidfd} using the process
of Lemma~\ref{lem:enlargeprog}. We first show that~$(I_1^+, I_2^+)$ is a
counterexample of~$\amd$ for~$Q$ and $\aschema'$:
\begin{itemize}
    \item We have just shown that~$I_1^+$
and $I_2^+$ satisfy the constraints.
    \item We clearly have 
$I_1 \subseteq I_1^+$ and $I_2 \subseteq I_2^+$.
\item We now argue that~$I_1^+$ has a homomorphism to~$I_1$ (the proof for
  $I_2^+$
  and $I_2$ is analogous). This point is reminiscent of the proof of
    Theorem~\ref{thm:simplifyidsexistence}.
 We first define the homomorphism from~$I_1 \cup N$
to~$I_1$ by mapping $I_1$ to itself, and mapping each fact of~$N$ to~$R(\vec
t_1)$ (which is consistent with what precedes);
it is clear that this
is a homomorphism. We then extend this homomorphism inductively on each fact
created in~$W$ in the following way. Whenever a
fact $S(\vec b)$ is created by firing an active trigger $R(\vec a)$ 
    for a $\uincd$ $R(\vec x) \rightarrow S(\vec y)$ where $x_p = y_q$ is the
    exported variable,
    (so we have $a_p = b_q$), consider the
fact $R(h(\vec a))$ of~$I_1$ (with~$h$ defined on~$\vec a$ by induction
    hypothesis). As $I_1$ satisfies
$\Sigma$, we can find a fact $S(\vec c)$ with~$c_q = h(a_p)$, so we can define
$h(\vec b)$ to be $\vec c$, and this is consistent with the existing image
    of~$a_p$.

\item We can define $I_\acc^+ \colonequals I_\acc \cup
  W$ as a common subinstance of~$I_1^+$ and $I_2^+$. We now show that~$I_\acc^+$
    is access-valid
    for~$I_1^+$ and $\aschema'$. Let  $(\mt', \accbind')$ be an access in
    $I_\acc^+$. The first case is when $(\mt', \accbind')$ includes an element
    of~$\dom(I_\acc^+)
    \setminus \dom(I_\acc)$, namely, an element of~$\dom(W) \setminus
    \dom(I_1)$. In this case, clearly all matching facts must be facts that were created in the
    chase, i.e., they are facts of~$W$. Hence, we can construct a 
    valid output from~$W \subseteq I_\acc^+$. The second case is when $(\mt', \accbind')$
    is only on elements of~$\dom(I_\acc)$, then it is actually an access
    on~$I_\acc$, so, letting $U \subseteq I_\acc$ be the set of matching tuples
    which is the valid output to~$(\mt', \accbind')$ in~$I_1$, we can 
    construct a valid output to~$(\mt', \accbind')$ in~$I_1^+$ from~$U \cup W
    \subseteq I_\acc^+$, because any matching tuples for this access in~$I_1^+$
    must clearly be either matching tuples of~$I_1$ or they must be matching
    tuples of~$W$.
\end{itemize}
We now show the four additional conditions:
\begin{enumerate}
  \item It is clear by definition that~$I_\acc^+ \supseteq I_\acc$.
\item We must show that the access $(\mt, \accbind)$ is valid for~$\aschema$ in
  $I_\acc^+$. Indeed, there are now
infinitely many matching tuples in~$I_\acc^+$, namely, those of~$N$. Thus
this access is valid for~$\aschema$ in~$I_1$: we can choose as many tuples as
the value of the bound to obtain an output which is valid in~$I_1$.

\item We must verify that, for any access $(\mt', \accbind')$ 
  of~$I_\acc$ that has an output which is valid in~$I_1$
    for~$\aschema$, then we can construct such an output in~$I_\acc^+$ which is
    valid in~$I_1^+$
    for~$\aschema$.
    The argument is
    the same as in the second case of the fourth bullet point above: from the
    valid output to the access~$(\mt', \accbind')$ in~$I_1$ for~$\aschema$, we construct
    a valid output to~$(\mt', \accbind')$ in~$I_1^+$ for~$\aschema$.
 \item  Let us consider any access in~$I_\acc^+$ 
    which is not an access in~$I_\acc$.
The binding for this access must include
some element of~$\dom(W)$, so its matching tuples must be in~$W$, which are all
    in~$I_\acc^+$. Hence, 
    by construction any such accesses are valid for~$\aschema$. 
\end{enumerate}
So we conclude the proof of Theorem~\ref{thm:simplifychoiceuidfd} using
Lemma~\ref{lem:enlargeprog}, fixing each access according to the above process.

\needspace{5em}
\section{Proofs for Section~\lowercase{\ref{sec:complexitychoice}}: Decidability
using Choice Simplification}
\subsection{Further example of decidability using choice simplification}
\label{app:complexitychoice}

In the body of the paper we proved decidability for
monotone answerability with FGTGDs.
We claimed that it applies to  extensions with disjunction and negation.
We now substantiate this.
We will use the Guarded Negation Fragment (GNF) of \cite{gnfj}. We will need
to know only that GNF satisfiability is decidable, and that GNF
contains  CQs, contains extensions of frontier-guarded TGDs with disjunction and negation,
and is closed under Boolean combinations of sentences.

\begin{theorem} \label{thm:decidegnf}
  We can decide whether a CQ is
monotone answerable with respect to a schema having result bounds, with 
equality-free constraints
that are  in the Guarded Negation Fragment. In particular, this holds when
  constraints are extensions of FGTGDs with disjunction and negation~\cite{bourhis2016guarded,gnfj}.
\end{theorem}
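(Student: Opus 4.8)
The plan is to reuse the reduction of Theorem~\ref{thm:decidegf} essentially verbatim up to the point where a query containment problem is obtained, and then to discharge that containment using decidability of $\gnf$ satisfiability rather than the $\twoexp$ bound for frontier-guarded TGDs. Since the constraints are assumed equality-free, Theorem~\ref{thm:simplifychoice} applies, so I may assume that every result bound equals one. By Proposition~\ref{prop:elimupper} I replace each such bound with a result lower bound of one, and then Proposition~\ref{prop:reduce} turns monotone answerability of $Q$ into a query containment problem $Q \subseteq_\Gamma Q'$, where $\Gamma$ comprises the original $\gnf$ constraints $\Sigma$, their primed copy $\Sigma'$, and the accessibility axioms.

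The first key step is to verify that $\Gamma$ lies in $\gnf$. The axiom $R_\acc(\vec w) \rightarrow R(\vec w) \wedge R'(\vec w) \wedge \bigwedge_i \accessible(w_i)$ and, for each access method on a relation $R$ with input positions $\vec x$ and a result lower bound of one, the axiom $(\bigwedge_i \accessible(x_i)) \wedge R(\vec x, \vec y) \rightarrow \exists \vec z ~ R_\acc(\vec x, \vec z)$, are all frontier-guarded TGDs: the variables $\vec y$ may be universally quantified since they do not occur in the head, and all frontier variables occur in the single guard atom ($R$, resp.\ $R_\acc$). Frontier-guarded TGDs embed into $\gnf$, and $\Sigma, \Sigma'$ are in $\gnf$ by hypothesis (the primed copy is a mere renaming), so every sentence of $\Gamma$ is a $\gnf$ sentence.

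The second key step is to express the containment as a $\gnf$ satisfiability question. The containment $Q \subseteq_\Gamma Q'$ fails exactly when the sentence $Q \wedge \bigwedge \Gamma \wedge \neg Q'$ is satisfiable. Now $Q$ is a Boolean CQ, hence a $\gnf$ sentence; $\bigwedge \Gamma$ is a finite conjunction of $\gnf$ sentences; and $\neg Q'$ is the negation of the Boolean CQ $Q'$, which is again a $\gnf$ sentence since $Q'$ has no free variables, so its negation is a guarded negation of a sentence and $\gnf$ is closed under Boolean combinations of sentences. As $\gnf$ is closed under conjunction, the whole sentence lies in $\gnf$, and I conclude by decidability of $\gnf$ satisfiability~\cite{gnfj}. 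The ``in particular'' claim is then immediate, since extensions of frontier-guarded TGDs with disjunction and negation already embed into $\gnf$.

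The main obstacle is purely the syntactic bookkeeping of the previous two paragraphs: one must check that negating $Q'$ does not require unguarded negation of a formula with free variables (it does not, since $Q'$ is a sentence), and that the existential projections appearing in the result-lower-bound accessibility axioms are genuinely expressible as frontier-guarded TGDs (they are, after pulling the body existentials out as universals). Once these checks are in place, no analysis of the chase is needed, in contrast to the $\incd$ and FD cases of Section~\ref{sec:complexity}; decidability follows directly from the black-box decidability of $\gnf$.
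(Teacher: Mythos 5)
Your proof is correct and follows essentially the same route as the paper's: choice simplification via Theorem~\ref{thm:simplifychoice}, elimination of upper bounds via Proposition~\ref{prop:elimupper}, reduction to containment via Proposition~\ref{prop:reduce}, observing that the added accessibility axioms are frontier-guarded TGDs and hence in $\gnf$, and finally recasting $Q \subseteq_\Gamma Q'$ as satisfiability of $Q \wedge \Gamma \wedge \neg Q'$ in $\gnf$. The extra syntactic checks you carry out (that $\neg Q'$ stays in $\gnf$ because $Q'$ is a sentence, and that the lower-bound axioms are genuinely frontier-guarded) are exactly the points the paper's proof relies on implicitly.
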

\begin{proof}
Again, by  Theorem~\ref{thm:simplifychoice} we can assume that all result bounds
are one, and by Proposition~\ref{prop:elimupper} we can replace
the schema with the relaxed version containing
 only result lower bounds. Now, a result lower bound of~$1$ can be expressed
  as an $\incd$. Thus, Proposition~\ref{prop:reduce} allows us to reduce
  monotone answerability to a query containment problem
  with constraints $\Gamma$ consisting of $\Sigma$, a copy
$\Sigma'$ on primed relations, and additional frontier-guarded TGDs.
  The constraints $\Gamma$ are still in GNF. The resulting containment problem
$Q \subseteq_\Gamma Q'$ can be restated as satisfiability of $Q \wedge \Gamma \wedge \neg Q'$, which
is a GNF satisfiability problem. Hence, we conclude because GNF satisfiability
  is decidable.
\end{proof}

\subsection{Proof of Theorem~\ref{thm:deciduidfd}: Complexity of Monotone
Answerability for UIDs and FDs}
\label{app:deciduidfd}

In this appendix, we show Theorem~\ref{thm:deciduidfd} with the $\exptime$
complexity bound. Recall the statement:

\begin{quote}
  \deciduidfd
\end{quote}

To show this theorem, we will introduce
a general linearization result that extends
Proposition~\ref{prop:linearizeaccessids}.
The result will reduce query containment under IDs
and full GTGDs to query containment under \emph{linear TGDs}, that is, TGDs
whose body and head consist of one single atom (but allowing variable
repetitions). 
IDs and full GTGDs can simulate arbitrary GTGDs,
for which containment is $\twoexp$-complete, and
$\exptime$-complete for constant signature arity~\cite{taming}.
However, we will be
able to show an $\exptime$ bound without assuming constant signature arity, by just
bounding the arity of the signature used in \emph{side atoms} (i.e., non-guard
atoms): this can handle, e.g.,
accessibility facts
in truncated accessibility axioms. We will also show an $\np$ bound under
some additional assumptions: this bound is not used in the main text, but we state it for
completeness because it generalizes 
Proposition~\ref{prop:linearizeaccessids}.
Our linearization technique will resemble that
of~\cite{gmp}, but their results only apply when bounding the
arity and number of relations in the whole signature, whereas we do not make
such assumptions.

In this appendix, we first state the generalized linearization result and its
consequences in Appendix~\ref{app:complexitygeneral}. We
then prove Theorem~\ref{thm:deciduidfd} using this result, in
Appendix~\ref{apx:separable}. Last, we give the
proof of the generalized linearization result, which is somewhat technical, in
Appendix~\ref{apx:prfcomplexitygeneral}.

\subsubsection{Statement of Generalized Linearization Result}
\label{app:complexitygeneral}
We consider constraints that consist of non-full $\incd$s and full GTGDs on a
specific side signature, and measure the arity of head relations:

\begin{definition}
  Let $\gdep$ be a full GTGD on signature $\sign$.
  The \emph{head arity} of~$\gdep$ is the number of variables used in the head
  of~$\gdep$. Given a sub-signature~$\sidesign \subseteq \sign$, we say that~$\gdep$ has
  \emph{side signature $\sidesign$} if there is a choice of guard atom in the body
  of~$\gdep$ such that all other body atoms are relations of~$\sidesign$.
\end{definition}

The result below uses the notion of semi-width, defined in
Appendix~\ref{apx:semiwidthclassic}.
For a set of constraints $\Sigma$, we write $\card{\Sigma}$ for their size (e.g. in a string
representation), and extend the notions of head arity and side signature in the
expected way.

\newcommand{\thmidreduce}{
  For any $a' \in \NN$, there are polynomials $P_1, P_2$ such that the
  following is true. Given:
  \begin{compactitem}
  \item A signature $\sign$ of arity $a$;
  \item A subsignature $\sidesign \subseteq \sign$ with~$n'$ relations and
    arity~$\leq a'$;
  \item A CQ~$Q$ on~$\sign$;
  \item A set $\Sigma$ of non-full $\incd$s of width $w$ and full GTGDs with
    side signature $\sidesign$ and head arity $h$;
  \end{compactitem}
  We can compute the following:
  \begin{compactitem}
  \item A set $\Sigma'$
  of linear TGDs of semi-width $\leq w$ and arity~$\leq a$, in time
    $P_1(\card{\Sigma}, 2^{P_2(w, h, n')})$, independently from~$Q$;
  \item A CQ~$Q^\lift$, in time 
  $P_1(\card{\Sigma},
  \card{Q}^{P_2(w, h, n')})$.
  \end{compactitem}
  The constraints $\Sigma'$ and the CQ~$Q^\lift$ ensure that
  for any CQ~$Q'$,
  we have $Q \subseteq_\Sigma Q'$ iff $Q^\lift \subseteq_{\Sigma'} Q'$.
}

\begin{theorem}
  \label{thm:idreduce}
  \thmidreduce
\end{theorem}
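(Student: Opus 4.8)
The plan is to generalize the linearization machinery developed for Proposition~\ref{prop:linearizeaccessids}, from the special case of bounded-width $\incd$s together with truncated accessibility and transfer axioms, to the present case of non-full $\incd$s of width $w$ together with arbitrary full $\gtgd$s whose non-guard body atoms lie in the bounded side signature $\sidesign$. The governing intuition is unchanged: in any chase by $\Sigma$, the facts created by the non-full $\incd$s form a tree, each child sharing at most $w$ \emph{exported} elements with its parent; and because the full $\gtgd$s are guarded, every side hypothesis they consult is local to a single guard fact. I will therefore encode into an enriched signature $\sign^\lift$, for each relation $R$ of $\sign$, annotated copies $R_\alpha$ where $\alpha$ records the bounded \emph{side type} of the fact: the side facts over $\sidesign$, together with the facts producible by full-$\gtgd$ heads, that hold among the $\leq w$ elements inherited from the parent. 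Since $\sidesign$ has arity $\leq a'$ and $n'$ relations, and head atoms span at most $h$ elements, the number of such side types is bounded by $2^{P_2(w,h,n')}$, independently of the main arity $a$; the inherited positions into which these elements fall are dictated by the finitely many $\incd$ heads of $\Sigma$ and their bounded compositions, so the total number of annotated relations stays polynomial in $\card{\Sigma}$ times $2^{P_2(w,h,n')}$.

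The construction proceeds in the same stages as for Proposition~\ref{prop:linearizeaccessids}. First I define \emph{derived full $\gtgd$s of bounded breadth}, namely those implied by $\Sigma$ whose side hypotheses constrain at most $w$ distinguished guard elements, and I give a saturation algorithm computing all of them, generalizing the truncated accessibility saturation; its termination and running time follow from the bound on the number of side types. Second, I prove a \emph{well-orderedness} lemma generalizing Lemma~\ref{lem:wellaccessids}: any chase proof can be re-ordered, by re-firing subtrees, so that every side or head-produced fact about an element is created after the $\incd$ step introducing that element, and so that the side facts used to derive anything about the exported elements of a fact were already present when the fact was created. Third, I introduce a \emph{short-cut chase} alternating $\incd$ steps (which create tree children, copying down only the side facts about the $\leq w$ exported elements) with bounded-breadth saturation steps (which apply all derived full $\gtgd$s locally at a node), and prove its completeness up to homomorphism exactly as in Lemma~\ref{lem:normalizationaccessids}.

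From the short-cut chase I read off $\Sigma'$ as two groups of linear TGDs. The \emph{Lift} rules are annotated copies $R_\alpha(\vec u)\to\exists\vec z\, S_\beta(\vec z,\vec u)$ of the $\incd$s, where $\beta$ is obtained by saturating the type $\alpha$ and restricting it to the exported positions; these preserve the exported variables and hence have width $\leq w$. The \emph{Emit} rules $R_\alpha(\vec x)\to T(\dots)$ project each annotated fact back to the plain main and side facts entailed by its saturated type. The plain relations occur in no rule body, so the Emit rules have an acyclic position graph, and $\Sigma'$ has semi-width $\leq w$ as required. I obtain $Q^\lift$ by saturating $\canondb(Q)$ under the derived full $\gtgd$s and annotating each fact with its side type; this touches only the $\leq\card{Q}$ elements of $Q$ and bounded-size configurations over them, which gives the $\card{Q}^{P_2(w,h,n')}$ size bound. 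The equivalence that $Q\subseteq_\Sigma Q'$ holds iff $Q^\lift\subseteq_{\Sigma'}Q'$ holds then follows by soundness of the Lift and Emit rules in one direction and by completeness of the short-cut chase in the other.

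The main obstacle is the presence of full $\gtgd$s whose heads produce \emph{main} facts, not merely side facts, over existing elements: such a fact can itself serve as the guard of a further $\incd$, so it cannot be emitted and forgotten. The fix is to fold these head-produced facts into the local saturation and, where one of them guards an $\incd$, to compose the producing $\gtgd$ with that $\incd$ into a single Lift rule; here the head-arity bound $h$ is exactly what keeps the number of distinct such facts — and hence of derived rules and of side types — controlled, and what must be threaded carefully through the generalized well-orderedness argument so that breadth never exceeds $w$ and the composed rules still export at most $w$ variables. Verifying this combinatorial bookkeeping, and confirming that it stays polynomial in $\card{\Sigma}$ and independent of the main arity $a$, is the delicate heart of the proof; once it is in place, the stated time bounds follow by counting and the equivalence is purely a matter of matching the two chases.
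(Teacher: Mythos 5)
Your proposal follows essentially the same route as the paper's proof: the same three-stage structure (a saturation algorithm computing derived bounded-breadth full GTGDs, a well-ordered chase normalization, a short-cut chase organized into bags that alternates $\incd$ steps with local saturation), followed by linearization over relations annotated with the side configuration of the $\leq w$ inherited elements, split into width-$w$ Lift rules and acyclically-positioned projection rules, with $Q^\lift$ obtained by saturating and annotating $\canondb(Q)$. You also correctly identify the key difficulty (head-produced main facts serving as guards for further $\incd$s) and its resolution via composing the saturated bag contents into the Lift rules, which is exactly what the paper's closure rules and Lift construction do.
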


Note that we assume that $\incd$s are \emph{non-full}, i.e., they must create at
least one null. Of course, full $\incd$s can be seen as full GTGDs with empty side
signature, so they are also covered by this result, but they may
make the head arity increase if included in the full GTGDs.

Theorem~\ref{thm:idreduce} is proven in Appendix~\ref{apx:prfcomplexitygeneral}
by generalizing
the linearization argument of
Proposition~\ref{prop:linearizeaccessids}.
We compute derived axioms of a limited breadth (generalizing the notion of 
Appendix~\ref{apx:linearizeaccessids}), and then use them in a short-cut
chase, which avoids passing facts up and down.
We then show how to simulate the short-cut chase
by linear TGDs. The main 
difference with~\cite{gmp} is that we exploit the width
and side signature arity bounds to compute only a portion of the derived
axioms, without bounding the overall signature.

We can use Theorem~\ref{thm:idreduce} by fixing the head arity~$h$, the
width~$w$, and the entire side signature~$\sidesign$, to deduce the following.
This result is only given for completeness, as we do not use it in the main
text; but it is the result that one would use to generalize
Proposition~\ref{prop:linearizeaccessids}:

\begin{corollary}
  \label{cor:np}
  There is an $\np$ algorithm for query containment under bounded-width non-full $\incd$s and
  full GTGDs of bounded head arity on a fixed side signature.
\end{corollary}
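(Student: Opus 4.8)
The plan is to derive this as an essentially immediate consequence of the generalized linearization of Theorem~\ref{thm:idreduce}, chained with the $\np$ containment algorithm for linear TGDs of bounded semi-width (Proposition~\ref{prop:semiwidthclassic-general}). Fixing the head arity~$h$, the width~$w$, and the side signature~$\sidesign$ means in particular that $n'$ (the number of relations in~$\sidesign$) and $a'$ (its arity) are constants, while the input signature~$\sign$, the constraint set~$\Sigma$, and the queries $Q, Q'$ remain genuine inputs whose size is not bounded.

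First I would invoke Theorem~\ref{thm:idreduce} with these constants. It yields a set $\Sigma'$ of linear TGDs of semi-width $\leq w$ together with a CQ~$Q^\lift$ such that, for every~$Q'$, we have $Q \subseteq_\Sigma Q'$ iff $Q^\lift \subseteq_{\Sigma'} Q'$. The crucial point is that the running times stated in Theorem~\ref{thm:idreduce} collapse to polynomials once $w$, $h$, $n'$ are fixed: the factor $2^{P_2(w,h,n')}$ becomes a constant, so $\Sigma'$ is produced in time $P_1(\card{\Sigma}, O(1))$, hence polynomially in~$\card{\Sigma}$ and of polynomial size; and the factor $\card{Q}^{P_2(w,h,n')}$ becomes a fixed-degree polynomial in~$\card{Q}$, so $Q^\lift$ is produced in time polynomial in~$\card{\Sigma}$ and~$\card{Q}$. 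Thus the entire reduction from the original containment instance to $Q^\lift \subseteq_{\Sigma'} Q'$ runs in polynomial time.

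It then remains only to decide $Q^\lift \subseteq_{\Sigma'} Q'$. Since $\Sigma'$ consists of linear TGDs whose semi-width is bounded by the fixed constant~$w$, Proposition~\ref{prop:semiwidthclassic-general} supplies an $\np$ algorithm for exactly this containment problem. Composing a polynomial-time reduction with an $\np$ decision procedure places the original problem in~$\np$, which is the claim.

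Rather than a genuine obstacle—Theorem~\ref{thm:idreduce} does the heavy lifting—the step that needs care is the bookkeeping showing that each fixed parameter is precisely the one controlling an otherwise-exponential factor: $h$ and~$n'$ (the latter through~$\sidesign$) occur only inside the exponent $P_2(\cdot)$, so fixing them turns $2^{P_2}$ into a constant and $\card{Q}^{P_2}$ into a polynomial; whereas $w$ is needed both to bound the semi-width of~$\Sigma'$ (so that Proposition~\ref{prop:semiwidthclassic-general} is applicable) and to keep the reduction polynomial. I would also verify explicitly that the $\np$ procedure of Proposition~\ref{prop:semiwidthclassic-general} is polynomial in~$\card{\Sigma'}$, $\card{Q^\lift}$, and~$\card{Q'}$ for fixed~$w$: its certificate is a bundle of $\card{Q'}$ chase branches of depth $O(\card{Q'}\cdot\card{\Sigma'}\cdot m^{w+1})$, where the arity~$m$ is bounded by the input size, so $m^{w+1}$ is polynomial for fixed~$w$, giving a polynomial-size, polynomial-time-checkable witness. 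Note that no bound on the overall arity of~$\sign$ is required, since it enters only through this polynomial factor.
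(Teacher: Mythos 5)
Your proposal is correct and follows exactly the paper's route: instantiate Theorem~\ref{thm:idreduce} with the fixed parameters $w$, $h$, and $\sidesign$ so that $\Sigma'$ and $Q^\lift$ are computed in polynomial time, then decide the resulting containment under linear TGDs of semi-width $\leq w$ via Proposition~\ref{prop:semiwidthclassic-general}. The extra bookkeeping you supply (which fixed parameter tames which exponential factor, and why the certificate in Proposition~\ref{prop:semiwidthclassic-general} stays polynomial for fixed $w$) is consistent with, and slightly more explicit than, the paper's two-line argument.
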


While the side signature $\sidesign$ is constant,
the arity of~$\sign$ is not constant above; however, relations in
$\sign \setminus \sidesign$ can only be used in the bounded-width $\incd$s and as
guards in the full GTGDs.

\begin{proof}[\myproof of Corollary~\ref{cor:np}]
  Apply the reduction of Theorem~\ref{thm:idreduce}, which computes in
  $\ptime$ an equivalent set of linear TGDs of constant semi-width and a
  rewriting of the left-hand-side query. Then, conclude
  by Proposition~\ref{prop:semiwidthclassic-general}.
\end{proof}

The result also implies an $\exptime$ bound for query containment with a more
general language of $\incd$s and GTGDs, which we will use to prove
Theorem~\ref{thm:deciduidfd}:

\begin{corollary}
  \label{cor:exp}
  There is an $\exptime$ algorithm for query containment under $\incd$s and
  GTGDs on a bounded arity side signature.
\end{corollary}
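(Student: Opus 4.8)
The plan is to deduce Corollary~\ref{cor:exp} from the generalized linearization of Theorem~\ref{thm:idreduce}. First I would put the constraints into the input shape of that theorem --- non-full $\incd$s plus full GTGDs --- by the standard decomposition of GTGDs. For a GTGD $\tau : \phi(\vec x) \rightarrow \exists \vec y~\psi(\vec x, \vec y)$ with existential head and frontier $\vec z$, I introduce fresh relations $H_\tau$ and $K_\tau$ and replace $\tau$ by: the full GTGD $\phi(\vec x) \rightarrow H_\tau(\vec z)$; the non-full $\incd$ $H_\tau(\vec z) \rightarrow \exists \vec y~K_\tau(\vec z, \vec y)$; and one full GTGD $K_\tau(\vec z, \vec y) \rightarrow B(\cdots)$ per head atom $B$ of $\psi$ (GTGDs with no existentials, including full $\incd$s, are kept as full GTGDs directly, while the given non-full $\incd$s are already in the right form). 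Since $Q$ and $Q'$ never mention the auxiliary relations, this faithfully simulates the chase and preserves containment.

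The bookkeeping that makes this work is that the decomposition preserves a bounded-arity side signature: the rule $\phi(\vec x) \rightarrow H_\tau(\vec z)$ reuses the original, bounded-arity side atoms of $\phi$, while each extraction rule is guarded by $K_\tau$ and has \emph{no} side atoms at all. All frontiers and head atoms have size at most the arity $a$, and the auxiliary relations have arity at most $\card{\Sigma}$, so the overall arity stays polynomial in the input; hence I obtain non-full $\incd$s of width $w \le a$ and full GTGDs of head arity $h \le a$ over a side signature of constant arity $a'$.

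Next I apply Theorem~\ref{thm:idreduce} to obtain an equivalent problem $Q^\lift \subseteq_{\Sigma'} Q'$ with $\Sigma'$ a set of linear TGDs of semi-width $\le w$. As $a'$ is constant, the parameters $w, h, n'$ are all polynomial in the input, so $2^{P_2(w,h,n')}$ and $\card{Q}^{P_2(w,h,n')}$ are singly exponential; thus $\Sigma'$ and $Q^\lift$ are produced in $\exptime$ and have at most exponential size. Finally I would decide $Q^\lift \subseteq_{\Sigma'} Q'$ by a \emph{deterministic} refinement of the analysis behind Proposition~\ref{prop:semiwidthclassic-general}: instead of guessing near branches, compute by a fixpoint the reachability ``types'' of the chase (a fact shape together with the $\le w$ exported values that may be remembered along a branch), whose number is bounded by $\mathrm{poly}(\card{\Sigma'}) \cdot 2^{\mathrm{poly}(w,a)}$, and test for a homomorphic image of $Q^\lift$ by a dynamic program over these types.

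The main obstacle is the complexity accounting, which is genuinely delicate because linearization inflates the instance to exponential size. A black-box bound for linear-TGD containment applied to this larger instance overshoots the target: the PSPACE combined complexity would yield only $\expspace$, and the $\np$ bound of Proposition~\ref{prop:semiwidthclassic-general} (valid for \emph{fixed} semi-width) would yield only $\mathrm{NEXPTIME}$. To land in $\exptime$ I must instead decide containment under linear TGDs of semi-width $\le w$ and arity $\le a$ in time $\mathrm{poly}(\card{\Sigma'} + \card{Q^\lift}) \cdot 2^{\mathrm{poly}(w,a)}$, i.e.\ polynomial in the (exponential) problem size with the exponential cost confined to $w$ and $a$, which remain polynomial in the original input. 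Establishing this deterministic, parameterized bound --- the analogue of Corollary~\ref{cor:np} with $w$ promoted from a constant to a parameter --- is the crux; multiplying it against the exponential linearization cost then collapses to $\exptime$.
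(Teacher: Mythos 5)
Your reduction to the input shape of Theorem~\ref{thm:idreduce} is exactly what the paper does: the three-way decomposition of an existential GTGD via $H_\tau$ and $K_\tau$, the observation that the side signature stays bounded (the extraction rules have single-atom bodies, hence no side atoms), and the bookkeeping that $w$, $h$, $n'$ remain polynomial so that the linearization runs in $\exptime$ and produces an exponential-size set $\Sigma'$ of linear TGDs together with $Q^\lift$. Up to that point you are on the paper's track.

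The gap is in the last step. You correctly diagnose that applying Proposition~\ref{prop:semiwidthclassic-general} as a black box to an exponential-size instance only gives $\mathrm{NEXPTIME}$, but the repair you propose --- a deterministic fixpoint over per-branch ``reachability types'' followed by a dynamic program --- is asserted rather than established, and you yourself call it the crux. As sketched it is not obviously sound: deciding whether $Q'$ (note: the target of the homomorphism test is $Q'$ into the chase of $Q^\lift$, not ``a homomorphic image of $Q^\lift$'') has a match in the chase tree requires coordinating query atoms that are spread over several branches and the root, sharing exported values; a per-branch type fixpoint records what is derivable but does not by itself say how the $\le 2\card{Q'}$ nodes of an augmented match image distribute over subtrees. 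Making this deterministic and confining the exponential cost to $w$ and $a$ is a nontrivial piece of automata-style work that your proposal does not carry out. The paper sidesteps all of this by exploiting first-order rewritability of linear TGDs: the UCQ rewriting of $Q'$ under $\Sigma'$ consists of exponentially many CQs, each of size bounded by $\card{Q'}$ (linear TGDs resolve atom-for-atom), so one enumerates them and checks each by a homomorphism test into the closure of $Q^\lift$; each test costs $\card{Q^\lift}^{O(\card{Q'})}$, which is single-exponential, and so is the whole enumeration. If you want to salvage your forward-chase route you would need to actually prove the parameterized deterministic bound you postulate; otherwise, switch the final step to backward rewriting.
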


\begin{proof}
  One can simulate  GTGDs by $\incd$s and full GTGDs, via additional relations.
Thus we can assume the GTGDs are full. Likewise, we can assume that the
  $\incd$s are non-full, by making the full $\incd$s part of the full GTGDs.

  Now, apply the reduction of Theorem~\ref{thm:idreduce}, which computes in
  $\exptime$ an equivalent set of linear TGDs and computes a rewriting of the
  left-hand-side query.
  Consider each one of the exponentially many possible first-order
rewritings of the
  right-hand-side query under these linear TGDs (see \cite{calirewriting,datalogpmj}),
  and for each of them, check whether it holds in the closure.
\end{proof}

\subsubsection{Proving Theorem~\ref{thm:deciduidfd} using Generalized
Linearization}
\label{apx:separable}
We can now complete the proof of Theorem~\ref{thm:deciduidfd} from the proof
given in the body. We must first explain why FD violations do not happen, and
second explain how to obtain the $\exptime$ bound from Corollary~\ref{cor:exp}.

\myparaskip
\myparagraph{FD violations}
Recall that, in the proof, we had to argue that after applying the FDs to the canonical database
and pre-processing the constraints slightly, we could drop the FDs in~$\Sigma$ and in~$\Sigma'$ without impacting the entailment.
We argued in the body that the constraints for~$\amd$ would consist of~$\Sigma$,
$\Sigma'$, and the
following:
\begin{itemize}
  \separableconstraints
\end{itemize}
We then modified the second set of axioms so that, in going
from~$R$ to~$R'$, they
preserve not only the input positions of~$\mt$, but also the positions of~$R$ that are determined by
input positions of~$\mt$ (i.e., appear as the right-hand-side of an FD whose
left-hand-side is included in the input positions). 
As mentioned in the body,
the use of these ``expanded result-bounded constraints''
does not impact the soundness of the chase, since chase step with these
  constraints can
be mimicked by a step with an original constraint followed by FD applications.

We now complete the argument  to show
that after this rewriting, and after applying the FDs to the initial
instance, we can apply  the TGD constraints while \emph{ignoring the FDs}.

To argue this, we note that it suffices to consider chase proofs where
the primed copies of the $\uincd$s in~$\Sigma'$ are never fired prior
to constraints in~$\Sigma$ or prior to expanded result-bounded constraints.
This is because the primed copies of $\uincd$s can not create triggers for any of those
constraints.

We show that in a chase with this additional property, the FDs will never fire.
We prove this by induction on the rule firing in the chase.

Observe that the $\uincd$s
of~$\Sigma$ cannot introduce FD violations when we perform the
chase, because we fire only active triggers. The same is true of the $\uincd$s
of~$\Sigma'$ when we apply them. So it suffices to consider the expanded
result-bounded constraints. 
Assume by contradiction that firing these rules creates a violation, and
consider the first violation that is created. Either the violation is on a
primed relation, or it is on an unprimed relation. If it is on a primed
relation, it consists of 
a first fact $F_1' = R'(\vec c, \vec d)$, and of a second fact $F_2' = R'(\vec f,
\vec g)$ which was just generated by firing an expanded result-bounded
constraint on some fact $F_2 = R(\vec f, \vec h)$. The constraint may be of
the form of the first bullet point above, in which case $\vec g$ and $\vec h$
are empty tuples; or it may be of the form of the second bullet
point above, modified to also export determined positions as we explained, in
which case all values in $\vec g$ are fresh. Our
additional property on the chase ensures that we do not fire~$\Sigma'$, so $F_1$ must also
have been generated by firing an expanded result-bounded constraint on some fact
$F_1 = R(\vec c, \vec e)$, and again $\vec d$ is either empty or only consists of
fresh values. Now, we know that the determiner of the violated FD must be within
the intersection of the positions of $\vec c$ and of $\vec f$, because it cannot
contain fresh values in any of the two facts $F_1'$ and $F_2'$. Hence, by the
modification that we did on the axioms, the determined position of the violated
FD must also be within the intersection of the positions of~$\vec c$ and
of~$\vec f$. This means that $F_1$ and $F_2$ are already a violation of the FD,
which contradicts minimality of the violation.

Now, if the violation is on an unprimed relation, it consists of a first fact $F_1' = R(\vec
c, \vec d)$, and of a second fact $F_2' = R(\vec f, \vec g)$ which was just generated by a
constraint of the form of the second bullet point above, modified to also export
determined positions as we explained. In this case, let $F_2 = R(\vec f, \vec
h)$ be the fact that triggered the rule application. Because the elements of
$\vec g$ are fresh, the determiner of the violated FD must be within positions
of~$\vec f$, hence, by the modification that we did on the axioms, the determined position of the violated FD must also be
within positions of~$\vec f$, but this means that $F_1'$ and $F_2$ are already a
violation of the FD, contradicting minimality.

\myparaskip
\myparagraph{Complexity}
The improved complexity bound is simply by using Corollary~\ref{cor:exp} on
$\gammasep$: the side signature is fixed, because it only consists of
$\accessible$. This shows the desired complexity bound and concludes the proof.

\subsubsection{Proof of Generalized Linearization Result}
\label{apx:prfcomplexitygeneral}
In this appendix, we prove Theorem~\ref{thm:idreduce},
which implies an $\np$ bound for query containment
under a class of guarded TGDs (Corollary~\ref{cor:np}), and an $\exptime$ bound
for query containment under a larger
class (Corollary~\ref{cor:exp}).
The first bound generalizes Johnson and Klug's result on query
containment under bounded-width $\incd$s \cite{johnsonklug}. The second
result generalizes a result of Cal\`{\i}, Gottlob and Kifer \cite{taming} 
that query containment under guarded TGDs of bounded arity
is in~$\exptime$. The construction we use is a refinement of the linearization
method given in Section~4.2 of
Gottlob, Manna, and Pieris \cite{gmp}.

Of the two corollaries mentioned above, the first one generalizes the technique
presented for accessibility axioms in 
Appendix~\ref{apx:linearizeaccessids}, and the
second one is used in the body of the paper to give bounds on the monotone
answerability problem. These two results, and the more general
Theorem~\ref{thm:idreduce}, are completely 
independent from access methods or result bounds, and may be of independent
interest.

We recall the statement of Theorem~\ref{thm:idreduce}:

\begin{quote}
  \thmidreduce
\end{quote}
Before proving Theorem~\ref{thm:idreduce}, we comment on the intuition of why
the hardness results of~\cite{taming} do not apply to the languages described in
Corollaries~\ref{cor:np} and~\ref{cor:exp}. For Corollary~\ref{cor:exp},
it is shown in~\cite[Theorem~6.2]{tamingjournal} that deciding the containment of a fixed
query into an atomic query under GTGDs is $\twoexp$-hard when the arity is
unbounded, even when the number of relations in the signature is bounded. The
proof works by devising a GTGD theory that simulates an $\expspace$ alternating
Turing machine, by coding the state of the Turing machine as facts on tuples of
elements: specifically, a fact $\mathit{zero}(\mathbf{V}, X)$ codes that there
is a zero in the cell indexed by the binary vector $\mathbf{V}$ in
configuration~$X$. The arity of such relations is unbounded, so they cannot be
part of the side signature~$\sidesign$. However, in the simulation of the Turing
machine, the GTGDs in the proof use another relation as guard (the~$g$
relation), and the bodies contain other high-arity relations, so there is no
choice of~$\sidesign$ for which the GTGD theory defined in the hardness proof
can satisfy the
definition of a side signature.

For Corollary~\ref{cor:np}, the proof in~\cite[Theorem~6.2]{tamingjournal}
explicitly writes the state of the~$i$-th tape cell of a configuration~$X$ as,
e.g., $\mathit{zero_i}(X)$. These relations occur in rule bodies where they are
not guards, but as Corollary~\ref{cor:np} assumes that the side signature is
fixed, they cannot be part of the side signature. A variant of the construction
of the proof (to show $\exptime$-hardness on an unbounded signature arity) would be
to code configurations as tuples of elements $X_1 \ldots X_n$ and write, e.g.,
$\mathit{zero}(X_i)$. However, the constant width bound on $\incd$s would then mean
that the proof construction can only look at a constant number of cells when
creating one configuration from the previous one.

\medskip

We now turn to the proof of  Theorem~\ref{thm:idreduce}.

We say that a full GTGD is \emph{single-headed} if it has only one head atom,
and we will preprocess the input full GTGDs in~$\ptime$ to ensure this condition.
For every full GTGD, we introduce a new relation of arity at most~$h$ to stand for its
head, and we add the full GTGDs which assert that the new head relation creates
every fact in the original head. This is $\ptime$, and the resulting set of full
GTGDs is single-headed and still satisfies the constraints, it does not change
the side signature, and the width cannot increase too much: the bound $w$ on
width is changed by $\max(w, h)$, which is not a problem for the complexity
bounds that we are claiming. 
Hence, we perform this transformation, and throughout the appendix, when we refer to \emph{full GTGDs}, we always
assume that they are single-headed, and when we refer to the bound on width we
assume that it has been changed accordingly. However, we will still need to refer to the
head arity bound in the sequel, to bound the arity of the
single relation in the head of full GTGDs: note that the transformation
described in this paragraph cannot change the head arity bound.

Now, the intuition of the proof is essentially to follow the process given
in  Appendix~\ref{apx:linearizeaccessids}
for linearizing bounded-width $\incd$s and  truncated accessibility axioms.
Specifically, our proof strategy consists of three steps.
We first show that we can compute a form of
\emph{closure} of our $\incd$s and full GTGDs for a bounded domain of side
atoms, which we call bounded \emph{breadth} (generalizing the notion in
Appendix~\ref{apx:linearizeaccessids}). This intuitively ensures
that, whenever a full GTGD generates a fact about earlier elements, then this
generation could already have been performed when these earlier elements had
been generated, using an implied full GTGD. This first step is the main part of
the proof, and its correctness relies of a notion of \emph{well-ordered} chase
that we introduce, generalizing the analogous notion presented in
Appendix~\ref{app:orderability-accessids} for the specific case of
accessibility axioms.

Once this closure has been
done, the second step is to structure the chase further, by enforcing that we only fire the
full GTGDs and their small-breadth closure just after having fired an $\incd$. As in the
earlier proof, we call this 
the \emph{short-cut chase}, since we short-cut certain derivations that go up and down
the chase tree via the firing of derived axioms.
The third step is to argue that the short-cut chase can be
linearized with $\incd$s, again generalizing the previous constructions.

We now embark on the proof of Theorem~\ref{thm:idreduce}, which will conclude at
the end of this section of the appendix.

\myparaskip
\myparagraph{Bounded breadth closure}
The first step of our proof is to show how to compute a closure of the constraints
$\Sigma$.
We will now consider the
side signature $\sidesign$ with its fixed arity bound~$a'$, and will consider the bounds $w, h$ on
width and head arity respectively. We will reason about full GTGDs with
side signature $\sidesign$ that obey a certain \emph{breadth} restriction:

\begin{definition}
  Let $b \in \NN$, let $\sidesign$ be a side signature, and
  let $\gdep$ be a full GTGD on side signature~$\sidesign$.
  We say that~$\gdep$ has
  \emph{breadth $\leq b$} if there exists a guard atom $A$ in the body of~$\gdep$
  and a subset $P$ of at most $b$ positions of~$A$ such that, letting $X$ be the
  $\leq b$ variables that occur at the positions of~$P$ in~$A$, the following
  \emph{variable occurrences condition} holds:
  \begin{itemize}
    \item the other body atoms than~$A$, which use relations of~$\sidesign$,
      only use variables of~$X$;
    \item the other positions of~$A$ use variables that are not in~$X$ and occur
      only once in the body (i.e., there are no variable repetitions)
  \end{itemize}
  In other words, $\gdep$ can then be written:
  \[
    A(\vec x, \vec y) \wedge \phi(\vec x) \rightarrow \exists \vec z ~ H(\vec
    x, \vec y, \vec z)
  \]
  where
  $A$ denotes the guard atom, 
  $\phi$ denotes the other body atoms (which use relations
  of~$\sidesign$), $H$ denotes the head atom,
  $\vec x$ denotes the variables of~$X$, the variables of~$\vec y$ in the
  body are not repeated, the variables $\vec x$ occur in at most $b$
  positions in~$A$.
\end{definition}

It will be useful to reason about the
possible full GTGDs on the side signature $\sidesign$ that satisfy
the head arity bound~$h$ and have breadth at most 
the width $w$ of the $\incd$s.
We will call such full GTGDs the 
\emph{suitable full GTGDs}.

\begin{lemma}
  \label{lem:numbergtgds}
  The  number of suitable
full GTGDs of breadth $\leq b$ and of head arity $\leq h$
  is at most 
\[ 
  n \cdot (a+1)^b \cdot b^b \cdot (2^{n' \cdot b^{a'}}) \cdot n \cdot a^{h}
\]
 where: $n$ is the number of relations
in the full signature~$\sign$, $g$ is the number of atoms in bodies
  of~$\Sigma$,
  $a$ is the maximal arity of any relation
in~$\sign$,  $n'$ is the number of
relations in~$\sidesign$, and
$a'$ is the maximal arity of the relations of~$\sidesign$.
\end{lemma}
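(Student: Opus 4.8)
The plan is to prove this cardinality bound by a direct enumeration, counting, up to renaming of variables, the independent choices that determine a suitable full GTGD, and multiplying the number of options for each. By the single-headed preprocessing we may assume the head is a single atom over a relation of arity at most $h$, and by the definition of breadth $\leq b$ the GTGD can be put in the normal form $A(\vec x, \vec y) \wedge \phi(\vec x) \rightarrow H(\dots)$, where $A$ is the guard, $\vec x$ are the at most $b$ distinguished variables occupying at most $b$ positions of $A$, the variables $\vec y$ fill the remaining positions of $A$ without repetition, $\phi$ is a conjunction of side atoms over $\sidesign$ mentioning only $\vec x$, and $H$ is the head atom. The crucial observation making the count finite is that we identify GTGDs differing only by a global renaming of variables, so the names chosen for the fresh $\vec y$ and for the distinguished variables are immaterial beyond their pattern of occurrence.

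I would then bound the number of options for each component and multiply. The guard relation $A$ is one of the $n$ relations of $\sign$, giving the first factor $n$. To fix how the distinguished variables sit inside $A$, I treat their at most $b$ occurrences as $b$ slots: each slot is sent to one of the at most $a$ positions of $A$ or marked unused, giving $(a+1)^b$, and each slot is then labelled by one of the at most $b$ distinguished variables, giving $b^b$; the non-distinguished positions of $A$ are thereafter forced (up to renaming) to carry distinct fresh variables and contribute no further choice. The side formula $\phi(\vec x)$ is an arbitrary conjunction of atoms over $\sidesign$ using only the at most $b$ variables of $\vec x$: there are at most $n' \cdot b^{a'}$ such atoms (one of $n'$ relations, each of its at most $a'$ positions filled by one of $b$ variables), so $\phi$ ranges over at most $2^{n' \cdot b^{a'}}$ possibilities. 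Finally, the head atom $H$ chooses one of the $n$ relations, and since after preprocessing its arity is at most $h$ while every head variable is one of the at most $a$ variables occurring in the guard, it admits at most $a^h$ forms; this contributes $n \cdot a^h$. Multiplying these bounds yields $n \cdot (a+1)^b \cdot b^b \cdot 2^{n' \cdot b^{a'}} \cdot n \cdot a^h$, which is exactly the stated expression.

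The only step demanding genuine care is the renaming bookkeeping: I must verify that every suitable full GTGD of breadth $\leq b$ and head arity $\leq h$ is realised by at least one tuple of choices above, so that the product is a true upper bound, while respecting the variable-occurrences condition from the definition of breadth — in particular that the $\vec y$-variables are non-repeating and disjoint from $\vec x$, and that $\phi$ mentions only $\vec x$. A secondary point is to confirm that controlling the \emph{head arity} by $h$ really bounds the head-atom count by $a^h$: this relies on the single-headed preprocessing, which makes the head relation have arity at most $h$, together with the fullness of the GTGD, which forces every head variable to come from the guard and hence from a pool of size at most $a$. No quantity beyond the parameters already fixed is needed; in particular the parameter $g$ appearing in the lemma statement does not enter the bound.
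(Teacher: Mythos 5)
Your proposal is correct and follows essentially the same route as the paper: both enumerate the same five independent choices (guard relation, placement of the at most $b$ distinguished positions with a dummy option giving $(a+1)^b$, the $b^b$ variable pattern, the $2^{n'\cdot b^{a'}}$ possible sets of side atoms, and the $n\cdot a^h$ head choices) and multiply the resulting factors. Your additional remarks on the renaming bookkeeping and on the fact that the parameter $g$ never enters the bound are accurate but do not change the argument.
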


\begin{proof}
  To compute this, we illustrate how to choose a full GTGD satisfying the
  condition. Several choices will lead to the same GTGD, but this is not a
  problem as we are trying to derive an upper bound.

  Here are the choices that we have to make:

  \begin{itemize}
    \item Choose the relation for the guard atom: factor of~$n$.
    \item Choose the subset $P$ of size $\leq b$: factor of $(a+1)^b$ because an
      upper bound on this is to choose $b$ positions, with replacement, and with
      the option of choosing a dummy position for subsets of size $< b$.
    \item Choose the pattern of variables on these $P$ positions: factor of
      $b^b$ because an upper bound on this is to use a set of $b$ variables and
      decide which variable to put at each of the (at most) $b$ positions
      of~$P$.
    \item Choose the side facts, i.e., a subset of the possible facts: the
      corresponding factor is $2^N$  where $N$ is the number of possible facts,
      which are obtained by choosing a relation of the side signature and
      choosing a variable (among $b$) for each position (among $a'$ at most).
    \item Choose the head relation: factor of~$n$ for the choice of relation,
      and $a^h$ to decide, for each of the $h$ positions of this relation, which
      body variable to use (of which there are at most~$a$).
  \end{itemize}

  Hence, the result is indeed the quantity given in the lemma statement.
\end{proof}

Observe that, when $b \colonequals w$, when $h$, $n'$, and $a'$ are bounded, then the above quantity
is polynomial in the input signature~$\sign$. Further, when only $a'$
is bounded, then the quantity is singly exponential in the input.

Letting $\Sigma$ be our set of constraints (with non-full $\incd$s of bounded
width and full GTGDs), we say that a full GTGD $\gdep$ is
a \emph{derived suitable} full GTGD if it is suitable and 
if $\gdep$ is entailed by~$\Sigma$:
that is, any instance
that satisfies $\Sigma$ also satisfies $\gdep$.
Note that the derived suitable GTGDs do \emph{not} include all the full GTGDs that we
started with, because some of them may have breadth larger than~$w$, so they are
not suitable. The same was true in the previous proof: some original truncated
accessibility axioms were not completely reflected in the derived truncated
accessibility axioms, but the width bound ensured that this did not matter
except on the initial instance: this will also be the case here.
The second step of our proof of 
Theorem~\ref{thm:idreduce} is to show that the set of derived suitable full GTGDs can be
computed efficiently.

\begin{definition}
  We say that a suitable GTGD $\gdep$ is \emph{trivial} if its head atom
  already occurs in its body.
  Given a set of non-full $\incd$s and 
full GTGDs $\Sigma$, the
  \emph{$b$-closure} $\wclo{\Sigma}{b}$ is obtained by starting with
  the suitable GTGDs in~$\Sigma$ plus the
  trivial suitable GTGDs, and applying the following inference rules until we
  reach a fixpoint:
\begin{itemize}
\item (Transitivity): Suppose that there is a GTGD body $\beta: 
  R(\vec x, \vec y) \wedge \bigwedge_i A_i(\vec x)$ and heads
    $B_1(\vec z_1) \ldots B_n(\vec z_n)$ such that,
    for each $1 \leq j \leq n$,
    the 
    GTGD $\beta \rightarrow B_j(\vec z_j)$ is suitable and
    is in~$\Sigma \cup \wclo{\Sigma}{b}$.
    Suppose that there is a GTGD $\beta' \rightarrow \rho$ 
    in~$\Sigma \cup \wclo{\Sigma}{b}$, and 
    that there is a unifier $\upsilon$ mapping 
    $\beta'$ to~$\beta \wedge \bigwedge_j B_j(\vec z_j)$.
    Then add to~$\wclo{\Sigma}{b}$ the following:
    \[ \beta \rightarrow \upsilon(\rho).
\]
  \item ($\incd$): Suppose we have an $\incd$ $\dep$ in~$\Sigma$ from~$R(\vec x)$ to~$S(\vec y)$
    of width $w' \leq b$, which exports $x_{j_i}$ to~$y_{k_i}$ for~$1 \leq i
    \leq w'$.
     Suppose that~$S(\vec z, \vec w) \wedge \phi(\vec w) \rightarrow
     H(\vec w)$ is
     a GTGD in~$\Sigma \cup \wclo{\Sigma}{b}$, with the $S$-atom being
     a guard, such that the variables of~$\vec w$ in the~$S$-atom occur only at
     exported positions of the head of~$\dep$, and the variables of the $S$-atom at
     positions not exported in the head of~$\dep$ contain variables 
     with no repetitions.
     Then add to~$\wclo{\Sigma}{b}$ the following:
     \[R(\vec z, \vec w) \wedge \phi(\vec w) \rightarrow H(\vec w)\]
     where the $R$-atom is obtained from the $S$-atom by backwards rewriting
     via~$\dep$, i.e., its positions at exported positions in the body of~$\dep$
     contain the variables of the original GTGD at the corresponding exported
     position in the head of~$\dep$, and the other positions of the $R$-atom
     contain fresh variables with no repetitions.
\end{itemize}
\end{definition}

We claim that if this procedure is performed with~$b$ set to the
width bound~$w$, then all the GTGDs of~$\wclo{\Sigma}{b}$ are suitable (this is
clear), and that in fact
$\wclo{\Sigma}{b}$ contains all derived GTGDs:

\begin{proposition} \label{prop:derived}
  For any set $\Sigma$ of non-full $\incd$s of width $\leq w$ and full GTGDs on side
  signature $\sidesign$ with head arity $\leq h$,
  then $\wclo{\Sigma}{w}$ is
  the set of derived suitable full GTGDs.
\end{proposition}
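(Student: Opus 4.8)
The plan is to prove the two inclusions separately, with the soundness inclusion $\wclo{\Sigma}{w} \subseteq \{\text{derived suitable full GTGDs}\}$ being routine and the completeness inclusion carrying the real content. For soundness I would check that the initialization introduces only suitable GTGDs of $\Sigma$ together with trivial suitable GTGDs (which are tautologies, hence entailed by $\Sigma$), and then that each inference rule is a sound entailment that preserves suitability. The (Transitivity) rule is sound because it composes implications through the unifier $\upsilon$, and it keeps breadth $\leq w$, side signature $\sidesign$, and head arity $\leq h$ by inspection of its form (the new body is that of an already-suitable GTGD, and $\upsilon(\rho)$ cannot raise the head arity above $h$). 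The ($\incd$) rule is sound because backward rewriting of a guard across an $\incd$ is a valid inference, and it preserves suitability since the rewriting only touches the $\leq w$ exported positions and leaves the head atom unchanged.

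The heart is completeness: every suitable full GTGD $\gdep : \beta \to H$ entailed by $\Sigma$ lies in $\wclo{\Sigma}{w}$. Here $\beta = A(\vec x, \vec y) \wedge \phi(\vec x)$ has guard $A$ and side atoms $\phi$ over $\sidesign$, and $H$ is the head atom on the original variables. Entailment means that chasing $\canondb(\beta)$ with $\Sigma$ derives $H$. I would first fix such a chase proof and normalize it to be \emph{well-ordered}, generalizing the notion of Appendix~\ref{app:orderability-accessids} from truncated accessibility axioms to arbitrary full GTGDs: the condition is that whenever a full GTGD fires to produce a fact on elements exported from some earlier $\incd$-generated fact $F$, with its guard lying in the subtree of $F$, the hypothesis facts it consumes must already have been present when $F$ was generated. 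As in Lemma~\ref{lem:wellaccessids}, such a proof is obtained by repeatedly ``re-generating'' $F$ together with its subtree after the late hypotheses appear, which strictly decreases a suitable measure (the number of violations and the sum of their depths).

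With a well-ordered proof in hand, I would induct on its length to place $\gdep$ in $\wclo{\Sigma}{w}$. In the base case $H$ already occurs in $\beta$ and $\gdep$ is trivial. Otherwise the last step deriving $H$ is a full-GTGD firing with guard fact $F$ and consumed side facts. If $F$ and the side facts lie on the original elements of $\beta$, the induction hypothesis yields derived suitable GTGDs producing each consumed fact from $\beta$, and the firing rule combines them via (Transitivity). If instead $F$ was generated by an $\incd$ from a parent fact $E$, well-orderedness guarantees that the consumed hypotheses on elements exported from $E$ were present when $E$ was created, giving a shorter sub-derivation rooted at $E$; induction applies to it, and the ($\incd$) rule pulls the resulting GTGD back across the $\incd$. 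Iterating this pull-back up the chase tree until the original body is reached, and closing with (Transitivity), produces the desired derivation of $\gdep$.

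I expect the obstacle to be twofold. First, the well-ordering normalization must be executed carefully: re-firing the $\incd$ that generates $F$ and re-running the firings in its subtree must be shown to introduce no new violations and to yield a set of facts isomorphic to the original, exactly as in the accessibility case but now for full GTGDs whose side atoms consume several facts at once. Second, I must verify that every GTGD produced during the pull-back remains \emph{suitable}, i.e.\ of breadth $\leq w$, over side signature $\sidesign$, and of head arity $\leq h$; this is precisely where the width bound on the $\incd$s is essential, since at most $w$ variables survive a backward rewriting, the side atoms (on exported elements by well-orderedness) stay expressible over these $\leq w$ variables, and the head atom is never enlarged. These bounds are exactly what keep $\wclo{\Sigma}{w}$ finite while still capturing all derived suitable GTGDs.
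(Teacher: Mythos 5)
Your overall strategy matches the paper's proof of Proposition~\ref{prop:derived}: soundness of $\wclo{\Sigma}{w}$ by induction on the closure derivation, and completeness by induction on the length of a \emph{well-ordered} chase proof of the head from the instantiated body, splitting on whether the $\incd$-guard of the final firing is the root fact (closed off by (Transitivity)) or a strict descendant (handled by pulling the intermediate GTGD back across the generating $\incd$ via the ($\incd$) rule and then composing with (Transitivity) and the GTGDs deriving the side facts).

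The one component where your plan is under-powered as written is the well-ordering lemma, and this is precisely where the paper departs from the accessibility-axiom case. You state the invariant as: the hypotheses consumed by the firing that produces a fact on elements exported from an ancestor $\incd$-fact must already be present when that ancestor was created; and you propose to prove it by the same ``re-generate the subtree'' argument as Lemma~\ref{lem:wellaccessids}. The paper's invariant (Lemma~\ref{lem:well}) is different: \emph{at most one} fact over the elements $\vec c$ shared between an $\incd$-fact $F'$ and its parent is ever created later by a firing whose $\incd$-guard lies in the subtree of~$F'$. This is what the completeness induction actually consumes: to pull the derivation back across the $\incd$ that created~$F'$, the body of the intermediate GTGD must contain $F'$ together with the set $\Phi$ of \emph{all} facts on~$\vec c$ used anywhere in the sub-derivation below~$F'$ (not only those consumed by the final firing), each of which must have a strictly shorter derivation, and $\Phi$ must live on the $\leq w$ exported elements so that breadth stays bounded. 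Your single-firing condition does not control side facts on~$\vec c$ consumed by intermediate firings inside the subtree, nor the case where the subtree itself produces several distinct facts on~$\vec c$. Relatedly, the paper does not establish its normalization by re-firing: because one subtree can be responsible for several candidate future facts on the shared elements, it pre-emptively creates one copy of each child bag per candidate fact (``alternative'' and ``propagating'' copies carrying subtree constraint labels) and maintains isomorphisms between equivalent sibling subtrees. If you keep the re-firing route, you must at least restate the invariant in the paper's form and check that your termination measure still decreases when a subtree generates more than one fact on the shared elements; as it stands, that step is a genuine gap rather than a routine adaptation.
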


Before we prove this claim, we state and prove that the computation of the
$b$-closure can be performed efficiently:

\begin{lemma}
  For any set $\Sigma$ of non-full $\incd$s of width $\leq w$ and full GTGDs 
  on side signature $\sidesign$ with head arity $\leq h$,
  letting $b\colonequals w$,
  then $\wclo{\Sigma}{b}$ is
  we can compute
  $\wclo{\Sigma}{b}$ in polynomial time in~$\card{\Sigma} \times
  2^{\textrm{polynomial}(w,h)}$.
\end{lemma}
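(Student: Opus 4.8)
The plan is to compute $\wclo{\Sigma}{w}$ by the obvious monotone saturation and to bound both its final size and the cost of each inference step. The two facts I would establish first are that the inference rules never leave the class of suitable full GTGDs and that this class is small. For \emph{suitability preservation} I would inspect the two rules: the conclusion $\beta\to\upsilon(\rho)$ of (Transitivity) keeps the body $\beta$ shared by its premises $\beta\to B_j$, which is a suitable body of breadth $\leq w$, and its head atom $\upsilon(\rho)$ has arity $\leq h$; the conclusion $R(\vec z,\vec w)\wedge\phi(\vec w)\to H(\vec w)$ of ($\incd$) has as distinguished variables exactly the tuple $\vec w$, sitting at exported positions of an $\incd$ of width $\leq w$, so its breadth is $\leq w$, and its head arity is unchanged. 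Hence every element ever added to $\wclo{\Sigma}{w}$ is a suitable full GTGD of breadth $\leq w$ and head arity $\leq h$, and Lemma~\ref{lem:numbergtgds} (with $b\colonequals w$) bounds the number of such GTGDs by $M\colonequals n\cdot(a+1)^{w}\cdot w^{w}\cdot 2^{\,n'\cdot w^{a'}}\cdot n\cdot a^{h}$. As $n,a\leq\card{\Sigma}$ and $a'$ is treated as fixed, $M$ is of the form $\card{\Sigma}^{O(1)}\cdot 2^{\mathrm{poly}(w,h,n')}$, which is the bound of the statement once $n'$ is read as a parameter (consistent with the $P_2(w,h,n')$ of Theorem~\ref{thm:idreduce}).

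Next I would analyse the saturation itself: maintain a set $C$ initialised to the suitable GTGDs of $\Sigma$ together with the trivial ones, and repeatedly look for an application of (Transitivity) or ($\incd$) whose conclusion is not already in $C$, adding it and iterating until none remains. Since $C$ grows monotonically and stays inside the suitable GTGDs, $\card{C}\leq M$ throughout, so there are at most $M$ iterations. For the per-iteration cost, every premise ranges over $C\cup\Sigma$, of cardinality $\leq M+\card{\Sigma}$; the number of tuples of premises to test is polynomial in this, and each individual test --- unification in (Transitivity), backward rewriting along an $\incd$ in ($\incd$) --- involves objects that are either copied verbatim from $\Sigma$ (size $\leq\card{\Sigma}$) or are suitable GTGDs, which, by the breadth bound and the fixed side-arity $a'$, have at most $1+n'\cdot w^{a'}$ body atoms and head arity $\leq h$. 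Each test therefore runs in time polynomial in $\card{\Sigma}\cdot 2^{\mathrm{poly}(w,h,n')}$, and multiplying by the $\leq M$ iterations yields the claimed overall bound.

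The step I expect to be the main obstacle is bounding the cost of \emph{detecting} a (Transitivity) application, since its unifier $\upsilon$ must map a premise body $\beta'$ into $\beta\wedge\bigwedge_j B_j(\vec z_j)$, and when $\beta'$ is an original GTGD of $\Sigma$ it may have many atoms, so finding such a homomorphism is a priori as costly as CQ evaluation. I would tame this by using that the \emph{conclusion is again required to be suitable}: this forces the image of the head $\rho$ and of the guard atom of $\beta'$, leaving freedom only in how the side atoms of $\beta'$ --- relations of $\sidesign$ over the $\leq w$ distinguished variables of $\beta$ --- are matched, and there are at most $n'\cdot w^{a'}$ candidate side atoms, so the number of unifiers to try is itself $2^{\mathrm{poly}(w,n')}$ for fixed $a'$. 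I would stress that only \emph{termination within the stated time} is at issue here: the \emph{correctness} of the saturation, i.e.\ that $\wclo{\Sigma}{w}$ captures exactly the derived suitable full GTGDs, is the separate content of Proposition~\ref{prop:derived} and may be assumed.
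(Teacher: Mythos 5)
Your proposal is correct and follows essentially the same route as the paper: bound the size of the closure by Lemma~\ref{lem:numbergtgds}, saturate to a fixpoint in at most that many rounds, and verify that each candidate application of ($\incd$) and (Transitivity) can be tested within the time bound, the only delicate point being the unifier search in (Transitivity). One small correction to your justification of that point: what makes the search cheap is not that the conclusion is required to be suitable, but the side-signature assumption on the premise body $\beta'$ itself --- it contains at most one atom outside $\sidesign$, and that atom is a guard containing all variables of $\beta'$, so choosing its image among the polynomially many non-$\sidesign$ atoms of $\beta \wedge \bigwedge_j B_j(\vec z_j)$ determines $\upsilon$ completely (and when $\beta'$ consists only of $\sidesign$-atoms, its guarded domain has at most $a'$ elements, so all mappings can be enumerated); this is exactly how the paper's proof handles it.
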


\begin{proof}
  From our bound in Lemma~\ref{lem:numbergtgds},
  we know that the maximal size of~$\wclo{\Sigma}{b}$ satisfies our running time
  bound. We can compute it by iterating the possible production of rules until
  we reach a fixpoint, so it suffices that at each intermediate state of~$\wclo{\Sigma}{b}$,
  testing every possible rule application is in PTIME in the current
  $\wclo{\Sigma}{b} \cup \Sigma$.

  For ($\incd$), this is straightforward: we simply try every $\incd$ from~$\Sigma$ and
  every rule from the current $\wclo{\Sigma}{b}$ unioned with~$\Sigma$,
  and we check whether we can
  perform the substitution (which is clearly PTIME), in which case we add the
  result to~$\wclo{\Sigma}{b}$.

  For (Transitivity), we enumerate all possible bodies $\beta$ of a suitable
  GTGD: there are polynomially many, up to variable renamings. For each~$\beta$,
  we then find all rules in~$\Sigma$ and in~$\wclo{\Sigma}{b}$ that have body~$\beta$
  up to variable renaming: if~$\beta$ only contains $\sidesign$-atoms, this is
  easy, because it is guarded and the arity of~$\sidesign$ is constant so the body is
  on a constant-size domain and we can just test the homomorphism; if~$\beta$ contains a
  guard atom not in~$\sidesign$, by assumption there is only one, and we can just
  consider the GTGDs whose body contains this atom, try to unify it with~$\beta$
  in PTIME, and check if the candidate body achieves exactly the side atoms
  of~$\beta$. Once all suitable rules are identified, clearly we can take $1
  \leq j \leq n$ to range over all such GTGDs, and consider the union~$H$ of their
  heads. Now, we enumerate all GTGDs in~$\Sigma \cup \wclo{\Sigma}{b}$ and we
  must argue that we can test in PTIME whether their body $\beta'$ unifies to
  $\beta \cup H$. If $\beta'$ contains only atoms
  from~$\sidesign$, then as it is guarded and the arity of~$\sidesign$ is fixed,
  its domain size is
  constant, so we can simply test in PTIME all possible mappings of~$\beta'$ to
  see if they are homomorphisms. If $\beta'$ contains an atom $A$ not in~$\sidesign$,
  by assumption there is only one and it is a guard, so we can simply consider
  all atoms in~$\beta \cup H$ which are not in~$\sidesign$: for each of them, we
  test in PTIME whether $A$ unifies with it, and if yes we test whether the
  mapping thus defined is a homomorphism from~$\beta'$ to~$\beta\cup H$. If yes,
  we add the new full GTGD to~$\wclo{\Sigma}{b}$. This concludes the proof.
\end{proof}

There remains to prove Proposition~\ref{prop:derived}. For this, we will need
additional machinery.

\myparaskip
\myparagraph{Well-ordered chase}
To prove Proposition~\ref{prop:derived}, we will need to study the chase by $\incd$s
and full GTGDs. We will first define a notion of \emph{well-ordered} chase
(which generalizes the notion studied in Appendix~\ref{app:orderability-accessids}), show
that we can ensure that the chase satisfies this condition, and conclude the proof of the proposition.

In the chase, we will 
distinguish between the \emph{$\incd$-facts}, which are the facts created in the
chase by firing an $\incd$, and the \emph{full facts}, the ones created by
firing a
full GTGD.
We further observe by an immediate induction that, for each full fact $F$ generated in the chase,
the guard of the trigger $\trig$ used to generate $F$ must
be guarded by some $\incd$-fact: this is vacuously preserved when firing an $\incd$, and
it is preserved when firing a full GTGD $\dep$ because the $\incd$-fact that guards the
guard of~$\dep$ also guards the generated head fact.

This allows us to define  a tree structure on the $\incd$-facts created in the
chase. Initially the tree consists of just a root, which is a special node
containing all facts $I_0$ of the original instance. Whenever we fire a trigger
$\trig$ 
in the chase to create a new fact $F$, we define the \emph{$\incd$-guard} of the
new fact~$F$
as the $\incd$-fact $G$ that guards $\trig$ and is the
topmost one in the chase. Observe that~$G$ is uniquely defined, because whenever
two $\incd$-facts guard $\trig$ then their lowest common ancestor also does.
If $F$ is an $\incd$-fact, then we create a new node in the chase tree that
contains $F$, with its $\incd$-guard $G$ being its \emph{parent}. If $F$ is a full
fact, we do not represent it in the chase tree.

For a chase that is structured into a tree as described above, 
we further say that it is  \emph{well-ordered}
if it satisfies the following
condition:

\begin{quote}
Whenever we create an $\incd$-fact $F = R(\vec c, \vec d)$, where $\vec c$
are the elements shared between $F$ and its parent fact,
there is at most one fact $H$ that uses only elements of~$\vec c$ such that the
following is true: $H$ will be created later in the chase by firing a trigger
whose $\incd$-guard is a descendant of~$F$.
\end{quote}

We can now show the analogue of Lemma~\ref{lem:wellaccessids}. The specific proof
technique is different (we create multiple child facts at once instead of
re-firing dependencies as needed), but the spirit is the same.

\begin{lemma} \label{lem:well}
  For any instance $I_0$, we can perform the chase in a well-ordered way, obtaining
an instance that satisfies the constraints.
\end{lemma}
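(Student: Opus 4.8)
The plan is to construct a specific well-ordered chase directly, by recursion on the chase tree, rather than by the violation-reducing reordering used for Lemma~\ref{lem:wellaccessids}. The guiding idea is to \emph{front-load}, at each $\incd$-fact $F = R(\vec c, \vec d)$, every fact lying purely on the exported elements $\vec c$ that could ever be produced inside the subtree of~$F$, so that such facts are not created later, below~$F$. Concretely, I would process nodes top-down and, upon reaching a node~$F$, first run the full-GTGD saturation restricted to $\dom(F) = \vec c \cup \vec d$ to a fixpoint: every fact so produced is guarded by $F$ (or an ancestor) and thus has its $\incd$-guard at the level of~$F$ or above, so it is harmless for well-orderedness. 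Only after this local saturation would I fire the active $\incd$-triggers, creating the children of~$F$ and passing down to each child the facts already known on its own exported elements.

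The step the hint alludes to---``creating multiple child facts at once''---is how I would handle the genuinely problematic facts: a full GTGD that fires deep in the subtree of~$F$, guarded by a fact $G$ containing fresh nulls introduced below~$F$, but whose single head atom $H$ happens to land entirely on~$\vec c$. Such an $H$ cannot in general be produced at the level of~$F$, because its only derivation passes through a child's fresh-null guard; so I would not attempt to eliminate it, but rather to \emph{canonicalize} it. Using the single-headed normalization of the full GTGDs and the convention that each generated fact is assigned its \emph{topmost} $\incd$-guard (the topmost $\incd$-fact in the tree containing all of its elements), I would arrange that, once one such ``entry'' fact on~$\vec c$ has been created through a proper descendant of~$F$, any further facts on~$\vec c$ derivable from it are themselves on~$\vec c$ and so receive an $\incd$-guard at the level of~$F$ or above, hence do not count as violations at~$F$.

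The main obstacle is precisely pinning down the \emph{at most one} bound of the well-orderedness condition. Getting down to zero is impossible---an $\incd$ $R(x,y)\rightarrow \exists z\, S(z,x)$ followed by a full GTGD $S(z,x)\rightarrow H(x)$ forces $H(\vec c)$ to be born under a proper descendant of~$F$---so the content of the lemma is that, after the saturate-before-descending discipline and the topmost-$\incd$-guard assignment, at most one fact on~$\vec c$ per node is forced to be created this way. I expect this to require a careful case analysis of how an $\incd$-guard below~$F$ can carry all of~$\vec c$ (which, since $\incd$s export at most $w$ elements, sharply constrains how~$\vec c$ propagates downward) together with the single-head property, so that distinct descendant-guarded facts on~$\vec c$ can be rerouted through a single canonical one. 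Crucially, this argument must stay \emph{internal to the chase} and must not invoke the $w$-closure $\wclo{\Sigma}{w}$ or Proposition~\ref{prop:derived}, since the latter is proved \emph{using} this lemma.

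Finally, I would check correctness of the resulting chase. Since the strategy only reorders firings, performs extra but sound eager firings, and eventually fires every active trigger---fairness, which holds because each node carries only finitely many relevant facts (there are boundedly many facts on the at-most-$w$ exported elements and on $\dom(F)$) and each local saturation terminates---the limit instance is a legitimate, fair chase and therefore satisfies~$\Sigma$. Well-orderedness then holds by construction from the saturate-before-descending discipline together with the topmost-$\incd$-guard bookkeeping, which completes the proof.
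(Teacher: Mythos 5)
Your proposal correctly isolates the crux---facts whose single head atom lands entirely on the exported elements $\vec c$ of an ancestor $F$ but whose only derivations pass through fresh nulls introduced below $F$---and correctly observes that the bound in the well-orderedness condition cannot be zero. But the step that actually establishes the bound of \emph{at most one} is missing: you defer it to ``a careful case analysis,'' and the ``canonicalization'' you sketch does not supply it. Your claim that once one entry fact $H_1(\vec c)$ has been created via a descendant, any further facts on~$\vec c$ ``derivable from it'' receive an $\incd$-guard at or above~$F$, only covers facts whose derivation from $H_1$ stays entirely on~$\vec c$. Nothing rules out two incomparable facts $H_1(\vec c)$ and $H_2(\vec c)$, each forced to be born from a trigger whose guard contains fresh nulls (possibly in different branches of the subtree of~$F$); both then have descendant $\incd$-guards, and together they violate well-orderedness at~$F$. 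The saturate-before-descending discipline cannot front-load them precisely because their derivations need the nulls, and neither the width bound on the $\incd$s nor the single-head normalization yields an a priori bound of one on how many such facts there are.

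The paper closes exactly this gap with a mechanism your proposal does not reproduce. When an $\incd$ fires at a node $F_0$, the modified chase creates \emph{many equivalent sibling copies} of the new child, one for each candidate fact over the child's domain that is ``native'' to $F_0$, each copy carrying that fact as a \emph{subtree constraint label} (plus a propagating copy inheriting $F_0$'s own label), and it maintains an isomorphism invariant between the subtrees rooted at equivalent siblings. A full-GTGD firing whose head $H$ lies on ancestral elements is then \emph{redirected} into the isomorphic sibling subtree whose label is~$H$, so that every node on the relevant path sees only the single fact named by its label created below it. Some such duplication-and-redirection device (or an equivalent) is what you need to make the ``at most one'' bound true by construction; the satisfaction of~$\Sigma$ in the limit is, as you say, the easy part.
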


\begin{proof}
  Fix $I_0$. We will perform the chase under full GTGDs and $\incd$s, in a
  way which chooses which triggers to fire in a special way, and instantiates
  the heads of violations for $\incd$s multiple times. Further, in this chase variant, 
  for all $\incd$-facts $F = R(\vec c, \vec d)$ generated in the chase, where we let
  $\vec c$ the elements shared between $F$ and its parent fact, the fact $F$
  will carry a \emph{subtree constraint label}, which is some fact on domain~$\vec c$ (not
  necessarily a fact which holds in the chase). Intuitively, this fact will be
  the one additional fact on~$\vec c$ which is allowed to be created in the
  subtree. Further, for any $\incd$-fact in the chase, there will be an equivalence
  relation on its children. We will inductively impose that, for any two sibling
  $\incd$-facts $F_1$ and $F_2$ that are equivalent, 
  at any state of the chase, 
  there is an isomorphism between the restriction of the chase to the domain of
  the subtree rooted at~$F_1$, and that of~$F_2$, which is the identity on the
  elements shared between $F_1, F_2$ and their parent fact $F$. In particular,
  $F_1$ and $F_2$ are facts of the same relation and share the same elements at
  the same positions with~$F$.

  We now explain how to perform the chase in a way which does not violate
  well-orderedness and satisfies this inductive invariant.

  Whenever we fire an $\incd$ in a chase proof, we cannot violate the
  well-orderedness property
  because all $\incd$s are non-full. We will explain how we change the usual
  definition of chase step to create multiple equivalent facts by instantiating 
  the heads of violations multiple times. Suppose we want to fire an
  $\incd$ $\dep$
  to create a fact $R(\vec c, \vec d)$, where $\vec c$ are the elements that are
  not fresh. Let $F_0$ be the $\incd$-guard of the new fact. The fact $F_0$ shares
  precisely~$\vec c$ with the new fact, and it be the parent of the new fact in
  the chase tree.
  We consider all
  possible facts over~$\vec c, d$ which include an element that $F_0$ does not
  share with its own parent, this requirement being vacuous if $F_0$ has no
  parent. We call these the \emph{$F_0$-native facts over~$\vec c$}.
  For every such fact~$F'$, we create a copy
  of the fact~$R(\vec c, \vec d)$ as a child of $F_0$, with subtree constraint label $F'$.
  We call these facts the \emph{alternative copies}.
  Now, if $F_0$ has a parent, we let $F''$ be the subtree constraint label
  on~$F_0$. If all elements of~the fact $F''$  still occur in the fact $R(\vec c, \vec d)$, we
  also create another copy of $R(\vec c, \vec d)$  as a child of $F_0$, whose
subtree constraint label is ~$F''$. We call
  this the \emph{propagating copy}.
  The intuition for the propagating copy is that it is the root of the child
  subtree where we can still create the fact $F'$ which is the subtree constrain label of~$F_0$;
  and the intuition for the alternative copies is that they are the child
  subtrees where we can generate different facts: the $F_0$-native facts
  over~$\vec c$ are defined in a way that ensures that generating these facts
  will not violate the well-orderedness requirement.

  It is clear that 
  creating these multiple copies does not impact soundness of the chase, and their addition
will not impact completeness provided that we also handle triggers on these new facts.
Further, for any pair of these one-fact
  subtrees, there is an isomorphism of the chase at the time they are fired that sends one to another.
 We must also argue that by firing further rules following this ``$\incd$  duplication step''
  we can avoid breaking the isomorphism between subtrees
  rooted at equivalent $\incd$-facts elsewhere in the chase. But  these
  subtrees were  isomorphic prior to this step, by induction; so we can simply perform an
  analogous collection of chase steps in all of these subtrees. This ``follow-up''
does not cause any new violations of our invariant, and also does not break well-orderedness,
again since the IDs are all non-full.

  Whenever we fire a full GTGD $\dep$ in a chase proof on a trigger~$\trig$ with $\incd$-guard $G$ to create a
  fact $H$, we consider the topmost ancestor $F$ of~$G$ that contains all
  elements of~$H$. Note that the topmost ancestor exists because the set of suitable
  ancestors is non-empty: indeed, $G$ is itself a
  suitable choice. Let $C$ be the child of~$F$ which is an ancestor of~$G$. We
  claim that $F$ has a child $C'$ which is equivalent to~$C$ and which has subtree constraint
  label $H$. This is because $F$ is the topmost ancestor containing all
  elements of~$H$, so there is an element in~$H$ which does not occur in the parent of~$F$ (or
  $F$ is the root).
Thus
 when we created~$C$ by firing an $\incd$ trigger on~$F$,  we know that
  $H$ was an $F$-native fact over the elements shared between $F$ and~$C$,
  which include all elements of~$H$. Hence, 
 we have created among the alternative copies a sibling $C'$ of $C$ which is
 equivalent to~$C$ and which has subtree constraint label $H$.
Now, as the subtrees rooted
  at~$C$ and~$C'$ are isomorphic, we can consider the path from~$C$ to~$G$
  and follow the image of this path starting at~$C'$ to reach  $\incd$-facts which
  are isomorphic to~$G$. When doing this we have to choose between equivalent
  children, and we always choose a child with subtree constraint label~$H$. Since we create a propagating
copy in our modified $\incd$ step above,  such children must exist. Thus we have
a path from $C$ to some $G'$ in which every node has subtree constraint label $H$.

  The isomorphism between the subtrees ensures that 
  $G'$ is the $\incd$-guard to a trigger $\trig'$ for $\dep$ which is isomorphic
  to the original trigger~$\trig$,
Hence, instead of firing the full GTGD~$\dep$ on the trigger~$\trig$, we fire
$\dep$ on the trigger $\trig'$.
  This will also create $H$, as the elements of~$H$ are shared between the subtrees.
This firing does not
  violate the well-orderedness condition: all possible choices for a
  counterexample~$F$
  are on the path from~$G'$ to~$F$, so they have subtree constraint  label $H$, and 
indeed $H$ is the only fact that we have created in the firing. 
  Further, given the domain of~$H$, creating it does not break
  the inductive isomorphism condition for any subtrees rooted at an $\incd$-fact
  below~$F$. 

We will now fire additional rules to preserve the isomorphism for subtrees
  rooted at ancestors of~$F$. 
  Consider any ancestor~$F'$ of~$F$, let $F'_1$ be the child of~$F'$ which is an
  ancestor of~$F$, and let $F'_2$ be another child of~$F'$ which is equivalent
  to~$F'_1$. We know that the subtrees rooted at~$F'_1$ and~$F'_2$ were
  isomorphic before the firing, so there is a trigger in~$F'_2$ which is the
  image by the isomorphism of the trigger
  that we are firing. Firing this trigger creates a fact which allows us to
  extend the isomorphism to account for the firing. We do this for every choice
  of ancestor~$F'$ and every choice of child $F'_2$ which is equivalent to the
  child~$F'_1$.
This ensures that we can extend
  the isomorphisms between subtrees rooted at equivalent nodes: this does not
  violate the well-orderedness, thanks to the isomorphism.
Note that
all the triggers that we fire in this way, and all the facts that we create, are pairwise distinct,
   because at least one element of
  the new fact~$H$ does not appear outside of the
  descendants of~$F$, i.e., none of the isomorphisms that we consider can be the
  identity on this element.

  We have described a chase variant that produces a well-ordered chase proof, so
  we have established the desired result.
\end{proof}

We are now ready to prove Proposition~\ref{prop:derived}:

\begin{proof}
  One direction is straightforward: we can immediately show by induction on the
  derivation that any GTGD produced in the closure is indeed a derived suitable
  full GTGD. Hence, we focus on the converse direction.

  We prove that every derived suitable full GTGD is produced in the closure, by induction on
  the length of a well-ordered chase proof of its head. Specifically,
  let the derived suitable full GTGD be $\gdep: R(\vec x, \vec y) \wedge \left(\bigwedge_i
  A_i(\vec x)\right)
  \rightarrow \lambda(\vec z)$, with~$\vec z \subseteq \vec x \cup \vec y$. Let
  $K_0$ be a set of constants, and $I_0$ be the initial instance which consists
  of the instantiation of the body of~$\gdep$ on~$K_0$: we let $\vec a_0, \vec
  b_0, \vec c_0$ be the tuples of~$K_0$ corresponding to~$\vec x$, $\vec y$,
  $\vec z$, and call $F_0$ the instantiation of the guard atom (which we will
  see as the root $\incd$-fact  in the chase). We show that~$\gdep$ is
  produced in the closure by induction on the number of chase steps required in
  a well-ordered chase proof to produce $\lambda_0 \colonequals \lambda(\vec
  c_0)$ from~$I_0$. The fact that we can assume a well-ordered chase proof is
  thanks to Lemma~\ref{lem:well}.

  The base case is when there is a well-ordered chase proof of length~$0$, i.e,.
  $\lambda_0$ is in~$I_0$. In this case, $\gdep$ is a trivial suitable GTGD, so it is
  in the closure by construction.

  We now show the induction case. We consider a well-ordered
  chase proof that produces $\lambda_0$ in as little chase steps as possible.
  The firing that produces $\lambda_0$ cannot be the firing of an $\incd$, because
  the $\incd$s are non-full, so they produce facts that contain some null whereas
  $\lambda_0$ is a fact on~$K_0$. Hence, $\lambda_0$ is produced by firing a
  full GTGD $\gdep'$ on a trigger~$\trig$.
  Remember that the $\incd$-guard of this firing is the $\incd$-fact that
  guards~$\gdep'$ and is the topmost one in the chase: let $F$ be the
  $\incd$-guard.
  Either $F = F_0$ or $F$ is a
  strict descendant of~$F_0$.

  If $F$ is $F_0$, then it means that~$F_0$ guards this firing. Hence, all facts
  of~$\trig$ are facts over~$K_0$. Hence, those which are $\incd$-facts 
  cannot be
  another fact than~$F_0$, as the other $\incd$-facts  contain nulls not in~$K_0$.
  As for those that are full facts,
  for each such fact $\phi(K_0)$, we know that
  $\phi(K_0)$ was derived in the chase
  from~$I_0$, which means that the GTGD
  $\gdep_\phi: R(\vec x, \vec y) \wedge \left(\bigwedge_i A_i(\vec x)\right) \rightarrow \phi(\vec z)$ is
  a derived full suitable GTGD, in particular it satisfies the variable
  occurrences condition because $\gamma$ does. Now, for each such $\phi(K_0)$,
  as it is produced earlier than~$\lambda_0$ in
  the chase, it means that there is a well-ordered chase that produces
  it in strictly less steps. Hence, by induction hypothesis, each $\gdep_\phi$ is
  in the closure. We can now see that, as~$\gdep' \in \Gamma$, as the
  $\gdep_\phi$ are in the closure when~$\phi(K_0)$ is a full fact, and as the
  other $\phi(K_0)$ must be $F_0$,
  we can apply the (Transitivity) rule and
  conclude that~$\gdep$ is also in the closure.

  Now, if the ID-guard $F$ of the firing is a strict descendant of~$F_0$ in the
  chase tree, we consider the path in the chase from~$F_0$ to~$F$.
  Let $F'$ be the $\incd$-fact which is the first element of
  this path after~$F_0$: it is a child of~$F_0$ and an ancestor of~$F$.
  Let $\vec c$ be the elements shared between $F'$ and $F_0$. Observe that
  $\lambda_0$ is a fact on~$\vec c$, because it is a fact on~$K_0$ with
  ID-guard~$F$, so the elements of~$\lambda_0$ are shared between
  $F$ and~$F_0$, hence between~$F'$ and~$F_0$. Now, the deduction of~$\lambda_0$
  creates a new fact on~$\vec c$ with ID-guard in the subtree rooted at~$F'$: by
  definition of the well-ordered chase, it is the only such firing for~$F'$.
  Thus, we know that, when we create~$F'$, we had already derived all facts on
  $\vec c$ that are derived in the chase, except for
  $\lambda$. Let $\Phi$ be the set of these facts.

  We now observe that, if we had started the chase with the $\incd$-fact $F'$ plus
  the set $\Phi$, then we would also have deduced $\lambda$. Indeed, we can
  reproduce all chase steps that happened in the subtree rooted at~$F'$,
  specifically, all steps where we applied IDs to a descendent of~$F'$, and 
  all full GTGD steps with ID-guard in the subtree rooted at~$F'$. We show this
  by induction: the base case corresponds to the facts of~$\{F'\} \cup \Phi$,
  the induction step is trivial for ID applications, and for full GTGD
  applications we know that all hypotheses to the firing are in guarded tuples
  of the subtree rooted at~$F'$, so they were all generated previously in that subtree or were part of~$\Phi$.
  Thus, letting $\beta'$ be the result of renaming the constants of~$\{F'\}\cup
  \Phi$ by variables in a manner compatible with the mapping from~$\gdep$ to~$I_0$,
  this shows that~$\gdep' : \beta' \rightarrow \lambda$ is entailed by~$\Sigma$. Now,
  this is a GTGD with breadth at most~$w$
  because the width of IDs (and hence the width of~$\Phi$) is at most~$w$.
  Hence, it is a derived suitable full GTGD, and 
  the proof for this derived suitable full GTGD is shorter than that of~$\gdep$.
  Hence, by induction
  hypothesis, we have $\gdep' \in \wclo{\Sigma}{b}$.
  We can now apply ($\incd$) to~$\gdep'$ with the IDs that generated $F'$
  from~$F_0$, thanks to the fact that~$\Phi$ and $\lambda_0$ are on~$K_0$ so
  they are on exported positions of the $\incd$,
  and there cannot be repeated variables in what corresponds to~$F'$ in~$\beta'$
  except at positions that were exported between $F'$ and~$F_0$ (because these
  other elements are fresh in~$F'$).
  This yields $\gdep'': R(\vec x,
  \vec y) \wedge \Phi'(\vec z) \rightarrow \lambda(\vec z)$, where $\Phi'$  is
  the result of renaming the elements of~$\Phi$ to variables as in the
  definition of~$\beta'$.

  We now argue as in the base case that, as each fact of~$\Phi$ was derived from
  $I_0$ with a shorter proof than the proof of~$\gdep$, by induction
  hypothesis, for each $\phi \in \Phi$, the derived suitable full GTGD $R(\vec x, \vec
  y)\wedge \left(\bigwedge_i A_i(\vec x)\right) \rightarrow \phi$ is in~$\wclo{\Sigma}{b}$.
  We conclude, by applying (Transitivity) to~$\gdep''$ and these derived suitable full GTGDs,
  that~$\gdep \in \wclo{\Sigma}{b}$.

  Hence, we have shown that~$\gdep$ was derived, which concludes the induction
  and finishes the completeness proof.
\end{proof}

\myparagraph{Normalization}
We are now ready for the second stage of our proof: normalizing the chase to add \emph{short-cuts}.
The \emph{short-cut chase} works by constructing \emph{bags}:
a \emph{bag} is a set of facts consisting of one fact generated by an $\incd$ (called
an \emph{$\incd$-fact})
and facts on the domain of the $\incd$-fact
(called \emph{full facts}). We will have a tree structure
on bags that corresponds to how they are created.

The short-cut chase then consists of two alternating kinds of steps:
\begin{itemize}
  \item The \emph{$\incd$ steps}, where we fire an $\incd$ on an $\incd$-fact $F$ where it is
    applicable. Let $g$ be the bag of~$F$. The $\incd$ step creates a new bag~$g'$
    which is a child of~$g$, which contains the result $F'$ of firing the $\incd$,
    along with a copy of the full facts of~$g$ which only use elements shared
    between $F$ and~$F'$.
  \item The \emph{full saturation steps}, which apply to a bag~$g$, only once
    per bag, precisely at the moment where it is created by an $\incd$ step (or on
    the root bag). In this step, we apply all the full GTGDs of
    $\wclo{\Sigma}{w}$
    to the facts of~$g$, and add the consequences to~$g$ (they are still on the domain of~$g$
    because the rules are full).
\end{itemize}

Our goal is to argue that the short-cut chase is equivalent to the usual chase.
The short-cut chase consists of chase steps in the usual sense, so it is still
universal. What is not obvious is that the infinite result of the short-cut chase
satisfies $\Sigma$: indeed, we must argue that all violations are solved. This
is not straightforward, because the short-cut chase does not consider
all triggers: specifically, for full GTGDs, it only considers triggers that are
entirely contained in a bag, and only for full GTGDs in~$\wclo{\Sigma}{w}$ (not
those in~$\Sigma$). So what we must do is argue that the short-cut
chase does not leave any violations unsolved. The intuition for this is that the
closure of~$\wclo{\Sigma}{w}$ suffices to ensure that all violations can be seen
within a bag.

To show this formally, we will rely on the following observation, which uses
the closure~$\wclo{\Sigma}{w}$. In the statement of this lemma, we talk of the
\emph{topmost bag} that contains a guarded tuple: it is obvious by considering
the domains of the bags that this is well-defined:
\begin{lemma} \label{lem:normalizedclosure}
  Consider the short-cut chase on an instance $I_0$ which is closed under the
  full GTGDs of~$\Sigma$ and of~~$\wclo{\Sigma}{w}$ (we see $I_0$ as a root bag).
  Assume that a full saturation step on some bag~$g$ creates a fact $F$. Then
  $g$ is the topmost bag of the chase tree that contains the elements of~$F$.
\end{lemma}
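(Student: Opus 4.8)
The plan is to reduce Lemma~\ref{lem:normalizedclosure} to a single structural claim about what a full saturation step can produce: \emph{every fact $F$ newly created by the full saturation step at a bag $g$ uses at least one element that is fresh to $g$}, i.e.\ an element of $g$'s $\incd$-fact $F_g$ that is not shared with the parent bag. This claim immediately yields the lemma, since such a fresh element occurs in no proper ancestor of $g$, so $g$ is the topmost bag containing the elements of $F$; and $g$ does contain all of $F$'s elements because the full GTGDs of $\wclo{\Sigma}{w}$ are full, so their heads lie on $\dom(F_g)=\dom(g)$.

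I would prove the contrapositive of the claim: no fact whose elements all lie among the shared elements $\vec c$ of $F_g$ (the exported positions of the $\incd$ that created $g$, so that $|\vec c|\le w$) is ever \emph{newly} created during a saturation step, arguing by induction on the global fact-creation order of the short-cut chase. The base case is when $g$ is the root bag $I_0$; since $I_0$ is assumed closed under the full GTGDs of $\wclo{\Sigma}{w}$, its saturation creates nothing, so the claim holds vacuously. For the inductive step, suppose toward a contradiction that $F$, on $\vec c$, is the earliest such bad fact, created at a bag $g$ with parent $g_p$. By minimality every fact on $\vec c$ present in $g$ before $F$ was copied from $g_p$ at $g$'s creation (an earlier bad creation at $g$ would contradict minimality), and hence is a fact of $g_p$.

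Consequently the derivation of $F$ inside $g$'s saturation proceeds from $F_g$ together with side facts on $\vec c$ that all occur in $g_p$; composing these $\wclo{\Sigma}{w}$-steps exhibits $F$ as the head of a single derived suitable full GTGD of breadth $\le w$, guarded by $F_g$, which therefore lies in $\wclo{\Sigma}{w}$ by Proposition~\ref{prop:derived}. Now $F_g$ was produced by firing some $\incd$ $\dep$ on the parent $\incd$-fact $F_{g_p}$, and $F$ together with its supporting side facts sit on the exported positions $\vec c$ of $\dep$; I would then apply the $(\incd)$ inference rule to backwards-rewrite this GTGD through $\dep$, obtaining a member of $\wclo{\Sigma}{w}$ guarded by the relation of $F_{g_p}$ and still producing $F$. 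This rewritten GTGD has a trigger already present in $g_p$ (the guard $F_{g_p}$ plus the side facts on $\vec c$), so it would have fired during the saturation of $g_p$, placing $F$ in $g_p$; since $F$ is on the shared elements $\vec c$, it would have been copied into $g$ at $g$'s creation, contradicting that $F$ is newly created at $g$.

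The main obstacle I expect is the bookkeeping around guards and side atoms. I must verify that the derivation of $F$ can be taken guarded by $F_g$ with all side atoms on $\vec c$ — this is exactly where the breadth bound $\le w$ and the width bound $|\vec c|\le w$ are used, mirroring the accessibility special case of Lemma~\ref{lem:normalizedclosureaccessids} — and that the side conditions of the $(\incd)$ rule hold, namely that the retained variables sit only at exported positions while the remaining positions of $F_g$ carry the $\incd$'s fresh, non-repeated nulls. Ensuring that the required side facts genuinely lie in $g_p$, rather than being produced only later at $g$, is precisely the point handled by the induction on creation order.
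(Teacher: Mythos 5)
Your proof is correct, but it takes a genuinely different route from the paper's proof of this lemma. The paper argues globally: it lets $g'$ be the topmost bag containing the elements of~$F$, observes that everything derived in the subtree below~$g'$ depends only on $g'$'s initial contents (its $\incd$-fact plus the facts on the $\leq w$ shared elements copied from its parent), concludes that $F$ is entailed from those contents, hence that the corresponding derived suitable full GTGD lies in $\wclo{\Sigma}{w}$ by Proposition~\ref{prop:derived} and already fires in $g'$'s own saturation step --- so $F$ exists at~$g'$, is copied downward, and therefore $g=g'$. You instead run a minimal-counterexample induction on the global creation order and climb only one level: you compose the derivation inside~$g$ into a single suitable GTGD guarded by $g$'s $\incd$-fact, push it up through the one $\incd$ that created~$g$ via the $(\incd)$ closure rule, and find its trigger already satisfied in the parent bag. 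This is essentially the strategy the paper uses for the accessibility-axiom special case (Lemma~\ref{lem:normalizedclosureaccessids}), transplanted to the general setting. The trade-off: the paper's route invokes the completeness of the closure only once and otherwise needs just the ``information flows only downward'' property of the short-cut chase, whereas yours additionally exercises the $(\incd)$ inference rule explicitly and needs the minimality bookkeeping to guarantee that every side fact on~$\vec c$ used at~$g$ really comes from the parent; in exchange your argument is localized to a single edge of the bag tree and makes visible exactly where each closure rule is used. Your verification of the $(\incd)$ rule's side conditions (the $\vec c$-elements occupy only exported positions of $g$'s $\incd$-fact, while the remaining positions carry distinct fresh nulls because the $\incd$s are non-full with no repeated variables) is exactly what is needed, and the suitability of the composed rule (breadth $\leq w$ from the width bound, head arity inherited from the rule that produced~$F$) rests on the same facts the paper's proof relies on.
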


\begin{proof}
  Let $g'$ be the topmost bag~$g'$ of the chase tree that contains all
  elements of~$F$, and let us show that $F$ was created in~$g'$.
  Consider the moment in the chase where $g'$ was created in
  an $\incd$ step. At this moment, $g'$ consists of its $\incd$-fact plus full facts
  copied over from the parent of~$g'$ (or none, in the case where $g'$ is the
  root): as the $\incd$s have width $\leq w$, these full facts are on a domain~$\vec
  c$ of size at most~$w$. We let $\beta$ be the set of these facts.
  As the short-cut chase proceeds entirely downwards in the tree,
  and it constructs a subset of the usual chase, by starting the chase
  with a root bag containing $\beta$, 
  we know that~$F$ is deduced. Hence, letting $\beta'$ and $F'$ be the result of
  renaming the elements of~$\beta$ and $F$ to variables, we know that the
  GTGD $\gdep : \beta' \rightarrow F'$ is entailed by $\wclo{\Sigma}{w}$. 
  Hence, $\gdep$ is a derived suitable full GTGD, so we must have $\gdep \in
  \wclo{\Sigma}{w}$. Thus, we have also applied $\gdep$ in the full saturation
  step just after the moment where we created $g'$; or, if~$g'$ is the root
  bag, it is closed under $\Sigma$ and under~$\wclo{\Sigma}{w}$ by hypothesis.
  Hence, we have shown that $F$ was indeed created in~$g'$, which concludes.
\end{proof}

This immediately implies the following:

\begin{corollary}
  \label{cor:allbags}
  For any fact $F$ created in the short-cut chase, for any bag~$g$ containing
  all elements of~$F$, then $F$ appears in the bag~$g$.
\end{corollary}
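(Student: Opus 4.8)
The plan is to read Corollary~\ref{cor:allbags} as a statement about how a fact, once created, is copied down the tree of bags, and to combine this with Lemma~\ref{lem:normalizedclosure}, which pins down \emph{where} a fact is born. Concretely, I would fix a fact $F$ created in the short-cut chase, locate the bag $g_0$ in which it is first created, argue that every bag whose domain contains all elements of~$F$ lies in the subtree rooted at~$g_0$, and then show by downward induction that $F$ is copied into each such bag. The substance is the full-fact case (facts produced by full saturation steps), which is exactly what Lemma~\ref{lem:normalizedclosure} speaks about; the distinguished $\incd$-fact of a bag occurs in that one bag, which by non-fullness of the $\incd$s is where its new nulls are born, so it is itself the topmost bag for that fact, and in the intended use of the corollary it is the side facts of a guarded trigger (full facts on the guard's domain) to which the statement is applied.

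The key steps, in order, are as follows. First, assume $F$ is created by a full saturation step, in some bag $g_0$; by Lemma~\ref{lem:normalizedclosure}, $g_0$ is the topmost bag of the chase tree whose domain contains all elements of~$F$, so any bag~$g$ with domain containing all elements of~$F$ is either $g_0$ or a descendant of~$g_0$. Second, I would run a downward induction along the path from~$g_0$ to~$g$. In the inductive step, let $g \neq g_0$ be a proper descendant containing all elements of~$F$, with parent~$g_p$; since every element of~$F$ already occurs in~$g_0$, which is a \emph{strict} ancestor of~$g$, none of these elements is freshly introduced at~$g$, so each of them must be exported from~$g_p$ and hence lies in the domain of~$g_p$ as well. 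Thus $g_p$ also contains all elements of~$F$, so by the induction hypothesis $F$ appears in~$g_p$. Third, because all elements of~$F$ belong simultaneously to the domains of the $\incd$-facts of~$g_p$ and of~$g$, they are precisely among the elements shared between these two $\incd$-facts; therefore the $\incd$ step that created~$g$ copies the full fact~$F$ of~$g_p$ into~$g$, giving $F \in g$ and completing the induction. The base case $g = g_0$ is immediate since $F$ is created there.

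The main obstacle I expect is purely bookkeeping rather than conceptual: one must align the copying condition of the $\incd$ step --- which only copies full facts ``using only elements shared between $F$ and~$F'$'' --- with the hypothesis ``$g$ contains all elements of~$F$.'' The bridge is the observation that an element present in both $g_0$ and a descendant~$g$ cannot have been born strictly between them, so it is exported at every edge of the path and is genuinely \emph{shared} at each step; this is what lets the copying rule fire on~$F$ all the way down. Implicit here is the fact that the bags whose domain contains a fixed tuple form a connected subtree rooted at the topmost such bag, which in turn follows from the downward-connectivity of element occurrences in the bag tree (an element occurs in the bag where it is born and, thereafter, only in descendants to which it is exported). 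I would make this connectivity explicit once, and then the corollary follows from Lemma~\ref{lem:normalizedclosure} by the induction above.
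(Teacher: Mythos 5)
Your proof is correct and follows essentially the same route as the paper's: the paper likewise invokes Lemma~\ref{lem:normalizedclosure} to place a full fact in the topmost bag containing all its elements, and then observes that the $\incd$ steps copy it into every bag below that still contains all its elements~--- which is exactly your downward induction, just made explicit along with the connectivity of the subtree of such bags. The only divergence is that the paper also dispatches the $\incd$-fact case in one line (the topmost bag is the one where the fact's fresh null is born, and the same copying observation applies), whereas you defer that case to the corollary's intended use; this is a presentational rather than a substantive difference.
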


\begin{proof}
  If $F$ is an $\incd$-fact, it contains a null, the topmost bag~$g$ containing all
  elements of~$g$ is the bag where this null was introduced, so it also
  contains~$F$. Now, the bags that contain the elements of~$F$ form a subtree of
  the tree on bags rooted at~$g$, and $F$ is copied in all these bags.

  If $F$ is a full fact, we use  Lemma~\ref{lem:normalizedclosure} to argue
  that~$F$ occurs in the topmost bag containing all its elements, and again~$F$
  is copied in all other bags.
\end{proof}

This allows us to show that the short-cut chase is equivalent to the full
chase. Specifically, let $I_0$ be an arbitrary set of facts, which we consider
as a root bag, and which has been closed under the full GTGDs of~$\Sigma$ and
of~$\wclo{\Sigma}{w}$.
Let $I$ be the result of the short-cut chase of~$I_0$ by $\wclo{\Sigma}{w}$. We claim:

\begin{lemma}
  $I$ satisfies $\Sigma$.
\end{lemma}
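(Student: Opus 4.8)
The plan is to show that $I$, the result of the short-cut chase, has no active trigger for any constraint of $\Sigma$; soundness is immediate because every short-cut step is a genuine chase step, so from the absence of active triggers it will follow that $I \models \Sigma$ (and hence, by universality of the chase, that the full chase of $I_0$ maps homomorphically into $I$). I would argue by contradiction, splitting according to whether the offending constraint is a non-full $\incd$ or a full GTGD, and in each case \emph{localizing} the trigger into a single bag using the bag/tree bookkeeping already established.

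For a full GTGD $\gdep \in \Sigma$ with a supposedly active trigger $\trig$ in $I$, I would first exploit guardedness: every body atom of $\gdep$ maps into the elements of the guard fact $A_0 = \trig(A)$, so the whole image of $\trig$ and the missing head fact $H_0$ live on the elements of $A_0$. Let $g$ be the topmost bag whose domain contains the elements of $A_0$; by Corollary~\ref{cor:allbags} every body fact of $\trig$ appears in $g$. If $g$ is the root bag $I_0$, then since $I_0$ is assumed closed under the full GTGDs of $\Sigma$, the trigger cannot be active, a contradiction. Otherwise $g$ was created by an $\incd$-step and therefore shares at most $w$ elements $E$ with its parent, the remaining elements of its $\incd$-fact $G$ being pairwise-distinct fresh nulls.

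The key step is then to manufacture a \emph{suitable} derived GTGD of breadth $\le w$ that already forces $H_0$ inside $g$. Re-running the chase from $G$ together with the facts copied into $g$ on the shared elements $E$ reproduces all facts that the short-cut chase places on the domain of $g$; since $\trig$'s body facts are among them and $\gdep$ is a genuine constraint, this chase would fire $\gdep$ and generate $H_0$. Hence the GTGD whose body is $G$ together with the $\sidesign$-facts on $E$ and whose head is $H_0$ is entailed by $\Sigma$: it has $G$ as guard, its side atoms use only the $\le w$ variables of $E$, its non-shared guard positions carry non-repeated fresh variables, and its head arity is $\le h$, so it is suitable of breadth $\le w$. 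By Proposition~\ref{prop:derived} it therefore belongs to $\wclo{\Sigma}{w}$, so the full saturation step performed when $g$ was created must have added $H_0$ to $g$, contradicting activeness. The well-orderedness secured by Lemma~\ref{lem:well} and recorded through Lemma~\ref{lem:normalizedclosure} is precisely what bounds the relevant context to $E$, and is what makes the breadth bound $w$ available.

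For a non-full $\incd$ $\dep$ with an active trigger, the single body fact $R(\vec c)$ again lies in a topmost bag containing $\vec c$, and since the $\incd$-steps are applied to a fixpoint, any such trigger is resolved when that fact is created; triggers whose body fact is a copied full fact are handled by the same localization, using that by Corollary~\ref{cor:allbags} the witnessing fact, once produced, is copied into every bag on whose domain it lives. I expect the full-GTGD case to be the main obstacle: the short-cut chase deliberately discards firings of breadth-$>w$ GTGDs and firings that would cross the $\incd$-tree, so the entire difficulty is to certify that every discarded firing is nonetheless subsumed by a breadth-$\le w$ derived GTGD confined to a single bag. This is exactly where the completeness of the closure (Proposition~\ref{prop:derived}) and the bag/tree structure (Corollary~\ref{cor:allbags}) must carry the weight of the argument.
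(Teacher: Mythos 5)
Your proof is correct and follows essentially the same route as the paper's: localize the trigger to the topmost bag containing its (guarded) domain via Corollary~\ref{cor:allbags}, dispatch the root bag by the closure assumption on $I_0$ and the $\incd$ case by the $\incd$ steps, and for a non-root bag observe that the trigger's head is entailed by the bag's initial contents (the $\incd$-fact plus side facts on the $\leq w$ shared elements), so the corresponding GTGD is suitable of breadth $\leq w$, hence in $\wclo{\Sigma}{w}$ by Proposition~\ref{prop:derived}, and the full saturation step has already resolved it. Your version merely makes explicit the appeal to the completeness of the closure and the entailment argument that the paper leaves implicit.
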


\begin{proof}
  Consider a trigger $\trig$ in~$I$, and show that it is not active. All rules in~$\Sigma$ are guarded, so the
  domain of~$\trig$ is guarded, and there is a topmost bag~$g$ where all elements
  of~$g$ appear. By Corollary~\ref{cor:allbags}, all facts of~$\trig$ are
  reflected in all bags, in particular they are all reflected in~$g$. Hence,
  $\trig$ is included in a single bag~$g$.

  It is clear that any trigger for an $\incd$ would have been solved by an $\incd$ step,
  so we can assume that~$\trig$ is a trigger for a full GTGD of~$\Sigma$.
  The bag $g$ cannot the root bag, because we assumed that $I_0$ is closed.
  Hence, $g$ is not the
  root bag. We need to argue that~$\trig$ is a trigger for a full GTGD
  of~$\wclo{\Sigma}{w}$. Indeed, when we created~$g$ by an $\incd$
  step, $g$ contained only an $\incd$-fact plus $\sidesign$-facts on a domain of size
  at most~$w$, thanks to the fact that the $\incd$s have width at most~$w$: and
  all further facts created in~$g$ are created by the full saturation step. Hence,
  if a trigger for~$\Sigma \cup \wclo{\Sigma}{w}$ is active, it means that its
  head is entailed by the initial contents of~$g$, so the corresponding full
  GTGD is suitable because its breadth is bounded by~$b$ 
  (again using the fact that the elements of the ID-fact of~$g$ cannot be
  repeated outside of the positions that contain elements exported from the
  parent fact of~$g$).
  Hence, the trigger
  is also a trigger for the corresponding derived suitable full GTGD in~$\wclo{\Sigma}{w}$.

  Now, as we have applied a full saturation step on~$g$, any remaining trigger
  there for a full TGD of~$\wclo{\Sigma}{w}$ would have been solved in this
  step, because $\wclo{\Sigma}{w}$ is closed so it does not leave any trigger by
  full TGDs unsatisfied in~$g$. Hence, $\trig$ is no longer active in~$I$.
\end{proof}

We now know that the result $I$ of the short-cut chase satisfies~$\Sigma$, and
as the short-cut chase only applies chase steps, it is actually equivalent to the chase, i.e.,
for any CQ~$Q$, we have that~$Q$ is satisfied in~$I$ iff $I_0, \Sigma \models
Q$. All that remains now is to translate the short-cut chase to a set of $\incd$s.

\myparaskip
\myparagraph{Linearization}
We now describe the third and last stage of the proof of
Theorem~\ref{thm:idreduce}, by describing the translation.
Fix a tuple $x_1, \ldots, x_n$ of variables.
For
every relation~$R$ of arity~$l$, for every subset $P = \{p_1 \ldots p_k\}$ of its positions of size at most~$w$,
for every instance $\chi$ of the relations of~$\sidesign$
on~$x_{p_1} \ldots x_{p_k}$,
we create a copy $R_{P, \chi}$ of relation~$R$.
Observe that this
creates a singly exponential number of relations when
the arity~$a'$ of~$\sidesign$ is fixed, and it creates 
only polynomially many relations when we further fix $w$ and
$\sidesign$.
We let $\Theta$ consist of the following $\incd$s:

\begin{itemize}
  \item Forget: for every relation~$R_{P,h,\chi})$, the full~$\incd$:
    \[
      R_{P,\chi}(\vec x) \rightarrow R(\vec x)
    \]
  \item Instantiate: 
    for every relation $R_{P,h,\chi}$, 
    for every homomorphism $h$ from~$x_1, \ldots, x_l$ to itself which maps~$P$
    to~$P$ and is the identity outside of~$P$, 
    letting $\chi'$ be the instance
    on~$x_1 \ldots x_n$
    obtained by computing the closure of~$R(h(\vec x)) \cup h(\chi)$ by the full TGDs
    of~$\wclo{\Sigma}{w}$ (a full saturation step), for every fact $S(\vec y)$
    of~$\chi'$ (with~$\vec y \subseteq h(\vec x)$), we add the full GTGD:
    \[
      R_{P,\chi}(h(\vec x)) \rightarrow S(\vec y)
    \]
  \item Lift: 
    for every relation $R_{P,\chi}$,
    for every homomorphism~$h$ as above,
    letting $\chi'$ be as above,
    for every $\incd$~$\dep$ $S(\vec
    y) \rightarrow \exists \vec z ~ T(\vec y, \vec z)$,
    for every match $h'$ of~$S(\vec y)$ in~$\chi'$,
    letting $P'' = \{p_1'', \ldots, p_{k''}''\}$ be the exported positions
    of~$\dep$
    in~$\vec y$, letting $\chi''$ be the restriction of~$\chi'$ to
    $h'(y_{p_1''}), \ldots, h'(y_{p_{k'}''})$,
    letting $P'''$ be the corresponding exported positions in the head
    of~$\dep$,
    we add the full GTGD:
    \[
      R_{P,\chi}(h(\vec x)) \rightarrow \exists \vec z ~ T_{P'',\chi'''}(h'(\vec
      y), \vec z)
    \]
\end{itemize}
The result of this transformation clearly consists of linear TGDs. Further, they are
of semi-width~$w$: indeed, the Lift rules have width bounded by~$w$, and the
other rules have an acyclic position graph.
Further, these rules are clearly computed in polynomial time in their number,
and this number indeed satisfies the required bound. Indeed, letting $n$ be the
number of relations of the signature, $a$ the maximal arity of the signature,
$n'$ the number of relations in the side signature, and $a'$ the maximal arity
of the side signature, then the number of relations $R_{P,\chi}$ is no greater
than $n \cdot (a+1)^w \cdot 2^{n' \cdot w^{a'}}$, the number of Forget rules is
bounded by this number, the number of homomorphisms in Instantiate rules given
the choice of $R_{P,\chi}$ is bounded by $w^w$, so the number of Instantiate
rules is bounded by $n \cdot (a+1)^w \cdot 2^{n' \cdot w^{a'}}$ times $w^w$
times $n' \cdot a^{a'}$, and the number of Lift rules is bounded by $n \cdot
(a+1)^w \cdot 2^{n' \cdot w^{a'}}$ times $w^w$, times $(a+1)^w$ for the choice of
$P'$ times $n \cdot (a+1)^w \cdot 2^{n' \cdot w^{a'}} \cdot w^w$ for the choice
of head. Hence, having fixed $a'$,
there are indeed fixed polynomials $P_1$ and $P_2$ such that the number of
rules, and the time to construct them, is in $P_1(\card{\Sigma}, 2^{P_2(w, h,
n')}$.

We have described the construction of~$\Sigma'$. Now, to construct the rewriting
$Q^\lift$
of~$Q$, we first construct the query $Q_1$ whose canonical database is obtained
by closing the canonical database of~$Q$ under the full GTGDs of~$\Sigma$ and
of~$\wclo{\Sigma}{w}$. To see why $Q_1$ can be computed in the
prescribed bound,
we can assume that we have computed $\wclo{\Sigma}{w}$ as we already know that
it can be computed in the given time bound, so we need only reason about the
complexity of applying the rules. Now,
note that the domain size does not increase, and the number of possible new
facts is bounded by $\card{\dom(\canondb(Q))}^{h}$. Now, testing each possible
rule application is in PTIME. Indeed, it amounts to homomorphism testing between
the body of a full GTGD and~$I$, for which it suffices to consider the guard atom,
trying to map it to every fact of~$I$, and then check whether the function that
this defines is a homomorphism from the entire body to~$I$. Now we construct
$Q_1$ from $Q^\lift$ by considering every fact $R(\vec a)$ of~$Q_1$, considering
every subset of positions $P$ of~$\vec a$ of size at most~$w$, and adding the
fact $R_{P,\chi}(\vec a)$, where $\chi$ is constructed from the restriction
of~$Q_1$ to $a_{p_1}, \ldots, a_{p_k}$ for $P = \{p_1, \ldots, p_k\}$. Formally,
letting $h$ be a homomorphism from $x_{p_1}, \ldots, x_{p_k}$ to $a_{p_1},
\ldots, a_{p_k}$, we add to $\chi$ every fact $R(\vec y)$ such that $R(h(\vec
y))$ is in~$Q_1$. Note that, if $\vec a$ contains duplicate elements, then $h$
is not injective, so we add multiple copies in~$\chi$ for every fact of~$Q_1$
with domain in $\{a_{p_1}, \ldots, a_{p_k}\}$. This still respects the time
bounds, because it is doable in PTIME in~$Q_1$.

The last thing to show is that, on the resulting $Q^\lift$,
the short-cut chase is equivalent to the chase by~$\Theta$.
To see this, consider the short-cut chase where each bag is annotated by the relation for
the $\incd$-fact that created it, the subset of positions of the elements that it
shared with its parent, and the subinstance that was copied by the parent; and
consider the $\incd$-chase by the rules of the form Lift in~$\Theta$. We can observe
by a straightforward induction that the tree structure on the bags of the short-cut chase
with the indicated labels is isomorphic to the tree of facts of the form
$R_{P,\chi}(\vec x)$ created in the $\incd$ chase by the Lift rules. Now, the
application of the rules Forget and Instantiate create precisely the facts
contained in these bags, so this shows that the $\incd$ chase by~$\Theta$ and the
well-ordered chase create precisely the same facts.

This shows that~$\Theta$ satisfies indeed the hypotheses of
Theorem~\ref{thm:idreduce}, and concludes the proof.

\subsection{Proof of Corollary~\ref{cor:uidfdfinite}: Complexity of Monotone
Answerability for UIDs and FDs in the Finite}
\label{app:uidfdfinite}
Recall Corollary~\ref{cor:uidfdfinite}:
\begin{quote}
  \uidfdfinite
\end{quote}

In the body we gave an argument for decidability, but with two gaps.
The first gap is that it relied on Theorem~\ref{thm:finiteopenucq}, finite
controllability modulo the finite closure.
This result is not explicitly stated in \cite{amarilli2015finite}, which deals not with containment
but only with certain answers for CQs. So let us prove
Theorem~\ref{thm:finiteopenucq}. Recall its statement:

\begin{quote}
  \thmfiniteopenucq
\end{quote}

\begin{proof}
   It is immediate that (ii.) implies (i.). Indeed, assuming (ii.), let $I$ be a
   finite instance satisfying~$\Sigma$, then it satisfies $\Sigma^*$ by
   definition of the finite closure, and we conclude by~(ii.).

   Conversely, let us show that (i.) implies (ii.). 
  The work \cite{amarilli2015finite} shows that the finite closure of a set of UIDs and FDs admits
  \emph{finite universal models}: for each set of UIDs and FDs $\Sigma$, each
  $k \in \NN$,  and each finite instance $I$, there is a finite instance $J$
  that satisfies~$\Sigma^*$ and such that
  for every Boolean CQ~$Q'$ of size at most $k$, the following equivalence holds:
   \[I \wedge \Sigma^* \text{~implies~} Q' \text{~over all instances\quad if and
   only if\quad}
  Q' \text{~holds on~} J.\]
  Applying this to canonical databases rather than instances, we have that for all Boolean
  CQs $Q$, for all $k \in \NN$, there is a finite $J$ satisfying~$\Sigma^*$
  such that for  all Boolean CQs $Q'$ of size at most $k$:
  \[Q \subseteq_{\Sigma^*} Q' \quad \text{if and
   only if\quad} Q' \text{~holds on~} J.\]
   where, in the left-hand-side of the equivalence, the containment is over all
   instances.
   
   So let us now assume (i.) and show (ii.). Fix the UCQs $Q \colonequals
   \bigvee_i Q_i$ and $Q' \colonequals \bigvee_j Q_j'$, 
   let $k$ be $\max(\max_i \card{Q_i}, \max_j \card{Q'_j})$, and 
   consider the finite $J_i$ satisfying~$\Sigma^*$ given by the above for each~$Q_i$.
   We know that $Q_i$ holds on~$J_i$, because vacuously 
  $Q_i \subseteq_{\Sigma^*} Q_i$. Hence, $Q$ holds on every~$J_i$. Further,
  every~$J_i$ is
  finite and it satisfies~$\Sigma^*$. By point (i.), we deduce that $Q'$ holds
   on every~$J_i$, so for each $J_i$ there is a disjunct $Q'_{j_i}$ of~$Q'$ that
   holds on~$J_i$. By the
   equivalence above, we know that, for each~$i$, we have $Q_i \subseteq_{\Sigma^*}
   Q_{j_i}'$, where the containment is over all instances. Thus we can show
   (ii.): for any instance~$I$ satisfying~$\Sigma^*$, if $Q$ holds on~$I$ then
   some disjunct $Q_i$ of~$Q$ holds on~$I$, so some disjunct~$Q'_{j_i}$ of~$Q'$
   holds on~$I$, so~$Q'$ holds on~$I$. This establishes (ii.) and concludes the
   proof.
\end{proof}

A second gap in the proof is that, in the body of the paper, we argue only for decidability. We now sketch how to
 obtain the $\exptime$ bound. The naive algorithm would be to
construct the finite closure $\Sigma^*$ explicitly, and then applying the $\exptime$ algorithm for the unrestricted
case. Since the closure  is exponential in $\Sigma$, this would give a $\twoexp$ algorithm.
However, we do not need the entire closure, but only a subset $\Sigma'$ such that its closure under
unrestricted entailment is the same as its closure under finite entailment. It is known
that such  a set can be built
in polynomial time \cite{cosm,amarilli2015finite}, which establishes our
$\exptime$ bound.

\section{Proofs for Section~\lowercase{\ref{sec:general}}: General First-Order Constraints}
\subsection{Decidability of Answerability for Two-Variable Logic with Counting} \label{app:aritytwo}
As mentioned in the body, for certain classes of constraints the reduction to $\amd$ and
the formalization of $\amd$
as a query containment problem will give decidability of answerability,  even without any schema simplification results.
An example is the guarded two-variable logic with counting quantifiers, $\gctwo$.
This is a logic over relations with arity at most two, which allows assertions
such as ``for any $x$, there are at least $7$ $y$'s such that $R(x,y)$''.
The only thing the reader needs to know about $\gctwo$ is that
query containment under $\gctwo$ constraints is decidable \cite{pratt2009data}, and that if we start with $\gctwo$ constraints and
perform the reduction given by the prior results, we still remain
in $\gctwo$. 

Thus the reduction to containment immediately gives:
\newcommand{\thmdecidgctwo}{
  We can decide if a CQ $Q$ is answerable with respect to
  a schema $\aschema$ where
all relations  have arity at most~$2$ and whose constraints are expressible in
  $\gctwo$.
}
\begin{theorem}
  \label{thm:decidgctwo}
  \thmdecidgctwo
\end{theorem}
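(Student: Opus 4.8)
The plan is to apply the reduction of Section~\ref{sec:reduce} verbatim and then observe that, under the arity restriction, the resulting query containment problem already lives inside a decidable logic, so that no schema simplification is needed. First I would use Proposition~\ref{prop:elimupper} to replace every result bound by a result lower bound, and then combine Theorem~\ref{thm:equiv} with Proposition~\ref{prop:reduce} to obtain that $Q$ is monotone answerable with respect to $\aschema$ if and only if the $\amd$ containment $Q \subseteq_\Gamma Q'$ holds, where $\Gamma$ collects the original constraints $\Sigma$, their primed copy $\Sigma'$, and the three families of accessibility axioms over the expanded (still arity-$\le 2$) signature.

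The heart of the argument is to verify that $\Gamma$ stays expressible in $\gctwo$. The constraints $\Sigma$ are in $\gctwo$ by hypothesis, and $\Sigma'$ is a renamed copy, hence also in $\gctwo$. For the accessibility axioms I would treat each shape separately. The non-result-bounded axiom and the axiom $R_{\mathsf{acc}}(\vec w) \rightarrow R(\vec w) \wedge R'(\vec w) \wedge \bigwedge_i \accessible(w_i)$ are guarded by an atom of arity at most~$2$ and mention at most two variables plus the unary predicate $\accessible$, so they already belong to guarded two-variable logic. The delicate case is the result-lower-bound axiom
\[
  \Big(\bigwedge_i \accessible(x_i)\Big) \wedge \exists^{\geq j} \vec y ~ R(\vec x, \vec y) \rightarrow \exists^{\geq j} \vec z ~ R_{\mathsf{acc}}(\vec x, \vec z).
\]
When $R$ is binary with one input position, $\vec x$, $\vec y$ and $\vec z$ are single variables, so $\exists^{\geq j} y\, R(x,y)$ and $\exists^{\geq j} z\, R_{\mathsf{acc}}(x,z)$ are genuine two-variable counting formulas and the whole implication is $\gctwo$; the arity-$1$ cases are analogous. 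The one genuinely subtle case is an input-free method on a binary relation, where $\vec y = (y_1,y_2)$ and $\exists^{\geq j}\vec y\, R(y_1,y_2)$ asserts that $R$ has at least $j$ distinct tuples. This is not a local count, but I expect to resolve it by exploiting that $j$ is bounded by the fixed result bound~$k$: ``a binary relation has at least $j$ tuples'' can be written, for each fixed $j$, as a finite Boolean combination of $\gctwo$ sentences counting sources by their out-degree, so the axiom remains in $\gctwo$.

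Having checked that $\Gamma$ is a $\gctwo$ theory while $Q$ and $Q'$ stay conjunctive queries over the expanded signature, I would conclude by invoking the decidability of query containment under $\gctwo$ constraints~\cite{pratt2009data}: run that procedure on $Q \subseteq_\Gamma Q'$ and report its answer. I expect the main obstacle to be precisely the input-free binary case of the second paragraph---confirming that the cardinality constraints created by the result bounds, which are exactly what forced the simplification theorems at higher arity, can be captured with two variables and counting because the counts only ever range up to the fixed bound. Everything else is an application of the reduction already established in Section~\ref{sec:reduce} together with a black-box appeal to the cited decidability result.
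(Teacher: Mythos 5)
Your proposal follows essentially the same route as the paper's own proof (Appendix~\ref{app:aritytwo}): reduce answerability to the $\amd$ containment $Q \subseteq_\Gamma Q'$ via Theorem~\ref{thm:equiv} and Proposition~\ref{prop:reduce}, observe that $\Gamma$ remains a $\gctwo$ theory over the arity-$\le 2$ signature, and invoke the decidability of containment under $\gctwo$ from~\cite{pratt2009data}. The only difference is that you spell out the one point the paper dismisses as ``easily seen''---that the cardinality axioms for an input-free method on a binary relation can be rewritten, for each fixed $j$, as a Boolean combination of two-variable counting sentences over degree profiles---and that verification is correct.
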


\begin{proof}
We apply Theorem~\ref{thm:equiv} to this schema
to reduce answerability to deciding
query containment with constraints. In the entailment problem for $\amd$ we have 
two
copies of the above constraints, and also the additional axioms. The additional
axioms are easily seen to be in $\gctwo$ as well.
Access monotonic-determinacy in the schema  is equivalent
to containment of~$Q'$ by~$Q$ \wrt\ the constraints, where $Q'$ is a primed copy of~$Q$.
This containment problem involves only $\gctwo$ constraints,  and thus we have
  decidability by~\cite{pratt2009data}.
\end{proof}

\subsection{Undecidability of Monotone Answerability for Equality-Free First-Order Logic and Related Languages}
\label{apx:undecid}

In this appendix, we prove Proposition~\ref{prp:undec}. Recall the statement:

\begin{quote}
\foundecid
\end{quote}

This result is true even without result bounds, and follows
from results in \cite{thebook}: we give a self-contained argument here.
Satisfiability for equality-free first-order constraints is undecidable  \cite{AHV}.
We will reduce from this to show undecidability of monotone answerability. So
let us prove
Proposition~\ref{prp:undec}:

\begin{proof}[\myproof of Proposition~\ref{prp:undec}]
Assume that we are given a satisfiability problem 
consisting of  equality-free first-order constraints $\Sigma$.
We produce from this an answerability problem where the schema has no access methods and has
constraints $\Sigma$, and we have a CQ~$Q$ consisting of a single $0$-ary relation~$A$ not mentioned
in~$\Sigma$.

We claim that this gives  a reduction from unsatisfiability to answerability, and thus
shows that the latter problem is undecidable for equality-free first-order constraints.

If $\Sigma$ is unsatisfiable, then vacuously any plan answers $Q$: since answerability
is a condition where we quantify over all instances satisfying the constraints, this
is vacuously true when the constraints are unsatisfiable because we are
quantifying over the empty set.

Conversely, if there is some instance $I$ satisfying $\Sigma$, then we let $I_1$ be formed from~$I$
by setting $A$ to be true and $I_2$ be formed by setting $A$ to be false.
$I_1$ and $I_2$ both satisfy $\Sigma$ and have the same accessible part, so they form
a counterexample  to~$\amd$. Thus, there cannot be any  monotone plan  for~$Q$. This
establishes the correctness of our reduction, and concludes the proof of
Proposition~\ref{prp:undec}.
\end{proof}

As mentioned in the body, undecidability holds also for  other logics
for which query containment is undecidable, such as general TGDs. We illustrate this
by reducing query containment with TGDs to monotone answerability.
Given an instance of the containment problem $Q \subseteq_\Sigma Q'$, where $\Sigma$ consists of TGDs,
we  can reduce it to monotone answerability of $Q$ with respect
to a schema with:
\begin{itemize}
\item constraints $\Sigma_1$ that contain $\Sigma$ 
and also $A \rightarrow Q$ as well as $Q' \rightarrow A$. 
\item only one access method,
providing input-free access to $A$.
\end{itemize}
If $Q \subseteq_\Sigma Q'$ then $Q$ can be answered just by  accessing  $A$.
Conversely, suppose that containment fails with a counterexample instance $I$ that satisfies
$\Sigma \wedge Q$ but does not 
satisfy $Q'$. Then, letting $I'$ be the empty instance,
we see that $I$ and $I'$ have the same accessible part, namely, the empty set.
But $I$ and~$I'$ disagree on $Q$. Thus $I$ and $I'$ are a counterexample to $\amd$, and hence
$Q$ is not monotone answerable.

\section{Generalization of Results to RA-Plans}
\label{apx:ra}
In the body of the paper we dealt with monotone answerability. However, at the
end of
Section~\ref{sec:prelims} and in Section~\ref{sec:conc}, we claimed that many of the  results in
the paper, including the reduction to query containment and the schema
simplification results, generalize in the ``obvious way''
to answerability where general relational algebra expressions are allowed. In addition, the results on complexity
for monotone answerability that are shown in the body extend to answerability with RA-plans, with one exception and
one caveat.
The exception is that
 we do not have a decidability result for $\uincd$s and FDs analogous to
 Theorem~\ref{thm:deciduidfd}, because
the containment problem is more complex. 
The caveat concerns answerability over finite instances. Remember that, for
monotone answerability,
all of the decidability and complexity results could be translated to the finite variant with simple arguments
based on finite controllability
(see, e.g., Proposition~\ref{prop:finitecontrol}). Doing the same for
RA-answerability would require more effort, because
we would need to verify that each construction can be adapted to preserve
finiteness: this is not obvious, e.g., for the blow-up construction.
We believe that all constructions could be adapted in this way, and we
conjecture that all
the results on RA-answerability stated in this appendix also hold for the
finite variant. Nevertheless, we leave the verification of this for future work, and in this
appendix \emph{we will only deal with answerability over unrestricted
instances}.

We explain in the rest of the appendix how to adapt our results in the
unrestricted setting from
monotone-answerability to RA-answerability, except
Proposition~\ref{prop:finitecontrol}. In the specific case of $\incd$
constraints, we will show (Proposition~\ref{prop:monotonecollapse}) that
RA-answerability and monotone answerability coincide for $\incd$s: this
generalizes a result known for views, and extends it to the setting with result
bounds.
\subsection{Variant of Reduction Results for RA-Answerability} \label{subsec:detdeffull}

We first formally define the analog of~$\amd$ for the notion of RA-answerability
that we study in this appendix. 
In the absence of result bounds, this corresponds to the notion of
\emph{access-determinacy} \cite{thebook,ustods},
which states that two instances with the
same accessible part must agree on the query result. Here we
generalize this to the setting with result bounds, where the accessible instance
is not uniquely defined.

Given a schema $\aschema$ with constraints and result-bounded methods, a query $Q$ is said to be \emph{access-determined}
 if for any two instances $I_1$, $I_2$ satisfying the constraints of~$\aschema$,
 if there is a valid access selection $\sigma_1$ for $I_1$
 and a valid access selection $\sigma_w$ for $I_2$ such that
$\accpart(\sigma_1, I_1)=\accpart(\sigma_2, I_2)$,
then  $Q(I_1)=Q(I_2)$.

As we did with $\amd$,  it will be convenient to give an alternative definition
of access-determinacy that talks only about a subinstance of a single instance.

For a schema $\aschema$ a 
 common subinstance
$I_\acc$ of~$I_1$ and $I_2$ is \emph{jointly access-valid} if, for any access performed with a method of~$\aschema$ in
$I_\acc$, there is  a set of matching
tuples in~$I_\acc$ which is a valid output to the access in~$I_1$
and in~$I_2$. In other words, there is an access selection~$\aselect$
for~$I_\acc$
whose outputs are valid in~$I_1$ and in~$I_2$.

We now claim the analogue of
Proposition~\ref{prop:altdef}, namely:

\begin{proposition}
  \label{prop:altdefra}
 For any schema $\aschema$ with constraints $\Sigma$ and result-bounded methods,
  a CQ $Q$ is access-determined if and only if the following implication holds:
for any two instances $I_1, I_2$ satisfying $\Sigma$, if $I_1$ and $I_2$ have a
  common subinstance~$I_\acc$ that is jointly
  access-valid, then $Q(I_1)=Q(I_2)$.
\end{proposition}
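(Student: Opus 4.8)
The plan is to mirror the proof of Proposition~\ref{prop:altdef}. Both the original definition of access-determinacy and the claimed characterization have the same conclusion $Q(I_1)=Q(I_2)$ and differ only in the hypothesis describing ``equal accessible data'', so it suffices to prove that the two hypotheses are equivalent. Concretely, for instances $I_1,I_2$ satisfying $\Sigma$, I would show the equivalence of: (i) $I_1$ and $I_2$ have a common subinstance $I_\acc$ that is jointly access-valid; and (ii) there are valid access selections $\aselect_1$ for $I_1$ and $\aselect_2$ for $I_2$ with $\accpart(\aselect_1,I_1)=\accpart(\aselect_2,I_2)$.

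For (i)$\Rightarrow$(ii), I would take the access selection $\aselect$ on $I_\acc$ witnessing joint access-validity, i.e.\ whose outputs on accesses with values in $I_\acc$ lie in $I_\acc$ and are valid both in $I_1$ and in $I_2$. Extend $\aselect$ to valid access selections $\aselect_1$ and $\aselect_2$ that agree with $\aselect$ on all accesses whose bindings take values in $\adom(I_\acc)$ and return an arbitrary valid output otherwise. A straightforward induction on the stages of $\accpart$ shows that both fixpoint computations stay inside $I_\acc$: we have $\accessible_0=\emptyset$, and any binding over $\accessible_i\subseteq\adom(I_\acc)$ is an $I_\acc$-access, hence driven by $\aselect$. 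Consequently both $\accpart(\aselect_1,I_1)$ and $\accpart(\aselect_2,I_2)$ are computed by the same recurrence from the same function $\aselect$, so they coincide.

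For (ii)$\Rightarrow$(i), I would set $I_\acc\colonequals\accpart(\aselect_1,I_1)=\accpart(\aselect_2,I_2)$, which is a subinstance of $I_1$ and of $I_2$, hence a common subinstance. The substance is joint access-validity. Fix an access $(\mt,\accbind)$ with values in $\adom(I_\acc)$, write $M_1,M_2$ for its matching tuples in $I_1,I_2$, and $M_\acc\colonequals M_1\cap I_\acc=M_2\cap I_\acc$ for those in $I_\acc$. Since the binding values become accessible at some finite stage, $\aselect_1(\mt,\accbind)\subseteq\accpart(\aselect_1,I_1)=I_\acc$ is a valid output in $I_1$, and symmetrically $\aselect_2(\mt,\accbind)\subseteq I_\acc$ is valid in $I_2$; both are contained in $M_\acc$. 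For a method with bound $k$ this gives $|M_\acc|\geq\min(|M_1|,k)$ and $|M_\acc|\geq\min(|M_2|,k)$ (in the non-result-bounded case the unique valid output forces $M_1=M_\acc=M_2$ directly). Combining these with $M_\acc\subseteq M_1\cap M_2$ yields, by a short case analysis, $\min(|M_1|,k)=\min(|M_2|,k)$; choosing any $J\subseteq M_\acc$ of this common size then produces a single output that is valid in both $I_1$ and $I_2$, which is exactly joint access-validity.

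The main obstacle is precisely this last point, and it is where the argument genuinely departs from Proposition~\ref{prop:altdef}. There one only needed access-validity in $I_1$, a one-sided condition satisfied by any output of $\aselect_1$; here joint access-validity demands a \emph{single} set of tuples in $I_\acc$ valid against both $M_1$ and $M_2$ at once. The crux is the counting observation $\min(|M_1|,k)=\min(|M_2|,k)$, which holds because the accessible-part outputs of $\aselect_1$ and $\aselect_2$ both live inside $M_\acc\subseteq M_1\cap M_2$ while each already realizes its instance's truncated match count. Once this equality is established the common output is selected routinely, and the equivalence (i)$\Leftrightarrow$(ii) then immediately turns the definition of access-determinacy into the stated characterization, completing the proof.
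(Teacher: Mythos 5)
Your proof is correct and follows essentially the same route as the paper's (which factors through Proposition~\ref{prop:commonaccessiblera}): reduce to the equivalence of the two ``same accessible data'' hypotheses, extend the joint access selection to both instances for the forward direction, and take $I_\acc$ to be the common accessible part for the converse. Your explicit counting step $\min(\card{M_1},k)=\min(\card{M_2},k)$ spells out the one point the paper states only tersely---namely why a single output inside the common accessible part can be chosen that is valid in $I_1$ and in $I_2$ simultaneously---so on that point your write-up is if anything more detailed than the paper's.
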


This result gives the alternative definition of access-determinacy that we will
use in our proofs.
The equivalence with the definition via accessible parts follows from this result:

\begin{proposition}  \label{prop:commonaccessiblera}
The following are equivalent:
  \begin{enumerate}[(i)]
\item $I_1$ and $I_2$ have a common subinstance $I_\acc$ that is jointly access-valid.
\item There is a common accessible part $A$ 
of~$I_1$ and for~$I_2$.
\end{enumerate}
\end{proposition}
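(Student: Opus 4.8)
The plan is to mirror closely the proof of Proposition~\ref{prop:altdef}: instead of arguing about access-determinacy directly, I will show that the two ways of saying ``$I_1$ and $I_2$ expose the same accessible data'' coincide, i.e.\ that (i) and (ii) are equivalent. The forward direction turns a jointly access-valid common subinstance into a pair of access selections whose accessible parts coincide, and the converse extracts a jointly access-valid common subinstance from a common accessible part. The delicate point, which is absent from Proposition~\ref{prop:altdef}, is that here a \emph{single} output of each access must be valid simultaneously in $I_1$ and in $I_2$.

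For the direction from (i) to (ii), suppose $I_1$ and $I_2$ share a common subinstance $I_\acc$ that is jointly access-valid, witnessed by an access selection $\aselect$ on $I_\acc$ all of whose outputs lie in $I_\acc$ and are valid in both $I_1$ and $I_2$. I define a valid access selection $\aselect_1$ for $I_1$ that returns $\aselect(\mt, \accbind)$ on every access whose binding $\accbind$ uses only values of $\adom(I_\acc)$, and returns an arbitrary valid output otherwise; I define $\aselect_2$ for $I_2$ identically, reusing the \emph{same} sets $\aselect(\mt,\accbind)$ on bindings in $\adom(I_\acc)$. A straightforward induction on $i$ then shows $\accpart_i(\aselect_1, I_1) = \accpart_i(\aselect_2, I_2)$ with both contained in $I_\acc$: at each stage the accessible values stay inside $\adom(I_\acc)$, so the two selections agree on exactly the bindings that are queried. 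Passing to the fixpoint yields a common accessible part $A \colonequals \accpart(\aselect_1, I_1) = \accpart(\aselect_2, I_2)$, which is (ii).

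For the converse, given valid access selections $\aselect_1, \aselect_2$ with $\accpart(\aselect_1, I_1) = \accpart(\aselect_2, I_2) = A$, I take $I_\acc \colonequals A$; it is a common subinstance of $I_1$ and $I_2$, and it remains to build a joint access selection on it. Fix an access $(\mt, \accbind)$ whose binding uses values of $\adom(A)$, and let $M_1, M_2, M^A$ be its matching tuples in $I_1$, $I_2$, and $A$ respectively, so that $M^A \subseteq M_1$ and $M^A \subseteq M_2$. Since $A$ is the accessible part of $I_p$ under $\aselect_p$, the set $\aselect_p(\mt, \accbind)$ consists of matching tuples of $A$, so it lies in $M^A$; writing $k$ for the bound of $\mt$ (with $k = \infty$ when $\mt$ is not result-bounded), the lower-bound property of valid outputs gives $\card{M^A} \ge \card{\aselect_p(\mt,\accbind)} \ge \min(\card{M_p}, k)$ for $p \in \{1,2\}$. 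From $\aselect_2(\mt,\accbind) \subseteq M^A \subseteq M_1$ I get $\card{M_1} \ge \min(\card{M_2}, k)$, hence $\min(\card{M_1}, k) \ge \min(\card{M_2}, k)$, and symmetrically the reverse, so $\min(\card{M_1}, k) = \min(\card{M_2}, k) =: s$. I then pick any $J \subseteq M^A$ with $\card{J} = s$, which exists because $\card{M^A} \ge s$. This $J$ is a valid output in both $I_1$ and $I_2$: the lower-bound condition holds since $\card{J} = s = \min(\card{M_p}, k)$, the upper-bound condition holds since $s \le k$, and in the remaining non-result-bounded case the cardinality inequalities force $M^A = M_1 = M_2$, so that $J$ returns every matching tuple. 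Letting $(\mt, \accbind) \mapsto J$ define the joint access selection establishes (i).

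The main obstacle is precisely this size bookkeeping in the converse: because a result bound constrains outputs both from above and from below, I cannot simply reuse $\aselect_1(\mt,\accbind)$ or take a union of the two selections' outputs, and must instead show that $M_1$ and $M_2$ have the same truncated cardinality $\min(\card{M_\cdot}, k)$ and that $M^A$ is large enough to realize it. Everything else is a routine adaptation of the proof of Proposition~\ref{prop:altdef}, and establishing both implications between (i) and (ii) is exactly the content of Proposition~\ref{prop:commonaccessiblera}, which in turn supplies the accessible-part reformulation needed alongside Proposition~\ref{prop:altdefra}.
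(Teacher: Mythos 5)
Your proof is correct and follows essentially the same route as the paper's: the forward direction extends the joint access selection on $I_\acc$ to valid selections on $I_1$ and $I_2$ whose accessible parts coincide, and the converse takes $I_\acc \colonequals A$ and exhibits, for each access, an output inside $A$ that is valid in both instances. The only difference is presentational: where the paper reuses $\aselect_1(\mt,\accbind)$ and asserts tersely that it is also valid in $I_2$, you spell out the underlying cardinality bookkeeping ($\min(\card{M_1},k)=\min(\card{M_2},k)$ and $\card{M^A}$ large enough to realize it) — a welcome clarification of the same argument rather than a different proof.
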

\begin{proof}
Suppose  $I_1$ and $I_2$ have a common subinstance $I_\acc$ that is jointly access-valid. 
 This means that we can define an access selection
  $\aselect$ that takes any access performed with values of~$I_\acc$ and a
  method of~$\aschema$, and maps it to a set of matching tuples in~$I_\acc$ that
  is valid in~$I_1$ and in~$I_2$.
We can see that~$\aselect$ can be used as a valid access selection in~$I_1$ and $I_2$
  by extending it to return an arbitrary valid output to accesses in~$I_1$ that
  are not accesses in~$I_\acc$, and likewise to accesses in~$I_2$ that are not
  accesses in~$I_\acc$; we then have
  $\accpart(\aselect,I_1)=\accpart(\aselect,I_2)$ so we can define the
  accessible part $A$
  accordingly, noting that we have $A \subseteq I_\acc$.
Thus the first item implies the second.

Conversely, suppose that $I_1$ and $I_2$ have a common accessible part $A$,
  and let $\aselect_1$ and~$\aselect_2$ be the witnessing valid access selections
  for~$I_1$ and~$I_2$, i.e.,
$A = \accpart(\aselect_1, I_1)=\accpart(\aselect_2, I_2)$.
Let $I_\acc \colonequals A$, and let us show that $I_\acc$ is a common
  subinstance of~$I_1$ and~$I_2$ that is jointly access-valid. By definition we
  have $I_\acc \subseteq I_1$ and $I_\acc \subseteq I_2$. Now, to show that it
  is jointly access-valid in~$I_1$ and~$I_2$, consider any access
$\abind, \mt$ with values in~$I_\acc$. We know that there is $i$ such
that~$\abind$ is in~$\accpart_i(\aselect_1, I_1)$, therefore by definition of
  the fixpoint process and of the access selection~$\aselect_1$ there is a valid
  output to the access in~$\accpart_{i+1}(\aselect_1, I_1)$, hence in~$I_\acc$. 
  Thus we can choose an output in~$I_\acc$ which is valid
  in~$I_1$.
But this output must also be in~$\accpart(\aselect_2, I_2)$, and thus it is valid in
$I_2$ as well.
Thus, $I_\acc$ is jointly access-valid. This shows the converse implication and
concludes the proof.
\end{proof}

Given a schema $\aschema$ with constraints and result-bounded methods, a query $Q$ is said to be \emph{access-determined}
 if for any two instances $I_1$, $I_2$ satisfying the constraints of~$\aschema$, if $I_1$
 and $I_2$ have a common subinstance that is jointly access-valid,
then  $Q(I_1)=Q(I_2)$.

The following analogue of Proposition~\ref{prp:plantoproof} justifies the definition:
\begin{proposition}
  \label{prop:plantoproofra}
If $Q$ has a plan $\aplan$ that answers it \wrt\ $\aschema$, then
$Q$ is access-determined over~$\aschema$.
\end{proposition}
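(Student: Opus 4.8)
The plan is to mirror the proof of Proposition~\ref{prp:plantoproof}, which handled the monotone case, but to exploit the fact that access-determinacy is phrased in terms of a \emph{jointly} access-valid common subinstance (Proposition~\ref{prop:altdefra}) rather than a subinstance that is merely access-valid in~$I_1$. Concretely, suppose $\aplan$ is an RA-plan answering $Q$ with respect to~$\aschema$, and fix two instances $I_1,I_2$ satisfying the constraints together with a common subinstance $I_\acc$ that is jointly access-valid. By definition there is a single access selection $\aselect$, defined on accesses whose bindings take values in~$I_\acc$, whose outputs lie in~$I_\acc$ and are valid in~$I_1$ \emph{and} in~$I_2$. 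I would extend $\aselect$ to a valid access selection $\aselect_1$ for~$I_1$ and a valid access selection $\aselect_2$ for~$I_2$, each of which restricts to~$\aselect$ on every access whose binding takes values in~$\adom(I_\acc)$, and returns some arbitrary valid output elsewhere.

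First I would prove, by induction on the commands of~$\aplan$, the invariant that for every temporary table the value computed on~$I_1$ under~$\aselect_1$ is \emph{identical} to the value computed on~$I_2$ under~$\aselect_2$, and moreover that these values contain only elements of~$\adom(I_\acc)$. The base case is trivial. For a middleware command $T \colonequals E$, where $E$ is now an \emph{arbitrary} relational algebra expression (possibly using difference), the tables feeding~$E$ coincide by the induction hypothesis, so $E$ evaluates identically in both runs; since relational algebra introduces no new values, the result still ranges over~$\adom(I_\acc)$. The crucial case is an access command $T \Leftarrow_\outmap \mt \Leftarrow_\inmap E$: here $E$ evaluates to the same collection of tuples in both runs, each yielding via $\inmap$ a binding $\accbind$ whose values lie in~$\adom(I_\acc)$; hence $\aselect_1(\mt,\accbind)=\aselect(\mt,\accbind)=\aselect_2(\mt,\accbind)$, so the two access outputs agree, lie in~$I_\acc$, and after renaming by $\outmap$ keep~$T$ identical with values in~$\adom(I_\acc)$.

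Applying the invariant to the output table $T_0$ shows that $\aplan$ run on~$I_1$ under~$\aselect_1$ and on~$I_2$ under~$\aselect_2$ produce the \emph{same} output. Because $\aplan$ answers~$Q$, its only possible output on~$I_1$ is $Q(I_1)$ and on~$I_2$ is $Q(I_2)$; hence $Q(I_1)=Q(I_2)$, which is exactly access-determinacy via Proposition~\ref{prop:altdefra}. Note that, unlike the monotone case, this argument never invokes monotonicity of the commands: the shared access selection $\aselect$ forces the intermediate tables to be \emph{equal} rather than merely contained, which is precisely what makes the difference operator harmless. The main obstacle is establishing and preserving the inductive invariant that every access performed by the plan uses a binding confined to~$\adom(I_\acc)$, since this is what guarantees that $\aselect_1$ and~$\aselect_2$ never diverge on the accesses actually made; as in the monotone argument, one must also take care that any constants mentioned by the plan are available in~$I_\acc$ (or treated as part of the accessible data).
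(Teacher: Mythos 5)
Your proof is correct and follows essentially the same route as the paper's: both fix the jointly access-valid common subinstance, build access selections for $I_1$ and $I_2$ that agree on accesses with bindings in~$I_\acc$, and show by induction on the commands that every temporary table is computed identically in both runs, so that the difference operator causes no trouble. The only cosmetic difference is that the paper phrases the invariant as a three-way agreement with the evaluation of~$\aplan$ on~$I_\acc$ itself, whereas you track the equivalent condition that all table values stay within~$\adom(I_\acc)$.
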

\begin{proof}
Consider instances $I_1$ and $I_2$ with a common accessible subinstance $I_\acc$
  that is jointly access-valid. Let us show that $Q(I_1) = Q(I_2)$.
We  argue that there are valid access selections $\aselect_1$ on~$I_1$,
  $\aselect_2$ on~$I_2$
and $\aselect$ on~$I_\acc$
 such that the plan~$\aplan$ evaluated with~$\aselect_1, I_1$, the plan~$\aplan$ evaluated with
  $\aselect_2, I_2$, and the plan~$\aplan$
evaluated with~$\aselect, I_\acc$ all yield the same output for each temporary
table of~$\aplan$. We prove this by induction on~$\aplan$. Inductively, it suffices to look at an
access command  $T \Leftarrow \mt \Leftarrow E$ with~$\mt$ an access method on
some relation. We can assume by induction hypothesis that $E$ 
evaluates to the same set of tuples $E_0$ on~$I_\acc$ as on~$I_1$ and $I_2$.
Given a tuple $\vec t$ in~$E_0$, consider the set $M_{\vec t}$ of ``matching tuples''
(tuples for the relation~$R$ extending $\vec t$) in 
$I_\acc$. Suppose that this set has cardinality $j$ where $j$ is strictly
smaller than the result bound of~$\mt$.
Then we can see that the set of matching tuples in~$I_1$ and in~$I_2$  must be exactly $M_{\vec t}$, and
we can take $M_{\vec t}$ to be the output of the access on~$\vec t$ in all three structures.
Suppose now $M_{\vec t}$ has size at least that of the result bound. Then the other structures
may have additional matching tuples, but we are again free
to take a subset of~$M_{\vec t}$ of the appropriate size to be the output of the
  access to~$\mt$ with~$\vec t$ in all three structures.
Unioning the tuples for all $\vec t$ in~$E_0$ completes the induction. Hence, we
know that the output of~$\aplan$ on~$I_1$ and on~$I_2$ must be equal. As we have
  assumed that $\aplan$ answers~$Q$ on~$\aschema$, this means that $Q(I_1) =
  Q(I_2)$, which is what we wanted to show.
\end{proof}

Analogously to Theorem~\ref{thm:equiv},  we can show access-determinacy is equivalent
to RA-answerability.
The proof starts the same way as that of Theorem~\ref{thm:equiv}, noting
that in  the absence of  result bounds, this equivalence was shown in prior work:
\begin{theorem}[\cite{thebook,ustods}]
  \label{thm:determandplansclassic}
  For any CQ~$Q$ and schema $\aschema$ (with no result bounds)
  whose constraints $\Sigma$ are expressible
  in active-domain first-order logic, the following are equivalent:
\begin{enumerate}
\item $Q$ has an RA  plan that answers it over~$\aschema$
\item
$Q$ is access-determined over~$\aschema$.
\end{enumerate}
\end{theorem}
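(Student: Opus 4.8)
The forward direction is exactly Proposition~\ref{prop:plantoproofra}, so the work is in the converse: from access-determinacy we must produce an RA-plan. The plan is to follow the route used for the monotone case (Theorem~\ref{thm:mdetermandplansclassic}, itself proved via a variant of Craig interpolation), but replacing the positive variant of interpolation with full first-order interpolation. First I would phrase access-determinacy as a first-order entailment. Using the jointly-access-valid characterization (Proposition~\ref{prop:altdefra}, justified by Proposition~\ref{prop:commonaccessiblera}), I set up, as in Section~\ref{sec:reduce}, a signature with two copies $R,R'$ of each relation together with accessed copies $R_{\mathrm{acc}}$ and the unary predicate $\accessible$, and collect into a set $\Gamma$ of first-order axioms: the constraints $\Sigma$ on the unprimed relations and $\Sigma'$ on the primed ones; the accessibility axioms for each method (with no result bounds, stating that when the input positions are $\accessible$ then all matching tuples land in $R_{\mathrm{acc}}$); and the axioms $R_{\mathrm{acc}}(\vec w)\to R(\vec w)\wedge R'(\vec w)\wedge\bigwedge_i\accessible(w_i)$, which force $R_{\mathrm{acc}}$ to be a common, jointly access-valid subinstance. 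The models of $\Gamma$ are exactly the pairs $(I_1,I_2)$ sharing such a subinstance, so access-determinacy is equivalent to the biconditional entailment $\Gamma\models Q\leftrightarrow Q'$, the two-sided analogue of the containment $Q\subseteq_\Gamma Q'$ of Proposition~\ref{prop:reduce}.

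Next I would read this entailment as an implicit-definability statement. The unprimed side of $\Gamma$ shares with the primed side only the accessed vocabulary $\{R_{\mathrm{acc}}\}\cup\{\accessible\}$; the biconditional $\Gamma\models Q\leftrightarrow Q'$ therefore says precisely that the value of $Q$ is determined, relative to $\Gamma$, by the accessed relations. By the projective Beth definability theorem (equivalently, by Craig interpolation applied to $Q\wedge\Gamma\models Q'$, whose interpolant lives in the shared accessed vocabulary) there is an explicit first-order formula $\theta$ over $\{R_{\mathrm{acc}},\accessible\}$ with $\Gamma\models Q\leftrightarrow\theta$. Because we quantify over all instances, finite and infinite, this is ordinary first-order entailment and classical interpolation applies; the active-domain semantics is accommodated by relativizing all quantifiers to $\accessible$, the domain of the accessed part. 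This is the one place the argument genuinely differs from the monotone case: there one needs the positive/Lyndon form of interpolation to obtain a monotone (UCQ) interpolant and hence a monotone plan, whereas here an unrestricted first-order $\theta$, \ie an arbitrary relational-algebra expression over the accessed relations, is all we need.

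Finally I would turn $\theta$ into an RA-plan. The accessed relations are executable: a canonical plan iterates every access, feeding the retrieved values back as new bindings, and upon reaching a fixpoint it holds in temporary tables exactly the accessed data, \ie an interpretation of the $R_{\mathrm{acc}}$ and of $\accessible$. Appending $\theta$ as a middleware relational-algebra command yields an RA-plan $\aplan$. To see that $\aplan$ answers $Q$ over $\aschema$, take any $I\models\Sigma$ and instantiate $\Gamma$ with $I_1=I_2=I$ and $R_{\mathrm{acc}}$ the accessible part computed by the plan; this is a model of $\Gamma$, so $\Gamma\models Q\leftrightarrow\theta$ forces the plan's output $\theta$ to equal $Q(I)$, and independence from the access selection is immediate since there are no result bounds.

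The main obstacle is reconciling the finiteness of a plan with the fixpoint nature of the accessible part: a priori the number of access rounds needed to compute the accessed relations is unbounded, so extracting an \emph{executable} definition, and not merely an explicit one, is the delicate point. The cleanest way around this, and the route I would actually carry out, is the proof-theoretic ``plans from proofs'' technique of~\cite{thebook}: rather than invoking semantic interpolation as a black box, extract the plan directly from a cut-free proof of the entailment $\Gamma\models Q\leftrightarrow Q'$, so that each use of an accessibility axiom in the proof is compiled into a single access command. This bounds the number of accesses by the proof size and simultaneously produces both the interpolant $\theta$ and a concrete, terminating plan, thereby sidestepping the fixpoint issue; the secondary active-domain subtlety is then absorbed into the relativization to $\accessible$ described above.
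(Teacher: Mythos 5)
Note first that the paper does not prove this theorem: it is imported from \cite{thebook,ustods}, and the paper only remarks that the analogous monotone equivalence is proved there via a variant of Craig interpolation. Your proposal is essentially a correct reconstruction of that cited argument --- the forward direction is Proposition~\ref{prop:plantoproofra}, and the converse recasts access-determinacy as an entailment over a two-copy signature whose halves share only the accessed relations and $\accessible$, applies interpolation/projective Beth definability, and then extracts an executable plan proof-theoretically. The one step worth stressing is the one you already flag as delicate: plain Craig interpolation yields an explicit first-order definition over the accessed vocabulary but does not by itself guarantee that its quantifiers are relativized to $\accessible$, which is what executability (a finite bound on the number of access rounds) requires; the cited works close this gap with a strengthened ``access interpolation'' theorem, equivalently the plans-from-proofs extraction you describe, so your final paragraph is where the real content of the converse lives rather than an optional refinement.
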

The extension to result bounds is shown using the same reduction as
for Theorem~\ref{thm:equiv}, by just ``axiomatizing'' the result bounds as
additional constraints. This gives the immediate generalization of
Theorem~\ref{thm:determandplansclassic} to schemas that may include result
bounds:

\begin{theorem} \label{thm:equivra}
  For any CQ~$Q$ and schema $\aschema$ 
  whose constraints~$\Sigma$ are expressible
  in active-domain first-order logic, the following are equivalent:
\begin{enumerate}
\item $Q$ has an RA-plan that answers it over~$\aschema$
\item
$Q$ is access-determined over~$\aschema$.
\end{enumerate}
\end{theorem}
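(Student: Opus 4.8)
The plan is to follow exactly the reduction used to establish Theorem~\ref{thm:equiv}, replacing monotone plans by RA-plans and $\amd$ by access-determinacy throughout. The forward direction (RA-plan implies access-determined) is already in hand: it is Proposition~\ref{prop:plantoproofra}. So the work is all in the converse, and the strategy is to route through the auxiliary schema $\elimnd(\aschema)$ that axiomatizes the result bounds away, and then invoke the known result for schemas without result bounds, namely Theorem~\ref{thm:determandplansclassic}.

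Concretely, I would first establish the RA-analog of Proposition~\ref{prop:elimresultboundplan}: for any CQ~$Q$ over $\aschema$, there is an RA-plan answering $Q$ over $\aschema$ iff there is one over $\elimnd(\aschema)$. The proof of Proposition~\ref{prop:elimresultboundplan} already transfers verbatim, since the plan transformation only rewrites access commands (replacing a method $\mt$ on $R$ by the non-result-bounded method on $R_\mt$, and conversely) and leaves middleware commands untouched; nothing in that argument uses monotonicity, so allowing the relational difference operator causes no difficulty. Next I would establish the RA-analog of Proposition~\ref{prop:elimresultbounddet}: $Q$ is access-determined over $\elimnd(\aschema)$ iff it is access-determined over $\aschema$. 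Here I would use the characterization of access-determinacy via jointly access-valid subinstances given by Proposition~\ref{prop:altdefra}, in place of the access-valid characterization (Proposition~\ref{prop:altdef}) used for $\amd$.

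The one genuine point of care --- and the place where the RA-argument differs from the $\amd$-argument --- is that access-determinacy is \emph{symmetric}, requiring $Q(I_1)=Q(I_2)$, whereas $\amd$ asks only for the one-sided inclusion $Q(I_1)\subseteq Q(I_2)$. Correspondingly, when transporting a counterexample between $\aschema$ and $\elimnd(\aschema)$ I must build a common subinstance that is valid in \emph{both} instances simultaneously. This is where the joint access-validity of $I_\acc$ does the work: since the witnessing access selection $\aselect$ on $I_\acc$ already returns, for each access, an output that is valid both in $I_1$ and in $I_2$, I can interpret each auxiliary relation $R_\mt$ in both $I_1'$ and $I_2'$ as the union of the outputs of $\aselect$ over all accesses with $\mt$ on $I_\acc$. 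The resulting $R_\mt$-facts then lie in both $I_1'$ and $I_2'$, so $I_\acc' \colonequals I_\acc \cup \{R_\mt\text{-facts}\}$ is a common subinstance that remains jointly access-valid (accesses on the $R_\mt$ relations return all matching $R_\mt$-facts, which are valid on both sides because $\aselect$'s outputs were, and accesses on the original relations behave as in $I_\acc$). The converse transport, from $\elimnd(\aschema)$ back to $\aschema$, reads off a jointly valid access selection from the common interpretations of the $R_\mt$ relations, exactly as in Proposition~\ref{prop:elimresultbounddet}.

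Chaining these equivalences yields the result: access-determinacy of $Q$ over $\aschema$ is equivalent to access-determinacy over $\elimnd(\aschema)$, which by Theorem~\ref{thm:determandplansclassic} (applicable since $\elimnd(\aschema)$ has no result bounds) is equivalent to the existence of an RA-plan over $\elimnd(\aschema)$, which by the RA-analog of Proposition~\ref{prop:elimresultboundplan} is equivalent to the existence of an RA-plan over $\aschema$. I expect the main obstacle to be essentially bookkeeping: verifying that the counterexample constructions of Proposition~\ref{prop:elimresultbounddet} stay correct once ``valid in $I_1$'' is strengthened to ``jointly valid in $I_1$ and $I_2$'', and in particular that extending $\aselect$ to full valid access selections on $I_1'$ and $I_2'$ (returning arbitrary valid supersets for accesses outside $I_\acc$) does not disturb the accessible parts inside $I_\acc'$.
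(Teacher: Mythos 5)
Your proposal is correct and matches the paper's own argument: the paper proves Theorem~\ref{thm:equivra} by combining Proposition~\ref{prop:plantoproofra} for the easy direction with exactly the reduction you describe, namely axiomatizing the result bounds via $\elimnd(\aschema)$ and invoking Theorem~\ref{thm:determandplansclassic} for the result-bound-free case. Your observation that joint access-validity (Proposition~\ref{prop:altdefra}) is what makes the symmetric transport of counterexamples go through is precisely the point the paper relies on, even though it leaves the details implicit.
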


Hence, we have shown the analogue of Theorem~\ref{thm:equiv} for the setting of
answerability and RA-plans studied in this appendix.

\myparagraph{Reduction to query containment}
From Theorem~\ref{thm:equivra} we immediately get an analogous reduction
of RA-answerability to query containment. We simply need a ``more symmetrical'' version
of the auxiliary axioms.

Given a schema $\aschema$ with constraints
and access methods without result bounds, the 
 \emph{access-determinacy containment} for~$Q$ and $\aschema$ is
the CQ containment $Q \subseteq_\Gamma Q'$ where
the constraints $\Gamma$ are defined as follows: they include the original
constraints $\Sigma$, the constraints $\Sigma'$ on the relations~$R'$, and the following
\emph{bi-directional accessibility axioms} (with implicit universal
quantification):
\begin{itemize}
  \item For each method $\mt$ that is not result-bounded, letting $R$ be the
    relation accessed by~$\mt$:
    \begin{align*}
    \Big(\bigwedge_i \accessible(x_i)\Big) \wedge
      \phantom{'}R(\vec x, \vec y) \rightarrow & R_\acc(\vec x, \vec y)\\
      \Big(\bigwedge_i \accessible(x_i)\Big) \wedge
      R'(\vec x, \vec y) \rightarrow & R_\acc(\vec x, \vec y)
    \end{align*}
    where $\vec x$ denotes the input positions of~$\mt$ in~$R$.
  \item For each method $\mt$ with a result lower bound of~$k$, letting $R$ be
    the relation accessed by~$\mt$, for all $j \leq k$:
    \begin{align*}
      \Big(\bigwedge_i \accessible(x_i)\Big) \wedge \exists^{\geq j} \vec y ~
      \phantom{'}R(\vec x, \vec y)
      \rightarrow & \exists^{\geq j} \vec z ~ R_\acc(\vec x, \vec z)\\
      \Big(\bigwedge_i \accessible(x_i)\Big) \wedge \exists^{\geq j} \vec y ~
  R'(\vec x, \vec y)
      \rightarrow & \exists^{\geq j} \vec z ~ R_\acc(\vec x, \vec z)
    \end{align*}
    where $\vec x$ denotes the input positions of~$\mt$ in~$R$.
\item For every relation $R$ of the original signature:
  \[R_\acc(\vec w) \rightarrow R(\vec w) \wedge R'(\vec w) \wedge \bigwedge_i
    \accessible(w_i)\]
\end{itemize}

The only difference from the~$\amd$ containment is that the additional
constraints are now symmetric in the two signatures, primed and unprimed.
The following proposition follows immediately from Theorem~\ref{thm:equivra} and the definition of access-determinacy:

\begin{proposition} \label{prop:reducera}
For any conjunctive query $Q$ and schema $\aschema$ with constraints expressible
  in active-domain first-order logic (and possibly including result bounds), the following are equivalent:
\begin{compactitem}
\item $Q$ has an RA-plan that answers it over~$\aschema$ 
\item 
$Q$ is access-determined over~$\aschema$ 
\item 
The containment corresponding to access-determinacy holds
\end{compactitem}
\end{proposition}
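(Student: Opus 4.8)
The plan is to prove the two equivalences separately. The equivalence of the first two bullets --- existence of an RA-plan answering $Q$ and access-determinacy of $Q$ --- is exactly the content of Theorem~\ref{thm:equivra}, so nothing further is needed there. The real work is to show that access-determinacy coincides with the access-determinacy containment $Q \subseteq_\Gamma Q'$. For this I would use the alternative, subinstance-based characterization supplied by Proposition~\ref{prop:altdefra}: that $Q$ is access-determined iff $Q(I_1) = Q(I_2)$ for every pair $I_1, I_2 \models \Sigma$ admitting a common, \emph{jointly} access-valid subinstance. This reformulation is what lets the quantification over accessible parts be replaced by a single auxiliary structure, which is precisely the shape of a model of $\Gamma$.

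The key is a dictionary between a model of $\Gamma$ and such a configuration of instances. Given any structure interpreting the three relation families and the unary predicate $\accessible$, I would read off $I_1$ as the reduct to the unprimed relations, $I_2$ as the reduct to the primed relations, and $I_\acc$ as the reduct to the accessed copies $R_\acc$. Under this dictionary I would check, clause by clause, that satisfaction of $\Gamma$ is equivalent to the semantic hypotheses of Proposition~\ref{prop:altdefra}: the constraints $\Sigma$ force $I_1 \models \Sigma$ and $\Sigma'$ forces $I_2 \models \Sigma$; the final accessibility axiom forces the accessed copy to be a common subinstance of $I_1$ and $I_2$ and forces $\accessible$ to contain its active domain; and the per-method axioms force joint access-validity. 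The crucial point, and the place where this departs from the $\amd$ reduction of Proposition~\ref{prop:reduce}, is that each per-method axiom now appears in a \emph{bi-directional} form, once transferring from the unprimed relation and once from the primed relation into $R_\acc$. This is exactly what certifies that one choice of outputs inside $I_\acc$ is simultaneously valid in $I_1$ and in $I_2$, matching the word ``jointly''; the one-directional axioms of the $\amd$ case only certified validity in $I_1$.

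With the dictionary in place the two directions are routine. For $(2)\Rightarrow(3)$ I would take any model of $\Gamma$ satisfying $Q$, extract $I_1, I_2, I_\acc$, observe the hypotheses of Proposition~\ref{prop:altdefra} hold, conclude $Q(I_1) = Q(I_2)$, and note that since $Q$ (on the unprimed signature) holds in the model, its primed copy $Q'$ holds as well, establishing $Q \subseteq_\Gamma Q'$. For $(3)\Rightarrow(2)$ I would argue contrapositively: given $I_1, I_2 \models \Sigma$ with a common jointly access-valid $I_\acc$ and, say, $Q$ holding in $I_1$ but not in $I_2$ (the two instances play symmetric roles since $Q$ is Boolean), I assemble a single structure interpreting the unprimed relations by $I_1$, the primed relations by $I_2$, the accessed copies by $I_\acc$, and $\accessible$ by $\adom(I_\acc)$; the clause-by-clause check shows this structure satisfies $\Gamma$ and $Q$ but violates $Q'$, refuting the containment. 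Chaining $(2)\Leftrightarrow(3)$ with Theorem~\ref{thm:equivra} for $(1)\Leftrightarrow(2)$ yields the three-way equivalence.

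The step I expect to require the most care is verifying that the bi-directional per-method axioms capture joint access-validity exactly, especially for result-bounded methods, where validity is expressed through the counting axioms $\exists^{\geq j}$ for $j \le k$. I would need to confirm that these counting implications on both the primed and unprimed sides, together with the inclusion $I_\acc \subseteq I_1 \cap I_2$ from the last axiom, are equivalent to the statement that the matching tuples of an accessible-input access that lie inside $I_\acc$ form an output valid in $I_1$ and valid in $I_2$. Here one must take $\accessible$ to be exactly $\adom(I_\acc)$, so that the axioms fire precisely on the accesses that joint access-validity constrains --- neither more nor fewer --- which is what makes the clause-by-clause correspondence an equivalence rather than a one-sided implication.
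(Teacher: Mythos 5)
Your proposal is correct and follows the same route the paper takes: Theorem~\ref{thm:equivra} gives the equivalence of RA-answerability with access-determinacy, and the equivalence with the containment is the clause-by-clause correspondence between models of $\Gamma$ and triples $(I_1, I_2, I_\acc)$ via the jointly-access-valid characterization of Proposition~\ref{prop:altdefra} (the paper states this as following ``immediately'' from the definitions, so you actually supply more detail than it does). Your closing caveat about the counting axioms is the right one to flag, and it is discharged exactly as you suspect: the axioms only encode the lower-bound half of validity, which suffices because upper bounds are eliminated by Proposition~\ref{prop:elimupperra}.
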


\myparagraph{Elimination of result upper bounds for RA-plans}
As with monotone answerability, it suffices to consider only result lower bounds.

\begin{proposition} \label{prop:elimupperra} Let $\aschema$ be a schema with
arbitrary constraints and access methods which may be result-bounded.
A query $Q$ is answerable in~$\aschema$ if and only if it is answerable
in~$\relaxs(\aschema)$.
\end{proposition}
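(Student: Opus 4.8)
The plan is to mirror the proof of Proposition~\ref{prop:elimupper} for monotone answerability, but working with \emph{access-determinacy} and its characterization in terms of \emph{jointly access-valid} subinstances instead of $\amd$. By Theorem~\ref{thm:equivra}, RA-answerability of $Q$ in a schema is equivalent to $Q$ being access-determined, and by Proposition~\ref{prop:altdefra} this can be rephrased by saying that whenever two instances $I_1, I_2$ satisfying the constraints have a common subinstance $I_\acc$ that is jointly access-valid, then $Q(I_1) = Q(I_2)$. Since $\aschema$ and $\relaxs(\aschema)$ share the same relations, constraints, and methods (only the upper bounds are dropped), it suffices to show the following access-local statement: for any instances $I_1, I_2$ satisfying the constraints and any common subinstance $I_\acc$, the subinstance $I_\acc$ is jointly access-valid in~$I_1, I_2$ for~$\aschema$ if and only if it is jointly access-valid for~$\relaxs(\aschema)$.

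First I would handle the forward direction, which is immediate: any output $J \subseteq I_\acc$ to an access $(\mt, \accbind)$ that is a valid output in both $I_1$ and $I_2$ for~$\aschema$ remains valid for~$\relaxs(\aschema)$, since dropping the upper bound only enlarges the set of valid outputs. Hence jointly access-valid for~$\aschema$ implies jointly access-valid for~$\relaxs(\aschema)$.

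For the backward direction, I would fix an access $(\mt, \accbind)$ with values in~$I_\acc$ and let $J \subseteq I_\acc$ be an output that is valid for~$\relaxs(\aschema)$ both in~$I_1$ and in~$I_2$. If $\mt$ carries no result bound, its valid outputs coincide in the two schemas, so $J$ already works. If $\mt$ has result bound~$k$, I would trim~$J$: since $J$ is a valid output in each of~$I_1, I_2$, we have $J \subseteq M_1$ and $J \subseteq M_2$, where $M_1, M_2$ denote the matching tuples in~$I_1$ and~$I_2$. If $|J| \leq k$, then $J$ itself already satisfies conditions (i) and (ii) for~$\aschema$ in both instances (condition (ii) is inherited from validity for~$\relaxs(\aschema)$, and (i) holds as $|J|\le k$). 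If $|J| > k$, I take $J'$ to be any $k$-element subset of~$J$; then $J' \subseteq M_1 \cap M_2$, condition (i) holds as $|J'| = k$, and condition (ii) holds vacuously in each instance because $|J'| = k \geq j$ for every $j \leq k$. Either way we obtain an output valid for~$\aschema$ in both~$I_1$ and~$I_2$; doing this uniformly over all accesses shows $I_\acc$ is jointly access-valid for~$\aschema$.

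The only real subtlety — and hence the step to watch — is that access-determinacy requires the chosen output to be valid \emph{simultaneously} in~$I_1$ and in~$I_2$, rather than merely in~$I_1$ as in the $\amd$ argument of Proposition~\ref{prop:elimupper}. This is exactly why I keep a single set~$J$ (resp.\ $J'$) throughout and verify joint validity; the verification is easy because once $|J'| = k$ the lower-bound condition is vacuous in either instance, and membership $J' \subseteq M_1 \cap M_2$ is inherited from~$J$. With the access-local equivalence established, the proposition follows from Theorem~\ref{thm:equivra} and Proposition~\ref{prop:altdefra}.
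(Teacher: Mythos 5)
Your proof is correct and follows essentially the same route as the paper: reduce via Theorem~\ref{thm:equivra} and Proposition~\ref{prop:altdefra} to showing that joint access-validity coincides for $\aschema$ and $\relaxs(\aschema)$, then trim a valid output down to at most $k$ tuples in the backward direction, keeping a single set so that validity holds simultaneously in $I_1$ and $I_2$. The only cosmetic difference is that the paper's case split is on the number of matching tuples while yours is on $|J|$; both yield the same argument.
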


\begin{proof}  
  The proof follows that of Proposition~\ref{prop:elimupper}. 
  We show the result for access-determinacy instead of answerability, thanks to
  Theorem~\ref{thm:equivra}, and we use Proposition~\ref{prop:altdefra}.
  Consider arbitrary instances $I_1$ and
  $I_2$ that satisfy the constraints, and let us show that any common subinstance
  $I_\acc$ of~$I_1$ and $I_2$  is jointly access-valid for~$\aschema$ iff
  it is jointly access-valid for~$\relaxs(\aschema)$: this implies
  the claimed result.

  In the forward direction, if $I_\acc$ is jointly access-valid for~$\aschema$,
  then clearly it is jointly access-valid for~$\relaxs(\aschema)$, as any output
  of an access on~$I_\acc$ which is valid in~$I_1$ and in~$I_2$ for~$\aschema$
  is also valid for~$\relaxs(\aschema)$.

  In the backward direction, assume $I_\acc$ is jointly access-valid
  for~$\relaxs(\aschema)$, and consider an access $(\mt, \abind)$ with values
  from~$I_\acc$. 
  If $\mt$ has no result lower bound, then there is only one possible output for the
  access, and it is valid also for~$\aschema$.
  Likewise, if $\mt$ has a result lower bound of~$k$ and there are $\leq k$ matching
  tuples for the access in~$I_1$ or in~$I_2$, then the definition of a result lower bound ensures
  that there is only one possible output which is valid for~$\relaxs(\aschema)$
  in~$I_1$ and~$I_2$, and it is again valid for~$\aschema$.
  Last, if there are $>k$ matching tuples for the access,
  we let $J$ be a set of tuples in~$I_\acc$ which is is a valid output
  to the access in~$I_1$ and~$I_2$ for~$\relax(\aschema)$, and take any subset $J'$ of~$J$ with
  $k$ tuples; it is clearly a valid output to the access for~$\aschema$
  in~$I_1$ and~$I_2$. This
  establishes the backward direction, concluding the proof.
\end{proof}

\subsection{Full Answerability and Monotone Answerability}
We show that there is no difference between full answerability and monotone answerability
when constraints consist of $\incd$s only.
This is a generalization of an observation that is known for views
(see, e.g. Proposition~2.15 
in~\cite{thebook}):

\begin{proposition} \label{prop:monotonecollapse}
Let $\aschema$ be a schema with access methods and constraints $\Sigma$ consisting
of inclusion dependencies, and $Q$ be a CQ that is access-determined.
Then $Q$ is $\amd$.
\end{proposition}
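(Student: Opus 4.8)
The plan is to prove the statement directly through the access-valid-subinstance characterization of $\amd$ (Proposition~\ref{prop:altdef}) and the analogous characterization of access-determinacy (Proposition~\ref{prop:altdefra}). Concretely, I must show that whenever $I_1,I_2\models\Sigma$ have a common subinstance $A$ that is access-valid in $I_1$, then $Q(I_1)\subseteq Q(I_2)$; since we only deal with Boolean CQs, this is just $I_1\models Q\Rightarrow I_2\models Q$. First I would reduce, via Propositions~\ref{prop:elimupper} and~\ref{prop:elimupperra}, to a schema $\relaxs(\aschema)$ carrying only result \emph{lower} bounds (both $\amd$ and access-determinacy are invariant under this move). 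This is the decisive simplification: with only lower bounds, the access selection $\aselect$ that returns \emph{all} matching tuples of every access is automatically valid, which will let me use a single clean output for both instances.

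Given such $I_1,I_2,A$, I would build a fresh primed side by chasing the access-valid part: let $C:=\chase(A,\Sigma)$, a possibly infinite restricted chase. Because $A\subseteq I_2\models\Sigma$, universality of the chase supplies a homomorphism $g\colon C\to I_2$ fixing $\adom(A)$, so $C\models Q$ would force $I_2\models Q$. I then take $\aselect$ to be the ``return-all'' selection on $C$, set $A^\star:=\accpart(\aselect,C)$, and set $I_1^\star:=I_1\cup C$. The key structural property of inclusion dependencies is that they are preserved under unions: any trigger in $I_1\cup C$ is a single atom, lies in $I_1$ or in $C$, and is witnessed there; hence $I_1^\star\models\Sigma$. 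Moreover $I_1^\star\supseteq I_1\models Q$, and $A^\star\subseteq C\subseteq I_1^\star$, so $A^\star$ is a common subinstance of $I_1^\star$ and $C$.

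The heart of the proof is checking that $A^\star$ is \emph{jointly} access-valid in $I_1^\star$ and $C$, using as common output the set $M$ of all matching tuples of the access in $C$. By construction $M\subseteq A^\star$ and $M$ is a valid output in $C$ (return-all is valid for lower bounds), so the work is to verify that $M$ is also valid in $I_1^\star$, where the matching tuples are $M_1\cup M$ with $M_1$ those of $I_1$. I would split on the binding $\accbind$: if $\accbind$ mentions a null created by the chase then $M_1=\emptyset$ and there is nothing to do; if $\accbind\subseteq\adom(A)$ then I invoke access-validity of $A$ in $I_1$, which gives $M_1=M_A$ for methods that return everything and $|M_A|\ge\min(k,|M_1|)$ for result-bounded ones, and since $A\subseteq C$ these tuples of $A$ are returned by $\aselect$ into $A^\star$, controlling $M_1$ relative to $M$. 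In every case $M$ is a valid output in $I_1^\star$ as well. Once joint access-validity is established, access-determinacy of $Q$ yields $Q(I_1^\star)=Q(C)$, and then $I_1\models Q\Rightarrow I_1^\star\models Q\Rightarrow C\models Q\Rightarrow I_2\models Q$ (the last step via $g$), which is exactly $\amd$.

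I expect the main obstacle to be precisely the joint-access-validity bookkeeping for result-bounded methods with bindings inside $\adom(A)$: one must manipulate the lower-bound definition of a valid output simultaneously across $I_1$, $A$, $C$, and $I_1^\star$, and exploit that access-validity of $A$ in $I_1$ already forces $A$ to contain enough matching tuples (with $M_1=M_A$ when $|M_1|<k$). This is a small instance of the ``blowing up models'' accounting of Lemma~\ref{lem:enlarge} and Theorem~\ref{thm:simplifyidsexistence}, and, as there, the reason everything goes through is that inclusion dependencies are single-atom and hence closed under union, so that $I_1\cup C$ remains a model of $\Sigma$ while its chase-created nulls never become matching tuples for accesses anchored in $\adom(A)$.
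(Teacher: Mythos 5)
Your proof is correct, but it takes a genuinely different route from the paper's. Both arguments first pass to $\relaxs(\aschema)$ and both exploit the same structural fact---that a union of models of $\incd$s is again a model---to blow $I_1$ up into a superinstance to which access-determinacy can be applied; the difference is in what you union with and which pair you feed to access-determinacy. The paper unions $I_1$ with a copy $I_2'$ of $I_2$ in which every element of $I_2$ lying in $I_1$ but outside the accessible part has been renamed fresh, and applies access-determinacy directly to $(I_1\cup I_2',\, I_2')$; the freshness needed to control matching tuples is engineered by the renaming, and the common accessible part is exhibited via a three-case access selection. You instead union $I_1$ with the chase $C$ of the access-valid part $A$, apply access-determinacy to $(I_1\cup C,\, C)$, and then transfer $Q$ from $C$ to $I_2$ by universality of the chase. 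Your version gets freshness for free (chase nulls), and the return-all selection on $C$ makes the joint-access-validity check clean: the split on whether the binding meets a null, together with the observation that access-validity of $A$ in $I_1$ forces $M_A=M_1$ whenever $\card{M_1}<k$ (and $\card{M_A}\geq k$ otherwise), is exactly the right bookkeeping. The price is an extra, possibly infinite object $C$ and reliance on chase universality, neither of which the paper needs. One small imprecision in your closing remark: chase-created facts \emph{can} be matching tuples for accesses anchored in $\adom(A)$ (a null may occupy an output position of a fact whose input positions carry values of $\adom(A)$); this is harmless because such tuples land in $M\subseteq A^\star$, and your actual case analysis already handles it---only the $I_1$-side matching tuples need to be controlled, and those indeed avoid the nulls.
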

\begin{proof}
  We know by Propositions~\ref{prop:elimupper} and~\ref{prop:elimupperra} that
  we can work with $\relaxs(\aschema)$ which has only 
  result lower bounds, so we do so throughout this proof.

Towards proving $\amd$, assume by way of contradiction that
  we have:
\begin{compactitem}
\item  instances $I_1$ and $I_2$ satisfying
$\Sigma$;
\item    an accessible part $A_1$ of~$I_1$ with valid access selection $\aselect_1$,
and an accessible part $A_2$ of~$I_2$ with valid access selection
  $\aselect_2$;
\item $A_1 \subseteq A_2$;
\item $Q$ holding in~$I_1$
but not in~$I_2$
\end{compactitem}
 We first modify $I_2$ and $A_2$ to~$I_2'$ and $A_2'$ by replacing each element that is in~$I_1$ but not in~$A_1$ by
a copy that is not in~$I_1$; we modify the access selection from $\aselect_2$ to
  $\aselect_2'$ accordingly. Since $I_2'$
is isomorphic to~$I_2$,
it is clearly still true that $\aselect_2$ is valid, that  $A_2'$ is an accessible part of~$I_2'$ with access
selection $\aselect_2'$, that that~$I_2$ satisfies $\Sigma$
and that~$Q$ fails in~$I_2$. Further we still have $A_1 \subseteq A_2'$ by
  construction. What we have ensured at this step is that values of $I_2'$ that
  are in~$I_1$ must be in~$A_1$.

Consider now $I'_1\colonequals I_1 \cup I_2'$. It is clear that~$Q$ holds in~$I'_1$, and $I_1'$ also
  satisfies~$\Sigma$ because $\incd$s are preserved under taking unions.  We will show
that $I'_1$ have a common accessible part $A'_2$, which will contradict
the assumption that $Q$ is access-determined.

  Towards this goal, define an access selection $\aselect_1'$ on~$I_1'$ as follows:
  \begin{itemize}
    \item For any access $(\mt, \accbind)$ made with a binding where all values are
  in~$A_1$, we let $\aselect_1'(\mt, \accbind) \colonequals \aselect_1(\mt,
  \accbind) \cup \aselect_2(\mt, \accbind)$: note that all returned tuples are in $A_2'$ because
  the first member of the union is in~$A_2'$ and the other is in $A_1$ which is
  a subset of~$A_2'$.
    \item For any access $(\mt, \accbind)$ made with a binding where all values are
  in~$A_2'$ and some value is not in~$A_1$, we let $\aselect_1'(\mt, \accbind)
      \colonequals \aselect_2(\mt, \accbind)$: again, all the tuples returned here
      are in $A_2'$.
\item For any access $(\mt, \accbind)$ made with a binding where
  some value is not in~$A_2'$, we choose an arbitrary set of tuples of~$I_1'$ to
      form a valid output.
  \end{itemize}
  We claim that
  $\aselect_1'$ is a valid access selection and that performing the fixpoint
  process with this access selection yields $A_2'$ as an accessible part
  of~$I_1'$. To show this, first notice that performing the fixpoint
  process with $\aselect_2'$ indeed returns~$A_2'$: all facts of~$A_2'$ are
  returned because this was already the
  case in~$I_2'$, and no other facts are returned because it is clear by induction that the fixpoint process will
  only consider bindings in~$A_2'$, so that the choices made in the third point of
  the list above have no impact on the accessible part that we obtain.
  So it suffices to show that $\aselect_1'$ is valid,
  i.e., that for any access $(\mt, \abind)$ with a binding $\abind$
  in~$A_2'$, the access selection $\aselect_1'$ returns a set of tuples which
  is a valid output to the access. For the first point in the list, we know that
  the selected tuples are the union of a valid result to the access in~$I_1$
  and of a valid result to the access in~$I_2'$, so it is clear that it
  consists only of matching tuples in~$I_1'$. We then argue that it is valid by
  distinguishing two cases. If~$\mt$ is not result-bounded, then the output is
  clearly valid, because it contains all matching tuples of~$I_1$ and all
  matching tuples of~$I_2'$, hence all matching tuples of~$I_1'$. Now suppose $\mt$ has a
  result lower bound of~$k$. Suppose that for  $j \leq k$
there are $\geq j$ matching tuples in~$I_1'$. We will show that the output of the access contains 
  $\geq j$ tuples. There are two sub-cases. The first sub-case is when there are $\geq j$ matching tuples in~$I_1$.
In this sub-case we can 
  conclude because $\aselect_1(\mt, \accbind)$ must return $\geq j$
  tuples. The second sub-case is when there are $<j$ matching tuples in~$I_1$.
  In this sub-case, 
  $\aselect_1(\mt, \accbind)$ must return all of them, so these matching tuples
  are all in~$A_1$.  Hence they are all in $A_2'$ because $A_1 \subseteq A_2'$.
  Thus the returned matching tuples are in $I_2'$. Thus, in the second sub-case, all
  matching tuples in~$I_1'$ for the access are actually in~$I_2'$, so we
  conclude because $\aselect_2(\mt, \accbind)$ must return $\geq j$ tuples. This concludes
  the argument that the outputs of accesses defined in the first point are valid.
 
  For accesses corresponding to the second point in the list, by the construction used to create $I_2'$
  from~$I_2$, we know that the value in~$\accbind$ which is not in~$A_1$ cannot
  be in~$I_1$ either. Thus  all matching tuples of the access are in~$I_2'$.
  So we conclude because $\aselect_2'$ is a valid access selection of~$I_2'$.
  For accesses corresponding to the third point, the output is always valid by definition.
  Hence, we have
  established that $\aselect_1'$ is valid, and that it
  yields $A_2'$ as an accessible part of~$I_1'$.

  We have thus shown that~$I'_1$ and $I_2'$ both have $A_2'$ as an accessible
  part.
  Since $Q$ holds in~$I'_1$, by access-determinacy
$Q$ holds in~$I_2$, and this contradicts our initial assumption, concluding the
  proof.
\end{proof}

From Proposition~\ref{prop:monotonecollapse} we immediately
see that \emph{in the case where the constraints
consist of $\incd$s only, 
all the results about monotone answerability with result bounds 
transfer to answerability}. This includes simplification results and
complexity bounds.

\subsection{Enlargement for RA-answerability}
We now explain how the method of ``blowing up counterexamples'' introduced
in the body extends to work with access-determinacy.  We consider a
counterexample to access-determinacy in the simplification (i.e., a pair of
instances that satisfy the constraints and have a common subinstance that is jointly access-valid but one
satisfy the query and one does not), and we show that it can be enlarged to a
counterexample to access-determinacy in the original schema. 

\begin{definition}
  \label{def:preenlargera}
  A \emph{counterexample to access-determinacy} for a CQ~$Q$ and a schema $\aschema$ is a pair of
  instances $I_1, I_2$ both satisfying the schema constraints,
  such that~$I_1$ satisfies $Q$ while $I_2$
  satisfies $\neg Q$, and $I_1$ and $I_2$ have a common  subinstance
  $I_\acc$ that is jointly access-valid.
\end{definition}

It is clear that, whenever there is a counterexample to access-determinacy for schema
$\aschema$ and query~$Q$, then $Q$ is \emph{not} access-determined \wrt\
$\aschema$.

We now state the 
enlargement lemma that we use, which is the direct analogue of
Lemma~\ref{lem:enlarge}:

\begin{lemma}
  \label{lem:enlargera}
  Let $\aschema$ and $\aschema'$ be schemas and $Q$ a CQ on the common relations
  of $\aschema$ and~$\aschema'$ such that $Q$ that is not
  access-determined in~$\aschema'$. Suppose that for some counterexample $I_1,
  I_2$ to access-determinacy for~$Q$ in~$\aschema'$ we can construct instances
  $I_1^+$ and $I_2^+$ that satisfy the constraints of
  $\aschema$, that have a common subinstance $I_\acc$ that is jointly access-valid for~$\aschema$,
  and such that for each $p \in \{1, 2\}$, , the instance $I_p^+$ has a homomorphism to~$I_p$, and
  the restriction of $I_p$ to the relations of~$\aschema$ is a subinstance
  of~$I_p^+$.
\end{lemma}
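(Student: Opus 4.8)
The plan is to establish the (implicit) conclusion that $Q$ is not access-determined in $\aschema$ by exhibiting $I_1^+, I_2^+$ as a counterexample to access-determinacy for $\aschema$ in the sense of Definition~\ref{def:preenlargera}; by the observation immediately following that definition (itself justified by Proposition~\ref{prop:altdefra}), producing such a counterexample is exactly what is needed. The whole argument is the verbatim analogue of the immediate proof of Lemma~\ref{lem:enlarge}, with one-sided access-validity replaced by joint access-validity, and with the inclusion $Q(I_1)\subseteq Q(I_2)$ replaced by the disagreement $Q(I_1^+)\neq Q(I_2^+)$ that is needed to break determinacy rather than $\amd$.

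First I would dispatch the two conditions handed to us directly. By hypothesis both $I_1^+$ and $I_2^+$ satisfy the constraints of $\aschema$, and $I_\acc$ is a common subinstance of $I_1^+$ and $I_2^+$ that is jointly access-valid for $\aschema$, \ie every access performed in $I_\acc$ admits an output contained in $I_\acc$ that is simultaneously a valid output in $I_1^+$ and in $I_2^+$. This is precisely the symmetric strengthening of one-sided access-validity demanded by access-determinacy, and it is exactly what Definition~\ref{def:preenlargera} requires of the common subinstance. It therefore only remains to verify that $Q$ behaves differently on the two enlarged instances, namely $Q(I_1^+)=\mathrm{true}$ and $Q(I_2^+)=\mathrm{false}$.

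For the positive direction I would use that the Boolean CQ $Q$ is posed on the relations common to $\aschema$ and $\aschema'$, hence mentions only relations of $\aschema$: since $I_1$ satisfies $Q$ by the definition of a counterexample for $\aschema'$, the restriction of $I_1$ to the relations of $\aschema$ still satisfies $Q$, and as this restriction is a subinstance of $I_1^+$ by hypothesis, monotonicity of CQs yields $Q(I_1^+)=\mathrm{true}$. For the negative direction I would invoke the homomorphism $h\colon I_2^+\to I_2$: if $Q$ held in $I_2^+$, composing a satisfying homomorphism for $Q$ with $h$ would witness $Q$ in $I_2$, contradicting that $I_2$ satisfies $\neg Q$; hence $Q(I_2^+)=\mathrm{false}$. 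Combining everything, $I_1^+, I_2^+$ satisfy the constraints of $\aschema$, share the jointly access-valid subinstance $I_\acc$, and disagree on $Q$, so they form the desired counterexample and $Q$ is not access-determined in $\aschema$. I do not expect any genuine obstacle: the proof is pure bookkeeping, and the only point deserving care is keeping the two transfer directions straight --- the subinstance hypothesis carries truth of $Q$ upward from $I_1$ to $I_1^+$, while the homomorphism carries falsity of $Q$ downward from $I_2$ to $I_2^+$ --- together with the fact that joint access-validity, rather than validity in $I_1^+$ alone, is exactly what matches access-determinacy.
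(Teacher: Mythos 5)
Your proof is correct and follows essentially the same route as the paper's: exhibit $I_1^+, I_2^+$ as a counterexample to access-determinacy for~$\aschema$, using the subinstance hypothesis to transfer truth of~$Q$ from~$I_1$ up to~$I_1^+$ and the homomorphism to transfer falsity of~$Q$ from~$I_2$ down to~$I_2^+$, with the constraint satisfaction and joint access-validity handed over directly by hypothesis. You also correctly identified the implicit conclusion of the lemma (that $Q$ is not access-determined in~$\aschema$), which the statement as written omits.
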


\begin{proof}
  We prove the contrapositive of the claim. Let $Q$ be a query which is not
  access-determined in~$\aschema'$, and let $\{I_1, I_2\}$ be a counterexample.
  Using the hypothesis, we construct $I_1^+$
  and $I_2^+$. It suffices to observe that they are a counterexample to
  access-determinacy for~$Q$ and $\aschema$, which we show. First, they satisfy the
  constraints of~$\aschema$ and have a common subinstance which is jointly
  access-valid. Second, as~$I_1$ satisfies $Q$, as all relations used in~$Q$ are
  on~$\aschema$, and as the restriction of $I_1$ is a subset of~$I_1^+$, we know
  that~$I_1^+$ satisfies $Q$. Last, as~$I_2$ does not satisfy $Q$ and $I_2^+$ has
  a homomorphism to~$I_2$, we know that~$I_2^+$ does not satisfy~$Q$. Hence,
  $I_1^+, I_2^+$ is a counterexample to access-determinacy of~$Q$ in~$\aschema$,
  which concludes the proof.
\end{proof}

\subsection{Choice Simplifiability for RA-answerability} 
We say that a schema $\aschema$ is \emph{RA choice simplifiable} 
if any CQ that has an RA-plan over~$\aschema$ has one over its choice simplification.
The following result is the counterpart to Theorem~\ref{thm:simplifychoice}:

\begin{theorem} \label{thm:simplifychoicera}
 Let $\aschema$ be a schema with constraints in
  equality-free first-order logic (e.g., TGDs), and let $Q$ be a CQ that is access-determined \wrt\ $\aschema$.
Then $Q$ is also access-determined in the choice simplification of~$\aschema$.
\end{theorem}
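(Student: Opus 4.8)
The plan is to transcribe the proof of Theorem~\ref{thm:simplifychoice} into the access-determinacy setting, substituting the RA analogues of the tools used there. Since Theorem~\ref{thm:equivra} identifies RA-answerability with access-determinacy, proving the statement as phrased (directly about access-determinacy) is exactly what is needed for RA choice simplifiability. First I would argue the contrapositive and appeal to the enlargement lemma for access-determinacy, Lemma~\ref{lem:enlargera}: it suffices to take an arbitrary counterexample $I_1, I_2$ to access-determinacy for $Q$ in the choice simplification $\aschema'$---so $I_1 \models Q$, $I_2 \models \neg Q$, both satisfy the equality-free constraints, and they share a common subinstance $I_\acc$ that is \emph{jointly} access-valid for $\aschema'$ (the characterization given by Proposition~\ref{prop:altdefra})---and to blow it up into instances $I_1^+, I_2^+$ fulfilling the hypotheses of that lemma for the original schema $\aschema$.

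The enlargement itself is verbatim the one used for Theorem~\ref{thm:simplifychoice}: introduce infinitely many fresh copies $a_j$ of every domain element $a$, with $a_0 \colonequals a$, and set $I_p^+ \colonequals \mathrm{Blowup}(I_p)$ for $p \in \{1,2\}$ and $I_\acc^+ \colonequals \mathrm{Blowup}(I_\acc)$. All conditions of Lemma~\ref{lem:enlargera} that do not mention accesses then transfer without any change. The equality-free Ehrenfeucht-Fra\"iss\'e game shows that each $I_p^+$ agrees with $I_p$ on all equality-free first-order sentences, so $I_1^+$ and $I_2^+$ still satisfy the constraints, $I_1^+ \models Q$ (using $I_p \subseteq I_p^+$ via $a \mapsto a_0$), and $I_2^+ \models \neg Q$; and the collapsing map $a_j \mapsto a$ furnishes the required homomorphism $I_p^+ \to I_p$. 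I would simply cite these parts from the proof of Theorem~\ref{thm:simplifychoice}.

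The only genuinely new point---and the step I expect to be the main obstacle---is to verify that $I_\acc^+$ is \emph{jointly} access-valid for $\aschema$, i.e. that every access on $I_\acc^+$ admits an output lying in $I_\acc^+$ that is valid in \emph{both} $I_1^+$ and $I_2^+$ under the large original result bounds; in the monotone proof one-sided validity in $I_1^+$ sufficed. I would fix, using joint access-validity of $I_\acc$ in $\aschema'$, an access selection $\aselect$ on $I_\acc$ whose outputs are valid in both $I_1$ and $I_2$. Given an access $(\mt, \accbind)$ on $I_\acc^+$ with corresponding binding $\vec t$ on $I_\acc$, I examine $\aselect(\mt, \vec t)$. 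If it is empty, then (whether $\mt$ is a choice method of bound~$1$ or unbounded) neither $I_1$ nor $I_2$ has a matching tuple for $\vec t$, and blow-up preserves the absence of matching tuples, so the empty output is valid in $I_1^+$ and $I_2^+$. If it is nonempty, then $I_\acc$ already contains a matching tuple $\vec u$ for $\vec t$, and blowing up produces enough copies of $\vec u$ matching $\accbind$ inside $I_\acc^+ \subseteq I_1^+ \cap I_2^+$ to meet the original result bound (the Boolean case being trivial since bounds are irrelevant there); selecting them yields an output valid in both instances. With joint access-validity established, Lemma~\ref{lem:enlargera} produces a counterexample to access-determinacy for $\aschema$, which completes the contrapositive.
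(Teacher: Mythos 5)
Your proposal is correct and follows essentially the same route as the paper's proof: the identical blow-up construction, the same appeal to the equality-free Ehrenfeucht--Fra\"iss\'e argument and collapsing homomorphism inherited from Theorem~\ref{thm:simplifychoice}, and the same case analysis (no matching tuples anywhere versus at least one matching tuple already in $I_\acc$, whose infinitely many blow-up copies then satisfy the original bounds) to establish joint access-validity of $I_\acc^+$. The only cosmetic difference is that you route the case split through an explicit access selection witnessing joint access-validity, where the paper reasons directly about matching tuples.
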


The proof follows that of Theorem~\ref{thm:simplifychoice} with no surprises,
using Lemma~\ref{lem:enlargera}.

\begin{proof}
 We fix a counterexample $I_1, I_2$ 
 to access-determinacy in the choice simplification:
  we know that $I_1$ satisfies the query, $I_2$ violates the query,
  $I_1$ and $I_2$ satisfy the equality-free first order constraints
  of~$\aschema$,
  and $I_1$ and $I_2$ have a common subinstance $I_\acc$ which is jointly
  access-valid for
the choice simplification of~$\aschema$.
  We  expand $I_1$ and~$I_2$ to~$I_1^+$ and $I_2^+$ that have a common subinstance that is jointly access-valid
  for~$\aschema$, to conclude using Lemma~\ref{lem:enlargera}. Our construction is identical to the blow-up used in Theorem~\ref{thm:simplifychoice}:
  for each element $a$ in the domain of~$I_1$, introduce infinitely many
  fresh elements $a_j$ for~$j \in \NN_{>0}$, and identify $a_0 \colonequals a$.
  Now, define $I_1^+ \colonequals \mathrm{Blowup}(I_1)$, where 
  $\mathrm{Blowup}(I_1)$ is the instance with facts $\{R(a^1_{i_1} \ldots a^n_{i_n}) \mid R(\vec
  a) \in I_1, \vec i \in \NN^n\}$. Define $I_2^+$ from~$I_2$ in the same way.

The proof of Theorem~\ref{thm:simplifychoice} already showed that~$I_1$ and $I_1^+$ agree on all equality-free first-order
  constraints, that
$I_1$ still satisfies the query, and that $I_2$ still
violates the query.
All that remains now is to construct a common subinstance 
that is jointly access-valid for~$\aschema$. We will do this
as in the proof of Theorem~\ref{thm:simplifychoice}, 
setting  $I_\acc^+ \colonequals \mathrm{Blowup}(I_\acc)$. 
To show that $I_\acc^+$ is jointly access-valid, consider any access $(\mt,
\accbind)$ with values from~$I_\acc^+$. If there are no matching tuples in~$I_1$ and in~$I_2$,
then there are no matching tuples in~$I_1^+$ and~$I_2^+$ either. Otherwise,
there must be some matching tuple in $I_\acc$ because it is
jointly access-valid in~$I_1$ and~$I_2$ for the choice approximation
of~$\aschema$. Hence, sufficiently many copies exist in~$I_\acc^+$ to satisfy
the original result bounds, so that we can find a valid response to the access
in~$I_\acc^+$. Hence, $I_\acc^+$ is indeed jointly access-valid, which completes
the proof.
\end{proof}

As with choice simplification for~$\amd$, this result can be applied immediately
to TGDs. In particular,
if we consider frontier-guarded TGDs, the above result says that we can assume any result bounds
are $1$, and thus the  query containment problem produced
by Proposition~\ref{prop:reducera} will  involve only frontier-guarded TGDs.
We thus get the following analog of Theorem~\ref{thm:decidegf}:

\begin{theorem} \label{thm:decidegfra}
  We can decide whether a CQ is answerable with respect to a schema with
  result bounds whose constraints are frontier-guarded TGDs.
  The problem is $\twoexp$-complete.
\end{theorem}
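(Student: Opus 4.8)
The plan is to mirror the proof of Theorem~\ref{thm:decidegf} line for line, replacing each ingredient about monotone answerability by its RA-answerability counterpart developed in this appendix. For hardness, I would observe that the reduction from query containment under frontier-guarded TGDs (e.g., Prop.~3.16 in~\cite{thebook}) already produces a schema \emph{without} result bounds in which the constructed query is answerable by a monotone (hence RA) plan exactly when the containment holds; thus the same reduction witnesses $\twoexp$-hardness for RA-answerability. So only $\twoexp$-membership remains to be shown.

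For membership, I would first invoke Theorem~\ref{thm:simplifychoicera} --- choice simplifiability for access-determinacy under equality-free first-order constraints, which covers frontier-guarded TGDs --- to reduce to the choice simplification, where every result bound equals~$1$. Next I would apply Proposition~\ref{prop:elimupperra} to pass to the relaxed schema carrying only result lower bounds, noting that a result lower bound of~$1$ is expressible as an $\incd$. Then Proposition~\ref{prop:reducera} rephrases RA-answerability as a query containment problem $Q \subseteq_\Gamma Q'$, where $\Gamma$ consists of the original constraints $\Sigma$, their primed copy $\Sigma'$, and the bi-directional accessibility axioms.

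The one step requiring a short check --- and the closest thing to an obstacle --- is verifying that $\Gamma$ lies entirely within the frontier-guarded fragment, so that the $\twoexp$ containment bound (e.g.,~\cite{bgo}) applies. The constraints $\Sigma$ and $\Sigma'$ are frontier-guarded by hypothesis. Each bi-directional accessibility axiom is frontier-guarded as well: in the axioms deriving $R_\acc$ from $R$ and the symmetric ones deriving $R_\acc$ from $R'$, the relational atom $R$ (resp.\ $R'$) contains all exported variables and serves as the frontier-guard, the $\accessible$ atoms being side atoms; for result-bounded methods with bound~$1$ the counting axioms collapse to this same frontier-guarded shape; and the downward axiom $R_\acc(\vec w) \rightarrow R(\vec w) \wedge R'(\vec w) \wedge \bigwedge_i \accessible(w_i)$ is even guarded. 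The only new feature relative to the monotone setting is the presence of the extra symmetric copies guarded by $R'$, and these are frontier-guarded for exactly the same reason. Hence $\Gamma$ is a set of frontier-guarded TGDs, the containment $Q \subseteq_\Gamma Q'$ is decidable in $\twoexp$, and combined with the matching lower bound this establishes that RA-answerability for frontier-guarded TGDs with result bounds is $\twoexp$-complete.
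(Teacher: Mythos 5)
Your proposal is correct and follows essentially the same route as the paper: choice simplifiability for access-determinacy (Theorem~\ref{thm:simplifychoicera}) to set all result bounds to~$1$, elimination of upper bounds, the reduction to containment via Proposition~\ref{prop:reducera}, and the observation that the resulting bi-directional accessibility axioms remain frontier-guarded, with hardness inherited from containment under frontier-guarded TGDs without result bounds. Your explicit check that the symmetric $R'$-guarded copies of the accessibility axioms stay in the frontier-guarded fragment is a detail the paper leaves implicit, but it is the same argument.
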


\subsection{FD Simplifiability for RA-plans} 

Recall the definition of FD simplification from Section~\ref{sec:simplify}.
A schema is \emph{FD simplifiable for RA-plans} if every CQ having a plan over the schema
has an RA-plan in its FD simplification. 

We now show that schemas whose constraints consist  only of FDs are FD
simplifiable, which is the analogue of Theorem~\ref{thm:fdsimplify}:

\begin{theorem} \label{thm:fdsimplifyra} 
  Let $\aschema$ be a schema whose constraints are FDs, and let
$Q$ be a CQ that is answerable in~$\aschema$. Then $Q$ is answerable in the FD simplification of~$\aschema$.
\end{theorem}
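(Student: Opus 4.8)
The plan is to mirror the proof of Theorem~\ref{thm:fdsimplify} (the monotone-answerability version), replacing monotone answerability by RA-answerability throughout. By Theorem~\ref{thm:equivra} it suffices to prove the corresponding statement for access-determinacy, and I would argue the contrapositive: if $Q$ is not access-determined in the FD simplification $\aschema'$, then $Q$ is not access-determined in $\aschema$. Using the jointly-access-valid characterization of access-determinacy (Proposition~\ref{prop:altdefra}) and the enlargement lemma for access-determinacy (Lemma~\ref{lem:enlargera}), I start from a counterexample to access-determinacy for $Q$ and $\aschema'$: instances $I_1, I_2$ satisfying the FD constraints, with $Q$ holding in $I_1$ and failing in $I_2$, together with a common subinstance $I_{\mathrm{acc}}$ that is \emph{jointly} access-valid, i.e.\ every access has an output contained in $I_{\mathrm{acc}}$ that is valid in \emph{both} $I_1$ and $I_2$. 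The goal is to build $I_1^+, I_2^+$ with a common subinstance $I_{\mathrm{acc}}^+$ that is jointly access-valid for the original schema $\aschema$, together with homomorphisms $I_p^+ \to I_p$ and the inclusion of the $\aschema$-restriction of $I_p$ into $I_p^+$, for $p \in \{1,2\}$; Lemma~\ref{lem:enlargera} then finishes the contrapositive.

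First I would run the same one-access-at-a-time blow-up as in Theorem~\ref{thm:fdsimplify}. Initialize $I_1^+ := I_1$, $I_2^+ := I_2$, $I_{\mathrm{acc}}^+ := I_{\mathrm{acc}}$, and process each access $(\mt,\accbind)$ on a result-bounded method $\mt$ on a relation $R$ whose binding values lie in $I_{\mathrm{acc}}$. Let $M_1, M_2$ be the matching $R$-tuples in $I_1, I_2$, let $P$ be the input positions of $\mt$, and let $k$ be its bound. The biconditional constraint relating $R_\mt$ and $R$ in $\aschema'$, the FDs $P \determines i$ for each $i \in \detby(\mt)$, and joint access-validity together force $M_1$ and $M_2$ to be simultaneously empty or both nonempty and, when nonempty, to agree on the positions of $\detby(\mt)$. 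Writing $X$ for the positions outside $\detby(\mt)$: when $M_1, M_2$ are nonempty and $X \neq \emptyset$, I add $k$ new tuples agreeing with the common $\detby(\mt)$-value and carrying globally fresh values on $X$, inserting them at once into $I_1^+, I_2^+$ and $I_{\mathrm{acc}}^+$; when $X = \emptyset$ the whole tuple is determined by $\accbind$, so $M_1 = M_2$ is a single tuple $t$ which, by the biconditional, lies in both $I_1$ and $I_2$, and I simply add $t$ to the common subinstance. After processing all accesses I restrict the three instances to the relations of $\aschema$, dropping the $R_\mt$-facts.

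The verifications of the FDs, of the homomorphisms $I_p^+ \to I_p$, and of the inclusions $I_p \subseteq I_p^+$ are essentially identical to the monotone proof: freshness of the $X$-values blocks any FD violation (a violated FD would have its left-hand side avoiding $X$, hence lying in $\detby(\mt)$, where the new tuples coincide with existing ones), and the homomorphism collapses each added tuple onto a witnessing tuple of $M_p$. The one genuinely new point — which I expect to be the main obstacle — is establishing \emph{joint} access-validity of $I_{\mathrm{acc}}^+$ in $\aschema$: for each access I now need a single output in $I_{\mathrm{acc}}^+$ that is a valid output of $\mt$ in $I_1^+$ \emph{and} in $I_2^+$, whereas the monotone argument only had to validate it in $I_1^+$. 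The blow-up is tailored for exactly this: after adding the $k$ fresh tuples, both $I_1^+$ and $I_2^+$ have at least $k$ matching tuples for the access, so the $k$ added tuples form a size-$k$ output valid in both; and for fully-determined accesses the unique tuple $t$ now sits in $I_{\mathrm{acc}}^+ \subseteq I_1^+ \cap I_2^+$ and is the common valid output.

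The delicate bookkeeping is to reprove, for the \emph{two-sided} requirement, the analogues of the two claims used in the proof of Theorem~\ref{thm:fdsimplify} that control which accesses get blown up and which accesses an already-blown-up access can perturb (via the ``extends'' relation on $\detby(\mt)$). Concretely, I must check that blowing up an access does not spoil the (now bilateral) valid output of any access that was already satisfiable in $I_{\mathrm{acc}}$, and that every access newly created by the added tuples has all its matching tuples inside the freshly added block, hence a valid output in $I_{\mathrm{acc}}^+$ on both sides. Once these invariants are re-established jointly rather than only on the $I_1$ side, $I_{\mathrm{acc}}^+$ is jointly access-valid for $\aschema$ and Lemma~\ref{lem:enlargera} yields that $Q$ is not access-determined in $\aschema$, completing the proof.
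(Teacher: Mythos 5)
Your proposal follows the paper's own proof essentially step for step: reduce to access-determinacy, take the contrapositive via the jointly-access-valid characterization and Lemma~\ref{lem:enlargera}, run the same per-access blow-up as in Theorem~\ref{thm:fdsimplify} (adding $k$ tuples agreeing on $\detby(\mt)$ and fresh on the undetermined positions, simultaneously to $I_1^+$, $I_2^+$, and $I_\acc^+$), and observe that the only new obligation is \emph{joint} access-validity, which the symmetric blow-up and the two-sided analogues of Claims~\ref{clm:extends1} and~\ref{clm:extends2} handle. This matches the paper's argument, so no further comment is needed.
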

\begin{proof}
  We use Lemma~\ref{lem:enlargera} and assume that we have a counterexample
  $I_1, I_2$ to determinacy for the FD simplification of $\aschema$, with $Q$
  holding in~$I_1$, with $Q$ not holding in~$I_2$, and with $I_1$ and $I_2$
  having a common subinstance $I_\acc$ which is jointly access-valid in~$I_1$
  and~$I_2$ for the FD simplification of~$\aschema$. We will upgrade these
  to~$I_1^+$, $I_2^+$, $I_\acc^+$ having the same property for~$\aschema$, by
  blowing up accesses one after the other.
  
  We do so similarly to the proof of
  Theorem~\ref{thm:fdsimplify} in Appendix~\ref{app:simplifyfds}.
  Consider each access $(\mt, \abind)$ using a method $\mt$ on relation $R$ with
  binding $\abind$ having values in $I_\acc$.
  Let $M_1$ be the matching tuples for $(\mt, \abind)$ in~$I_1$,
  and $M_2$ the matching tuples in~$I_2$: the definition of~$I_\acc$ and the
  constraints added in the FD simplification still ensure that
  $M_1$ and $M_2$ must either intersect or be both empty. 
 If $M_1$ and $M_2$ are
  both empty or they are both singletons (which are then identical), then we do nothing for the access $(\mt,
  \abind)$: intuitively, we can already define a valid output to this access
  in~$I_1$ and~$I_2$ for~$\aschema$.
  Otherwise, we know that $M_1$ and $M_2$ are both non-empty and that one of them is not a singleton.
  Let $k$ be the result bound
  of~$\mt$.
Recall that $\detby(\mt)$ denotes the positions determined under the FDs by
  the input positions of~$\mt$: the tuples of~$M_1$ and
  of~$M_2$ must agree on $\detby(\mt)$. Let~$X$
  be the other positions of~$R$ that are \emph{not} in~$\detby(\mt)$:
  again the set $X$ must be non-empty, since otherwise $M_1$ and $M_2$ would both be
  singletons, contradicting our assumption.

  We then blow the access up exactly like in the proof of
  Theorem~\ref{thm:fdsimplify} in Appendix~\ref{app:simplifyfds}, and define
  $I_1^+$, $I_2^+$, and $I_\acc^+$ as the result of performing this process for
  all accesses in~$I_\acc$.

  As in the proof of Theorem~\ref{thm:fdsimplify}, it is still the case that
  $I_1 \subseteq I_1^+$, that $I_2 \subseteq I_2^+$, that $I_\acc \subseteq
  I_\acc^+$, that $I_\acc$ is a common subinstance of~$I_1^+$ and $I_2^+$,  and
  that for every $p \in \{1, 2\}$ the instance
  $I_p^+$ has a homomorphism back to~$I_p$. Further, it is still the case that
  $I_1^+$ and $I_2^+$ satisfy the FDs.

  The only point to verify is that $I_\acc^+$ is jointly access-valid
  in~$I_1^+$ and $I_2^+$. Consider a method $\mt$ and binding $\abind$. Like in 
  the proof of Theorem~\ref{thm:fdsimplify}, we can focus on the case where
  $\abind$ consists of values of $\dom(I_\acc)$. In this case, let $M_1$ and
  $M_2$ be the matching tuples for the access in $I_1$ and $I_2$ respectively.
  As in the proof of Theorem~\ref{thm:fdsimplify}, if we performed the blowup
  process for this access, then we can use the corresponding tuples to define an
  output to the access which is in~$I_\acc^+$ and is a valid output both
  in~$I_1^+$ and in~$I_2^+$. Observe now that the analogue of Claims~\ref{clm:extends1}
  and~\ref{clm:extends2} still hold, so letting $M_1$ and~$M_2$ be the matching
  tuples for the access in~$I_1$ and~$I_2$ respectively, we have $M_1 = M_1^+$
  and $M_2 = M_2^+$. We can then finish the proof in the same way that we
  finished the proof of Theorem~\ref{thm:fdsimplify}, in particular restricting
  the final counterexamples to the relations of~$\aschema$, and conclude using
  Lemma~\ref{lem:enlargera}.
\end{proof}

\subsection{Complexity of RA-answerability for FDs}
\label{apx:complexityfdspmr}
In Theorem~\ref{thm:decidfd} we showed that monotone answerability with FDs was decidable in the lowest possible
complexity, namely, $\np$. 

The argument involved first showing
\emph{FD-simplifiability}, which
allowed us to eliminate result bounds at the cost of adding additional $\incd$s.
We then simplified the resulting rules to
ensure that the chase would terminate.  This relied on the fact that the axioms for~$\amd$ would include
rules going from~$R$ to~$R'$, but not vice versa. Hence, the argument
does not generalize for the rules that axiomatize RA plans.

However,  we can repair the argument at the cost of adding an additional assumption.
A schema $\aschema$ with access methods is \emph{single method per relation}, abbreviated $\smpr$, if for
every relation there is at most one access method. This assumption was made
 in many works on access methods \cite{access1,access1}, although we do not make it by default
elsewhere in this work. We can then show the following analogue of
Theorem~\ref{thm:decidfd} with this additional assumption:

\begin{theorem} \label{thm:decidfdrasmpr}
  We can decide whether a CQ~$Q$ is answerable with respect to an $\smpr$ schema with
  result bounds whose constraints are FDs. The problem is $\np$-complete.
\end{theorem}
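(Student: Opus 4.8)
The plan is to mirror the argument for monotone answerability with FDs (Theorem~\ref{thm:decidfd}), replacing its machinery by the RA-answerability analogue, and then to exploit the $\smpr$ assumption exactly at the point where the monotone termination argument relied on the one-directional flow from $R$ to $R'$. First I would apply FD-simplifiability for RA-plans (Theorem~\ref{thm:fdsimplifyra}) to reduce to the FD simplification $\aschema'$ of $\aschema$, which has no result bounds but introduces, for each originally result-bounded method $\mt$ on a relation $R$, a fresh relation $R_\mt$ together with the two $\incd$s $R(\vec x, \vec y, \vec z) \to R_\mt(\vec x, \vec y)$ and $R_\mt(\vec x, \vec y) \to \exists \vec z\, R(\vec x, \vec y, \vec z)$, where $\vec x \vec y$ ranges over $\detby(\mt)$ and $\vec z$ over the remaining positions. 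I would observe that $\aschema'$ is still $\smpr$: a non-result-bounded method stays on its relation, while a result-bounded $\mt$ becomes the unique method of $\aschema'$ on $R_\mt$ and leaves $R$ with no method at all.

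Next, by Proposition~\ref{prop:reducera}, RA-answerability over $\aschema'$ is equivalent to a query-containment problem $Q \subseteq_\Gamma Q'$ whose constraints $\Gamma$ consist of the FDs of $\aschema$ and the new $\incd$s, their primed copies, and the \emph{bi-directional} accessibility axioms. The essential difference from the monotone setting is that these axioms now carry facts both from the unprimed to the primed side and back, through the accessible copies, so the clean termination argument of Theorem~\ref{thm:decidfd} (which used that facts only flowed from $R$ to $R'$) no longer applies directly.

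The core of the proof is therefore a termination argument for the chase with $\Gamma$, and this is where $\smpr$ enters. I would partition the relations of $\Gamma$ into \emph{interface} relations --- the $R_\mt$ and those original relations carrying a method, which are exactly the relations possessing an accessible copy and hence the only relations to which an axiom with $\accessible$ in its head applies --- and \emph{data} relations, namely the relations $R$ that are heads of view-to-relation $\incd$s. Because $\aschema'$ is $\smpr$, a data relation $R$ carries \emph{no} method, so no accessibility axiom mentions $R$ in its head, and the unique $\incd$ leaving $R$ is the relation-to-view $\incd$ into $R_\mt$, which projects onto the $\detby(\mt)$-positions. The key observation is then that nulls are created only by the primed and unprimed view-to-relation $\incd$s, always at the $\vec z$-positions of a data relation, and that such a null can never enter $\accessible$: it occurs only at non-$\detby(\mt)$ positions of $R$ and $R'$, it is projected away by the only rule leaving $R$, and no $\accessible$-producing axiom applies to $R$. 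Consequently a null never satisfies the precondition $\bigwedge_i \accessible(x_i)$ of an accessibility axiom, so it is never transported across the primed/unprimed boundary nor fed back into a view-to-relation $\incd$; this confines all accessible values to $\adom(\canondb(Q))$ and keeps every interface relation null-free.

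From this I would conclude termination exactly as in Proposition~\ref{prop:decidmdetfdclassic} and Theorem~\ref{thm:decidfd}: the $\accessible$ set grows monotonically within the polynomially many constants of $\canondb(Q)$ and so stabilises after polynomially many rounds; once it is stable the accessibility axioms are full rules over a bounded domain and the view-to-relation $\incd$s each fire boundedly, their nulls forming no cascade; hence the chase terminates in polynomially many parallel rounds and $Q \subseteq_\Gamma Q'$ is decidable in $\np$ by checking $Q'$ on the chase. The matching $\np$-hardness already holds for $\smpr$ schemas with no constraints and no result bounds~\cite{access2}. The main obstacle is precisely this termination step: in the RA-answerability reduction the bi-directional accessibility axioms would in general let a null produced by a view-to-relation $\incd$ become accessible and thereby trigger a further view-to-relation firing, yielding an unbounded cascade; the entire role of the $\smpr$ hypothesis is to force every null-producing data relation to carry no accessibility, which breaks this loop and restores the polynomial-round bound.
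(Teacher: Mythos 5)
Your proposal is correct in outline but takes a genuinely different route from the paper. The paper does not argue termination of the chase for the bi-directional (RA) containment problem at all: it proves the stronger Proposition~\ref{prop:monotonecollapsefdsmpr}, namely that under $\smpr$ with FD constraints, access-determinacy coincides with $\amd$. This is done by an inductive claim that in any chase with the RA axioms, the (Backward) axioms never fire, the unprimed view-to-relation rules $R_\mt(\vec x,\vec y)\rightarrow\exists\vec z\,R(\vec x,\vec y,\vec z)$ never fire, the rules $R'\rightarrow R'_\mt$ never fire, and the FDs never fire --- after which the constraint set is literally the one for monotone answerability, and Theorem~\ref{thm:decidfdrasmpr} follows as an immediate corollary of Theorem~\ref{thm:decidfd}. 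You instead keep the bi-directional axioms and argue polynomial termination directly, with the key observation (which is the same place $\smpr$ is used in the paper) that a relation $R$ whose unique method was result-bounded carries no method in the FD simplification, so the nulls created at non-$\detby(\mt)$ positions can never enter $\accessible$ and never feed back through the accessibility axioms. Both arguments are driven by the same structural fact, but the paper's version buys a cleaner and stronger statement (the collapse of RA-answerability to monotone answerability, analogous to Proposition~\ref{prop:monotonecollapse} for $\incd$s), while yours stays closer to the generic containment machinery. Two spots in your write-up are under-argued and would need essentially the paper's case analysis anyway: (i) ``the view-to-relation $\incd$s each fire boundedly, their nulls forming no cascade'' requires ruling out the loop in which a (Backward) firing produces an $R_\mt$-fact with no pre-existing $R$-witness, which then activates $R_\mt\rightarrow\exists\vec z\,R$ --- you must show either that (Backward) is never an active trigger (every $R'_\mt$-fact is preceded by a matching $R_\mt$-fact, and every $R'$-fact by a matching $R'_\mt$-fact) or control the resulting facts; and (ii) you do not address FD firings between a null-containing $R$-fact and an original $R$-fact on the $\detby(\mt)$ positions, which the paper excludes via the invariant that every $R_\mt$-fact is a projection of an $R$-fact. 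These are repairable with the same induction the paper performs, so I would classify this as a correct alternative proof with details to fill in rather than a gap.
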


We will actually show something stronger:
for~$\smpr$ schemas with constraints consisting of FDs only,
 there is no difference between full answerability and monotone answerability.
Given Theorem~\ref{thm:decidfd}, this immediately implies
Theorem~\ref{thm:decidfdrasmpr}.

\begin{proposition} \label{prop:monotonecollapsefdsmpr}
Let $\aschema$ be a  schema with access methods satisfying $\smpr$ and constraints $\Sigma$ consisting
of functional dependencies, and $Q$ be a CQ that is access-determined.
Then $Q$ is $\amd$.
\end{proposition}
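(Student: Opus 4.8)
The plan is to prove the contrapositive: assuming $Q$ is not $\amd$, I construct a counterexample to access-determinacy. Throughout I work with the characterizations in terms of (jointly) access-valid common subinstances, i.e.\ Proposition~\ref{prop:altdef} for $\amd$ and Proposition~\ref{prop:altdefra} for access-determinacy. The first step is to reduce to the case where every method carries only a \emph{result lower bound}. By Proposition~\ref{prop:elimupper} together with Theorem~\ref{thm:equiv}, the property $\amd$ is preserved when passing from $\aschema$ to $\relaxs(\aschema)$, and by Proposition~\ref{prop:elimupperra} together with Theorem~\ref{thm:equivra} the same is true of access-determinacy; since $\relaxs(\aschema)$ is again an $\smpr$ schema with FD constraints, it suffices to prove the implication there. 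Hence from now on all methods have lower bounds only.

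So suppose $Q$ is not $\amd$. By Proposition~\ref{prop:altdef} there are instances $I_1,I_2$ satisfying $\Sigma$ with a common subinstance $I_\acc$ that is access-valid in $I_1$, and with $Q$ holding in $I_1$ but not in $I_2$. I set $J_1 \colonequals I_1$ and $J_2 \colonequals I_\acc$, and claim that $(J_1,J_2)$ with common subinstance $I_\acc$ witnesses that $Q$ is not access-determined. This is exactly the step where the FD case is \emph{simpler} than the inclusion-dependency case of Proposition~\ref{prop:monotonecollapse}: there the naive choice $J_2 \colonequals I_\acc$ fails because $\incd$s are not preserved under subinstances, forcing the union construction; here, FDs \emph{are} preserved under subinstances, so $J_2 = I_\acc \subseteq I_1$ automatically satisfies $\Sigma$. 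Moreover $Q(J_1)=Q(I_1)$ is true, whereas $Q(J_2)=Q(I_\acc)$ is false: as $Q$ is monotone and $I_\acc \subseteq I_2$ with $Q(I_2)$ false, monotonicity gives $Q(I_\acc)$ false.

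It remains to check that $I_\acc$ is \emph{jointly} access-valid in $J_1=I_1$ and $J_2=I_\acc$, i.e.\ that for every access $(\mt,\accbind)$ with values in $I_\acc$ there is a single output $J \subseteq I_\acc$ that is valid both in $I_1$ and in $I_\acc$. Fix such an access, with result lower bound $k$, and let $J \subseteq I_\acc$ be the valid output in $I_1$ supplied by access-validity of $I_\acc$ in $I_1$. Let $M_\acc$ and $M_1$ be the matching tuples of the access in $I_\acc$ and in $I_1$; since $I_\acc \subseteq I_1$ we have $M_\acc = M_1 \cap I_\acc \subseteq M_1$, so $\card{M_\acc} \le \card{M_1}$. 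By definition of access-validity, $J \subseteq M_\acc$, and validity of $J$ in $I_1$ (for a lower bound) means $\card{J} \ge \min(k,\card{M_1})$. Since $\min(k,\cdot)$ is monotone, $\card{J} \ge \min(k,\card{M_1}) \ge \min(k,\card{M_\acc})$, which is precisely the condition for $J$ to be a valid output in $I_\acc$. Choosing such a $J$ for each access defines a joint access selection, so $I_\acc$ is jointly access-valid. Access-determinacy (Proposition~\ref{prop:altdefra}) then yields $Q(I_1)=Q(I_\acc)$, contradicting that $Q(I_1)$ is true and $Q(I_\acc)$ is false.

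The argument is short, so the ``hard part'' is conceptual rather than computational: it lies in seeing why the reduction to lower bounds is what makes the simple choice $J_2 = I_\acc$ work at all. With a genuine result \emph{upper} bound, a valid output is pinned to the exact size $\min(k,\card{M})$, and the output forced in $I_1$ could fail to be valid in $I_\acc$ when $\card{M_\acc} < \card{M_1}$; it is only after discarding upper bounds via $\relaxs$ that validity in $I_1$ propagates downward to the subinstance $I_\acc$. The other essential ingredient, closure of FDs under subinstances, is what lets $I_\acc$ itself serve as the second instance, and is the precise point where the FD hypothesis (as opposed to $\incd$s) is used; I would flag that this particular argument appeals to the $\smpr$ hypothesis only through the schema being as stated, the real work being done by subinstance-closure and the lower-bound reduction.
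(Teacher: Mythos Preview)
Your proof is correct, and it takes a genuinely different route from the paper's own argument. The paper proceeds syntactically: it first applies FD-simplification (Theorem~\ref{thm:fdsimplifyra}) to eliminate result bounds, then unfolds the access-determinacy containment $Q \subseteq_\Gamma Q'$ of Proposition~\ref{prop:reducera}, and finally performs a detailed chase analysis to show that under $\smpr$ the (Backward) axioms, the $R_\mt$-to-$R$ axioms, and the FDs never fire, so that the chase for access-determinacy coincides with the chase for $\amd$. The $\smpr$ hypothesis is used essentially there, to rule out an $R'$-fact being created by one method and then triggering the Backward rule of another method on the same relation.

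Your argument is semantic and much shorter: after reducing to lower bounds via $\relaxs$, you take any $\amd$-counterexample $(I_1,I_2,I_\acc)$ and observe that $(I_1,I_\acc)$ is already a counterexample to access-determinacy, because FDs are preserved under subinstances (so $I_\acc\models\Sigma$), CQs are monotone (so $Q$ fails on $I_\acc$), and for lower bounds any output valid in $I_1$ with tuples in $I_\acc$ is automatically valid in $I_\acc$ by the cardinality inequality $\min(k,\card{M_\acc})\le\min(k,\card{M_1})$. As you correctly note, this argument never invokes $\smpr$; it therefore establishes the proposition in strictly greater generality than stated. A consequence worth pointing out is that your argument, combined with Theorem~\ref{thm:decidfd}, would improve the paper's $\exptime$ bound of Theorem~\ref{thm:decidfdraexp} for RA-answerability under FDs (without $\smpr$) to $\np$.
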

\begin{proof}
We know from Theorem~\ref{thm:fdsimplifyra} that the schema is FD simplifiable.
Thus we can eliminate result bounds as follows:

\mylistskip
\begin{compactitem}
  \simplifyfddef
\end{compactitem}

By Proposition~\ref{prop:reducera} we know that~$Q$ is access-determined exactly
when $Q \subseteq_\Gamma Q'$, where $\Gamma$ contains two copies of the above schema
and also axioms of the following form for each access method~$\mt$:
\begin{itemize}
  \item (Forward):
    \[\Big(\bigwedge_i \accessible(x_i)\Big) \wedge S(\vec x, \vec y) \rightarrow
    \Big(\bigwedge_i \accessible(y_i)\Big)
    \wedge S'(\vec x, \vec y)\]
\item (Backward): 
  \[\Big(\bigwedge_i \accessible(x_i)\Big) \wedge S'(\vec x, \vec y) \rightarrow
    \Big(\bigwedge_i \accessible(y_i)\Big)
    \wedge S(\vec x, \vec y)\]
\end{itemize}
where $\vec x$ denotes the input positions of~$\mt$.
Note that~$S$ may be one of the original relations, or one of the relations~$R_\mt$ produced by the transformation
above.

We now show that  chase proofs with~$\Gamma$ must in fact be very simple under
the 
$\smpr$ assumption:
\begin{claim} Assuming our schema is $\smpr$, consider any chase
sequence for~$\Gamma$. Then:
\begin{compactitem}
\item Rules of the form
$R_\mt(\vec x, \vec y)
\rightarrow \exists \vec z ~
  R(\vec x, \vec y, \vec z)$
will never fire.
\item  Rules of the form
$R'(\vec x, \vec y, \vec z) \rightarrow R'_\mt(\vec x, \vec y)$
will never fire.
\item FDs will never fire (assuming they were applied to the initial
instance)
\item (Backward) axioms will never fire.
\end{compactitem}
\end{claim}
  Note that the last item suffices to conclude that
  Proposition~\ref{prop:monotonecollapsefdsmpr} holds, so it suffices to prove the claim.
We do so by induction.
We consider the first item.
Consider a fact $R_\mt(\vec c, \vec d)$. Since the (Backward) axioms never
  fire (fourth point of the induction), the fact must have been produced from a fact $R(\vec c, \vec d, \vec e)$. Hence
the axiom can not fire on this fact, because we only fire
active triggers.

We move to the second item, considering a fact $R'(\vec c, \vec d, \vec e)$.
By $\smpr$ and the inductive assumption that FDs do not fire, this fact can only have been produced by $R'_\mt(\vec c, \vec d)$.
Thus the rule in question will not fire in the chase.

Turning to the third item, we first consider a potential violation of an FD $D
\determines r$ on an unprimed relation~$R$. This consists of facts $R(\vec c)$ and $R(\vec d)$ agreeing
on positions in~$D$ and disagreeing on position~$r$. As the initial instance is
always assumed to satisfy the FDs, these facts are not in the initial instance.
But they could not have been otherwise produced, as we know by induction (first
and fourth points) that
none of the rules with an unprimed relation $R$ in their head will fire.
Now let us turn to facts that are potential violations of the primed copies
of the FDs, for some relation~$R'$. The existence of the violation implies that
there is an access method on the corresponding relation~$R$ in the original
schema. By the SPMR assumption there is exactly one such method.

We first consider the case where this access method has result bounds.
We know that the facts in the violation must have been produced by
the rule going from~$R'_\mt$ to~$R'$ or by a Forward rule. Thus the facts are
$R'(\vec c_1, \vec d_1, \vec e_1)$ and $R'(\vec c_2, \vec d_2, \vec e_2)$. Let
us assume that
that  $R'(\vec c_2, \vec d_2, \vec e_2)$ was the latter of the two facts to be created,
then $\vec e_2$ would have been chosen fresh. Hence the violation must occur within
the positions corresponding to~$\vec c_1, \vec d_1$ and $\vec c_2, \vec d_2$.
But by induction (third point), and by the~$\smpr$ assumption, these facts must have been created from
facts $R'_\mt(\vec c_1, \vec d_1)$ and $R'_\mt(\vec c_2, \vec d_2)$ where $\mt$
is the only access method on~$R$, and in turn these must
have been created from facts $R_\mt(\vec c_1, \vec d_1)$ and $R_\mt(\vec c_2, \vec d_2)$. These last
must (again, by induction, using the third and fourth points) have been created from facts
$R(\vec c_1, \vec d_1, \vec f_1)$ and $R(\vec d_1, \vec d_1, \vec g_1)$. But then we have
an earlier violation of the FDs on these two facts, which is a contradiction.

We now consider the second case, where the access method on~$R$ has no result bounds in the
original schema. In this case there
is no relation~$R_\mt'$ and the facts of the violation must have been produced
by applying the Forward rule. But then the $R$-facts used to create them must
themselves be an earlier violation of the corresponding FD on~$R$, which is
again a contradiction. Hence, we have shown the third item.

Turning to the last item, there are two kinds of Backward rules to consider.
First, the ones involving a primed relation $R'$ and the original relation $R$,
where there is an access method without result bounds on~$R$ in the original
schema. Secondly, the ones
involving a primed relation $R'_\mt$ and the unprimed relation $R_\mt$ where
there is an access method with result bounds on~$R$ in the original schema. 
For the first kind of axiom, any $R'$-fact can only have been created from an
$R$-fact using the Forward axioms, and so the Backward axiom cannot fire.
For the second kind of axiom, we show the claim 
by considering a fact $R'_\mt(\vec c, \vec d)$. Using the second point of the
induction, it can
only have been generated by a fact $R_\mt(\vec c, \vec d)$, and thus (Backward)
could not fire, which establishes the desired result.
\end{proof}

Without $\smpr$, we can still argue that RA-answerability is decidable, and show
a singly exponential complexity upper bound:
\begin{theorem} \label{thm:decidfdraexp}
For general schemas with access methods and constraints $\Sigma$
consisting of FDs, the RA-answerability problem is decidable in~$\exptime$.
\end{theorem}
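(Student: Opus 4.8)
The plan is to follow the template of Theorem~\ref{thm:deciduidfd}, combining FD simplifiability with a separability argument and the generalized linearization bound of Corollary~\ref{cor:exp}. First I would invoke FD simplifiability for RA-plans (Theorem~\ref{thm:fdsimplifyra}) to replace $\aschema$ by its FD simplification, which carries no result bounds but adds, for each previously result-bounded method $\mt$ on a relation $R$, a relation $R_\mt$ together with the two IDs $R_\mt(\vec x, \vec y) \leftrightarrow \exists \vec z~R(\vec x, \vec y, \vec z)$ projecting onto $\detby(\mt)$. Applying Proposition~\ref{prop:reducera} then turns RA-answerability into an access-determinacy containment $Q \subseteq_\Gamma Q'$, where $\Gamma$ consists of the FDs $\Sigma$, their primed copy $\Sigma'$, the simplification IDs on both the unprimed and primed relations, and the \emph{bidirectional} accessibility axioms (Forward and Backward). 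The presence of the Backward axioms is precisely what blocks the polynomial chase-termination argument of Theorem~\ref{thm:decidfd}: facts now propagate between $R$ and $R'$ in both directions, so the chase need not stabilize in few rounds, and $\smpr$ is no longer assumed, so the collapse of Proposition~\ref{prop:monotonecollapsefdsmpr} is unavailable.

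The core step is to make $\Gamma$ separable, so that the FDs can be discarded. As in Theorem~\ref{thm:deciduidfd}, I would first rewrite the transfer axioms so that, in addition to the input positions, they also export every position functionally determined by the input positions; this rewriting is sound because each such chase step can be simulated by an original step followed by FD applications. I would then minimize $Q$ under the FDs to obtain $Q^*$, and prove by induction on proof length that, once the initial instance satisfies the FDs, firing any TGD trigger (a simplification ID or a rewritten accessibility axiom) never creates a new FD violation on $R$, $R'$, $R_\mt$, or $R'_\mt$. The crux of this induction is that every such firing either copies only determined positions---which then agree with an existing fact---or fills the remaining positions with fresh nulls; a null can occur in at most one of the two facts of a purported violation, so any genuine violation must already hold among earlier facts, contradicting minimality. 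Here the functional closure of $\detby(\mt)$ is what guarantees that the determined part of a created fact is consistent with the FDs. Once this is established, the FDs of $\Sigma$ and $\Sigma'$ can be dropped without affecting the containment, yielding an equivalent problem $Q^* \subseteq_{\gammasep} Q'$ where $\gammasep$ contains only IDs and full GTGDs (the rewritten accessibility axioms, which transfer whole facts).

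To conclude, I observe that the only side (non-guard) atoms occurring in $\gammasep$ are occurrences of the unary predicate $\accessible$, so $\gammasep$ is a set of IDs and full GTGDs on a side signature of constant arity~$1$. Corollary~\ref{cor:exp} then decides $Q^* \subseteq_{\gammasep} Q'$ in $\exptime$, giving the claimed bound (and it is the generality of that corollary, which does \emph{not} fix the arity of the whole signature, that makes it applicable here).

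I expect the separability induction to be the main obstacle. Unlike Theorem~\ref{thm:deciduidfd}, where the IDs were unary, the simplification IDs here are non-unary, and the accessibility axioms are bidirectional, so the argument must simultaneously rule out FD violations on primed and unprimed relations and track how determined positions are transferred across both Forward and Backward firings. The delicate case is a violation straddling a fact created by an ID (which introduces fresh nulls in the non-determined positions) and a fact obtained by transfer through an accessibility axiom; showing that the determiner of any violated FD lies entirely within the shared, determined positions---and hence that the two facts already violated the FD before these firings---is where the functional closure of $\detby(\mt)$ and the position-exporting rewriting must be combined carefully.
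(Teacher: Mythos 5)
Your proposal follows essentially the same route as the paper's proof: FD-simplify via Theorem~\ref{thm:fdsimplifyra}, reduce to the access-determinacy containment via Proposition~\ref{prop:reducera}, show that the FDs never fire in the chase so they can be dropped, and conclude that the remaining constraints are $\incd$s plus accessibility axioms with side signature $\{\accessible\}$, decidable in $\exptime$ by Corollary~\ref{cor:exp}. Two remarks on the details. First, your position-exporting rewriting of the transfer axioms is vacuous in this setting: after FD simplification the only null-creating rules are the $\incd$s $R_\mt(\vec x,\vec y)\rightarrow\exists\vec z\,R(\vec x,\vec y,\vec z)$, which by construction already export all of $\detby(\mt)$ (that is the arity of $R_\mt$), and the Forward/Backward accessibility axioms are full TGDs that transfer entire facts; so there is nothing to rewrite, and the paper does no such step. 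Second, your induction hypothesis as stated (no FD violation on $R$, $R'$, $R_\mt$, or $R'_\mt$) is too weak to close the delicate case you identify: when an $\incd$ creates $R(\vec a,\vec b)$ from $R_\mt(\vec a)$, the pre-existing fact carrying the values $\vec a$ may be a \emph{primed} fact (the $R_\mt$-fact can arise via a Backward axiom from $R'_\mt$, itself projected from $R'$), so to push the violation back in time you need the strengthened invariant that the paper proves: every $R_\mt$- and $R'_\mt$-fact is a projection of some existing $R$- or $R'$-fact, and the FDs hold \emph{jointly} on $R\cup R'$ (no cross violations between a primed and an unprimed fact), not merely on each relation separately. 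With that strengthening your induction goes through exactly as in the paper, the FD-closure of $\detby(\mt)$ ruling out any violated right-hand side landing on a fresh-null position.
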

\begin{proof}
We consider again the query containment problem for answerability obtained after
  eliminating result bounds, and let
$\Gamma$ be the corresponding constraints as in
  Proposition~\ref{prop:monotonecollapsefdsmpr}.

Instead of claiming that neither the FDs nor the backward axioms will not fire, as in the
case of~$\smpr$, we argue only that  the FDs will not fire.
From this it follows that the constraints consist only of $\incd$s
and accessibility axioms, leading to an $\exptime$ complexity upper bound:
one can apply either  Corollary \ref{cor:exp} from the body of the paper, 
or the~$\exptime$
complexity result without result bounds from \cite{bbbicdt}.

We consider a chase proof with~$\Gamma$, and claim, for each
relation~$R$ and each result-bounded method $\mt$ on~$R$, the following
  invariant:
\begin{itemize}
\item Every $R_\mt$-fact and every $R'_\mt$-fact
 is a projection of some $R$-fact or some $R'$-fact.
\item All the FDs are satisfied in the chase instance,
and further  for any relation~$R$, $R \cup R'$ satisfies the FDs. 
That is: for any FD $D \determines r$,
we cannot have an $R$ and $R'$-fact that agree on positions in~$D$ and disagree
on~$r$.
\end{itemize}
The second item of the invariant implies that the FDs do not fire, which as we
  have argued is sufficient to conclude our complexity bound.

 The invariant is initially true, by assumption that
FDs are applied on the initial instance.
When firing an $R$-to-$R_\mt$ axiom or an $R'$-to-$R'_\mt$ axiom, the first
item is preserved by definition,  and the second is trivially preserved since
there are no FDs on~$R_\mt$ or $R'_\mt$.

When firing an accessibility axiom, either forward or backward, again the first  and the
  second  item are clearly preserved.

Now, consider the firing of an $R_\mt$-to-$R$ axiom.
 The first item is trivially  preserved, so we must only show the second.

 Consider the fact
$R_\mt(a_1 \ldots a_m)$ and the generated 
fact $F = R(a_1 \ldots a_m, b_1 \ldots b_n)$
   created by the rule firing. Assume that~$F$ is part of an FD violation with some
  other fact $F'$ which is of the form $R(a'_1 \ldots a'_m, b'_1 \ldots b'_m)$ or
$R'(a'_1 \ldots a'_m, b'_1 \ldots b'_m)$.

We know that the left-hand-side of the FD cannot contain any
 of the positions of the~$b_i$, because they are fresh nulls. Hence, the
  left-hand-side of the FD is
 included in the positions of~$a_1 \ldots a_m$. But now, by definition of the FD
  simplification, the right-hand-side of the FD cannot correspond
to one of  the~$b_1 \ldots b_n$, since otherwise that position
would have been  included in~$R_\mt$.
  So the right-hand-side is also one of the positions of~$a_1 \ldots a_m$, and
  in particular we must have $a_i \neq a_i'$ for some $1 \leq i \leq m$ in the
  right-hand-side of the FD..

 Now we use the first item of the inductive
  invariant on the fact $R_\mt(a_1 \ldots a_m)$: there was already a fact $F''$, either an $R$ or
$R'$-fact, with tuple of values $(a_1 \ldots a_m, b''_1 \ldots b''_m)$.
As there is $1 \leq i \leq m$ such that~$a'_i \neq a_i$, the tuples of values
of~$F'$ and $F''$ must be different. 
But now, as~$F$ and $F'$ are an FD violation on the positions
  $a_1 \ldots a_m$, then $F'$ and $F''$ are seen to also witness an FD
   violation in~$R \cup R'$ that existed before the firing. This
  contradicts the first point of the invariant, so we conclude that the second
  item is preserved when firing an $R_\mt$-to-$R$ axiom.

When firing $R'_\mt$-to-$R'$ rules, the symmetric argument applies.

This completes the proof of the invariant, and concludes the proof of
Theorem~\ref{thm:decidfdraexp}.
\end{proof}

\begin{table*}[th!]
  \centering
  {
\caption{Summary of results on simplifiability and complexity of RA-answerability}
\label{tab:resultsfull}
  \begin{tabular}{lll}
\toprule
{\bfseries Fragment} &
{\bfseries Simplification} & 
{\bfseries Complexity} \\
\midrule
  IDs & Existence-check  (Theorem~\ref{thm:simplifyidsexistence} and Prop.~\ref{prop:monotonecollapse}) & $\exptime$-complete  
(Theorem~\ref{thm:decidids} and Prop.~\ref{prop:monotonecollapse})\\
  Bounded-width IDs & Existence-check (see above) & $\np$-complete  (Theorem~\ref{thm:npidsbounds} and Prop. \ref{prop:monotonecollapse}) \\
  FDs & FD (Theorem~\ref{thm:fdsimplifyra}) & 
    In $\exptime$ (Theorem~\ref{thm:decidfdraexp}) \\
  FDs under $\smpr$ & FD (see above) &
    $\np$-complete (Theorem~\ref{thm:decidfdrasmpr})\\
  FDs and UIDs & Choice  (Theorem~\ref{thm:simplifychoiceuidfdra})
    &  \emph{Open}   \\
  Equality-free FO & Choice  (Theorem~\ref{thm:simplifychoicera})
    & Undecidable (same proof as Prop.~\ref{prp:undec}) \\
  Frontier-guarded TGDs & Choice  (see above) & $\twoexp$-complete  (Theorem~\ref{thm:decidegfra}) \\
\bottomrule
\end{tabular}
  }
\end{table*}

\subsection{Choice Simplifiability for RA-plans with $\uincd$s and FDs}

\begin{theorem} \label{thm:simplifychoiceuidfdra}
Let schema $\aschema$ have  constraints  given by
 $\uincd$s and arbitrary FDs,  and $Q$ be a CQ that is access-determined \wrt\ $\aschema$.
Then $Q$ is also access-determined in the choice simplification of~$\aschema$.
\end{theorem}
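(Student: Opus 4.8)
The plan is to prove the contrapositive via the enlargement machinery of Lemma~\ref{lem:enlargera}, mirroring the proof of Theorem~\ref{thm:simplifychoiceuidfd} but replacing ``access-valid in $I_1$'' by ``jointly access-valid in $I_1$ and $I_2$'' throughout. First I would state and prove an access-determinacy analogue of the progressive enlargement lemma (Lemma~\ref{lem:enlargeprog}): starting from a counterexample $I_1,I_2$ to access-determinacy in the choice simplification $\aschema'$ with a common subinstance $I_\acc$ that is jointly access-valid, and given a single access, it lets us enlarge to another such counterexample in which that access now admits an output \emph{inside $I_\acc^+$ that is valid in both $I_1^+$ and $I_2^+$} for the full schema $\aschema$, without spoiling accesses that already had such a joint output. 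Fixing accesses one at a time and taking the limit then produces a common subinstance jointly access-valid for $\aschema$, i.e.\ a counterexample to access-determinacy in $\aschema$; Lemma~\ref{lem:enlargera} then yields the theorem.

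Second, I would perform the per-access blow-up exactly as in Theorem~\ref{thm:simplifychoiceuidfd}. The key observation is that joint access-validity for the choice simplification forces, for each access, that the matching tuples $M_1$ in $I_1$ and $M_2$ in $I_2$ are either both empty or both nonempty, and when nonempty share a common tuple $\vec t_1\in I_\acc$; moreover, if both $M_1$ and $M_2$ are singletons they must coincide with $\vec t_1$ (otherwise no single output in $I_\acc$ could be valid in both under bound~$1$), so no blow-up is needed. Otherwise at least one of $M_1,M_2$ is not a singleton, and I pick a second matching tuple $\vec t_2\neq \vec t_1$ from that side. Letting $C$ be the positions where $\vec t_1$ and $\vec t_2$ disagree, the fact that the witnessing instance satisfies the FDs shows that $\Sigma$ cannot imply any FD whose left-hand side avoids $C$ and whose right-hand side lies in $C$. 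I then form the infinite set $N$ of copies of $\vec t_1$ with fresh values at the positions of $C$, add $N$ to \emph{both} $I_1$ and $I_2$, chase $N$ with the $\uincd$s (ignoring triggers that export an element of $\vec t_1$) to obtain $W$, and set $I_1^+\colonequals I_1\cup W$, $I_2^+\colonequals I_2\cup W$, $I_\acc^+\colonequals I_\acc\cup W$.

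Third, I would verify the hypotheses of the enlargement lemma. Constraint satisfaction of $I_1^+$ and $I_2^+$ follows by invoking Lemma~\ref{lem:decomp} twice, once with $I=I_1$ and once with $I=I_2$: its hypotheses hold on both sides because the only non-fresh elements of $N$ are values of $\vec t_1\in\dom(I_1)\cap\dom(I_2)$, occurring at the same positions in $N$ as in each $I_p$, and because the ``no implied FD'' property is a statement about $\Sigma$ and therefore applies to $I_1\cup N$ and $I_2\cup N$ identically. The homomorphisms $I_p^+\to I_p$ are built as in Theorem~\ref{thm:simplifychoiceuidfd}, sending each fact of $N$ to $R(\vec t_1)$ (legitimate on both sides since $\vec t_1\in I_1\cap I_2$) and extending inductively through $W$ using that $I_p$ satisfies the $\uincd$s. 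Finally, $I_\acc^+$ is jointly access-valid for $\aschema$: the targeted access now has infinitely many matching tuples in $I_\acc^+$, so any $k$ of them form an output valid in both $I_1^+$ and $I_2^+$; accesses already possessing a joint output keep one by arguments analogous to Claims~\ref{clm:extends1} and~\ref{clm:extends2}; and an access whose binding uses an element of $\dom(W)\setminus\dom(I_\acc)$ has all its matching tuples inside $W\subseteq I_\acc^+$.

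The main obstacle I anticipate is the symmetric bookkeeping in the progressive enlargement step. Unlike the monotone case, every output must be simultaneously valid in \emph{both} enlarged instances, so the blow-up must add matching tuples to $I_1$ and $I_2$ in lockstep, and the ``no implied FD'' property---which I derive from a second matching tuple $\vec t_2$ that may live in only one of $I_1,I_2$---must transfer to the other side. Phrasing this property as an implication about $\Sigma$ rather than about a particular instance is exactly what makes Lemma~\ref{lem:decomp} applicable verbatim to both $I_1\cup N$ and $I_2\cup N$, and is the crux that distinguishes this argument from the monotone one.
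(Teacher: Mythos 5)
Your proposal is correct and follows essentially the same route as the paper's own proof: the paper likewise proves the result via an access-determinacy version of the progressive enlargement lemma (its Lemma~\ref{lem:enlargeprogra}), performs the identical case analysis on the matching-tuple sets $M_1,M_2$ (handling the both-empty and both-singleton cases trivially, then picking $\vec t_2$ from whichever side is not a singleton by symmetry), builds the same $N$ and $W$ added to both instances, and verifies constraint satisfaction, the homomorphisms, and joint access-validity exactly as you describe. Your closing observation---that the ``no implied FD from the complement of $C$ into $C$'' property is a statement about $\Sigma$ and hence lets Lemma~\ref{lem:decomp} be applied verbatim on both sides even though $\vec t_2$ may live only in $I_1$---is precisely the point on which the paper's argument also turns.
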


We will proceed in a similar fashion to Theorem~\ref{thm:simplifychoiceuidfd},
i.e., fixing one access at a time,
using the following enlargement lemma as the analogue to
Lemma~\ref{lem:enlargeprog}:

\begin{lemma}
  \label{lem:enlargeprogra}
  Let $\aschema$ be a schema and $\aschema'$ be its choice simplification,
  and let $\Sigma$ be the constraints.

  Assume that, for any CQ~$Q$ not access-determined in~$\aschema'$, for any
  counterexample $I_1, I_2$ of access-determinacy for~$Q$ and
  $\aschema'$ with witness a common subinstance $I_\acc$  that it is
  jointly access-valid in~$I_1$ and~$I_2$ for~$\aschema'$,
  for any access $\mt, \accbind$ in~$I_\acc$,
  the following holds:
  we can construct a counterexample $I_1^+, I_2^+$ to
  access-determinacy for~$Q$ and $\aschema'$,
  i.e., $I_1^+$ and $I_2^+$ satisfy $\Sigma$,
  $I_1 \subseteq I_1^+$, $I_2 \subseteq I_2^+$,
  $I_1^+$ has a homomorphism to~$I_1$ and $I_2^+$ has a homomorphism to~$I_2$,
  and $I_1^+$ and $I_2^+$ have a common subinstance $I_\acc^+$ (which is 
  again jointly access-valid in~$I_1^+$ and $I_2^+$ for~$\aschema'$),
  and we can further impose that:
  \begin{enumerate}
    \item $I_\acc^+$ is a superset of~$I_\acc$;
    \item there is an output to the access $\mt, \accbind$ in~$I_\acc^+$ which is valid
  in~$I_1^+$ and~$I_2^+$ for~$\aschema$;
    \item for any access in~$I_\acc$ having an output
      in~$I_\acc$ which is valid for~$\aschema$ in~$I_1$ and
      $I_2$, there is an output to this access in~$I_\acc^+$
  which is valid for~$\aschema$ in~$I_1^+$ and in~$I_2^+$;
  \item for any
  access in~$I_\acc^+$ which is not an access in~$I_\acc$, there is an output
      in~$I_\acc^+$ which is valid for
      $\aschema$ in~$I_1$ and in~$I_2$.
  \end{enumerate}
  Then any query which is access-determined in~$\aschema$ is also
  access-determined in~$\aschema'$.
\end{lemma}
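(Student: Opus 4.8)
The plan is to prove the contrapositive, closely mirroring the proof of Lemma~\ref{lem:enlargeprog}, with the single systematic change that ``access-valid in~$I_1$'' is everywhere replaced by ``jointly access-valid in~$I_1$ and~$I_2$''. So I would fix a CQ~$Q$ that is not access-determined in~$\aschema'$, take a counterexample $(I_1, I_2)$ to access-determinacy for~$Q$ and~$\aschema'$ together with a common subinstance $I_\acc$ jointly access-valid in~$I_1$ and~$I_2$ for~$\aschema'$, and produce from it a counterexample to access-determinacy for~$Q$ and~$\aschema$. Exhibiting such a counterexample suffices, since it shows~$Q$ is not access-determined in~$\aschema$, which is the contrapositive of the claim.

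First I would enumerate the accesses in~$I_\acc$ as a (possibly infinite) sequence $(\mt^1, \accbind^1), (\mt^2, \accbind^2), \ldots$; by joint access-validity for~$\aschema'$, each such access already has an output inside~$I_\acc$ that is valid in both~$I_1$ and~$I_2$ for~$\aschema'$, though not necessarily for~$\aschema$. Then I would build an increasing chain $(I_1^i, I_2^i, I_\acc^i)_{i \in \NN}$ by repeatedly invoking the hypothesis to ``fix'' access $(\mt^i, \accbind^i)$ at stage~$i$, starting from $(I_1^0, I_2^0, I_\acc^0) \colonequals (I_1, I_2, I_\acc)$: if the $i$-th access already has an output in~$I_\acc^i$ valid in both~$I_1^i$ and~$I_2^i$ for~$\aschema$ I keep the triple unchanged, and otherwise I apply the hypothesis to $(I_1^i, I_2^i, I_\acc^i)$ and the access $(\mt^i, \accbind^i)$. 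The invariants I would carry by induction are: $(I_1^i, I_2^i)$ is a counterexample to access-determinacy for~$\aschema'$ witnessed by~$I_\acc^i$, with $I_1 \subseteq I_1^i$, a homomorphism $I_2^i \to I_2$, and $I_\acc \subseteq I_\acc^i$; moreover each of the first~$i$ enumerated accesses, as well as every access in~$I_\acc^i$ that is \emph{not} an access of~$I_\acc$, has an output in~$I_\acc^i$ valid in both~$I_1^i$ and~$I_2^i$ for~$\aschema$. Conditions (1)--(4) of the hypothesis are exactly what make these invariants propagate: (1) and the subset relations keep the chain increasing, (2) fixes the current access, (3) preserves the fixedness of accesses already fixed for~$\aschema$, and (4) guarantees that the freshly introduced accesses are born already fixed for~$\aschema$.

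Finally I would pass to the limit, setting $I_1^\infty \colonequals \bigcup_i I_1^i$, $I_2^\infty \colonequals \bigcup_i I_2^i$, and $I_\acc^\infty \colonequals \bigcup_i I_\acc^i$. Since $I_1 \subseteq I_1^\infty$, the query~$Q$ still holds in~$I_1^\infty$; and composing the homomorphisms $I_2^{i+1} \to I_2^i$ (which, in the concrete construction of Theorem~\ref{thm:simplifychoiceuidfd}, fix~$I_2^i$ pointwise, so the induced maps to~$I_2$ agree on overlaps) yields a homomorphism $I_2^\infty \to I_2$, whence~$Q$ fails in~$I_2^\infty$. The main obstacle, and the crux of the argument, is verifying that $I_\acc^\infty$ is jointly access-valid in~$I_1^\infty$ and~$I_2^\infty$ for the full schema~$\aschema$, not merely for~$\aschema'$: every access in~$I_\acc^\infty$ is either one of the enumerated accesses of~$I_\acc$, which was fixed at some finite stage and stays fixed by invariant (3), or an access mentioning an element created at some stage (hence not an access of~$I_\acc$), which is fixed from birth by invariant (4); in both cases a valid output in both~$I_1^\infty$ and~$I_2^\infty$ lives inside~$I_\acc^\infty$. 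The delicate bookkeeping is precisely this ``no access is left behind'' argument, together with checking that fixing later accesses never destroys an output secured earlier --- which holds because the instances only grow and the guaranteed outputs persist. Granting this, $(I_1^\infty, I_2^\infty)$ is a counterexample to access-determinacy for~$Q$ and~$\aschema$, completing the contrapositive.
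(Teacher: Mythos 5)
Your proposal is correct and matches the paper's approach: the paper proves this lemma by noting it is "exactly like that of Lemma~\ref{lem:enlargeprog}", i.e., the same iterative access-fixing construction with an increasing chain and passage to the limit, with "access-valid in~$I_1$" systematically replaced by "jointly access-valid in~$I_1$ and~$I_2$". Your additional remarks on composing the homomorphisms to~$I_2$ and on the "no access left behind" bookkeeping are exactly the points the induction invariants (1)--(4) are designed to handle, just as in the monotone case.
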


The proof of this lemma is exactly like that of Lemma~\ref{lem:enlargeprog}.

We are now ready to prove Theorem~\ref{thm:simplifychoiceuidfdra} using the process
of Lemma~\ref{lem:enlargeprogra}. We proceed similarly to the proof of
Theorem~\ref{thm:simplifychoiceuidfdra}.

Let $Q$ be a CQ which is not access-determined in~$\aschema'$,
let $I_1, I_2$ be a counterexample to access-determinacy,
and let $I_\acc$ be a common subinstance of~$I_1$ and~$I_2$ for~$\aschema'$
which is jointly access-valid in~$I_1$ and~$I_2$ for~$\aschema'$.
Let $(\mt, \accbind)$ be
an access on relation~$R$ in~$I_\acc$ which does not necessarily have an output
which is valid for~$\aschema$. As in the proof of
Theorem~\ref{thm:simplifychoiceuidfdra}, if there are no matching tuples in~$I_1$
for~$(\mt, \accbind)$, then there are no matching tuples in~$I_\acc$ either,
so the access $(\mt, \accbind)$ already has a valid output for~$\aschema$ and there
is nothing to do. The same holds if there are no matching tuples in~$I_2$.
Now, if there is exactly one matching tuple in~$I_1$ and exactly one matching
tuple in~$I_2$, as~$I_\acc$ is jointly access-valid for~$\aschema'$, it
necessarily contains those matching tuples, so that, as~$I_\acc \subseteq I_1$
and $I_\acc \subseteq I_2$, the matching tuple in~$I_1$ and~$I_2$ is the same,
and again there is nothing to do: the access $(\mt, \accbind)$ already has a
valid output for~$\aschema$.

Hence, the only interesting case is when there is a matching tuple to the access
in~$I_1$ and in~$I_2$, and there is more than one matching tuple in one of the
two. As $I_1$ and $I_2$ play a symmetric role in the hypotheses of
Lemma~\ref{lem:enlargeprogra}, we assume without loss of generality that it is
$I_1$ which has multiple matching tuples for the access.

As $I_\acc$ is access-valid in~$I_1$ for~$\aschema'$, we know
that~$I_\acc$ contains at least one of these tuples, say $\vec t_1$. As $I_\acc
\subseteq I_2$, then $I_2$ also contains $\vec t_1$. As in the previous proof,
we take $\vec t_2$ a different matching tuple in~$I_1$, let $C$ be the non-empty
set of positions where $\vec t_1$ and $\vec t_2$ disagree, and observe that
there is no FD implied from the complement of~$C$ to a position of~$C$.

We construct $I_1^+ \colonequals I_1 \cup W$ and $I_2^+ \colonequals I_2 \cup W$ as in the previous proof, and we
show that $(I_1^+, I_2^+)$ is a counterexample to determinacy for~$Q$
and~$\aschema'$:

\begin{itemize}
  \item We show as in the previous proof that $I_1^+$ and $I_2^+$ satisfy the $\uincd$s and the
FDs of~$\Sigma$.
    \item We clearly have $I_1 \subseteq I_1^+$ and $I_2 \subseteq
I_2^+$.
    \item The homomorphism from~$I_1^+$ to~$I_1$ and from~$I_2^+$ to~$I_2$ is
defined as in the previous proof.
    \item We define $I_\acc^+ \colonequals I_\acc \cup W$
      a common subinstance of~$I_1^+$ and~$I_2^+$ and we 
      must show that $I_\acc^+$ is 
    jointly access-valid in~$I_1^+$ and~$I_2^+$ for~$\aschema'$.
    We do this as in the
    previous proof. First, for accesses that include an element of~$\dom(W) \setminus
    \dom(I_\acc)$, the matching tuples are all in~$W$ so they are in~$I_\acc^+$.
    Second,
    for accesses on~$\dom(I_\acc)$, the matching tuples include matching tuples
    of~$W$, which are in~$I_\acc^+$, and matching tuples for that access
    in~$I_\acc$ in~$I_1$ and~$I_2$, so we can construct a valid output
    to this access for~$\aschema'$ because $I_\acc$ is jointly access-valid
    in~$I_1$ and~$I_2$.
\end{itemize}
What remains to be able to use Lemma~\ref{lem:enlargeprogra} is to show the
four additional conditions:

\begin{enumerate}
  \item It is immediate that~$I_\acc^+ \supseteq I_\acc$.
  \item The access $(\mt, \accbind)$ has an output in~$I_\acc^+$ which is valid for~$\aschema$ 
    in~$I_1^+$ and~$I_2^+$. This is established as in the previous proof: there are now
    infinitely many matching tuples for the access in~$I_1^+$ and~$I_2^+$, so we can choose
    as many as we want in~$W$ to obtain an output in~$I_\acc^+$ which is valid
    for~$\aschema$ in~$I_1^+$ and~$I_2^+$.
  \item For every access of~$I_\acc$ that has an output which is valid for~$\aschema$ in~$I_1$
    in~$I_2$, then we can construct such an output in~$I_\acc^+$ which is 
valid for~$\aschema$ in~$I_1^+$ and~$I_2^+$.
    This is similar to the fourth bullet point above. From the 
    output $U$ to the access in~$I_\acc$ which is valid for~$I_1$ and $I_2$,
    we construct an 
    output to the access in~$I_\acc^+$ which is valid for~$I_1^+$ and~$I_2^+$, using the tuples
    of~$U$ and the matching tuples in~$W$.
  \item All accesses of~$I_\acc^+$ which are not accesses of~$I_\acc$ have an
    output which is valid
    for~$\aschema$ in~$I_1^+$ and~$I_2^+$. As before, such accesses must include
    an element of~$W$, so by the fourth bullet point all matching tuples are
    in~$W$, so they are all in~$I_\acc^+$.
\end{enumerate}
Hence, we have explained how to fix the access $(\mt, \accbind)$, so we can
conclude using Lemma~\ref{lem:enlargeprogra} that we obtain a counterexample to
access-determinacy of~$Q$ in~$\aschema$ by fixing all accesses. This concludes
the proof.

\subsection{Summary of Extensions to Answerability with RA-plans}
Table \ref{tab:resultsfull} summarizes the expressiveness and complexity results
for RA-plans. There are three differences with the corresponding table for
monotone answerability (Table~\ref{tab:results} in the body):
\begin{itemize}
\item For RA-plans, while we know that choice simplifiability holds with FDs and
  UIDs, we do not know whether answerability is decidable. Indeed, in the
    monotone case, when proving Theorem~\ref{thm:deciduidfd}, we had used a separability argument
    to show that FDs could be ignored for FDs and UIDs (see the sketch of the
    proof of Theorem~\ref{thm:deciduidfd} in Section~\ref{sec:complexitychoice}
    in the body, and Appendix~\ref{apx:separable}). We do not have such an
    argument for answerability with RA plans.
\item For RA-plans, our tight complexity bound for answerability with FDs in
  isolation holds only under the $\smpr$ assumption; see
    Appendix~\ref{apx:complexityfdspmr} for
    details.
  \item The results of
    Table~\ref{tab:resultsfull} are only claimed for \emph{unrestricted} RA-answerability,
    i.e., RA-answerability over all instances, finite or infinite. By contrast,
    the results of Table~\ref{tab:results} hold both for finite monotone
    answerability and for unrestricted monotone answerability.
\end{itemize}

\end{document}